\def\T{{ \mathrm{\scriptscriptstyle T} }}
\newcommand{\given}{\,|\,}
  \newcommand{\beq}{\begin{equation}}
    \newcommand{\eeq}{\end{equation}}
    \newcommand{\bal}{\begin{align}}
    \newcommand{\eal}{\end{align}}
    \newcommand{\bals}{\begin{align*}}
    \newcommand{\eals}{\end{align*}}
    \newcommand{\calD}{\mathcal{D}}
    \newcommand{\calS}{{\mathcal S}}
    \newcommand{\calM}{{\mathcal M}}
\DeclareMathOperator\cov{Cov}
\DeclareMathOperator\tr{Tr}
\DeclareMathOperator\var{Var}
\DeclareMathOperator\argmin{\arg \min}
\newtheorem{theorem}{Theorem}[section]
\newtheorem{lemma}[theorem]{Lemma}
\newtheorem{proposition}[theorem]{Proposition}
\newtheorem{assumption}[theorem]{Assumption}
\numberwithin{equation}{section}
\DeclareRobustCommand*\cal{\@fontswitch\relax\mathcal}
\providecommand{\customgenericname}{}
\newcommand{\newcustomtheorem}[2]{%
  \newenvironment{#1}[1]
  {%
   \renewcommand\customgenericname{#2}%
   \renewcommand\theinnercustomgeneric{##1}%
   \innercustomgeneric
  }
  {\endinnercustomgeneric}
}
\title{Bayesian Geostatistics Using Predictive Stacking}
\author{Lu Zhang}
\author{Wenpin Tang}
\author{Sudipto Banerjee}
\begin{document}

\begin{abstract}
We develop Bayesian predictive stacking for geostatistical models, where the primary inferential objective is to provide inference on the latent spatial random field and conduct spatial predictions at arbitrary locations. We exploit analytically tractable posterior distributions for regression coefficients of predictors and the realizations of the spatial process conditional upon process parameters. We subsequently combine such inference by stacking these models across the range of values of the hyper-parameters. We devise stacking of means and posterior densities in a manner that is computationally efficient without resorting to iterative algorithms such as Markov chain Monte Carlo (MCMC) and can exploit the benefits of parallel computations. We offer novel theoretical insights into the resulting inference within an infill asymptotic paradigm and through empirical results showing that stacked inference is comparable to full sampling-based Bayesian inference at a significantly lower computational cost.
\end{abstract}

\maketitle

\smallskip
\noindent \textbf{Keywords.}{Bayesian inference; Gaussian processes; Geostatistics; stacking.}

\section{Introduction}
Geostatistics \citep{cres93, Chiles1999, zimmerman2010handbook, banerjee2019handbook} refers to the study of a spatially distributed variable of interest, which in theory is defined at every point over a bounded study region of interest. Customary geostatistical modeling proceeds from a latent stochastic process over space that specifies the probability law for the measurements on the variable as a partial realization of the process over a finite set of locations. Inference is sought for the underlying spatial process, which is subsequently used for spatial predictions \citep[][]{Steinbook} to grasp the scientific phenomenon under study. The spatial process is often assumed to be stationary and specified by parameters representing the sill, the nugget, the range and, possibly, the smoothness of the process. We collectively refer to these as process parameters that are often empirically estimated from measurements at sampled locations using the ``variogram''.

Likelihood-based inference for this process is, however, thwarted by the absence of classical consistent estimators of the process parameters in a customarily preferred infill asymptotic paradigm \citep[see, e.g.,][]{Steinbook, Zhang04, zhang2005towards, KS13, TZB2021}. Bayesian inference for geostatistical data \citep{handcock1993technometrics, berger2001, banerjee2014hierarchical, li2023jasa}, while not relying upon asymptotic inference, is also not entirely straightforward. Specifically, irrespective of how many spatial locations yield measurements, the likelihood fails to mitigate the prior distributions' impact on the inference. This is undesirable since prior elicitation for the process parameters is challenging. Objective priors for spatial process models have also been pursued, but interpreting such information in practice and their implications in scientific contexts are not uncontroversial. The related question of how effectively (or poorly) the realized data identify these process parameters (in an exact sense from finite samples) has also received commentary \citep[][]{hodges2013book, bose2018biocs, oliveira2022jabes}.

It is, therefore, not unreasonable to pursue methods that will yield robust inference for the spatial process and for spatial predictions of the outcome at arbitrary points (``kriging'') while circumventing inference on the weakly identified parameters. Instead of seeking families of prior distributions for such parameters, recent efforts at computationally efficient algorithms for geostatistical models have proposed multi-fold cross-validation methods \citep{finley2019efficient} to fix the values of weakly identified parameters. However, the metrics for ascertaining optimal values of such parameters are somewhat arbitrary and may not offer robust inference. 

Our primary contribution here is to develop and explore Bayesian predictive stacking of geostatistical models. Stacking is a model averaging procedure for generating predictions \citep{wolpert1992stacked, breiman1996stacked, clyde2013bayesian}. Stacking methods and algorithms in diverse data analytic applications are rapidly evolving and a comprehensive review is beyond the scope of this manuscript. Significant developments of stacking methodology in Bayesian analysis have been achieved in recent years \citep{le2017bayes, yao2018using, yao2020stacking, yao21}, but, to the best of our knowledge, developments in the context of spatial data analysis are lacking. Stacking can be regarded as an alternative to Bayesian model averaging \citep{madigan1996bayesian, hoeting1999bma}. Assume that there are $G$ 
candidate models ${\mathbb{M}} = \{\calM_1, \ldots, \calM_G\}$ and each model $\calM_g$ is indexed according to a set of fixed values of certain spatial covariance parameters so that the exact posterior distribution is analytically tractable (Section~\ref{sec: spatial_models}). 
We follow a specific formulation of a spatial hierarchical linear model that seamlessly evinces the familiar closed-form posterior distributions of regression coefficients, random effects and the spatial variance component. The nugget, too, is available in closed form while stacking over all other intractable parameters. 

The inferential properties of conventional stacking are largely available for exchangeable models that do not apply to geostatistics. Therefore, we offer theoretical insights within an infill asymptotic paradigm by exploiting tractability offered by the Mat\'ern covariance kernel and conjugate Bayesian linear regression (Section~\ref{sc3}). Section~\ref{sec: stacking_alg} offers implementation details, including two algorithms: (i) stacking of means, which combines posterior predictive means; and (ii) stacking of posterior predictive densities \citep{yao2018using}, which combines posterior predictive densities. 
These methods are evaluated theoretically and empirically through simulation experiments (Section~\ref{sec: simulation}) demonstrating that stacked inference is comparable to full Bayesian inference using MCMC at significantly less computational expense. An illustrative data analysis is presented in Section~\ref{sec: aod} including comparisons with machine learning interpolation. Section~\ref{sec: conclusion} concludes with pointers to future research.

\section{Bayesian spatial models and stacking algorithms}
\label{sec: spatial_models}
\subsection{Overview}\label{sec: spatial_models_stacking_overview}
Let $y(s)$ be a spatially indexed outcome at location $s \in \mathcal{D} \subset \mathbb{R}^d$ and $x(s)$ is a $p \times 1$ vector of predictors observed at $s$. A customary geostatistical model is
\begin{equation}\label{eq:spatialG}
y(s) = x(s)^\T \beta + z(s) + \varepsilon(s),
\end{equation} 
where $\beta$ is the $p \times 1$ vector of slopes, $z(s) \sim \mbox{GP}(0, \sigma^2 R_{\Phi}(\cdot, \cdot))$ is a zero-centered spatial Gaussian process on $\mathbb{R}^d$ with spatial correlation function $R_{\Phi} (\cdot, \cdot)$ indexed by parameters $\Phi$, $\sigma^2$ is the spatial variance parameter (``partial sill'') and $\varepsilon(s) \sim \mathcal{N}(0,\tau^2)$ is a white noise process with variance $\tau^2$ (``nugget'') capturing measurement error. 
Processes $z(\cdot)$ and $\epsilon(\cdot)$ are assumed to be independent.
Let $\chi = \{s_1, \ldots, s_n\} \in \mathcal{D}$ be a set of $n$ spatial locations yielding measurements $y = (y(s_1), \ldots, y(s_n))^\T$ with known values of predictors at these locations collected in the $n\times p$ full rank matrix $X = (x(s_1), \cdots, x(s_n))^\T$. 
We let $z = (z(s_1), \ldots, z(s_n))^\T$ denote the realization of $z(s)$ over $\chi$ and let $R_{\Phi}(\chi) = (R_{\Phi}(s_i, s_j))_{1 \le i,j \le n}$ be the $n\times n$ spatial correlation matrix constructed from the correlation function. 


Bayesian modeling extends \eqref{eq:spatialG} by assigning proper prior distributions to the parameters $\{\beta, z, \sigma^2, \Phi, \tau^2\}$ and drawing samples from the posterior distribution $p(\beta, \sigma^2, \Phi, \tau^2 \given y)$. However, evaluating the posterior distribution, particularly when employing iterative Markov chain Monte Carlo algorithms \citep[MCMC,][]{robertcasella}, is cumbersome due to slow convergence of weakly identified process parameters. Instead, we exploit exact distributions for conjugate Bayesian spatial models by fixing some process parameters and subsequently implement stacked posterior inference over candidate values of the process parameters. 

Our approach can be broadly described as follows. We separate our model parameters $\Theta = \{\theta_1, \theta_2\}$ into two sets $\theta_1 = \{\beta, z, \sigma^2\}$ and $\theta_2 = \{\Phi, \delta^2\}$, where $\delta^2 = \tau^2/\sigma^2$. We use conjugate distribution theory so that $p(\theta_1 \given y, \theta_2)$ is available in closed form. We are primarily concerned with Bayesian inference on $\theta_1$ averaging out the effects of $\theta_2$. Therefore, our desired posterior distribution is $p(\theta_1 \given y) = \int p(\theta_1\given y, \theta_2)p(\theta_2\given y) d\theta_2$. The key bottleneck here is that $p(\theta_2\given y)$ is intractable and will require either MCMC or iterative quadrature such as Integrated Nested Laplace Approximations \citep[INLA,][]{inla2009} or variational inference \citep[][]{renBanerjeeEtAl2011csda, blei2017vbreview}. However, such algorithms are thwarted by convergence issues arising especially from weakly identified parameters $\Phi$ in the spatial correlation kernel. Therefore, we reformulate the inference problem by writing $p(\theta_1 \given y) = \int p(\theta_1\given y, \theta_2)p(\theta_2\given y) d\theta_2 \approx \sum_{g=1}^G w_g p(\theta_1\given y, \theta_{2g})$,
where the collection of weights $w_g$ replace $p(\theta_2\given y)$. While this may seem to resemble quadrature, a key distinction is that we find the weights using convex optimization with scoring rules and do not attempt to approximate $p(\theta_2\given y)$. Once the optimal weights, say $\hat{w}_g$ are computed, posterior inference for quantities of interest subsequently proceed from the ``stacked posterior'' $\tilde{p}( \cdot \given y) = \sum_{g = 1}^G \hat{w}_g p(\cdot \given y, \theta_{2g})$. To circumvent iterative algorithms, we employ conjugate Bayesian spatial models with closed form expressions for $p(\theta_1 \given y, \theta_2)$ and then proceed to obtain the stacked posterior by averaging over candidate values of $\theta_2$. Section~\ref{sc21} derives the analytically accessible models that are used in Section~\ref{sc22} where we devise the stacking methodology and algorithms.

\subsection{Conjugate Bayesian spatial model}\label{sc21}
We extend \eqref{eq:spatialG} to a conjugate Bayesian hierarchical spatial model,
\begin{equation}\label{eq:BayesianG}
\begin{aligned}
 y \given &z, \beta, \sigma^2  \sim \mathcal{N}(X \beta + z, \delta^2 \sigma^2 I_n), \quad z \given \sigma^2 \sim \mathcal{N}(0, \sigma^2 R_{\Phi}(\chi)), \\
 \beta \given &\sigma^2 \sim \mathcal{N}(\mu_\beta, \sigma^2 V_{\beta}), \quad
\sigma^2 \sim \mbox{IG}(a_\sigma, b_{\sigma}),
\end{aligned}
\end{equation}
where we fix $\Phi$, the noise-to-spatial variance ratio $\delta^2: = \frac{\tau^2}{\sigma^2}$, and $\mu_\beta$, $V_\beta$, $a_\sigma$, and $b_{\sigma}$ are fixed hyper-parameters specifying the prior distributions for $\beta$ and $\sigma^2$. This design enables closed-form posterior distributions, as 
summarized in the following lemma \citep[also see][for exact Bayesian inference]{kitanidis1986parameter, handcock1993technometrics, gaudard1999bayesian, Banerjee20}.

\begin{lemma}\label{lem:structure}
Let $\gamma = (\beta^{\T}, z^{\T})^{\T}$. The posterior distribution of $(\gamma, \sigma^2)$ from (\ref{eq:BayesianG}) is 
\begin{equation}
\label{eq:posterior}
p(\gamma, \sigma^2 \,|\, y,\Phi, \delta^2) = \underbrace{\mbox{IG}(\sigma^2;\, a^*_\sigma, b^*_\sigma)}_{p(\sigma^2 \,|\, y)} \times 
\underbrace{\mathcal{N}(\gamma;\, \hat{\gamma}, \sigma^2 M_*)}_{p(\gamma \given \sigma^2, y)},
\end{equation}
where $\hat{\gamma} = M_{\ast}X_*^{\T}V_{*}^{-1}y_*$, $y_* = [y, \mu_\beta, 0]^\T$, $a^*_\sigma = a_\sigma + n/2$, 
$b^*_\sigma = b_\sigma + \frac{1}{2}(y_*-X_{*}\hat{\gamma})^{\T}V_*^{-1}(y_* - X_*\hat{\gamma})$, 
$M_*^{-1} = X_*^\T V_{*}^{-1} X_*$, 
$X_{*}^\T = \begin{bmatrix}
X^\T & I_p & 0 \\ 
I_n & 0 & I_n \end{bmatrix}$ and $V_{*} = \begin{bmatrix}
\delta^2 I_n & 0 & 0 \\ 
0 & V_\beta & 0 \\ 
0 & 0 & R_{\Phi}(\chi)
\end{bmatrix}$.  The posterior distribution $p(\gamma \given y, \Phi, \delta^2)$ obtained after integrating out $\sigma^2$ is multivariate Student's t 
{(i.e. $t_{2a_\sigma^\ast}(\gamma;\, \hat{\gamma}, (b_\sigma^\ast / a_\sigma^\ast)  M_\ast)$)} with degrees of freedom $2a_\sigma^\ast$, location  $\hat{\gamma}$ and scale matrix $(b_\sigma^\ast / a_\sigma^\ast)  M_\ast$.  
\end{lemma}
\begin{proof}
The proof is a straightforward adaptation of familiar results from the Normal-Gamma family of distributions \citep{kitanidis1986parameter,handcock1993technometrics}. 
The posterior distributions remain well-defined in the limit as $\delta^2 \rightarrow 0$, as elaborated in Appendix~\ref{sec: deltasq0limit}.
\end{proof}

Furthermore, let $\Tilde{\chi} = \{\Tilde{s}_{1},\ldots,\Tilde{s}_{m}\}$ be a set of $m$ unknown points in ${\cal D}$, $\Tilde{z}$ and $\Tilde{y}$ be the $m\times 1$ vectors with elements $z(\Tilde{s}_{i})$ and $y(\Tilde{s}_{i})$ for $i=1,2,\ldots,m$. Let $\tilde{X} =(x(\Tilde{s}_{1}), \ldots, x(\Tilde{s}_{m}))^\T$ be the $m \times p$ matrix that carries the values of predictors at $\Tilde{\chi}$ and let 
{$J_\Phi(\chi, \Tilde{\chi}) = (R_{\Phi}(s, s'))_{\{s \in \chi, s' \in \Tilde{\chi}\} }$} Then, spatial predictive inference follows from the posterior distribution
\begin{equation}\label{eq: posterior_predictive_joint}
p(\Tilde{z}, \Tilde{y}\given y, \Phi, \delta^2) = \int p(\Tilde{y} \given \Tilde{z}, \beta, \sigma^2, \Phi, \delta^2)p(\Tilde{z}\given z, \sigma^2, \Phi, \delta^2)p(\gamma, \sigma^2 \given y, \Phi, \delta^2)\mathop{d\gamma}\mathop{d\sigma^2}\;,
\end{equation}
which is again a multivariate $t$ distribution with degrees of freedom $2a_\sigma^\ast$, location  $\tilde{\mu}$ and scale matrix $(b_\sigma^\ast / a_\sigma^\ast)  \tilde{M}$ where $\tilde{\mu} = W\hat{\gamma}$, $\tilde{M}= W M_\ast W^\T + M_2 $, $ M_1 =  R_\Phi(\Tilde{\chi}) - J_\Phi^\T(\chi,\Tilde{\chi})R_\Phi^{-1}(\chi)J_\Phi(\chi,\Tilde{\chi})$, $\displaystyle W = \begin{bmatrix} 0& J_\Phi^\T(\chi,\Tilde{\chi})R_\Phi^{-1}(\chi) \\
\tilde{X} & J_\Phi^\T(\chi,\Tilde{\chi})R_\Phi^{-1}(\chi)
\end{bmatrix}$ and $\displaystyle M_2^{-1} = \begin{bmatrix}
\frac{1}{\delta^2}I_m + M_1^{-1} & -\frac{1}{\delta^2}I_m \\
-\frac{1}{\delta^2}I_m & \frac{1}{\delta^2}I_m
\end{bmatrix}$.
The predictive distributions $p(z(s_0)\given y, \Phi, \delta^2)$ and $p(y(s_0)\given y, \Phi, \delta^2)$ are also available in analytic form as a univariate t distributions for any single point $s_0 \in {\cal D}$. Bayesian inference proceeds from exact posterior samples obtained from (\ref{eq:posterior}). We first draw values of $\sigma^2 \sim IG(a_{\sigma}^{\ast}, b_{\sigma}^{\ast})$ followed by a single draw of $\gamma \sim N(\hat{\gamma},\sigma^2 M_{\ast})$ for each drawn value of $\sigma^2$. This yields samples $\{\gamma,\sigma^2\}$ from (\ref{eq:posterior}). Predictive inference for the latent process $z(s_0)$ and the outcome $y(s_0)$ is obtained by sampling from (\ref{eq: posterior_predictive_joint}) by drawing a value of $\Tilde{z} \sim N(\mu_z(\gamma), \sigma^2 M_1)$ with $\mu_z(\gamma):= J_\Phi^\T(\chi,\Tilde{\chi})R^{-1}_\Phi(\chi) z$
for each value of $\{\gamma, \sigma^2\}$ drawn above \citep[see Section 3.4 in][]{Banerjee20}, then drawing a value of $\Tilde{y} \sim N(\tilde{X}\beta + \Tilde{z}, \sigma^2\delta^2 I_m)$ for each drawn value of $\beta$ (extracted from $\gamma$), $\sigma^2$ and $\Tilde{z}$.  

This direct sampling is possible if $\Phi$ and $\delta^2$ are fixed. However, these parameters are not consistently estimable \citep{Zhang04, TZB2021}, and trying to estimate them from the data impedes the convergence of the MCMC algorithms. 
{\citet{diggle2007springer} proposed inference with discrete priors on these parameters \citep{ribeiro2007geor}, which still entails evaluating potentially numerically unstable conditional posterior densities. Alternate approaches that use $K$-fold cross-validation have been explored with limited success \citep{finley2019efficient}. Instead, we avoid numerically computing marginal posterior distributions and pursue optimization based on stacking over a set of fixed values of $\{\Phi, \delta^2\}$. 

\subsection{Stacking algorithms for Bayesian spatial models}\label{sc22}

{Let $\{\calM_g, g = 1, \ldots, G\}$ be the set of candidate models. The Bayes predictor for $y(s_0)$ under model $\calM_g$, for each $g = 1, \ldots, G$, is  $\mathbb{E}_g(y(s_0) \given y, \calM_g)$, where $\mathbb{E}_g(\cdot \given y, \calM_g)$ is the expectation with respect to $p(y(s_0)\given y, \calM_g) = t_{2a_\sigma^\ast}(y(s_0);\, h_g^{\T}\hat{\gamma}_g, \;(b_\sigma^\ast / a_\sigma^\ast) h_g^{\T}M_\ast h_g)$
with $h_g^{\T} = [x^{\T}(s_0), J_{\Phi_g}(s_0, \chi)R^{-1}_{\Phi_g}(\chi)]$ and $a_\sigma^\ast$, $b_\sigma^\ast$, $\hat{\gamma}_g$ and $M_\ast$ given by Lemma~\ref{lem:structure} with $\Phi = \Phi_g$ and $\delta^2 = \delta^2_g$.} Stacking will combine the $G$ Bayes predictors as a weighted average, 
\begin{equation}\label{eq: stack_predictor}
    \sum_{g = 1}^G w_g \mathbb{E}_g(y(s_0) \given y, \calM_g) = 
    {\sum_{g = 1}^G w_g h_g^{\T}\hat{\gamma}_g},
\end{equation}
where $\{w_1, \ldots, w_G\}$ are the stacking weights. 
{We refer to \eqref{eq: stack_predictor} as the stacked predictor. Subject to the constraint that stacking weights are non-negative and their sum equals one, we define the corresponding stacked predictive density as $\sum_{g = 1}^G w_g p(y(s_0) \given y, \calM_g)$.} 
We consider two stacking algorithms:
{\em stacking of means} and {\em stacking of predictive densities}. 

\medskip
\noindent
{\bf Stacking of means}:
This is the most natural stacking algorithm adapted from \citet{breiman1996stacked}. 
{Define the leave-one-out (LOO) Bayes predictor for $y(s_i)$ under model $\calM_g$ as $\hat{y}_g(s_i) = \mathbb{E}_g(y(s_i) \given y_{-i}, \calM_g) = h_{g, -i}^\T \hat{\gamma}_{g, -i}$, where $y_{-i}$ is the data without the $i$-th observation, $h^{\T}_{g,-i}$ is defined as in \eqref{eq: stack_predictor} but with $s_i$ and $\chi_{-i} = \chi\setminus\{s_i\}$ replacing $s_0$ and $\chi$, respectively, and $\hat{\gamma}_{g,-i}$ is obtained from Lemma~\ref{lem:structure} applied to the data without $s_i$. The expectation $\mathbb{E}_g(\cdot \given y_{-i}, \calM_g)$ is calculated with respect to $p(y(s_i)\given y_{-i}, \calM_g) = t_{2a_{\sigma, -i}^\ast}\left(y(s_i); h_{g, -i}^\T \hat{\gamma}_{g, -i}, \frac{b_{\sigma, -i}^\ast}{a_{\sigma, -i}^\ast} h_{g, -i}^{\T} M_{\ast, -i} h_{g, -i} + \delta^2\right)$, where $a_{\sigma, -i}^\ast$, $b_{\sigma, -i}^\ast$ and $M_{\ast, -i}$ are, again, provided in Lemma~\ref{lem:structure} for the data excluding $s_i$.} Stacking of means determines the optimal weights as 
\begin{equation}\label{eq: weights_cal1}
    \underset{w}{\argmin} \sum_{i = 1}^n \left(y(s_i) - \sum_{g = 1}^G w_g \hat{y}_g(s_i)\right)^2\;.
\end{equation}

\medskip
\noindent
{\bf Stacking of predictive densities}: Following the generalized Bayesian stacking framework in \citet{yao2018using}, we devise a second stacking algorithm for spatial analysis, which we refer to as stacking of predictive densities. This algorithm finds the distribution in the convex hull ${\cal C} = \{\sum_{g = 1}^G w_g \times p(\cdot \given \calM_g): \sum_{g} w_g = 1, w_g \geq 0\}$ that is optimal according to some proper scoring functions. Here $p(\cdot \given \calM_g)$ refers to the distribution of interest under model $\calM_g$. Let ${\calS}_1^G = \{w \in [0, 1]^G: \sum_{g = 1}^G w_g = 1\}$ and $p_t(\cdot \given y)$ be the true posterior predictive distribution. Using the logarithmic score (corresponding to the KL divergence), we seek $w$ so that 
\begin{equation}\label{eq:defpd}
    {\underset{w \in \calS_1^G }{\mbox{max}}
    \frac{1}{n}\sum_{i = 1}^n\log\left(\sum_{g = 1}^G w_g \underbrace{t_{2a_{\sigma, -i}^\ast}\left(y(s_i); h_{g, -i}^\T \hat{\gamma}_{g, -i}, \frac{b_{\sigma, -i}^\ast}{a_{\sigma, -i}^\ast} h_{g, -i}^{\T} M_{\ast, -i} h_{g, -i} + \delta^2 \right)}_{p(y(s_i) \given y_{-i}, \calM_g)}\right) \;.}
\end{equation}
The optimal distribution $\sum_{g = 1}^G w_g p(y(s_i) \given y_{-i}, \calM_g)$ provides a ``likelihood'' of observing $y(s_i)$ on location $s_i$ given other data. Therefore, $\prod_{i = 1}^n\sum_{g = 1}^G w_g p(y(s_i) \given y_{-i}, \calM_g)$ serves as a pseudo-likelihood that measures the performance of prediction based on the weighted average of the LOO predictors for all observed locations. 


\medskip
\noindent
{\bf Stacking using K-fold cross-validation predictors}:
Solving stacking weights relies upon computing the Bayes predictor and predictive density. Computing the exact LOO Bayes predictor and predictive densities for all observed locations $\{s_1, \ldots, s_n\}$ requires refitting a model $n$ times. For a Gaussian latent variable model with the number of parameters larger than the sample size $n$, there are limited choices for approximating LOO predictors accurately without the onerous computation \citep[see, e.g.,][]{vehtari2016bayesian}. Instead of using the LOO predictors, computing predictors through $K$-fold cross-validation is more practical. Using $K$-fold cross-validation instead of LOO in stacking is explored in \citet{breiman1996stacked}, who demonstrated that $10$-fold cross-validation provides more efficient predictors than LOO. 
{If the data is partitioned into $K$ folds and $y[-k]$ denotes the observed outcomes that are not included in the $k$-th fold, then, following \eqref{eq:defpd}, we have $p(y(s_i) \given y[-k], \calM_g) = t_{2a_{\sigma}^\ast[-k]}\left(y(s_i); h_{g}[-k]^\T \hat{\gamma}_{g}[-k], \frac{b_{\sigma}^\ast[-k]}{a_{\sigma}^\ast[-k]} h_{g}[-k]^{\T} M_{\ast}[-k] h_{g}[-k] + \delta^2 \right)$, where $s_i$ is in $k$-th folder. Here, $a_{\sigma}^\ast[-k]$, $b_{\sigma}^\ast[-k]$, $h_{g}[-k]$, and $M_{\ast}[-k]$ correspond to $a_{\sigma, -i}^\ast$, $b_{\sigma, -i}^\ast$, $h_{g, -i}$, and $M_{\ast, -i}$ in $p(y(s_i) \given y_{-i}, \calM_g)$, but are derived using data excluding the $k$-th folder instead of just the $i$-th observation. The K-fold cross-validation Bayes predictor for $y(s_i)$ under model $\calM_g$ is $\hat{y}_g(s_i) = \mathbb{E}_g(y(s_i) \given y[-k], \calM_g) = h_{g}[-k]^\T \hat{\gamma}_{g}[-k]$.}
{
For stacking of predictive densities, the optimal distribution changes into $\sum_{g = 1}^G w_g p(y(s_i) \given y[-k], \calM_g)$.}


\noindent {\bf Reconstructing stacked posterior distributions:}
Once the stacking weights are calculated using either stacking of means or stacking of predictive densities, we use them to reconstruct the posterior distributions of interest as
\begin{equation}\label{eq: posterior density_stacked}
    p(\cdot \given y) = \sum_{g=1}^G \hat{w}_g p(\cdot \given y, {\cal M}_g)\;,
\end{equation}
where $\cdot$ represents the inferential quantity of interest. We refer to \eqref{eq: posterior density_stacked} as the \emph{stacked posterior density}. For parameter inference we take $\cdot$ as $\{\beta, z, \sigma^2\}$, while for predictive inference we use $y(s_0)$ at an arbitrary location.

\section{Theoretical results}\label{sc3}
We focus on posterior inference for a Mat\'ern model without trend to justify the stacking algorithms for these models. Subsequently, we extend these investigations to \eqref{eq:BayesianG}.
{
It should be noted that formal theory on spatial asymptotics is extremely challenging and usually adheres to either an expanding domain or an infill paradigm \citep[with attempts at reconciliation][]{zhang2005towards}. Several of the theoretical results, in their current form, depend on assumptions or conjectures that may not strictly adhere to a single paradigm and are difficult or impossible to verify in practice. We intend these results to provide some insight into the asymptotic behavior of stacking weights and to generate more formal theoretical research in this domain.}

\subsection{Posterior inference for the Mat\'ern model} \label{sc31}
The Mat\'ern model without trend is a special case of \eqref{eq:spatialG} with $\beta = 0$. Hence,
\begin{equation}\label{eq:Matern}
y(s) = z(s) + \varepsilon(s),
\end{equation}
where $z(s)$ is modeled with the isotropic Mat\'ern correlation function,
\begin{equation}\label{eq: Matern}
R_\Phi(s, s'):= \frac{(\phi |s - s'|)^{\nu}}{\Gamma(\nu) 2^{\nu - 1}} K_{\nu}(\phi |s - s'|), \quad \color{black}{\Phi = \{\phi, \nu\} \;.}
\end{equation}
Here $\phi$ is the decay parameter, $\nu > 0$ is a fixed smoothness parameter, $\Gamma(\cdot)$ is the Gamma function, and $K_{\nu}(\cdot)$ is the modified Bessel function of the second kind of order $\nu$ \cite[Section 10]{AS65}. 
We refer to \eqref{eq:Matern} as the Mat\'ern model with parameters $\{\sigma^2, \phi, \tau^2\}$. The conjugate Bayesian model \eqref{eq:BayesianG} simplifies to
\begin{equation}
\label{eq:BayesianG2}
y \,|\, \sigma^2, \phi, \delta^2 \sim \mathcal{N}(0, \sigma^2 (R_\phi(\chi) + \delta^2 I_n)), \quad \sigma^2 \sim \mbox{IG}(a_\sigma, b_{\sigma}).
\end{equation}

We consider posterior inference for the conjugate Bayesian model \eqref{eq:BayesianG2} using  \eqref{eq: Matern}. Let $\mathbb{P}_0$ be the probability distribution of the Mat\'ern model \eqref{eq:Matern} with $(\sigma_0^2, \phi_0, \tau_0^2)$ that generates the data $y$.  The following theorem shows the posterior inconsistency of $\sigma^2$  under this model. 

\begin{theorem}[Posterior inference for the Mat\'ern model]
\label{thm:main2}
Assume that the location set $\chi = \{s_1, \ldots, s_n\}$ satisfies {
the infill condition:}
\begin{equation}
\tag{3.4}
\max_{s \in \mathcal{D}} \min_{1 \le i \le n} |s - s_i| \asymp n^{-\frac{1}{d}}.
\end{equation}
Under the true data generating distribution $\mathbb{P}_0$ as in \eqref{eq:Matern} with $(\sigma_0^2, \phi_0, \tau_0^2)$, 
\begin{equation}
\label{eq:postin}
\lim_{n\to\infty} p(\sigma^2 \given y, \phi, \delta^2) = \texttt{Dirac} (\tau_0^2/\delta^2)\quad  \mbox{and}\quad \lim_{n\to\infty} p(\tau^2 \given y, \phi, \delta^2) = \texttt{Dirac}(\tau_0^2)
\end{equation}
where 
$\texttt{Dirac}(\cdot)$ denotes the Dirac mass point.
\end{theorem}

\begin{proof}[Proof of Theorem \ref{thm:main2}]
See Appendix~\ref{sec: thm:main2}.
\end{proof}

Notably, the {asymptotic} posterior inference of $\sigma^2$ is independent of the range decay $\phi$ chosen in the 
Mat\'ern model. 
The scale $\sigma^2$ is posterior inconsistent 
unless the noise-to-spatial variance ratio $\delta^2 = {\tau_0^2/\sigma_0^2}$,
whereas the nugget $\tau^2$ is posterior consistent. 

\subsection{Posterior prediction for the Mat\'ern model}
\label{sc32}

We consider Bayesian posterior predictive inference at a new location $s_0 \in \mathcal{D}$,
under the Mat\'ern model \eqref{eq: Matern}--\eqref{eq:BayesianG2}. 
We study the posterior predictive consistency of the conjugate model with the misspecified prefixed parameters. 
Let $Z_n(s_0)$ be a random variable distributed as $p(z(s_0)\,|\, y, \phi, \delta^2)$ and $Y_n(s_0)$ be distributed as $p(y(s_0)\,|\, y, \phi, \delta^2)$, and let $\mathbb{E}_0(Z_n(s_0) - z(s_0))^2$ and $\mathbb{E}_0(Y_n(s_0) - y(s_0))^2$ denote expected prediction errors for the latent process and outcome variable, respectively, where $\mathbb{E}_0(\cdot)$ denotes expectation with respect to the Mat\'ern model $\mathbb{P}_0$ that integrates over the generating process for $y$. 

\begin{theorem}[Posterior predictive consistency for the Mat\'ern model]
\label{prop:main2bis}
Let $s_0 \in \mathcal{D}$.
For any given $\phi>0$, denote $\cov(z, z(s_0) \given \sigma^2)$ and $R_{\phi}(\chi)$ by $\sigma^2 J_{\phi, n}$ and $R_{\phi, n}$, respectively. Then, 
\begin{equation}
\label{eq:decomperr22}
\mathbb{E}_0 (Z_n(s_0) - z(s_0))^2 = E_{1,n} + E_{2,n} + o(1),
\end{equation}
where $E_{1, n}$ is the prediction error of the best linear predictor for a Mat\'ern model with any parameters $\{\sigma'^2, \phi, \tau'^2\}$ satisfying $\delta^2 = \frac{\tau'^2}{\sigma'^2}$,
and 
\begin{equation}
\label{eq:E2}
E_{2, n}: 
=\frac{\tau_0^2}{\delta^2}\left[1 - J_{\phi, n}^\T (\delta^{2}I_n + R_{\phi, n})^{-1} J_{\phi, n} \right]
\end{equation}
\end{theorem}
\begin{proof}
See Appendix~\ref{sec: prop:main2bis}.
\end{proof}

Theorem~\ref{prop:main2bis} is similar in spirit to \citet{ZC92, Abt99},
exploiting the rich structure of the Mat\'ern model.
In the decomposition \eqref{eq:decomperr22},
the term $E_{1,n}$ arises in the deviation from the posterior mean, 
whereas the term $E_{2,n}$ is from the posterior uncertainty. 
Moreover,
the posterior mean of the Mat\'ern model is identified as 
the best linear predictor of {\em any} Mat\'ern model 
with parameters $\{\sigma'^2, \phi, \tau'^2\}$ provided $\tau'^2/\sigma'^2 = \delta^2$.
This observation connects the Bayesian modeling to a frequentist approach
in that the deviation error $E_{1, n}$ is viewed as the prediction error of the best linear predictor of a Mat\'ern model in the presence of a nugget. 

If $E_{1, n}, E_{2, n} \to 0$ as $n \to \infty$, then the latent process $z(s)$ is posterior predictive consistent in the sense that $\mathbb{E}_0(Z_n(s_0) - z(s_0))^2 \to 0$ as $n \to \infty$ and hence, $\mathbb{E}_0(Y_n(s_0) - y(s_0))^2 \to 2 \tau_0^2$ as $n \to \infty$.
However, the conditions $E_{1,n}, E_{2,n} \to 0$ are analytically intractable. In Appendix~\ref{sec: prop:main2bis}, we provide theoretical and numerical evidence to support these conditions which justifies posterior predictive consistency for the latent process. Specifically, we provide theoretical support for $E_{1, n} \to 0$ as $n \to \infty$, and illustrate empirically that $E_{2, n}$ decreases rapidly as the sample size grows within finite domains for spatial dimensions $d = 1$ and $2$.

\subsection{Stacking algorithms}
\label{sc34}

We attend to predictive inference using stacking. For ease of presentation, we focus on the LOO cross validation. Extending to $K$-fold cross validation is straightforward, albeit tedious. We offer the following result concerning the stacked mean square posterior prediction error.

\begin{proposition}[Posterior prediction error for stacking]
\label{prop:stackingpost1}
Let $s_0 \in \mathcal{D}$, and $w_g^*(y): = (w^*_1(y), \ldots, w^*_G(y))$ be the stacking weights 
(e.g. defined by \eqref{eq: weights_cal1})
such that $\mathbb{P}_0$ almost surely,
$$\sum_{g=1}^G w^*_g(y) = 1 \quad \mbox{and} \quad w^*_g(y) \ge 0  \mbox{ for each } 1 \le g \le G.$$
Recall that $E^g_{1,n}$ is the prediction error of the best linear predictor for model $\mathcal{M}_g$, and assume that $E^g_{1,n} \to 0$ as $n \to \infty$, for each model $\mathcal{M}_g$.
We have
\begin{equation}
\label{eq:stackingerr}
\mathbb{E}_0\left(y(s_0) - \sum_{g = 1}^G w^*_g(y) \mathbb{E}_g(y(s_0) \,|\, y) \right)^2 \to \tau_0^2 \quad \mbox{as } n \to \infty \;. 
\end{equation}
\end{proposition}

\begin{proof}
See Appendix~\ref{sec: prop:stackingpost1}.
\end{proof}
The proof of Proposition \ref{prop:stackingpost1} uses  the posterior predictive consistency of $z(\cdot)$ (Theorem \ref{prop:main2bis}). It implies that if each candidate model yields reasonable prediction,  then the stacking of means will also produce good predictive inference.
Next, we show that the stacking predictor asymptotically minimizes the mean square posterior prediction error.
\begin{theorem}
\label{prop:stackingpost2}
Let $s_0 \in \mathcal{D}$, and $w_g^*(y): = (w^*_1(y), \ldots, w^*_G(y))$ be the stacking weights 
(e.g. defined by \eqref{eq: weights_cal1})
such that $\mathbb{P}_0$ almost surely,
$$\sum_{g=1}^G w^*_g(y) = 1 \quad \mbox{and} \quad w^*_g(y) \ge 0  \mbox{ for each } 1 \le g \le G.$$
{
Assume that $E^g_{1,n} \to 0$ as $n \to \infty$, for each model $\mathcal{M}_g$.}
For $1 \le g \le G$ and $1 \le i \le n$, 
let 
$E^g_{1,n,i}: = \mathbb{E}_0(z(s_i) - \widehat{y}_g(s_i))^2$
be the deviation error for the latent process $z(s)$ by leaving the $i^{th}$ observation out under the model $\mathcal{M}_g$.
Assume that for each $1 \le g \le G$,
\begin{equation}
\label{eq:avezero}
\frac{1}{n}\sum_{i = 1}^n E^g_{1,n,i} \to 0 \quad \mbox{as } n \to \infty.
\end{equation}
Also let the assumptions in 
Theorem \ref{prop:main2bis} hold for each model $\calM_g$.
Then, as $n \to \infty$,
\begin{equation}
\label{eq:asymplimit}
\mathbb{E}_0\left(y(s_0) - \sum_{g = 1}^G w^*_g(y) \mathbb{E}_g(y(s_0) \,|\, y) \right)^2
- \mathbb{E}_0\left( \frac{1}{n}\sum_{i = 1}^n \left(y(s_i) - \sum_{g = 1}^G w^*_g(y) \hat{y}_g(s_i)\right)^2\right) \to 0.
\end{equation}
\end{theorem}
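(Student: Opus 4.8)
The plan is to reduce \eqref{eq:asymplimit} to showing that both of its two terms converge to $\tau_0^2$. The first term, $\mathbb{E}_0\bigl(y(s_0)-\sum_g w^*_g\,\mathbb{E}_g(y(s_0)\,|\,y)\bigr)^2$, already tends to $\tau_0^2$ by Proposition~\ref{prop:stackingpost1} (its hypotheses are a subset of those assumed here, and its conclusion is \eqref{eq:stackingerr}), so the real work is the in-sample piece $\mathbb{E}_0\bigl(\tfrac1n\sum_{i=1}^n(y(s_i)-\sum_g w^*_g\hat y_g(s_i))^2\bigr)$. Setting $r_i:=z(s_i)-\sum_{g=1}^G w^*_g\,\hat y_g(s_i)$ and using that the data obey $y(s_i)=z(s_i)+\varepsilon(s_i)$, with $\varepsilon(s_i)\sim\mathcal N(0,\tau_0^2)$ the nugget, the $i$-th summand equals $(\varepsilon(s_i)+r_i)^2$.

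First I would expand the square and take $\mathbb{E}_0$, getting
\[
\mathbb{E}_0\!\left(\frac1n\sum_{i=1}^n(\varepsilon(s_i)+r_i)^2\right)=\tau_0^2+2\,\mathbb{E}_0\!\left(\frac1n\sum_{i=1}^n\varepsilon(s_i)r_i\right)+\mathbb{E}_0\!\left(\frac1n\sum_{i=1}^n r_i^2\right),
\]
using $\mathbb{E}_0\bigl(\tfrac1n\sum_i\varepsilon(s_i)^2\bigr)=\tau_0^2$. Two applications of Cauchy--Schwarz (once over the index $i$, once over the probability space) bound the middle term in absolute value by $\tau_0\,t_n^{1/2}$, where $t_n:=\mathbb{E}_0\bigl(\tfrac1n\sum_i r_i^2\bigr)$; hence $\bigl|\mathbb{E}_0(\tfrac1n\sum_i(\varepsilon(s_i)+r_i)^2)-\tau_0^2\bigr|\le 2\tau_0\,t_n^{1/2}+t_n$, and the whole claim reduces to proving $t_n\to0$.

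To prove $t_n\to0$ I would invoke the convex-combination structure of the stacking weights in \eqref{eq: weights_cal1}, $\sum_g w^*_g=1$ and $w^*_g\ge0$, which gives $r_i=\sum_{g=1}^G w^*_g\bigl(z(s_i)-\hat y_g(s_i)\bigr)$, so that by convexity together with $w^*_g\le1$,
\[
r_i^2\ \le\ \sum_{g=1}^G w^*_g\bigl(z(s_i)-\hat y_g(s_i)\bigr)^2\ \le\ \sum_{g=1}^G\bigl(z(s_i)-\hat y_g(s_i)\bigr)^2.
\]
Averaging over $i$ and taking $\mathbb{E}_0$, the definition $E_{1,g,i}=\mathbb{E}_0(z(s_i)-\hat y_g(s_i))^2$ gives $t_n\le\sum_{g=1}^G\tfrac1n\sum_{i=1}^n E_{1,g,i}$, which tends to $0$ by assumption \eqref{eq:avezero}. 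Combining this with the reduction above and with Proposition~\ref{prop:stackingpost1} yields \eqref{eq:asymplimit}.

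The step I expect to demand the most care is the randomness of $w^*$: it is a function of the full data $y$, hence of each $\varepsilon(s_i)$, so it is \emph{not} independent of $\varepsilon(s_i)$ or of $r_i$, and the naive ``cross term vanishes by independence'' argument is unavailable. The bound above sidesteps this, because the weights enter only through the deterministic inequalities $0\le w^*_g\le1$ applied pointwise inside $r_i^2$, after which \eqref{eq:avezero} does the rest and no independence is needed. The one genuinely structural input is the identity $r_i=\sum_g w^*_g(z(s_i)-\hat y_g(s_i))$; with merely bounded (non-simplex) weights an extra term $(1-\sum_g w^*_g)z(s_i)$ would appear, and one would additionally have to check $\sum_g w^*_g\to1$, which can be argued from the asymptotic collinearity of the leave-one-out predictors (each close to $z(s_i)$ in $L^2$ under \eqref{eq:avezero}) by comparing the minimiser in \eqref{eq: weights_cal1} against the single-model choice $w=e_g$, whose objective tends to $\tau_0^2$.
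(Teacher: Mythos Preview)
Your proof is correct and follows essentially the same route as the paper's: invoke Proposition~\ref{prop:stackingpost1} for the first term, then show the empirical term tends to $\tau_0^2$ by writing $y(s_i)=z(s_i)+\varepsilon(s_i)$, isolating $r_i=\sum_g w^*_g(z(s_i)-\hat y_g(s_i))$, and driving $\mathbb{E}_0[\tfrac1n\sum_i r_i^2]$ to zero via~\eqref{eq:avezero}. Two small differences are worth noting. First, the paper simply writes the equality $\mathbb{E}_0(\cdot)=\tau_0^2+B$ without comment on the cross term $\mathbb{E}_0[\varepsilon(s_i)r_i]$, implicitly treating $w^*$ as fixed; you correctly flag that $w^*$ depends on $\varepsilon(s_i)$ and instead bound the cross term by $\tau_0\,t_n^{1/2}$ via Cauchy--Schwarz, which is the more careful argument. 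Second, for the bound on $t_n$ the paper uses Cauchy--Schwarz over $g$ (needing only the stated boundedness of $|w^*_g|$), whereas you use convexity plus $w^*_g\le 1$, which relies on the simplex constraint; either works here and yields the same conclusion.
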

\begin{proof}
See Appendix~\ref{sec: prop:stackingpost2}.
\end{proof}

Equation~\eqref{eq:avezero} implies that for each $\calM_g$ the average deviation error for the latent process goes to $0$ as the sampling resolution becomes finer. This condition is consistent with the fact that in the one-dimensional grid we typically have $E_{1, g, i}  \asymp \Delta^{\min( 2 \nu_0, \frac{2 \nu}{2 \nu + 1})}$ (see Proposition~\ref{prop:errbd}); hence $\frac{1}{n}\sum_{i = 1}^n E_{1,g,i} \to 0$ as $n \to \infty$. Theorem~\ref{prop:stackingpost2} holds for the candidate models with a misspecified smoothness parameter $\nu$ in the Mat\'ern kernel for $d = 1$. \citet{clyde2013bayesian} proves a similar result, while the established theoretical results about stacking assume exchangeability which is generally not available in geostatistical models. 
{Lack of exchangeablility limits formal theoretical developments. Section~\ref{sec: thm:KLpd} offers limited discussion on stacking of predictive densities without exchangeable assumptions.} Therefore, our theoretical results emerge from studying the behavior of the posterior and predictive distributions in the conjugate Bayesian linear model framework within the infill-paradigm. 
{In this regard, our investigations differ from \cite{de2013semiparametric} who assume a true underlying function of arbitrary smoothness.} The proof of Theorem~\ref{prop:stackingpost2} can be extended, fairly straightforwardly to the case where the LOO Bayes prediction $\hat{y}_g(s_i)$ is replaced by a much cheaper prediction based on $K$-fold cross-validation.

\paragraph{Stacked predictive densities in the general setting.}
We extend our theoretical analysis to the general conjugate Bayesian spatial model \eqref{eq:BayesianG} and establish asymptotic results of the posterior distribution obtained through stacking, detailed in Appendix~\ref{sc33}. In line with Theorem~\ref{thm:main2}, we show that the posterior distribution of the scale parameter $\sigma^2$ in the general conjugate model does not necessarily concentrate on the true generating value. Our analysis reveals that the posterior distribution for $\sigma^2$ approaches a value influenced by $\delta^2$ as sample size increases. This is especially pronounced when the stacking weights for the candidate models exhibit variability in their respective $\delta^2$ values, leading to an anticipation of multi-modality in the stacked posterior density of $\sigma^2$. Such multi-modality suggests that employing a stacking algorithm may not yield dependable posterior inferences for $\sigma^2$, underscoring limitations in stacking over fixed values of $\sigma^2$. 
Regarding posterior prediction, we offer further theoretical discussion on the asymptotic behavior of posterior predictions for the general conjugate Bayesian spatial models in Appendix~\ref{sec: thm:main1bis}. Section~\ref{sec: simulation} presents simulations showing that stacking algorithms serve as efficient alternatives to more expensive MCMC algorithms for posterior prediction.

\section{Implementation of stacking algorithms}
\label{sec: stacking_alg}

We outline algorithms that compute the weights for spatial stacking. We first partition the data into $K$-folds based on locations. 
Let $X = [x(s_1): \cdots: x(s_n)]^\T$ be the design matrix with $X[k]$, $y[k]$ and $\chi[k]$ denoting the predictors, outcome and observed locations from $k$-th fold, respectively, and $X[-k]$, $y[-k]$ and $\chi[-k]$ denoting respective data not in $k$-th fold. Let $n_k$ be the number of observed locations for the $k$-th fold. The values of the prefixed hyper-parameters $\{\phi, \nu, \delta^2\}$ of the conjugate Bayesian spatial regression model are picked from the grid $G_{all}$, which is expanded over the grids of candidate values as the Cartesian product $G_\phi \times G_\nu \times G_\delta^2$ for $\{\phi, \nu, \delta^2\}$. We compute the posterior expectation $\mathbb{E}(y[k] \given y[-k], \phi, \nu, \delta^2)$ for $k = 1, \ldots, K$ and all candidate $\{\phi, \nu, \delta^2\}$ when using stacking of means to obtain the stacking weights. Algorithm~\ref{code: w_stacking_mean} describes the procedure with additional details in Section~\ref{app: pseudo_code}. Algorithm~\ref{code: w_stacking_mean} structures the computation for different choices of $\delta^2$ to be nested in each folder $k$. This structure allows the re-use of the correlation matrix for the same $\{\phi, \nu\}$ in the same folder for different $\delta^2$, but it only works for the shared candidate values $G_\delta^2$. 

{
Let $\hat{Y}_{kg}$ be the $n_k\times 1$ vector with elements $\mathbb{E}(y[k] \given y[-k], \phi, \nu, \delta^2)$ for each $\{\phi, \nu, \delta^2\} \in G_{all}$, where $g = 1, \ldots, G$ indexes the distinct combination of $\{\phi_g, \nu_g, \delta^2_g\} \in G_{all}$ and $G = |G_{all}|$. We suppress the index $g$ in Algorithm~\ref{code: w_stacking_mean} for ease of notation. We construct $\hat{Y}_g = (\hat{Y}_{1g}^{\T},\ldots,\hat{Y}_{Kg}^{\T})^{\T}$ as the $n\times 1$ vector with $n=\sum_{k=1}^K n_k$, and $\hat{Y} = [\hat{Y}_{1}:\ldots : \hat{Y}_G]$ as an $n\times G$ matrix. Algorithm~\ref{code: w_stacking_mean} describes the explicit steps for computing $\mathbb{E}(y[k] \given y[-k], \phi, \nu, \delta^2)$. Let $w = (w_1, w_2, \ldots, w_G)^\T$ be the stacking weights obtained as $\underset{w }{\mbox{argmin}}\{(y - \hat{Y}w)^\T(y - \hat{Y}w)\}$ under constraints $\sum_{g = 1}^{G} w_g = 1$ and $w_g \geq 0$ for $g = 1, \ldots, G$. We formulate this as a quadratic programming (QP) problem  and use the \texttt{quadprog} package in \texttt{R} and the solver \texttt{Mosek} \citep{andersen2000mosek} in \texttt{Julia} to solve for the weights (see Section~\ref{app: QP_w}).}

\begin{algorithm}[ht]
    \caption{Computing stacking weights using stacking of means}\label{code: w_stacking_mean}
    \begin{algorithmic}[1] 
    \State \textbf{Input:}, $X$, $y$, $\chi$, prior parameters $\mu_\beta$, $V_\beta$, $a_\sigma$, $b_\sigma$, $G_{\phi}, G_{\nu}, G_{\delta^2}$ and $K$
    \State \textbf{Output:} $w = \{w_{\phi, \nu, \delta^2}\}_{(\phi, \nu, \delta^2) \in G_{all}}$: Stacking weights
    \vspace{2mm}
    \State  Compute $X_{\mbox{prod}}^{(k)} = X^\T[-k]X[-k]$, $X_{y}^{(k)} = X^\T[-k]y[-k]$ and $n_k$ for $k = 1, \ldots, K$ 
    \For{$\{\phi,\nu\} \in G_\phi \times G_\nu$}
        \For{$k=1$ to $K$}
        \State Calculate $R^{-1}_{\phi, \nu}(\chi[-k])$ and store $J_{\phi, \nu}(\chi[k], \chi[-k])$
            \For{$\delta^2$ in $G_{\delta^2}$}
            \State Set $W = \begin{bmatrix}
                X[k]^{\T} \\
                R^{-1}_{\phi, \nu}(\chi[-k])J_{\phi, \nu}(\chi[k], \chi[-k])^{\T}
            \end{bmatrix}$ and $m_{\ast} = \begin{bmatrix}
              V_\beta^{-1}\mu_\beta + \delta^{-2}X_y^{(k)} \\ \delta^{-2}y[-k] \end{bmatrix}$
            \State Set $M_{\ast}^{-1} = \begin{bmatrix}
              \delta^{-2} X_{\mbox{prod}}^{(k)} + V_\beta^{-1} & \delta^{-2} X^\T[-k] \\
              \delta^{-2} X[-k] & R^{-1}_{\phi, \nu}(\chi[-k]) + \delta^{-2}I_{n - n_k} \end{bmatrix} $
              \State Set $\mathbb{E}(y[k] \given y[-k], \phi, \nu, \delta^2) = W^{\T}M_{\ast}^{-1}m_{\ast}$
            \EndFor
        \EndFor
    \EndFor
    \State Construct $\hat{Y}$ using $\mathbb{E}(y[k] \given y[-k], \phi, \nu, \delta^2)$ for $k=1,\ldots K$ and $\{\phi, \nu, \delta^2\} \in G_{all}$. 
    \State Solve convex optimization problem: $\arg\min_{w}(y - \hat{Y}{w})^\T
    (y - \hat{Y}{w})$ under constraints $\sum_{g = 1}^{G} w_g = 1$ and $w_g \geq 0$ for $g = 1, \ldots, G$, where $G = |G_{all}|$. 
    \end{algorithmic}
\end{algorithm}

We obtain the stacking weights for predictive densities by evaluating the log point-wise predictive density, ($lp_{(\phi, \nu, \delta^2)}(s)$), of $y(s)$ for all locations in each fold for all candidate models. The log point-wise predictive density is derived explicitly in Section~\ref{app: lppd}, while Section~\ref{app: MC_stacking_alg} devises a Monte Carlo algorithm for stacking of predictive densities. The stacking weights for predictive densities are calculated using \textit{Mosek} in \texttt{R} and \textit{Ipopt} in \texttt{Julia}, both employing interior-point methods for optimization problems with logarithmic objectives and linear constraints.} Algorithm~\ref{code: w_stacking_pds} summarizes the procedure with further details provided in Section~\ref{app: pseudo_code}.

{\footnotesize
\begin{algorithm}[h]
    \caption{
    {Stacking weights calculation using stacking of predictive densities}}\label{code: w_stacking_pds}
    \begin{algorithmic}[1] 
    \State \textbf{Input:}, $X$, $y$, $\chi$, prior parameters $\mu_\beta$, $V_\beta$, $a_\sigma$, $b_\sigma$, $G_{\phi}, G_{\nu}, G_{\delta^2}$ and $K$
    \State \textbf{Output:} $w = \{w_{\phi, \nu, \delta^2}\}_{(\phi, \nu, \delta^2) \in G_{all}}$: Stacking weights
    \For{$\{\phi, \nu\}$ in grid expanded by $G_\phi$ and $G_\mu$}
        \For{$k=1$ to $K$}
            \For{$\delta^2$ in $G_\delta^2$}
            \State Compute $\mathbb{E}(y[k] \given y[-k], \phi, \nu, \delta^2)$ (Follow line 1-7 in Algorithm~\ref{code: w_stacking_mean})
            \For{$s$ in $\chi[k]$}
            \State Compute $lp_{(\phi, \nu, \delta^2)}(s) = \log(p(y(s) \given y[-k], \phi, \nu, \delta^2))$
            \EndFor
            \EndFor
        \EndFor
    \EndFor
    \State Calculate weights by maximizing $\sum_{s \in \chi}\log\left(\sum_{(\phi, \nu, \delta^2) \in G_{all}} \exp{\{lp_{(\phi, \nu, \delta^2)}(s)\}} * w_{(\phi, \nu, \delta^2)}\right)$ under constrains $\sum_{(\phi, \nu, \delta^2) \in G_{all}} w_{(\phi, \nu, \delta^2)}  = 1$ and $w_{(\phi, \nu, \delta^2)} > 0$
    \end{algorithmic}
\end{algorithm}
}

{While inferential performance of these algorithms is promising (theoretically and asymptotically) when the candidate values of the correlation parameters fall in a reasonable domain, the choice of $G_\phi$, $G_\nu$ and $G_\delta^2$ still impact the performance of stacking in practical analysis. Here, we offer guidance on specifying $G_\phi, G_\nu, G_\delta^2$. More comprehensive evaluation and discussion are presented in Section~\ref{sec: simulation}. Based on the property of Mat\'ern kernel, popular choices for $\nu$ include $0.5$, $1.0$, $1.5$ and $1.75$ or $2$. Mat\'ern kernels with $\nu > 2$ generate overly smoothed processes and cause numerical instabilities. The range of candidate values for $\phi$ are determined by a lower and upper bound of range along with the choices for $\nu$. In simulations, we choose equally spaced candidate values for $G_\phi$. We note that an even grid is not equivalent to a uniform prior 
{unlike, for example, \cite{kazianka2012objective}}. $G_\phi$ serves as a discretized domain for $\phi$ and, as we described and showed in Section~\ref{app: Inf_hyper}, one can hardly obtain inference about hyper-parameters through stacking. The choice of candidate values for $\delta^2$ is more subtle. In our implementation, we use quantiles of beta distribution $\mbox{Beta}(a_1, a_2)$ to select candidate values of $\delta^2/(1+\delta^2)$, which falls between 0 and 1. This is based on the fact that when $\sigma^2 \sim \mbox{IG}(a_1, b)$, $\tau^2 \sim \mbox{IG}(a_2, b)$ and the two parameters are independent, $\delta^2/(1+\delta^2) \sim \mbox{Beta}(a_1, a_2)$. The two shape parameters $a_1, a_2$ are determined from values of nugget and partial sill estimated from an empirical semivariogram. We choose $b$ to be the larger value of the estimated nugget and partial sill in our implementation. Since the posteriors of $\sigma^2$ and $\tau^2$ are not independent, other choices for $G_\delta^2$ are preferred when additional information about $\delta^2$ is available.}

\section{Simulation}\label{sec: simulation}
\subsection{Simulation settings}\label{subsec: sim_setting}
We present four simulation experiments to evaluate predictive performance using our stacking algorithms. The data for these experiments are generated using \eqref{eq:spatialG} on locations sampled uniformly over a unit square $[0, 1]^2$ with $R_{\Phi}$ being the Mat\'ern covariogram in \eqref{eq: Matern}. The sample size $n$ of the simulated data sets ranges from $200$ to $900$, and we randomly pick $n_h = 100$ observations for checking predictive performance. The vector $x(s)$ consists of an intercept and a predictor generated from a standard normal distribution. We use the parameter values $\beta = (1, 2)^\T$, $\sigma^2 = 1$, $\tau^2 = 1$, $\nu = 1$, and $\phi = 7$ and $2$ to generate data for the first and second simulation, respectively. For the third and fourth simulation, we set $\phi = 20$ and $2$, respectively, with $\sigma^2 = 1$, $\tau^2 = 0.3$ and $\nu = 0.5$. 

We analyze our data using the $K=10$-fold stacking Algorithms~\ref{code: w_stacking_mean}~and~\ref{code: w_stacking_pds} with candidate values $\nu \in G_{\nu} = \{0.5, 1, 1.5, 1.75\}$. The candidate values for $\phi$ are selected so that the ``effective spatial range'', which refers to the distance where spatial correlation drops below 0.05, covers $0.1$ and $0.6$ times $\sqrt{2}$ (the maximum inter-site distance within a unit square) for all candidate values of $\nu$. Here we set $G_{\phi} = \{3, 14, 25, 36\}$. It is worth noting that the actual value of $\phi$ in the second and fourth simulation are $2$, which are smaller than the lowest candidate value in $G_{\phi}$. The values of $\phi$ were deliberately chosen to be large and small to investigate the behavior of the proposed algorithms.
{Finally, we specify $G_{\delta^2}$ to comprise the $0.05$, $0.35$, $0.65$ and $0.95$th quantiles of a beta distribution with expectations of $\sigma^2$ and $\tau^2$ equal to their data generating values.} We assign an $\mbox{IG}(a_\sigma, b_\sigma)$ prior with $a_\sigma = b_\sigma = 2$ for $\sigma^2$. The prior of $\beta$ is $\mbox{N}(\mu_\beta, V_\beta)$ where $\mu_\beta = 0$ and $V_\beta = 4\cdot {I}$. For each simulated data set, we implement stacking of means and of predictive densities to obtain the expected outcome $\hat{y}(s)$ based on the held out observed locations. The predictive accuracy is evaluated by the root mean squared prediction error over a set of $n_h$ hold-out locations in set ${\cal S}_h$  ($\text{RMSPE} = \sqrt{\sum_{s \in {\cal S}_h}((\hat{y}(s) - y(s))^2)/n_h}$). We also compute the posterior expected values of the latent process $\hat{z}(s)$ for $z(s)$ on all of the $n$ sampled locations in ${\cal S}$ and evaluate the root mean squared error for $z(s)$ ($\text{RMSEZ} = \sqrt{\sum_{s \in {\cal S}}(\hat{z}(s) - z(s))^2/n}$). To further evaluate the distribution of predicted values, we compute the mean log point-wise predictive density for the $n_h$ held out locations ($\text{MLPD} = \sum_{s \in {\cal S}_h}\{\log (\sum_{g = 1}^G w_g p(y(s) \given y, \calM_g))\}/n_h$). 

\begin{figure}[t]
\centering
{\includegraphics[width=0.49\textwidth, keepaspectratio]{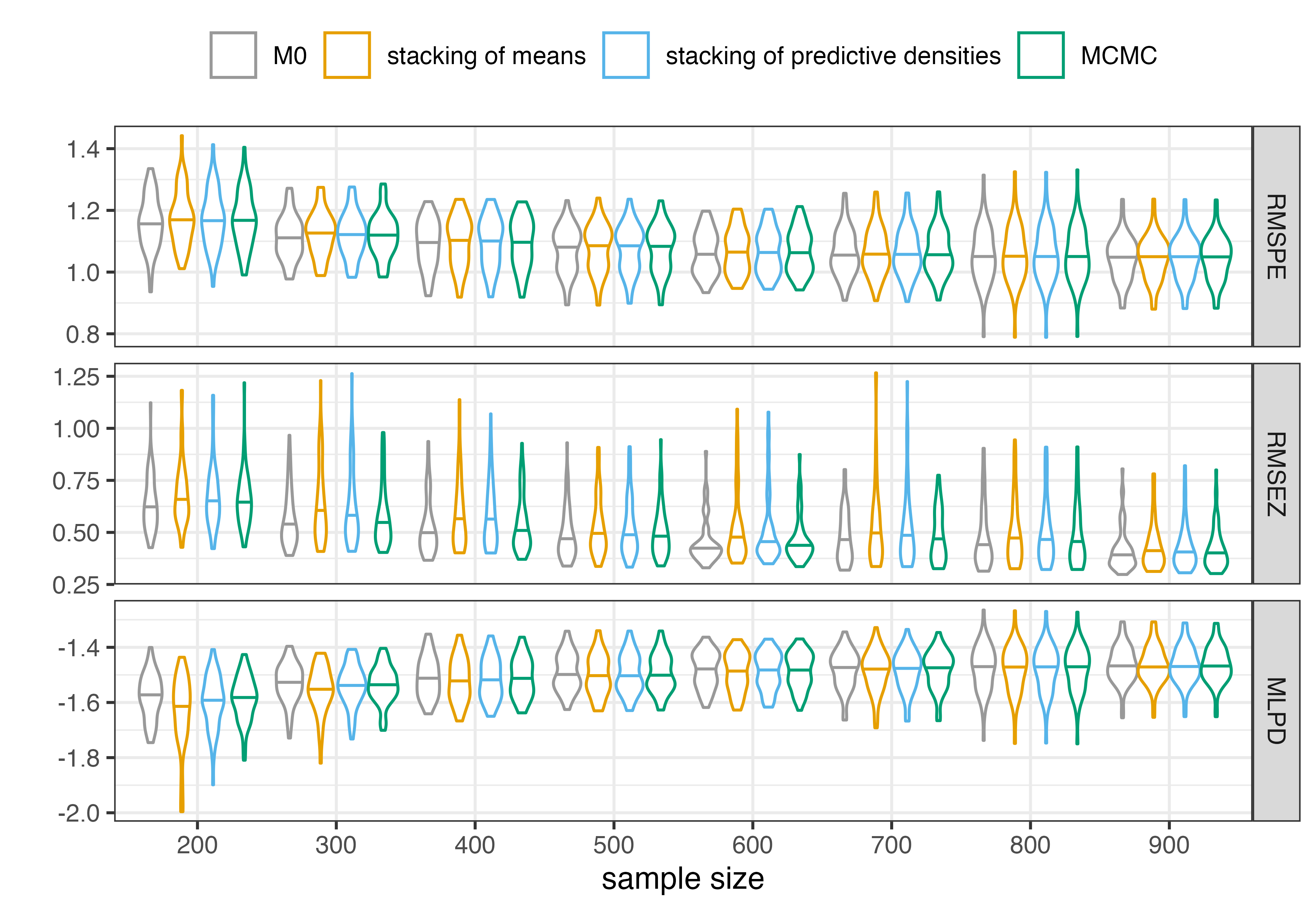}}
{\includegraphics[width=0.49\textwidth, keepaspectratio]{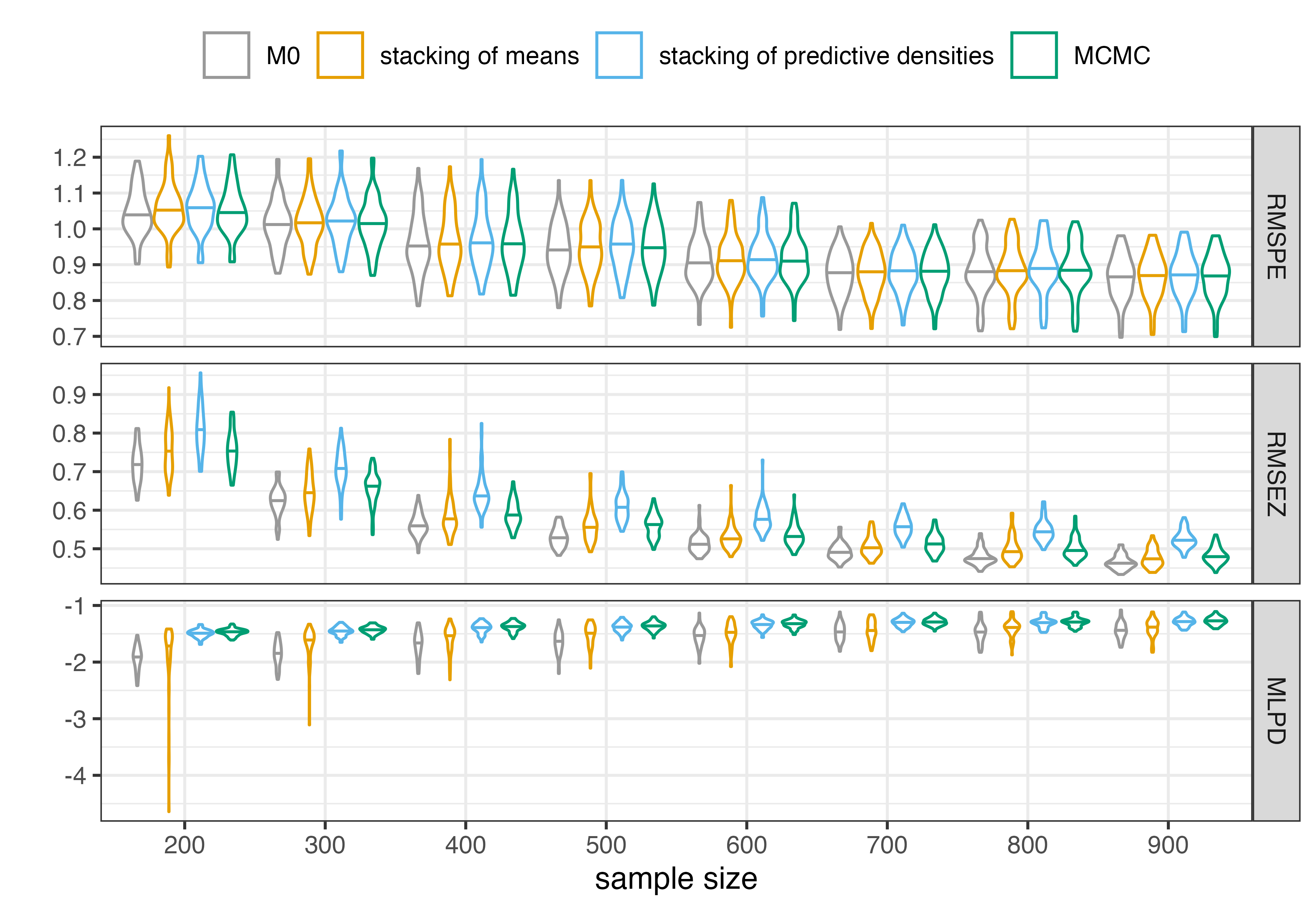}}
\caption{Distributions of the diagnostic metrics for prediction performance for the first simulation (left) and the third simulation (right). Each distribution is depicted through a violin plot. The horizontal line in each violin plot indicates the median. }
\label{fig: sim_compar}
\end{figure}

Apart from stacking, we also implement a fully Bayesian model with priors on the hyperparameters using Markov chain Monte Carlo (MCMC) sampling for comparison. In addition, we carry out exact Bayesian inference using the conjugate model in Section~\ref{sc21} with hyper-parameters fixed at the exact value (denoted as $\mathcal{M}_0$). We use the same priors for $\sigma^2$ and $\beta$ as those in stacking implementations. For the rest of the priors needed in full MCMC sampling, we assign uniform priors $\mbox{U}(3, 36)$ for $\phi$ and $\mbox{U}(0.25, 2)$ for $\nu$, and an $\mbox{IG}(2, 2)$ prior for $\tau^2$. Sampling is fitted through the \textit{spLM} function in the \textit{spBayes} package in \texttt{R}. The diagnostic metrics are computed based on 1,000 posterior samples retained after convergence was diagnosed over a burn-in period of 10,000 initial iterations. The algorithm for recovering the expected $z$ and the log point-wise predictive density based on the output of \texttt{spLM} is presented in Appendix~\ref{app: recover_z_MCMC}.  We monitor all diagnostic metrics for prediction for all competing algorithms. To measure uncertainty of the diagnostic metrics, we generate 60 data sets for each sample size in each simulation, fit each data set with the four competing methods and record the diagnostic metrics of each model fitting. 

\subsection{Predictive performances}\label{subsec: Pred_perform}
The aforementioned methods exhibit different behaviors in the four simulation studies. Figure~\ref{fig: sim_compar} compares predictive performance for the first and the third study. The results for Simulation 2 \& 4 closely mirror those of the first simulation and are included in Appendix~\ref{appsub: figs_sum} for brevity. There are no pronounced distinctions in predictive performance among the competing models in the first simulation. In the third simulation, however, stacking of means seems to deliver better estimates for the latent process at observed and unobserved locations than stacking of predictive densities (based on RMSEZ), while stacking of predictive densities outperforms stacking of means in terms of the log point-wise predictive density (based on MLPD). This is unsurprising as we optimize prediction error in the stacking of means and we maximize the log predictive densities in the stacking of predictive densities. Treating the fully Bayesian model with priors on all hyperparameters (fitted using MCMC) as a benchmark, we find that stacking of predictive densities is very competitive in terms of MLPD. The performance of latent process estimation for the full Bayesian model falls between stacking of means and stacking of predictive densities. All competing algorithms deliver very similar prediction accuracy for the outcome at unobserved locations, while stacking of means slightly outperforms the full Bayesian model with regard to the medians of the RMSPEs for all fittings; both are slightly better than stacking of predictive densities, The conjugate Bayesian model $M_0$ provides the best point estimates for the outcome and latent processes based on RMSPE and RMSEZ, but is less impressive in terms of MLPD. These results indicate that stacking of means is excels in point estimation, while stacking of predictive densities is preferable for interval estimation. 

We check the counts of the non-zero weights in stacking and find that stacking of means tends to produce a slightly smaller number of non-zero weights than stacking of predictive densities. The number of non-zero weights is small for both stacking algorithms. {
This sparsity is not an artifact of our methodology but rather a known characteristic of stacking. As first reported by \cite[Section~9][]{breiman1996stacked}, the author observed that stacking combines a ``surprisingly few'' number of models. Our findings are consistent with this observation: }
on average, there are around 3.6 and 4.3 out of 64 weights that are greater than 0.001 in the simulation studies for stacking of means and stacking of predictive densities, respectively. This number is relatively consistent when the sample sizes increase. Plots for the distributions of nonzero weights counts are provided in Figure~\ref{fig: nonzero_weights_compar}. To further explore this, we visualize the distributions of the non-zero weight values in Supplement~\ref{appsub: w_distr}. These plots reveal a strongly right-skewed distribution, confirming that stacking not only selects a small set of models but further concentrates the predictive influence on just a few top performers. This inherent parsimony highlights stacking's ability to perform implicit model selection. 

\begin{figure}[h]
\centering
\subfloat[\label{subfig: sim1_nonzero}]{\includegraphics[width=0.49\textwidth, keepaspectratio]{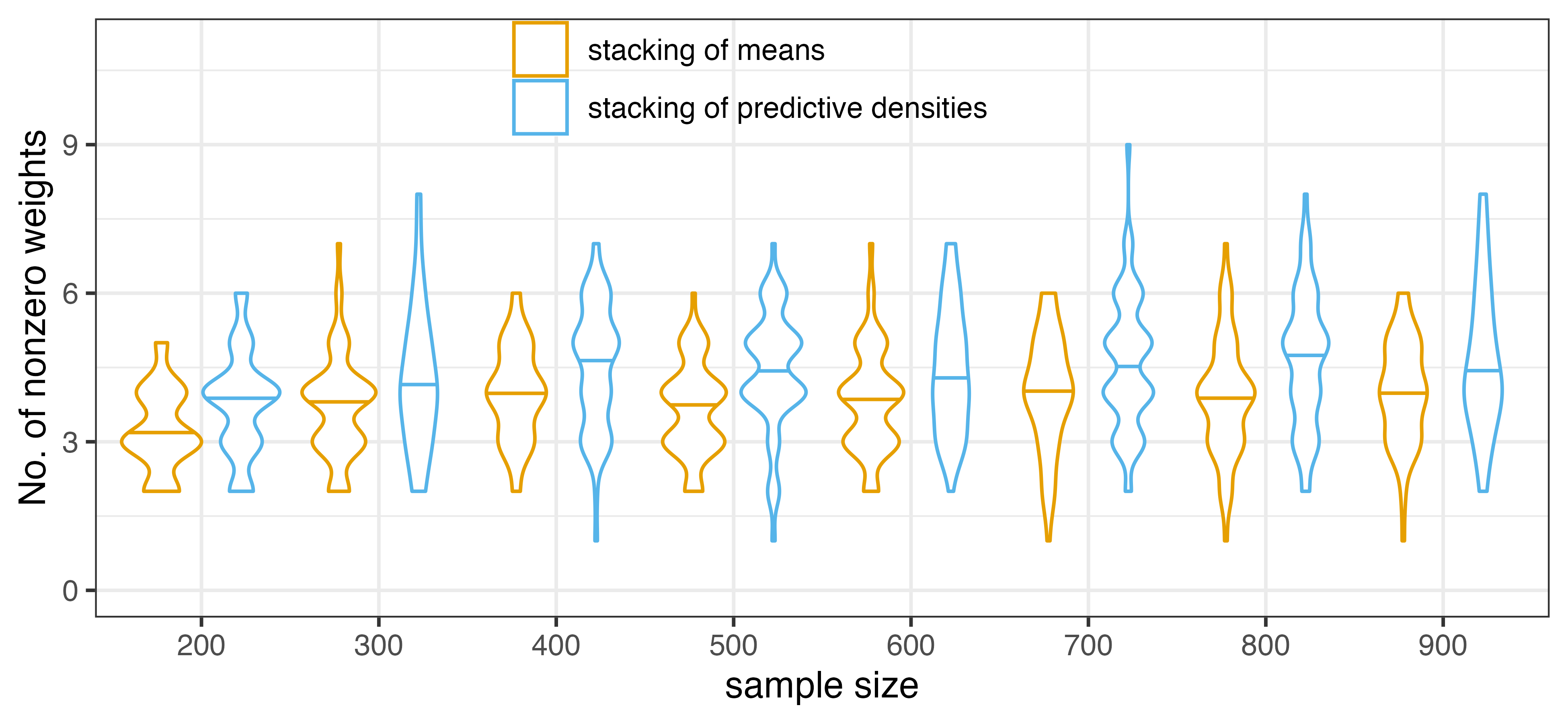}}
\subfloat[ \label{subfig: sim4_nonzero}]{\includegraphics[width=0.49\textwidth, keepaspectratio]{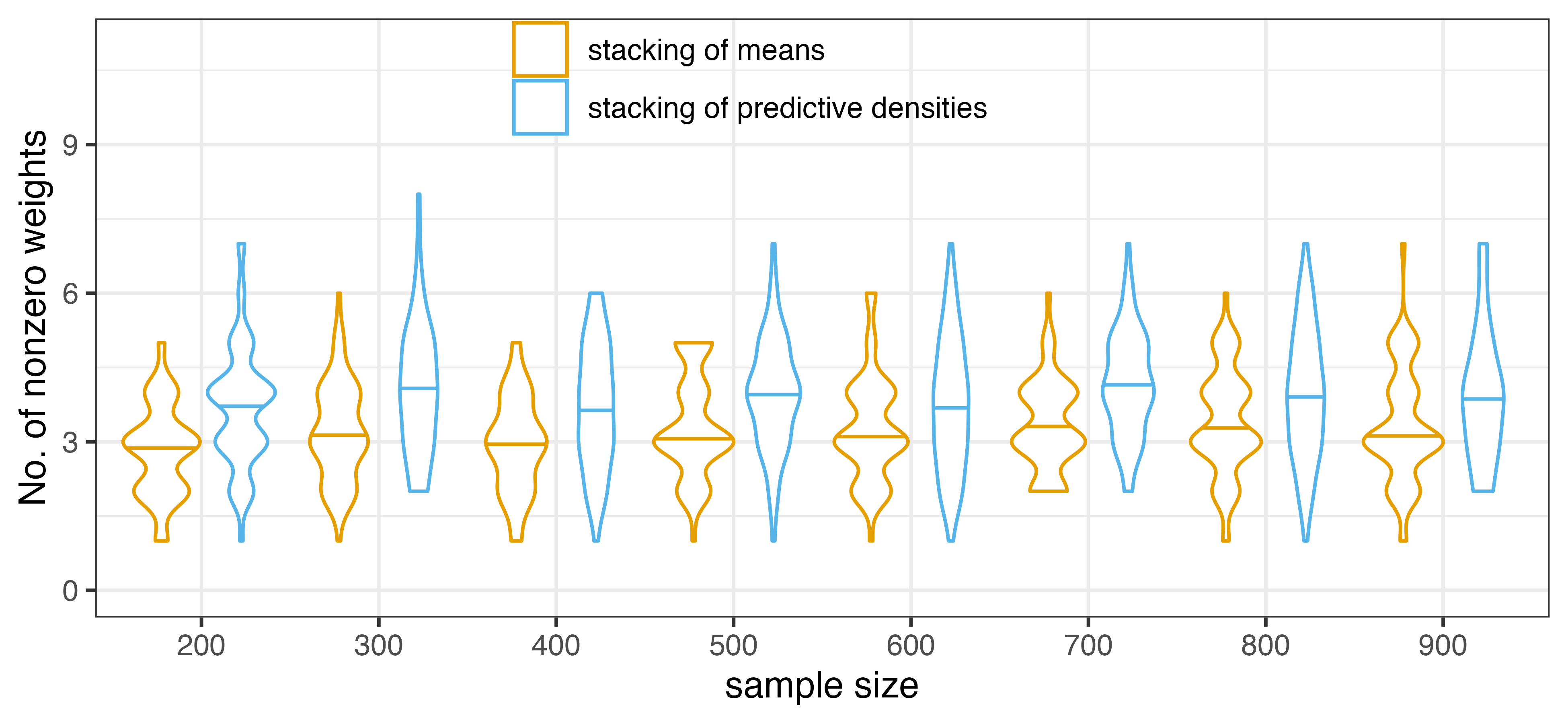}} \\
\subfloat[ \label{subfig: sim2_nonzero}]
{\includegraphics[width=0.49\textwidth, keepaspectratio]{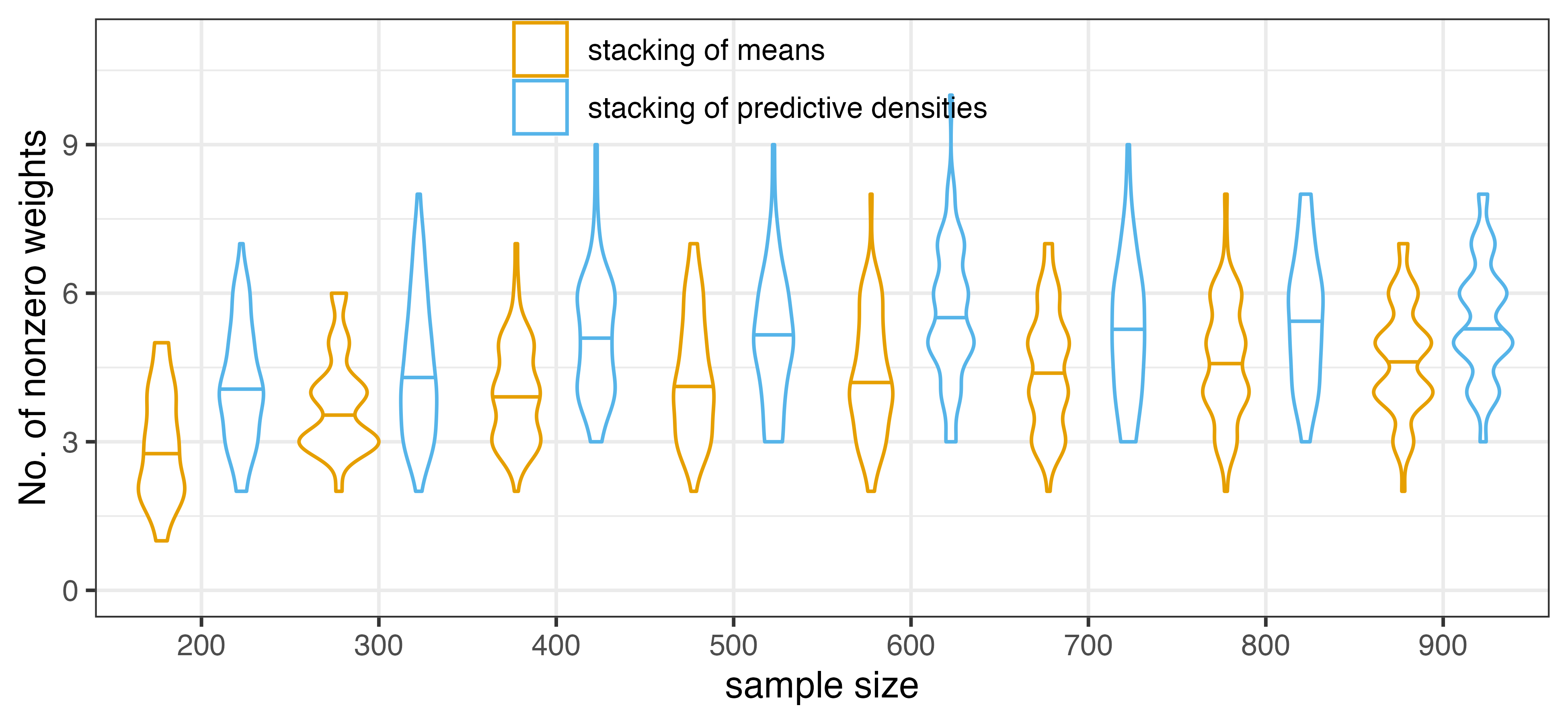}}
\subfloat[ \label{subfig: sim3_nonzero}]
{\includegraphics[width=0.49\textwidth, keepaspectratio]{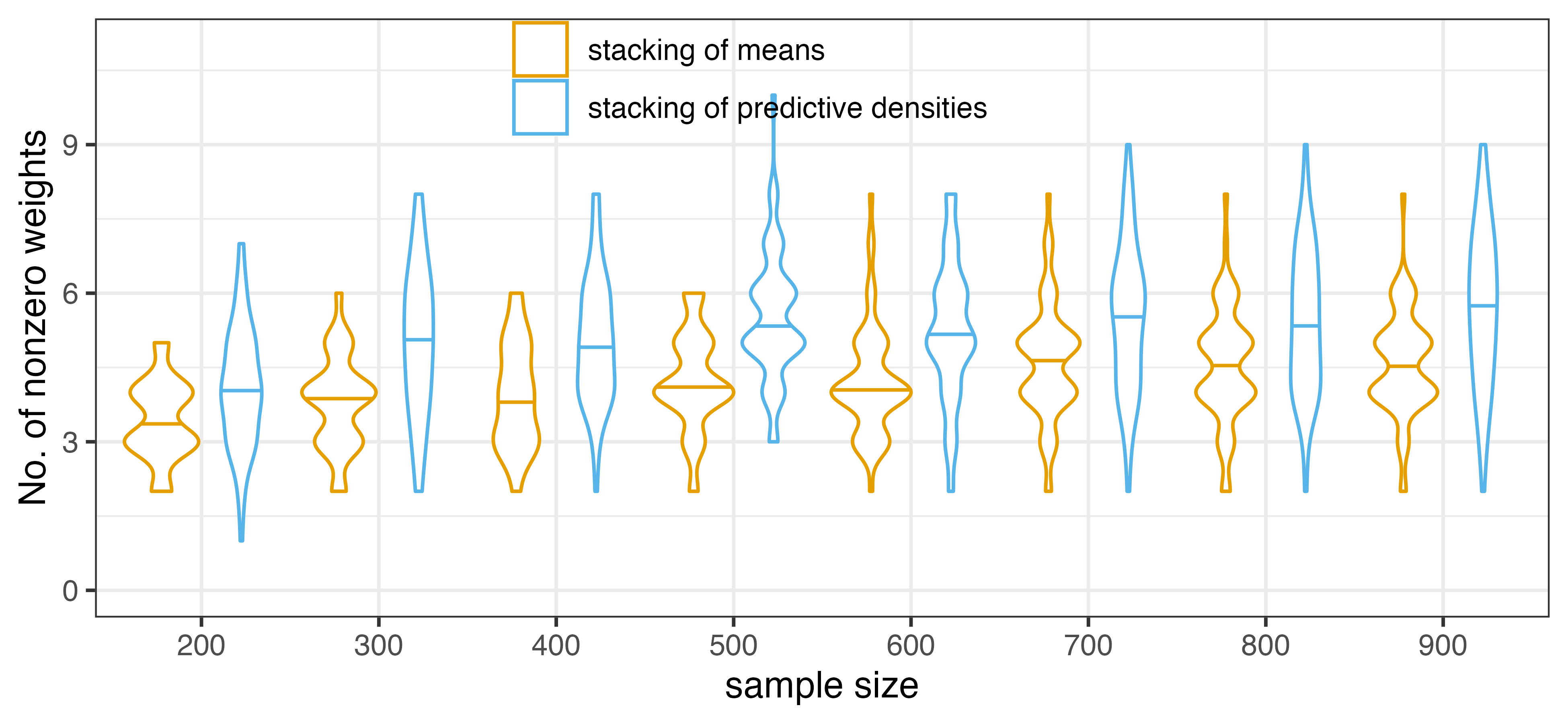}}
\caption{Distributions of the counts of nonzero weights in the first (a), second (b), third (c), and fourth (d) simulation. The distribution of the counts are described through violin plots whose  horizontal lines indicate the medians.}\label{fig: nonzero_weights_compar}
\end{figure}
\captionsetup{labelfont={color=black},textfont={color=black}}

Building on the preceding broader analysis, we now attend to more specific scrutiny of inferential performance. We examine a case from Simulation 1 featuring 800 observations, another from Simulation 2 with 600 observations, a third from Simulation 3 containing 400 observations, and a fourth from simulation 4 containing 200 observations to enhance our evaluation of stacking's predictive performance. These four examples are intentionally selected to represent typical inferential behavior across varying parameter settings and sample sizes. The results from these selected examples are consistent with our empirical findings across 60 replicates and align well with the diagnostic metrics for predictive performance reported for the full set of simulations. Figure~\ref{fig: y_U_CI_compar} directly compares 95\% credible intervals (CIs) and point estimates obtained through stacking and MCMC methods for the first and the third cases. Results for the second and fourth cases, which closely resemble those of the first case, are provided in Appendix~\ref{appsub: figs_sum} for brevity. Here, the 95\% CIs for stacking are based on 900 draws from the stacked posterior distribution. Specifically, we generate 900 posterior samples from each candidate model with non-zero stacking weights and then randomly sample 900 draws from this pool according to the stacking weights. Stacking of predictive density appears to closely align with inference from MCMC, while stacking of means tends to marginally underestimate the CI widths, particularly in the setting with a smaller range. 

\begin{figure}[t]
\centering
\includegraphics[width=0.8\textwidth, height=0.275\textwidth]{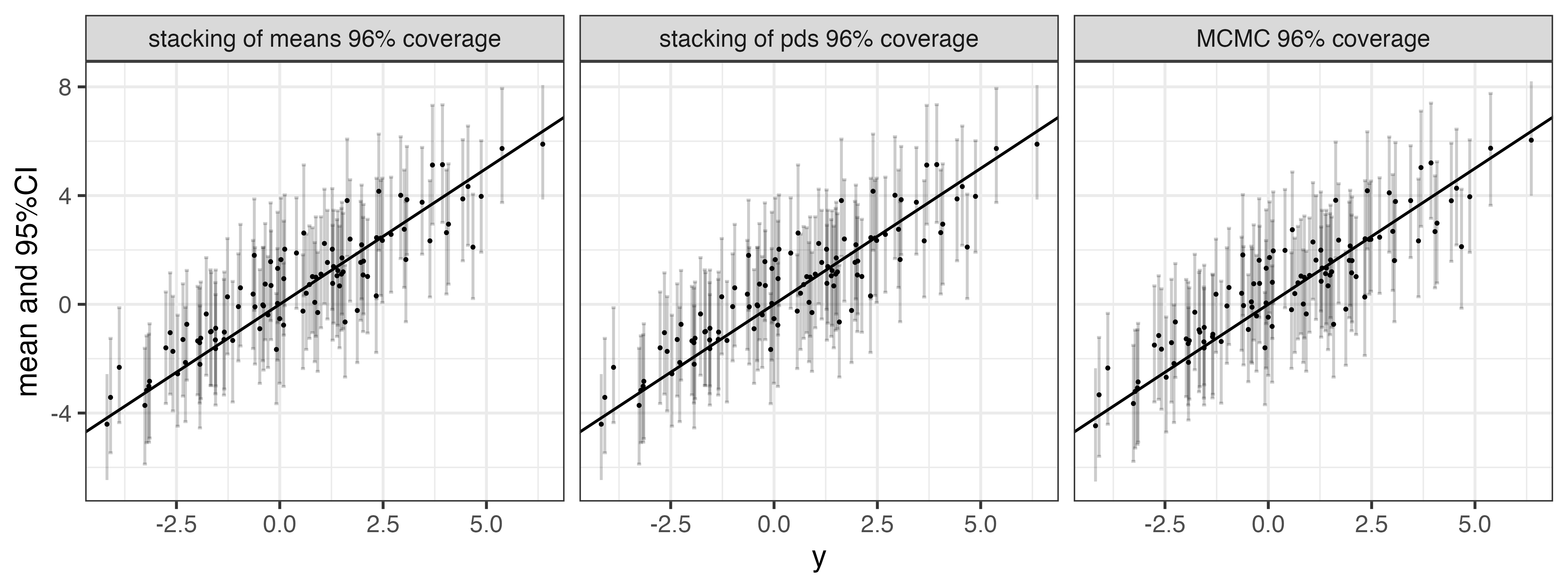}\\
\includegraphics[width=0.8\textwidth, height=0.275\textwidth]{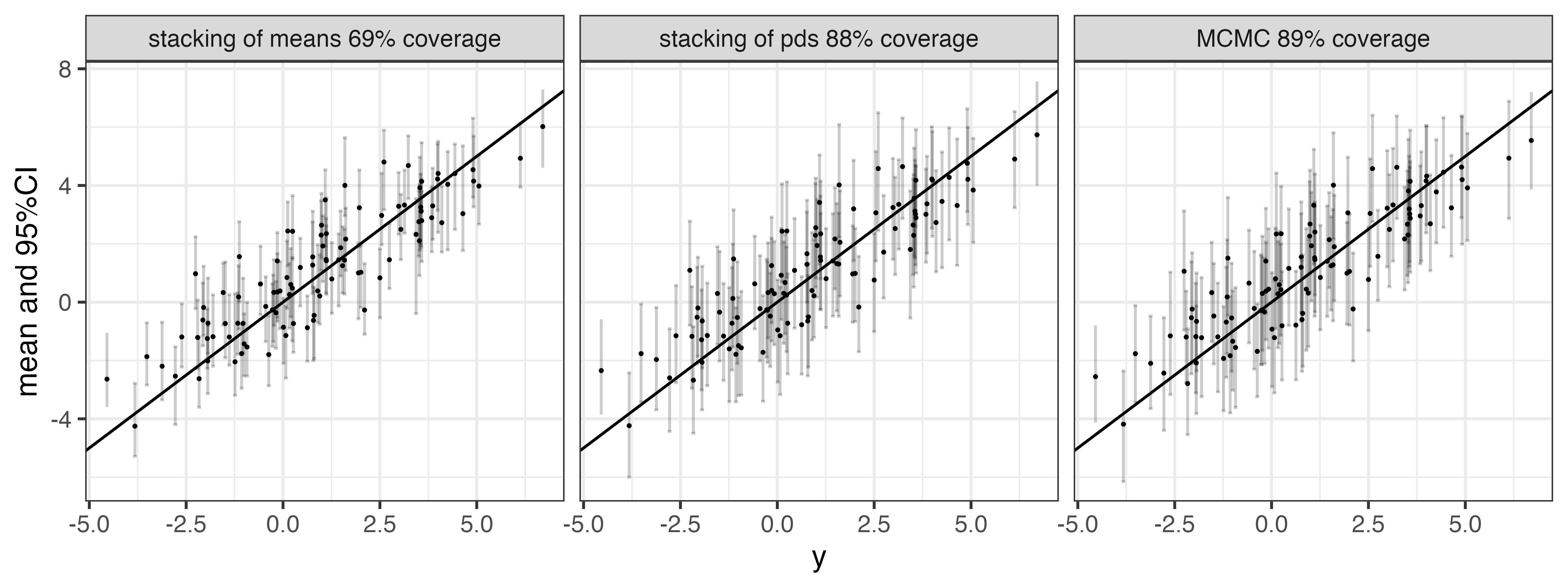}
\caption{ 95\% credible intervals for predicted versus actual outcomes at 100 unobserved locations: Simulation 1 with 800 observations (upper row) and another example from Simulation 3 with 400 observations (lower row). Each plot includes a solid black line representing the 45-degree reference line, with captions indicating 95\% CI coverage. `pds' denotes predictive densities.}\label{fig: y_U_CI_compar}
\end{figure}

Section~\ref{app: maps_w} illustrates the interpolated maps of the predicted outcome at held out locations and the expected latent processes over all locations generated by different fitting algorithms. The posterior predictive means, $\mathbb{E}(y(s)\given y)$, and $\mathbb{E}(z(s)\given y)$, share similar patterns with the raw data. The mean of $z(s)$ estimated by stacking of predictive densities appears smoother than those estimated from $\mathcal{M}_0$, full Bayes and stacking of means, while the predicted mean of $y(s)$ at unobserved locations are indistinguishable across all methods.

\subsection{Running time comparisons}
Computer programs for reproducing the simulation studies are hosted in the GitHub repository. Comparisons in predictive performances presented above are conducted in \texttt{R}. For the running time comparisons reported here, the stacking algorithms are implemented using \texttt{Julia-1.11.1}. The MCMC sampling algorithms are executed through the package \texttt{spBayes} in \texttt{R-4.3.2}, which relies on underlying functions written in C++ for computational efficiency. 
{W}e report the time for obtaining weights for stacking, and we consider the sampling time for $\{\phi, \nu, \sigma^2, \tau^2\}$ using MCMC (no sampling of $\{\beta, z\}$ and no predictions). The timing comparisons are based upon experiments on a Linux system equipped with 64 AMD EPYC 7513 32-Core Processors. Parallel computing is performed using 8 threads. 
Figure~\ref{fig: time_compar} summarizes the running time for the three competing algorithms. On average, the stacking of means is 496 times faster than MCMC, while stacking of predictive densities is only slightly slower being around 483 times faster than MCMC sampling. These experiments clearly establish that predictive stacking algorithms are efficient alternatives to MCMC for estimating latent spatial processes and predicting spatial outcomes.

\begin{figure}[t]
\centering
\includegraphics[width=\textwidth, keepaspectratio]{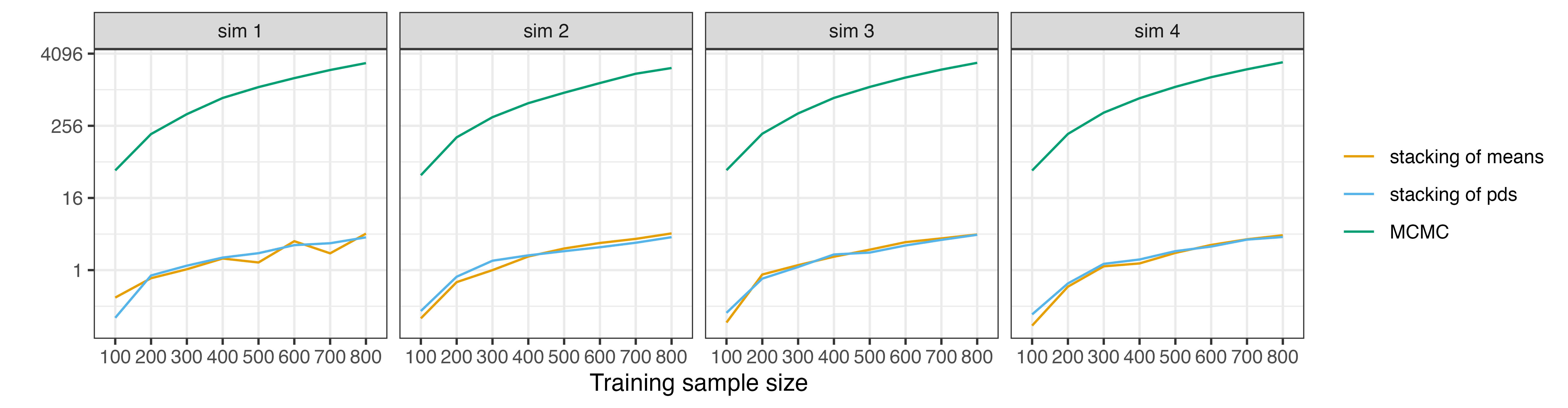}
\caption{Running time comparison for stacking and MCMC sampling}\label{fig: time_compar}
\end{figure}

\subsection{Impact of hyperparameter selection}\label{subsec:Impact_hyper_select}
{We further explore improving inferential performance of stacking by judicious choices of  $\{\phi, \nu, \delta^2\}$. We explore the inferential impact of selecting candidate values for the prefixed hyperparameters from their posterior distributions. These posterior distributions are evaluated from MCMC samples of the full Bayesian model. The method for choosing candidate values as outlined in previous subsections is now referred to as the `default' method. Given that the default algorithm in the simulation studies involves 64 candidate models, we randomly select 64 samples of $\{\phi, \nu, \delta^2\}$ as candidate values. Although obtaining marginal posterior samples is impractical when implementing the stacking algorithm, this approach serves as our benchmark, assuming full knowledge of the marginal posterior. We also present the posterior distributions recovered by MCMC as the gold standard for comparisons.}

\begin{figure}[t]
\centering
{\includegraphics[width=0.49\textwidth, keepaspectratio]{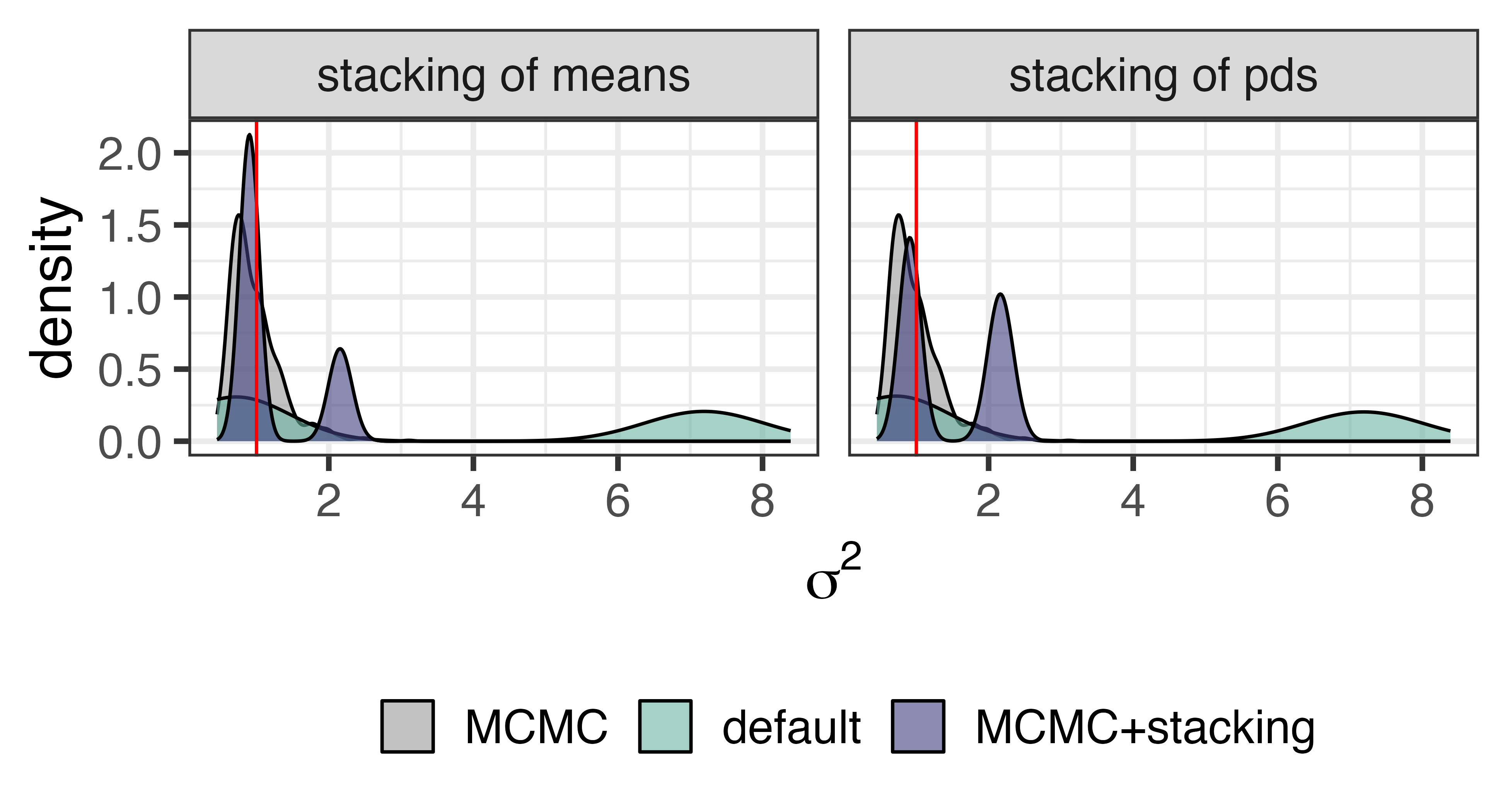}}
{\includegraphics[width=0.49\textwidth, keepaspectratio]{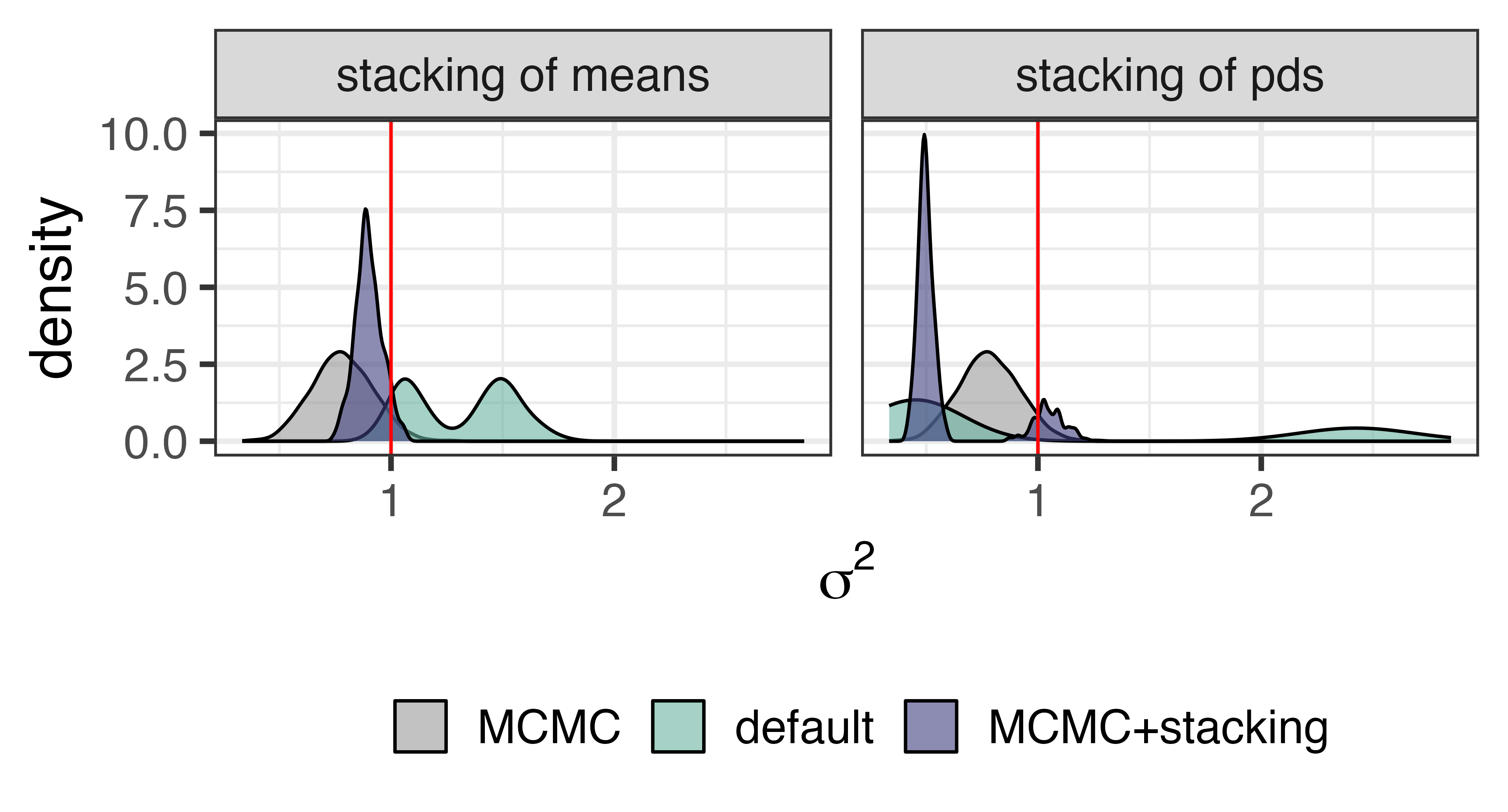}}
\caption{Densities of $\sigma^2$ for the example with 800 observations from simulation 1 (left) and the example with 400 observations from simulation 2 (right). Vertical red lines indicate the actual $\sigma^2$ values. Grey densities represent MCMC-recovered posterior distributions of $\sigma^2$. 'Default' and 'MCMC+Stacking' show stacking results using two methods for selecting ${\phi, \nu, \delta^2}$ candidates. Left panel: stacking of means. Right panel: stacking of predictive densities.}
\label{fig: sim_prefix_sigma_compar}
\end{figure}

Figure~\ref{fig: sim_prefix_compar} depicts a comparison of diagnostic metrics and reveals that there is no significant improvement in predictive performance by selecting the prefixed correlation parameters through posterior distributions. To further check the impact, we revisit the four selected examples in Section~\ref{subsec: Pred_perform}, featuring 800, 600, 400, and 200 observations from simulations~1,~2,~3~and~4, respectively. Representative results from the first and third examples are included in the main paper, with all simulation results detailed in the Appendix for completeness. Figure~\ref{fig: sim_prefix_sigma_compar} compares the posterior distributions for $\sigma^2$. Regardless of the method used for selecting candidate values for the prefixed hyper-parameters, the marginal distributions recovered through stacking invariably exhibits multimodal behavior. This reinforces the claim that stacking does not provide effective inference for covariance parameters, including $\sigma^2$. Furthermore, we observe that the distribution of $\sigma^2$ recovered by stacking is highly dependent on the choice of candidate values for the correlation parameter. Similarly, for $\tau^2$, we observe multimodality in the third example
, as shown in Figure~\ref{fig: sim_prefix_tau_compar}. Intriguingly, the variations in the distribution of $\sigma^2$ resulting from stacking do not impact the predictive distribution of outcomes. We examine the predictive distribution of the outcome at several unobserved points and present some typical examples in Figure~\ref{fig: sim_prefix_50_compar}~and~\ref{fig: sim_prefix_50_90_compar}. The predictive distributions obtained from stacking are largely consistent regardless of hyperparameter selection methods. Additionally, the predictive distributions recovered by stacking of means tend to concentrate around the mode compared to those recovered by stacking of predictive densities. The distributions for the intercept recovered by stacking may have larger variance according to Figure~\ref{fig: sim_prefix_beta_compar}, while those for regression coefficients $\beta_2$ closely align with inference from MCMC. We conclude that the improvement of selecting candidate values for $\{\phi, \delta^2, \nu\}$ from the posterior is limited based on these results.

\begin{figure}[t]
\centering
{\includegraphics[width=0.49\textwidth, keepaspectratio]{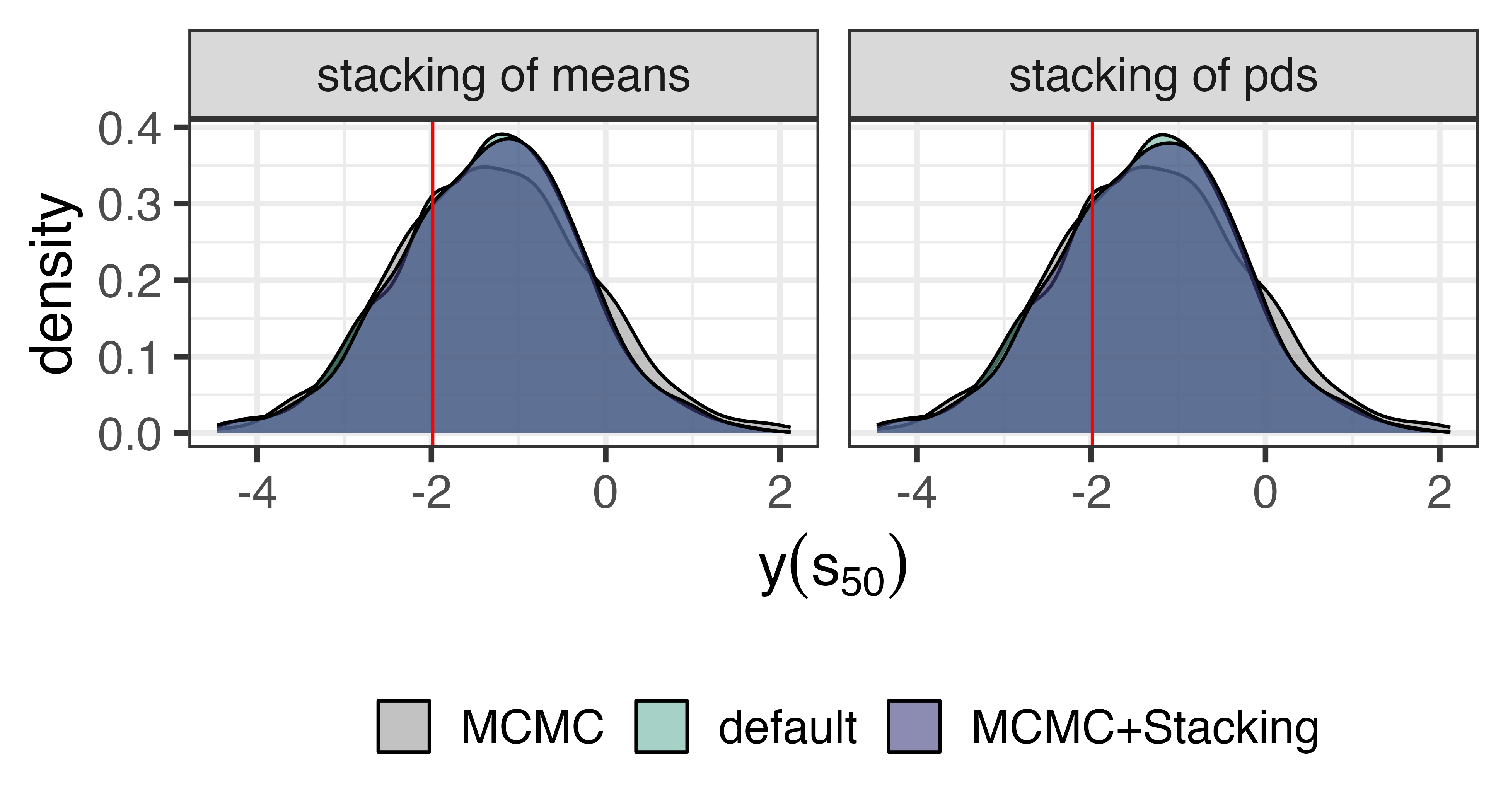}}
{\includegraphics[width=0.49\textwidth, keepaspectratio]{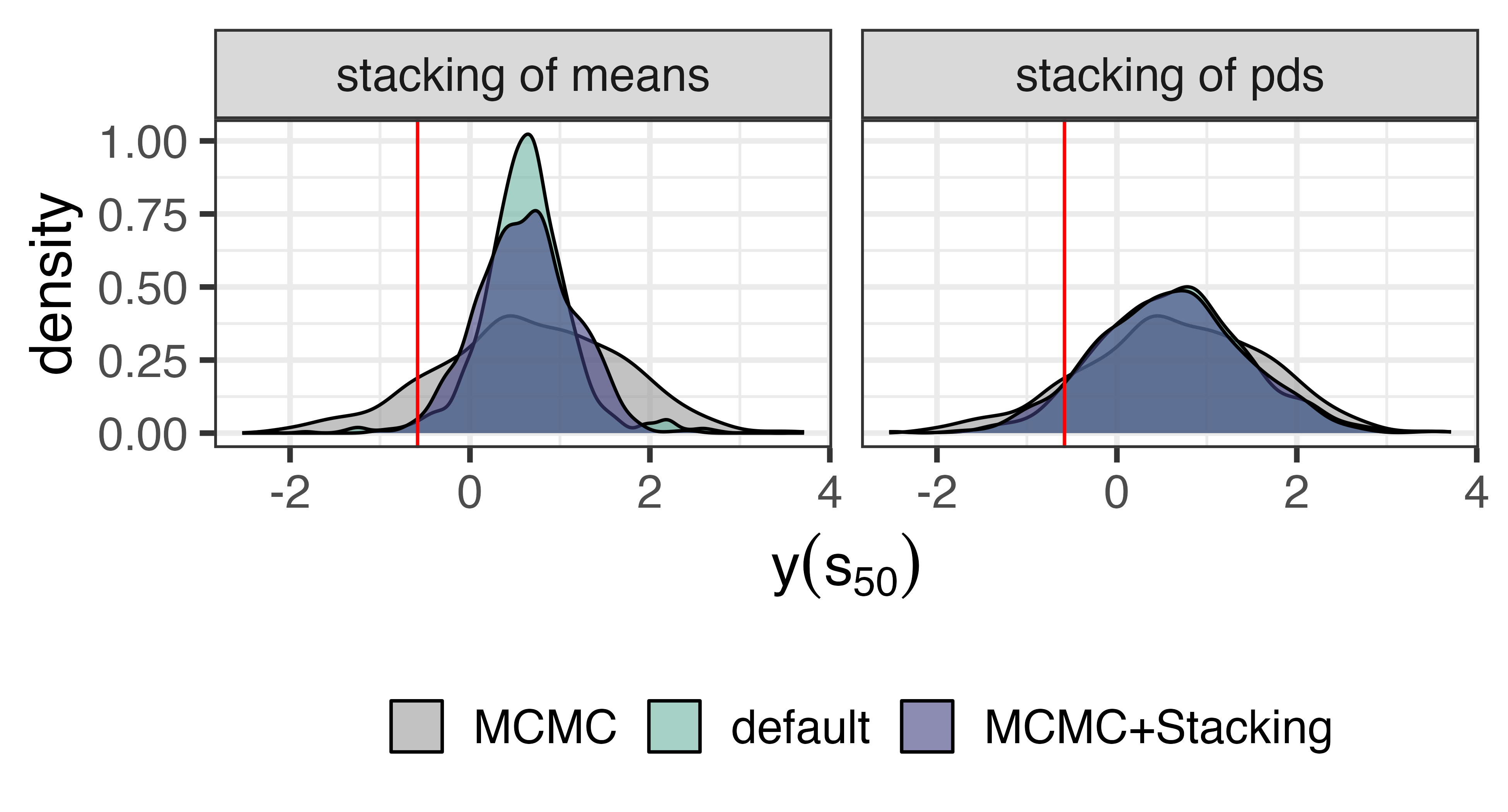}}
\caption{Predictive densities of the outcome at $50$-th point in the example with 800 observations from simulation 1 (left) and the example with 400 observations from simulation 3 (right). Vertical red lines indicate the actual values. Grey densities represent MCMC-recovered posterior distributions. 'Default' and 'MCMC+Stacking' show stacking results using two methods for selecting ${\phi, \nu, \delta^2}$ candidates. Left panel: stacking of means. Right panel: stacking of predictive densities}
\label{fig: sim_prefix_50_compar}
\end{figure}

\section{AOD prediction}\label{sec: aod}
{We use $K=10$-fold Bayesian stacking to analyze Aerosol Optical Depth (AOD) observations from satellite technologies in global aerosol research. This is an increasingly important field across multiple disciplines such as environmental health, climatology, atmospheric science, and remote sensing \citep[][]{voiland2010aerosols}. Unlike ground-based monitoring that is limited by regional coverage and budget, satellite-derived AOD data provides a more expansive picture of aerosol distributions on a global scale. However, cloud screening and conditions of high surface reflectance can result in a significant proportion of missing data in AOD satellite observations \citep{li2009uncertainties}.  
Prevailing AOD interpolation algorithms often rely upon random forests and neural networks \citep[][]{
fan2023satellite, aguilera2023novel}.  Nonetheless, Gaussian process models present a competitive alternative by accommodating data-driven processes and providing essential uncertainty quantification. We apply our Bayesian spatial regression model using stacking to analyze 1-km MODIS AOD products (MCD19A2) over the Greater Los Angeles area \citep{lyapustin2018modis} and assess their predictive performances in the context of large-scale AOD retrieval and uncertainty quantification.}

Figure~\ref{fig: RDA_AOD} is a base image with near-complete AOD coverage on September, 13th, 2018, encompassing 16,003 pixels that do not encroach over water bodies. We use the cloud pattern from August, 24th, 2018, to partition the data. Pixels not obscured by clouds comprise the training set (totaling 11,857 pixels), while the remaining 4,146 pixels form the testing set. We use log-transformed AOD as the outcome and five predictors (resampled to a 1-km resolution): the x-y coordinates, the Enhanced Vegetation Index (EVI) from the 16-day MODIS MOD13A2 products, the impervious surface percentage from the USGS National Land Cover Database (NLCD) 2018, and the weighted road network density from OpenStreetMap (Figure~\ref{fig: RDA_AOD}).

\begin{figure}[t]
\centering
\includegraphics[width=\textwidth, keepaspectratio]{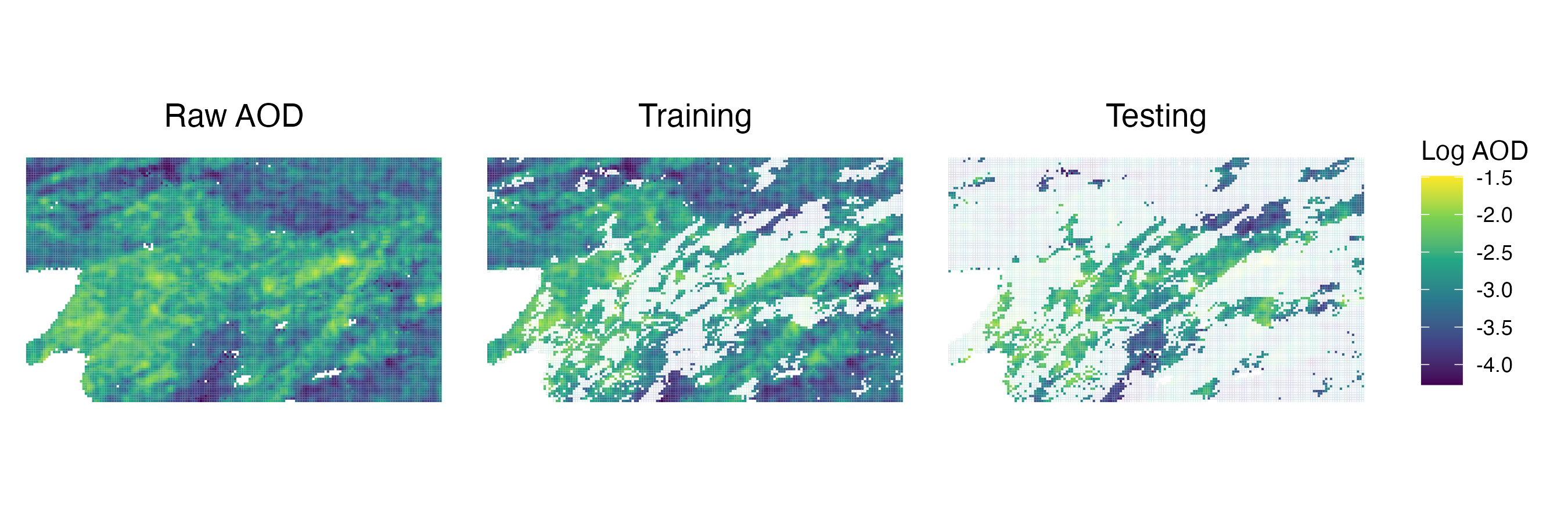}\\
\vspace{-1.1cm}
\includegraphics[width=\textwidth, keepaspectratio]{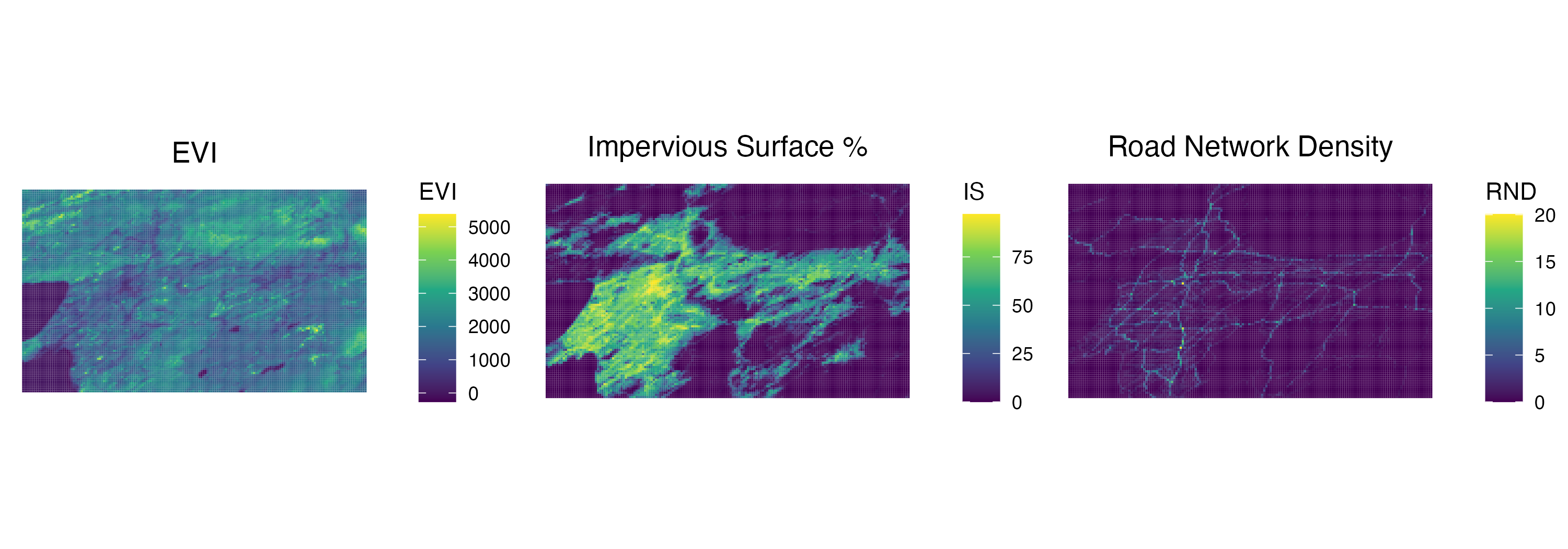}
\caption{Upper: MODIS Aerosol Optical Depth (AOD) Visualization. From left to right: Log-transformed AOD values, training data (pixels not covered by clouds), and testing data (cloud-covered pixels), as of September, 13th, 2018. Lower: Visualization of Regression Model Predictors. From left to right: Enhanced Vegetation Index (EVI) from MODIS MOD13A2 products, impervious surface percentage from USGS NLCD 2018, and weighted road network density from OpenStreetMap, all resampled to 1-km resolution.}\label{fig: RDA_AOD}
\end{figure}

\begin{figure}[t]
\centering
\includegraphics[width=0.7\textwidth, keepaspectratio]{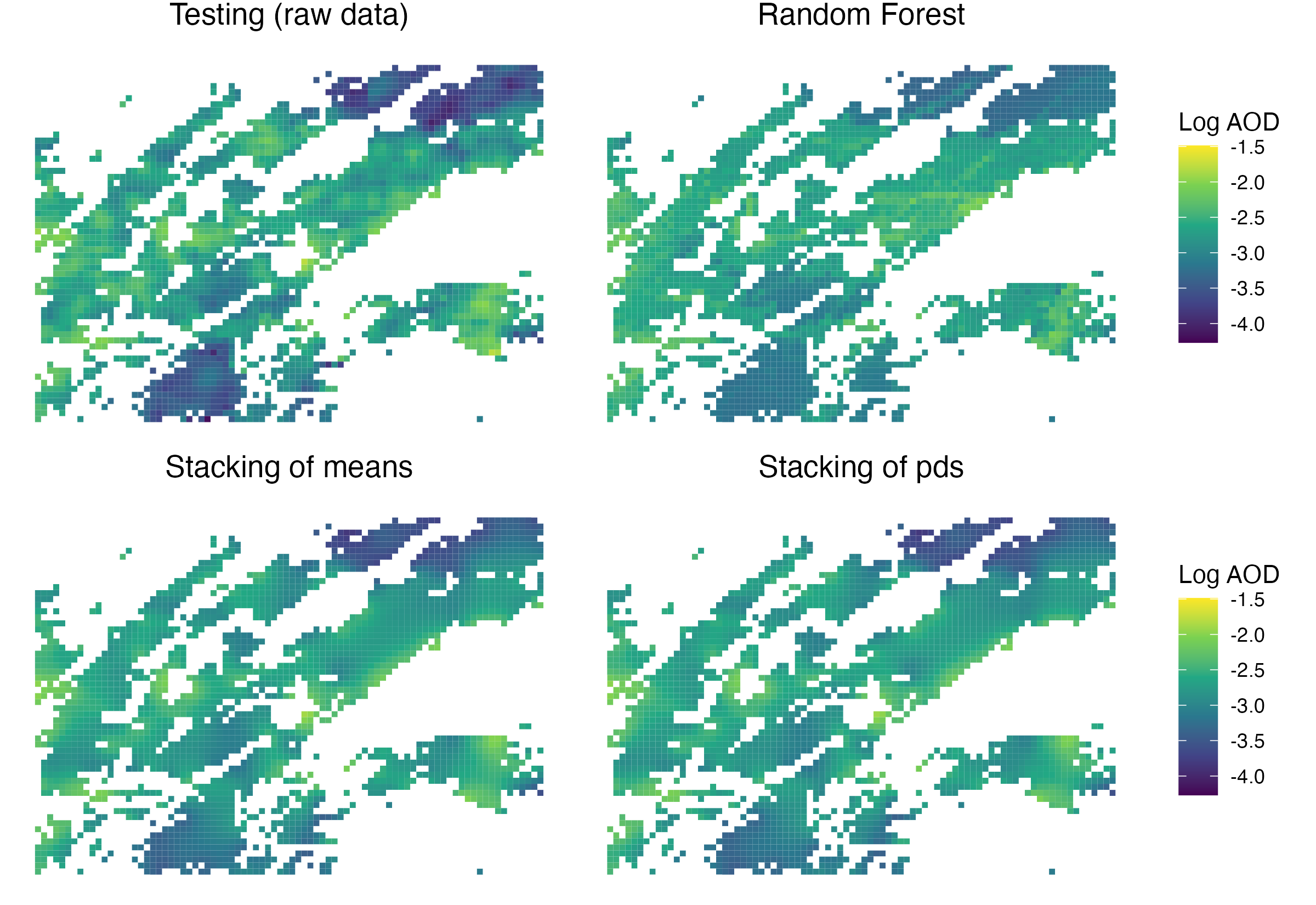} \caption{Interpolated and testing data AOD in the selected region}
\label{fig: RDA_pred}
\end{figure}

Following the prototype in simulation studies, we set $G_\nu = (0.5, 1.0, 1.5, 1.75)$.  The candidate values for $\phi$ were set at $(0.1, 0.4, 0.7, 1.0)$ so that the practical range of the process spans from 5km to 25km. This range was conservatively estimated based on the empirical semivariogram of the residual of a linear regression model (Figure~\ref{fig: RDA_variogram}). We used the estimated values for $\sigma^2$ and $\tau^2$ from the empirical semivariogram to determine $G_{\delta^2}$. We assigned $\sigma^2\sim \mbox{IG}(2, 0.05)$ and the remaining settings following those used in our simulation studies. In addition to stacking, we expanded the analysis by including five competitive algorithms: Deep Learning, Random Forest (RF) \citep{breiman2001random}, Gradient Boosting Machine (GBM) \citep{friedman2001greedy}, an Ensemble Model integrating the aforementioned three algorithms, and Bayesian Linear Regression (BLR) without spatial effects. We implemented the first four algorithms on H2O \citep{cook2016practical}, a popular open-source platform designed for big data analytics. Specifically, we utilized H2O's default settings for the Deep Learning algorithm \citep{candel2016deep} and similarly for the GBM.  For the RF model, we determined the optimal configuration, setting the number of trees to 120 and the maximum tree depth to 60—via a Cartesian grid search. We validated these learner models with 10-fold cross-validation. H2O's Ensemble method is another stacking algorithm which finds the optimal combination of predictions from the fitted DL, RF and GBM models. The BLR model was fitted through \texttt{R} package \texttt{brms}. We assigned $N(0, 2^2)$ to our regression coefficients and 
a half-Cauchy prior $\mbox{half-Cauchy}(0, 0.5)$ to the standard deviation of the error.

For stacking and BLR, the AOD interpolation was retrieved using posterior mean. We calculated the root mean squared prediction error (RMSPE), mean absolute error (MAE) and Pearson correlation coefficient (R) by comparing the interpolated AOD with the testing data AOD, and we have summarized these results in Table~\ref{tab:table1}. The two stacking algorithms significantly outperformed the other competitive algorithms in terms of Pearson correlation coefficients, RMSPE and MAE. Figure~\ref{fig: RDA_pred} showcases predictions at a selection of central testing locations, where we note that the stacking algorithms produce smoother interpolations. We compare the 95\% CI for the testing AOD in Figure~\ref{fig: RDA_CI} and provide the CI coverage (CIC) for 95\% CI and 99\% CI in Table~\ref{tab:table1}. Although the BLR model achieved excellent CI coverage, the CIs it produced are remarkably wider than those obtained through stacking. This analysis demonstrates the effectiveness of our stacking algorithms in AOD interpolation, notably outperforming prevailing models in accuracy and uncertainty quantification with limited predictors, showing great potential for future remote sensing data interpolation.

\begin{table}[t]
\begin{center}
\caption{Comparative performance metrics for AOD interpolation. Best results are in \textbf{bold}.}
\scalebox{0.9}[0.9]{ 
\begin{tabular}{|c|c|c|c|c|c|c|} 
 \hline
 Method & RMSPE & MAE & R & 95\% CIC & 99\% CIC \\
 \hline
Bayesian linear regression & 0.019 & 0.0147 & 0.693 & \textbf{94.9\%} & \textbf{99.4\%} \\
Deep Learning (H2O)  & 0.0166 & 0.0127 & 0.792 & NA & NA \\
Gradient Boosting (H2O)  & 0.0161 & 0.0122 & 0.797 & NA & NA \\
Random Forest (H2O)  & 0.0153 & 0.0115 & 0.821 & NA & NA \\
Ensemble model (H2O)  & 0.0154 & 0.0116 & 0.816 & NA & NA \\
Stacking of means & \textbf{0.010} & \textbf{0.007} & \textbf{0.927} & 75.9\% & 87.0\%\\
Stacking of posterior density& \textbf{0.010} & \textbf{0.007} &  0.926 & 82.1\% & 93.9\% \\
\hline
\end{tabular}}
\label{tab:table1}
\end{center}
\end{table}

\section{Conclusion and future work}\label{sec: conclusion}

We devised Bayesian inference for geostatistical data using predictive stacking. We offer some theoretical insights into the inferential behavior of posterior distributions in fixed-domain or infill settings and explore inferential performance through simulations and analysis of AOD data. The empirical results reveal that Bayesian predictive stacking delivers predictions comparable to full Bayesian inference obtained using MCMC samples, but at significantly lower computational costs. Our proposed algorithms are implemented in parallel using efficient storage and, hence, comprise an efficient alternative to full Bayesian inference using MCMC samples. 
We can build upon our current framework to extend Bayesian stacking for multivariate geostatistics using conjugate matrix-variate normal-Wishart families \citep{zhang2021high} and conjugate exponential families for non-Gaussian data \citep{bradley2020jasa}. A recent article by \cite{pan2024bayesian} extends the current current framework to spatial-temporal generalized linear models and another by \cite{presicceBanerjee2025geoAI} develops a Bayesian transfer learning framework for massive spatial datasets using predictive stacking. 
Stacked Bayesian inference for high-dimensional geostatistics \citep[e.g., building on the conjugate frameworks in][]{Banerjee20, zhang2019practical} is also possible as is the development of stacking methods for pooling inference across subsets of data \citep[e.g., as an alternative to meta-kriging described in][]{guhaniyogi2018a}. Finally, the R package \texttt{spStack} available for download from \url{https://cran.r-project.org/package=spStack} implements Bayesian predictive stacking for Gaussian and non-Gaussian data based on the methodology developed in this article.  

\bibliographystyle{ba}
\bibliography{stacking}  

\clearpage

\appendix

\section{Limiting behavior of Lemma~\ref{lem:structure} as $\delta^2 \rightarrow 0$}\label{sec: deltasq0limit}
We here present an approach that avoids singular matrices in \eqref{eq:posterior} in Lemma~\ref{lem:structure} when $\delta^2 = 0$. For brevity, we simplify the notation in this section by letting $R$ and $I$ denote $R_{\Phi}(\chi)$ and $I_n$, respectively.
Using familiar matrix identities, the $M_\ast$ in \eqref{eq:posterior} can be formulated as
\begin{align*}
 M_\ast = \begin{bmatrix} E & -E X D^{-1}\\
 -D^{-1} X E & \delta^2 D^{-1} + D^{-1} X E X^\T D^{-1}
 \end{bmatrix}\;,
\end{align*}
where $E = \{X^\T(\delta^2 I + R)^{-1}X + V_\beta^{-1}\}^{-1}$, $D = (I+\delta^{2}R^{-1})$. Checking $\hat{\gamma} = M_{\ast}X_*^{\T}V_{*}^{-1}y_*$, we have
\begin{align*}
  \hat{\gamma} &= \left\{\begin{bmatrix}
      0 & 0\\
      0 & \delta^2 D^{-1}\\
  \end{bmatrix}  + \begin{bmatrix}
      I\\ -D^{-1} X
  \end{bmatrix} E \begin{bmatrix}
      I& -X^\T D^{-1}
  \end{bmatrix}\right\}
  \begin{bmatrix}
      \delta^{-2}X^\T y + V_\beta^{-1}\mu_\beta \\
      \delta^{-2} y 
  \end{bmatrix} =\begin{bmatrix}
      B \\ D^{-1}(y - X B)
  \end{bmatrix}\;,
\end{align*}
where $B = E\{\delta^{-2}X^\T (I - D^{-1}) y + V_\beta^{-1}\mu_\beta\}$. Since $\delta^{-2}(I - D^{-1}) = \delta^{-2}\{I - (I + \delta^2 R^{-1})^{-1}\} = (\delta^2 I + R)^{-1}$, we can avoid having $\delta^{-2}$ in the formulation of the posterior distribution \eqref{eq:posterior} by letting $B = E\{X^\T (\delta^2 I + R)^{-1} y + V_\beta^{-1}\mu_\beta\}$, $b_\sigma^\ast = b_\sigma + 0.5\{(y - XB)^\T (I - D^{-1}) (\delta^2 + R)^{-1}(y - XB) + (\mu_\beta - B)^\T V_\beta^{-1}(\mu_\beta - B) + (y - XB)^\T D^{-1} R^{-1} D^{-1} (y - XB)\}$. As $\delta^2$ goes to zero, 
\begin{align*}
    E &\rightarrow \{X^\T R^{-1}X + V_\beta^{-1}\}^{-1}\quad B \rightarrow  E\{X^\T R^{-1} y + V_\beta^{-1}\mu_\beta\} \quad 
    D^{-1} \rightarrow I \\
    b_\sigma^\ast &\rightarrow b_\sigma + 0.5\{(\mu_\beta - B)^\T V_\beta^{-1}(\mu_\beta - B) + (y - XB)^\T R^{-1} (y - XB)\}\;.
\end{align*}
For model without nugget ($\delta^2 = 0$), the conjugate Bayesian hierarchical spatial model is constructed as
\begin{equation}
\begin{aligned}
 y \;|\; &\beta, \sigma^2  \sim \mathcal{N}(X \beta, \sigma^2 R), \quad
 \beta \;|\; \sigma^2 \sim \mathcal{N}(\mu_\beta, \sigma^2 V_{\beta}), \quad \sigma^2 \sim \mbox{IG}(a_\sigma, b_{\sigma}).
\end{aligned}
\end{equation}
The corresponding posterior distribution 
is
$p(\beta, \sigma^2 \,|\, y) = \underbrace{\mbox{IG}(\sigma^2 \,|\, \Tilde{a}_\sigma, \Tilde{b}_\sigma)}_{p(\sigma^2 \,|\, y)} \times 
\underbrace{\mathcal{N}(\beta \,|\, \hat{\beta}, \sigma^2 \Tilde{M})}_{p(\beta \,|\, \sigma^2, y)}$ where $\Tilde{a}_\sigma, \Tilde{b}_\sigma, \hat{\beta}, \Tilde{M}$ coincide with $a^\ast_\sigma, b^\ast_\sigma, B, E$, respectively, when $\delta^2$ reaches zero. A few algebraic simplifications reveal that the correct posterior distribution $p(\beta,\sigma^2 \mid y)$ for the model without the nugget is obtained using the limiting argument.

\section{Proof of Theorem~\ref{thm:main2}}\label{sec: thm:main2}

\begin{customthm}{\ref{thm:main2}}
Assume that the location set $\chi = \{s_1, \ldots, s_n\}$ satisfies the infill condition:
\begin{equation}
\label{eq:spacecond}
\max_{s \in \mathcal{D}} \min_{1 \le i \le n} |s - s_i| \asymp n^{-\frac{1}{d}}.
\end{equation}
Let $\mathbb{P}_0$ be the probability distribution of the Mat\'ern model \eqref{eq:Matern} with $(\sigma_0^2, \phi_0, \tau_0^2)$. 
Under $\mathbb{P}_0$, 
\begin{equation}
\tag{3.5}
\lim_{n\to\infty} p(\sigma^2 \given y) = \texttt{Dirac}(\tau_0^2/\delta^2)\;, \mbox{ and } \lim_{n\to\infty} p(\tau^2 \given y) = \texttt{Dirac}(\tau_0^2)
\end{equation}
where $y = (y(s_1),y(s_2),\ldots,y(s_n))^{\T}$,
and $\texttt{Dirac}(\cdot)$ denotes the Dirac mass point.
\end{customthm}

The proof of this theorem breaks into the following lemmas. Recall the definition of $b_\sigma^{*} = b_{\sigma,n}^{*}$ from Lemma \ref{lem:structure}. We derive a simple expression for $b_{\sigma,n}^{*}$, which is specific to the conjugate model 
\eqref{eq:BayesianG2}.

\begin{lemma}\label{lem:bsigsimple}
We have 
$b_{\sigma,n}^* = b_\sigma + \frac{1}{2} y^\T (\delta^2 I_n + R_\phi(\chi))^{-1} y$.
\end{lemma}
\begin{proof}
Note that 
\begin{equation}
\label{eq:newMm}
M_*^{-1} = X_*^\T V_{y_*}^{-1} X_* = \delta^{-2} I_n + R^{-1}_\phi(\chi), \quad
X_*^\T V_{*}^{-1} y_* = \delta^{-2} y \quad
\mbox{and} \quad y_*^\T V_{*}^{-1} y_* = \delta^{-2} y^\T y.
\end{equation}
By the Woodbury matrix identity, we can simplify 
\begin{align*}
(y_*-X_{*}\hat{\gamma})^{\T}V_*^{-1}(y_* - X_*\hat{\gamma}) &=\delta^{-2} y^\T y - \delta^{-2} y^\T(I_n + \delta^{2}R_\phi^{-1}(\chi))^{-1} y \\
&= y_*^\T (\delta^2I_n+ R_\phi(\chi))^{-1} y,
\end{align*}
which yields the desired result.
\end{proof}

The next lemma investigates the asymptotic behavior of $b_\sigma^*$ when the range decay $\phi = \phi_0$, i.e., $\phi$ is fixed at the value generating the data.

\begin{lemma}\label{lem:bstarasymp}
Let $\phi = \phi_0$, and assume that 
$\max_{s \in \mathcal{D}} \min_{1 \le i \le n} |s - s_i| \asymp n^{-\frac{1}{d}}$.
Then
\begin{equation}
\label{eq:blimit}
\frac{b_{\sigma,n}^* - b_{\sigma}}{n} \longrightarrow \frac{\tau_0^2}{2 \delta^2}, \quad \mathbb{P}_0\mbox{-almost surely}.
\end{equation}
\end{lemma}

\begin{proof}
Let $Q_n$ be the orthogonal matrix such that 
$Q_{n} R_{\phi_0}(\chi) Q_n^\T = 
\begin{pmatrix}
\lambda_1^{(n)} &  & \\
 & \ddots & \\
 & & &\lambda_n^{(n)}
\end{pmatrix}$,
where $\lambda_i^{(n)}$ is the $i$-th largest eigenvalue of matrix $R_{\phi_0}(\chi)$.
Thus, under $\mathbb{P}_0$, 
\begin{equation*}
Q_n y \sim \mathcal{N}\left(0, \begin{pmatrix}
\sigma_0^2 \lambda_1^{(n)} + \tau_0^2 &  & \\
 & \ddots & \\
 & & &\sigma_0^2 \lambda_n^{(n)} + \tau_0^2
\end{pmatrix}\right).
\end{equation*}
By Lemma \ref{lem:bsigsimple}, we get
\begin{equation}
\label{eq: sigma_post_b}
2(b_{\sigma,n}^* - b_\sigma) = \sum_{i = 1}^n \frac{\sigma_0^2 \lambda_i^{(n)} + \tau_0^2}{\lambda_i^{(n)} + \delta^2} u_i^2,
\end{equation}
where $u_i  \overset{i.i.d.}{\sim} \mathcal{N}(0, 1)$ for $i = 1, \ldots, n$.
By \citet[Corollary 2]{TZB2021}, there exists $C > 0$ independent of $n$ such that 
$\lambda_i^{(n)} \le C n i^{-\frac{2 \nu}{d} - 1}$ for all $1 \le i \le n$.
This implies that 
\begin{equation}
\label{eq:sumasymp}
\sum_{i = 1}^n \frac{\sigma_0^2 \lambda_i^{(n)} + \tau_0^2}{\lambda_i^{(n)} + \delta^2} \sim \frac{n \tau_0^2}{\delta^2} \quad \mbox{as } n \to \infty.
\end{equation}
By the law of large numbers, \eqref{eq:blimit} follows from \eqref{eq: sigma_post_b} and \eqref{eq:sumasymp}.
\end{proof}

\begin{proof}[Proof of Theorem \ref{thm:main2}]
Let $\sigma'^2: =\sigma_0^2 \phi_0^{2 \nu}/\phi^{2 \nu}$, 
and let $\mathbb{P}'$ be the probability distribution of the Mat\'ern model with parameters $(\sigma'^2, \phi, \tau_0^2)$.
By \citet[Theorem 1]{TZB2021}, $\mathbb{P}'$ is equivalent to $\mathbb{P}$.
Further by Lemma \ref{lem:bstarasymp}, 
\begin{equation*}
\frac{b_{\sigma,n}^* - b_\sigma}{n} \longrightarrow \frac{\tau_0^2}{2 \delta^2}, \quad \mathbb{P}'\mbox{-almost surely}.
\end{equation*}
which also holds $\mathbb{P}_0$-almost surely.
By Lemma \ref{lem:structure},
\begin{equation}
\mathbb{E}_0(\sigma^2 \,|\, y) = \frac{b_{\sigma,n}^*}{a_{\sigma,n}^*} \sim \frac{\tau_0^2}{\delta^2} \quad
\mbox{and} \quad
\mathbb{V}_0(\sigma^2 \,|\, y) = \frac{b_{\sigma,n}^{*2}}{(a_{\sigma,n}^*-1)^2 (a_{\sigma,n}^*-2)} \asymp \frac{1}{n},
\end{equation}
which yields \eqref{eq:postin} by Chebyshev's inequality and the posterior consistency of $\tau^2 = \delta^2 \sigma^2$.
\end{proof}
\section{Posterior predictive consistency for the Mat\'ern model}\label{sec: prop:main2bis}

\begin{customthm}{\ref{prop:main2bis}}
Let $s_0 \in \mathcal{D}$. 
For any given $\phi>0$, denote $\cov(z, z(s_0) \given \sigma^2)$ and $R_{\phi}(\chi)$ by $\sigma^2 J_{\phi, n}$ and $R_{\phi, n}$, respectively.  
Then we have the decomposition 
\begin{equation}
\tag{3.6}
\mathbb{E}_0(Z_n(s_0) - z(s_0))^2 = E_{1,n} + E_{2,n} + o(1),
\end{equation}
where $E_{1, n}$ is the prediction error of the best linear predictor for a Mat\'ern model with parameters $\{\sigma'^2, \phi, \tau'^2\}$ satisfying $\delta^2 = \frac{\tau'^2}{\sigma'^2}$,
and 
\begin{equation}
\tag{3.7}
E_{2, n}: 
=\frac{\tau_0^2}{\delta^2}\left[1 - J_{\phi, n}^\T (\delta^{2}I_n + R_{\phi, n})^{-1} J_{\phi, n} \right]
\end{equation}
\end{customthm}
\begin{proof}
By \eqref{eq:newMm}, we have
$$p(z \,|\, y, \sigma^2) = \mathcal{N}((I_n + \delta^2 R_\phi(\chi)^{-1})^{-1} y, \sigma^2(\delta^{-2} I_n +  R_\phi(\chi)^{-1})^{-1}).$$
Combining with \eqref{eq:ppmeanpf}, we get the posterior predictive mean
\begin{align*}
\mathbb{E}(z(s_0) \,|\, y) &= J_{\phi, n}^\T R_\phi(\chi)^{-1} \mathbb{E}(z \,|\, y) \\
& = J_{\phi, n}^\T R_\phi(\chi)^{-1}(I_n + \delta^2 R_\phi(\chi)^{-1})^{-1} y
= J_{\phi, n}^\T (\delta^2 I_n + R_\phi(\chi))^{-1} y,
\end{align*}
which is the best linear predictor corresponding to a Mat\'ern model with parameter values $\{\sigma'^2, \phi, \tau'^2\}$ satisfying $\delta^2 = \tau'^2/\sigma'^2$. Further by Theorem \ref{thm:main2}, the formula \eqref{eq:varerr} reduces to
\begin{equation*}
\mathbb{V}(z(s_0) \,|\, y) \to \frac{\tau_0^2}{\delta^2}(1 - J_{\phi, n}^\T R_\phi(\chi)^{-1} J_{\phi, n}) + \tau_0^2 J_{\phi, n}^\T (\delta^2 I_n + R_\phi(\chi))^{-1} R_\phi(\chi)^{-1} J_{\phi, n},
\end{equation*}
which gives \eqref{eq:E2}. 
The rest of the theorem easily follows. 
\end{proof}

\medskip
\noindent
{\bf The deviation error $E_{1, n}$}: It is expected that the prediction error of the best linear predictor of a Mat\'ern model in the presence of a nugget tends to $0$ as long as the fill distance condition \eqref{eq:spacecond} holds. However, it is hard to prove this statement. 

To provide some ideas, we consider a one-dimensional example in $\mathcal{D} = [-1,1]$. 
In particular, traditional paradigms for spatial asymptotics have relied on either fixed domain or expanding domain with attempts at reconciling the two paradigms \citep{zhang2005towards}. For a given integer $n$, let $\chi_n = \{i/n, \, -n \le i \le n\}$ and $\chi_{n, \infty} =\{i/n, \, -\infty < i < \infty\}$. 
Define 
$$\widehat{z}_{n}(0): = \mathbb{E}(z(0) \,|\, y(s), s \in \chi_n \setminus \{0\}, \phi, \delta^2) \quad \mbox{and} \quad 
\widehat{z}_{n, \infty}(0): = \mathbb{E}(z(0) \,|\, y(s), s \in \chi_{n,\infty} \setminus \{0\}, \phi, \delta^2)).$$ 
The corresponding prediction errors are:
$$e_n (0): = \mathbb{E}_0(z(0) - \widehat{z}_n(0))^2 \quad \mbox{and} \quad e_{n, \infty} (0): = \mathbb{E}_0(z(0) - \widehat{z}_{n, \infty}(0))^2, \mbox{ respectively}.$$
We conjecture that the difference between these two prediction errors will be negligible for large data, i.e.,  $e_n(0) - e_{n, \infty}(0) \to 0$, as $n \to \infty$
. If this is in place, the prediction error $e_n(0) \to 0$ provided $e_{n, \infty}(0) \to 0$ as $n \to \infty$, which we prove in Proposition~\ref{prop:errbd}.

\begin{proposition}
\label{prop:errbd}
We have $e_{n, \infty}(0) \to 0$ as $n \to \infty$.
\end{proposition}

\begin{proof}
To simplify the notation, we denote by $\Delta: = \frac{1}{n}$ the inter-spacing of the grid.
Let $f_0$ (resp. $f$) be the spectral density of the Mat\'ern model with true parameter values $\{\sigma_0^2, \phi_0, \tau_0^2\}$ and the smoothness parameter $\nu_0$ (resp. possibly misspecified parameter values $\{\sigma^2, \phi, \tau^2\}$ and the smoothness parameter $\nu$).
Define
\begin{equation*}
\widetilde{f}^\Delta_0(u) = \Delta^{-1} \sum_{k = -\infty}^{\infty} f_0\left( \frac{u + 2 \pi k}{\Delta} \right) \quad \mbox{and} \quad 
\widetilde{f}^\Delta(u) = \Delta^{-1} \sum_{k = -\infty}^{\infty} f\left( \frac{u + 2 \pi k}{\Delta} \right).
\end{equation*}
By \cite[Chapter 3, (13)]{Steinbook}, the prediction error of $y(0)$ based on $y(s)$, $s \in \chi_{n, \infty} \setminus \{0\}$ is 
\begin{equation*}
4\pi^2 \int_{-\pi}^{\pi} \frac{\widetilde{f}^\Delta_0(u)}{\widetilde{f}^\Delta(u)^2} du \bigg/ \left(\int_{-\pi}^{\pi}\widetilde{f}^{\Delta}(u)^{-1} du\right)^2,
\end{equation*}
and hence the prediction error of $z(0)$ based on $y(s)$, $s \in \chi_{n, \infty} \setminus \{0\}$ is 
\begin{equation*}
\label{eq:einfsd}
e_{n, \infty}(0) = \frac{4\pi^2 \int_{-\pi}^{\pi} \frac{\widetilde{f}^\Delta_0(u)}{\widetilde{f}^\Delta(u)^2} du}{\left(\int_{-\pi}^{\pi}\widetilde{f}^{\Delta}(u)^{-1} du\right)^2} - \tau_0^2.
\end{equation*}
The spectral density of the Mat\'ern model without the nugget is
\begin{equation*}
\label{eq:sdiMatern}
f_{\mbox{\tiny Mat\'ern}}(u) = C \frac{\sigma^2 \phi^{2 \nu}}{(\phi^2 + u^2)^{\nu + d/2}} \quad \mbox{for some } C > 0.
\end{equation*}
We write 
\begin{equation*}
\widetilde{f}_0^\Delta(u) = \sigma_0^2 g^\Delta_0(u) + \frac{\tau_0^2}{2 \pi} \quad \mbox{and} \quad
\widetilde{f}^\Delta(u) = \sigma^2 g^\Delta(u) + \frac{\tau^2}{2 \pi},
\end{equation*}
so that
\begin{equation}
\label{eq:erep}
e_{n, \infty}(0)
 = \frac{4\pi^2 \int_{-\pi}^{\pi} \frac{ \sigma_0^2 g^{\Delta}_0(u) + \frac{1}{2\pi}\tau_0^2}{(\sigma^2 g^{\Delta}(u) + \frac{1}{2\pi} \tau^2)^2} du}{\left(\int_{-\pi}^{\pi}(\sigma^2 g^{\Delta}(u) + \frac{1}{2\pi}\tau^2)^{-1} du\right)^2} - \tau_0^2
\end{equation}
Note that $g^\Delta(u) \sim c u^{-2 \nu - 1}$ for some $c > 0$.
It is known that (see e.g. \cite[Section 2.3]{TZB2021})
\begin{equation}
\label{eq:denom}
\int_{-\pi}^{\pi} \left(\sigma^2 g^{\Delta}(u) + \frac{1}{2\pi}\tau^2 \right)^{-1} du \sim \frac{4 \pi^2}{\tau^2 + C \Delta^{\frac{2 \nu}{2 \nu + 1}}} \quad \mbox{for some } C > 0.
\end{equation}
Furthermore, $g^\Delta(u) \asymp \Delta^{2 \nu}$ for $u$ large and $g^\Delta(u) \asymp \Delta^{-1}$ for $u$ small.
We prove that $e_\infty \to 0$ as $n \to \infty$.
Therefore, 
\begin{equation}
\label{eq:num}
 \int_{-\pi}^{\pi} \frac{ \sigma_0^2 g^{\Delta}_0(u) + \frac{1}{2\pi}\tau_0^2}{(\sigma^2 g^{\Delta}(u) + \frac{1}{2\pi} \tau^2)^2} du = \frac{4 \pi^2 \tau_0^2}{\tau^4} + \mathcal{O}(\Delta^{\min(2 \nu_0, 2 \nu,1)}).
\end{equation}
Combining \eqref{eq:erep}, \eqref{eq:denom} and \eqref{eq:num} yields
\begin{equation*}
e_{n, \infty}(0) =\left(\tau_0^2 + \mathcal{O}(\Delta^{\min(2 \nu_0, 2 \nu,1)})\right) \left(1 +  \frac{C \Delta^{\frac{2 \nu}{2 \nu + 1}} }{\tau^2} \right) - \tau_0^2 \asymp \Delta^{\min( 2 \nu_0, \frac{2 \nu}{2 \nu + 1})}.
\end{equation*}
Thus, we have $e_{n, \infty}(0) \to 0$ as $n \to \infty$
.
\end{proof}

\medskip
\noindent
{\bf The posterior variance $E_{2, n}$}: 
The error term $E_{2, n}$ defined by \eqref{eq:E2} is also analytically intractable. 
Here, we provide a numerical study to investigate the behavior of $E_{2, n}$ when $n \to \infty$ in the general case. We first generate the location set $\chi$ by uniformly sampling $n$ locations in $[0, 1]$ or $[0, 1]^2$, then we compute $E_{2, n}$ for every location in $\chi$ with $\delta^2 = 1, \tau^2 = 1$ and different values of $\phi$ and $\nu$. We expand the location set $\chi$ sequentially to sets with larger sample sizes by adding locations that are uniformly sampled in the study domain. For each expanded set $\chi$, we recompute the $E_{2, n}$ for all locations in $\chi$. Figure~\ref{fig: E2_check} plots the median, the 2.5th and 97.5th percentiles of $E_{2, n}$ for different sample sizes. The values of $E_{2, n}$ for points in a fixed domain, shown in Figure~\ref{fig: E2_check}, decrease rapidly as the sample size increases, although the rate diminishes when $d$ increases from $1$ to $2$. This suggests that the decreasing rate is related to the dimension.

\begin{figure}[t]
\centering
\includegraphics[width=0.49\textwidth, keepaspectratio]{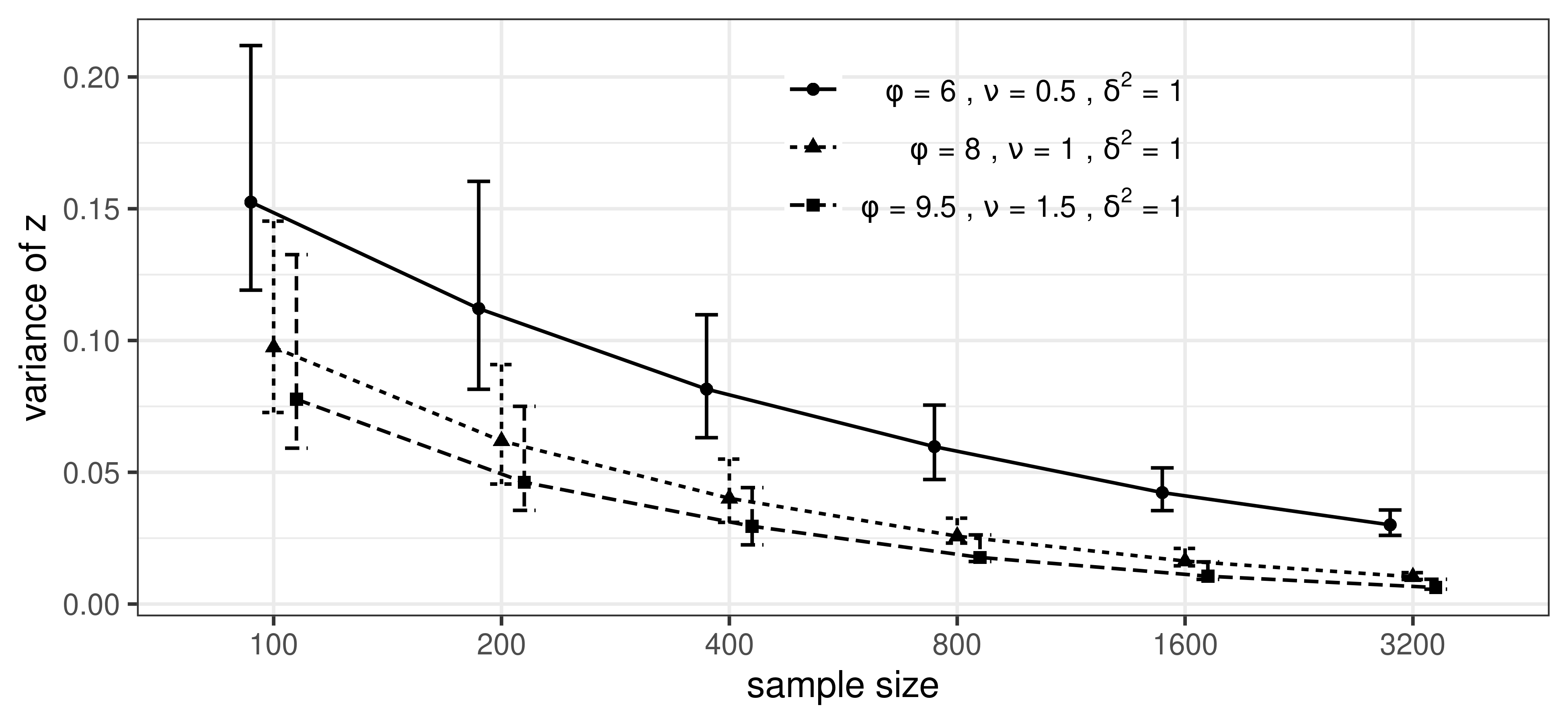}
\includegraphics[width=0.49\textwidth, keepaspectratio]{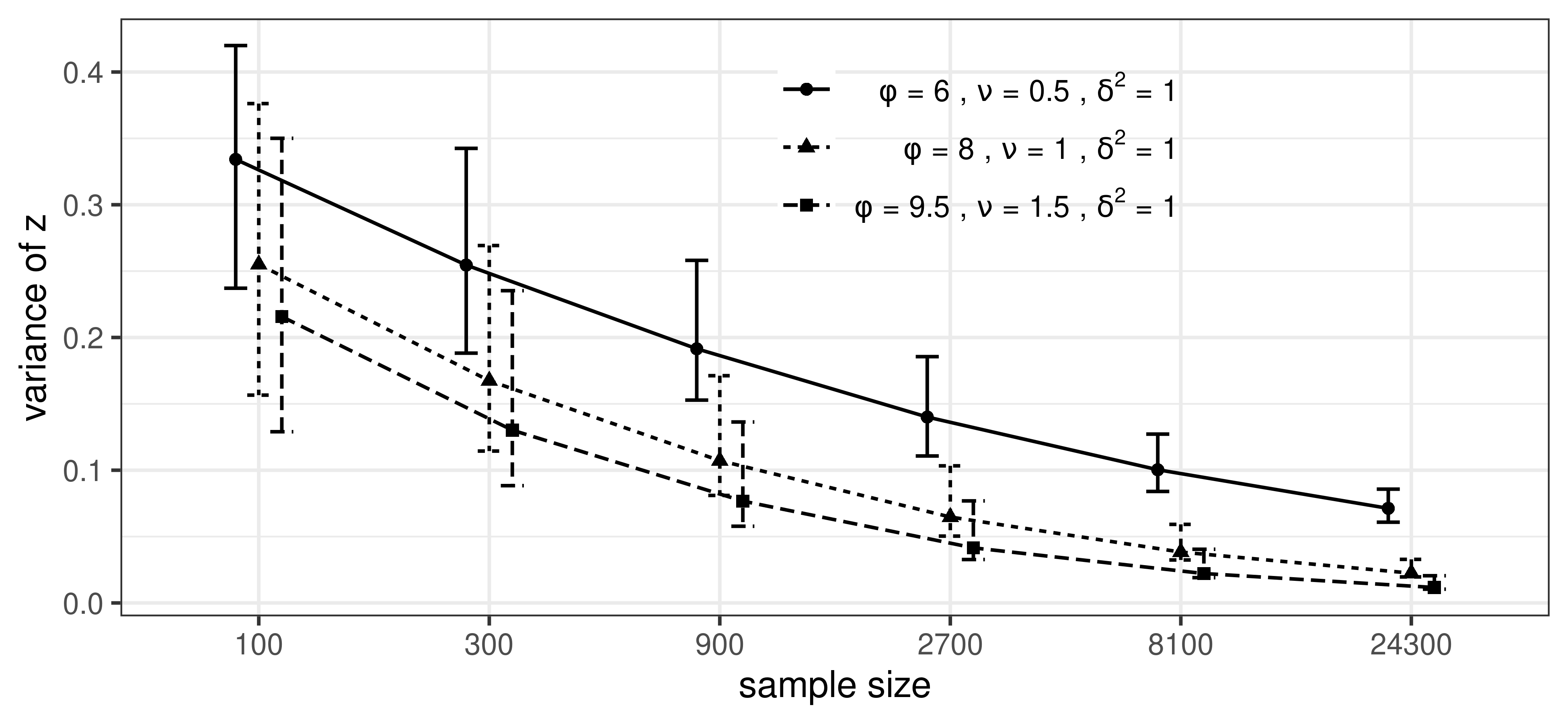}
\caption{The median of $E_{2,n}$ for locations uniformly sampled on $[0, 1]$ (left) and $[0, 1]^2$ (right). The error bars indicate the 97.5th and 2.5th percentiles. The sample size $n$ ranges from 100 to 3,200 and from 100 to 24,300 for the experiments on $[0, 1]$ and $[0, 1]^2$, respectively.}\label{fig: E2_check}
\end{figure}

\section{Proof of Proposition~\ref{prop:stackingpost1}}\label{sec: prop:stackingpost1}
\begin{customprop}{\ref{prop:stackingpost1}}
Let $s_0 \in \mathcal{D}$, and $w_g^*(y): = (w^*_1(y), \ldots, w^*_G(y))$ be the stacking weights 
(e.g. defined by \eqref{eq: weights_cal1})
such that $\mathbb{P}_0$ almost surely,
$$\sum_{g=1}^G w^*_g(y) = 1 \quad \mbox{and} \quad w^*_g(y) \ge 0  \mbox{ for each } 1 \le g \le G.$$
Recall that $E^g_{1,n}$ is the prediction error of the best linear predictor for model $\mathcal{M}_g$, and assume that $E^g_{1,n} \to 0$ as $n \to \infty$, for each model $\mathcal{M}_g$.
We have
\begin{equation}
\tag{3.8}
\mathbb{E}_0\left(y(s_0) - \sum_{g = 1}^G w^*_g(y) \mathbb{E}_g(y(s_0) \,|\, y) \right)^2 \to \tau_0^2 \quad \mbox{as } n \to \infty \;. 
\end{equation}
\end{customprop}
\begin{proof}
For ease of presentation, we give the proof in the setting of Theorem \ref{prop:main2bis}.
Recall that $y(s_0) = z(s_0) + \varepsilon(s_0)$.
We have:
\begin{equation}
\label{eq:CSbound}
\begin{aligned}
\mathbb{E}_0\left(y(s_0) - \sum_{g = 1}^G w^*_g \mathbb{E}_g(y(s_0) \,|\, y) \right)^2 
& = \mathbb{E}_0\left(\varepsilon(s_0) + z(s_0) - \sum_{g = 1}^G w^*_g(y) \mathbb{E}_g(z(s_0) \,|\, y)\right)^2  \\
& = \tau_0^2 + \mathbb{E}_0\left(z(s_0) - \sum_{g= 1}^G w^*_g(y) \mathbb{E}_g(z(s_0) \,|\, y) \right)^2 \\
&= \tau_0^2 + \mathbb{E}_0\left(\sum_{g= 1}^G w^*_g(y) \bigg(z(s_0) - \mathbb{E}_g(z(s_0) \,|\, y) \bigg) \right)^2 \\
& \le \tau_0^2 + \mathbb{E}_0 \left( \sum_{g=1}^G w^{*2}_g(y) \sum_{g = 1}^G \bigg(z(s_0) - \mathbb{E}_g(z(s_0) \,|\, y)\bigg)^2 \right) \\
& \le \tau_0^2 + G  \sum_{g = 1}^G \mathbb{E}_0 \bigg(z(s_0) - \mathbb{E}_g(z(s_0) \,|\, y)\bigg)^2, 
\end{aligned}
\end{equation}
where we use the fact that $\mathbb{E}_g(y(s_0)\,|\,y) = \mathbb{E}_g(z(s_0)\,|\,y)$ in the first equality (which is clear from the description below \eqref{eq: posterior_predictive_joint});
the second equality follows from the fact that 
$\varepsilon(s_0)$ and $(z(s_0), y)$ are independent;
the third equality is due to $\sum_{g = 1}^G w_g^*(y) = 1$, $\mathbb{P}_0$ a.s.;
we apply the Cauchy-Schwarz inequality in the fourth inequality;
and the final inequality is by the fact that 
$\sum_{g = 1}^G w_g^{*2}(y) \le G$, $\mathbb{P}_0$ a.s.
For each $1 \le g \le G$, $\mathbb{E}_0\bigg(z(s_0) - \mathbb{E}_g(z(s_0) \,|\, y)\bigg)^2$ corresponds to the deviation error $E_{1, n}$ for the model $\calM_g$, which goes to $0$ as $n \to \infty$.
The bound \eqref{eq:stackingerr} follows readily from \eqref{eq:CSbound}.
\end{proof}

\section{Proof of Theorem~\ref{prop:stackingpost2}}\label{sec: prop:stackingpost2}

\begin{customthm}{\ref{prop:stackingpost2}}
Let $s_0 \in \mathcal{D}$, and $w_g^*(y): = (w^*_1(y), \ldots, w^*_G(y))$ be the stacking weights 
(e.g. defined by \eqref{eq: weights_cal1})
such that $\mathbb{P}_0$ almost surely,
$$\sum_{g=1}^G w^*_g(y) = 1 \quad \mbox{and} \quad w^*_g(y) \ge 0  \mbox{ for each } 1 \le g \le G.$$
{
Assume that $E^g_{1,n} \to 0$ as $n \to \infty$, for each model $\mathcal{M}_g$.}
For $1 \le g \le G$ and $1 \le i \le n$, 
let 
$E^g_{1,n,i}: = \mathbb{E}_0(z(s_i) - \widehat{y}_g(s_i))^2$
be the deviation error for the latent process $z(s)$ by leaving the $i^{th}$ observation out under the model $\mathcal{M}_g$.
Assume that for each $1 \le g \le G$,
\begin{equation}
\label{eq:avezero_app}
\frac{1}{n}\sum_{i = 1}^n E^g_{1,n,i} \to 0 \quad \mbox{as } n \to \infty.
\end{equation}
Also let the assumptions in 
Theorem \ref{prop:main2bis} hold for each model $\calM_g$.
We have as $n \to \infty$,
\begin{equation}
\label{eq:asymplimit_app}
\mathbb{E}_0\left(y(s_0) - \sum_{g = 1}^G w^*_g(y) \mathbb{E}_g(y(s_0) \,|\, y) \right)^2
- \mathbb{E}_0\left( \frac{1}{n}\sum_{i = 1}^n \left(y(s_i) - \sum_{g = 1}^G w^*_g(y) \hat{y}_g(s_i)\right)^2\right) \to 0.
\end{equation}
\end{customthm}
\begin{proof}
By Proposition \ref{prop:stackingpost1}, it suffices to prove that
\begin{equation}
\label{eq:asymplimit2}
\mathbb{E}_0\left( \frac{1}{n}\sum_{i = 1}^n \left(y(s_i) - \sum_{g = 1}^G w^*_g(y) \hat{y}_g(s_i)\right)^2\right) \to \tau_0^2 \quad 
\mbox{as } n \to \infty.
\end{equation}
For ease of presentation, we prove \eqref{eq:asymplimit2} in the setting of Theorem \ref{prop:main2bis}.
Note that
\begin{align}
\label{eq:CSbound2}
\mathbb{E}_0\left( \frac{1}{n}\sum_{i = 1}^n \left(y(s_i) - \sum_{g = 1}^G w_g^*(y) \hat{y}_g(s_i)\right)^2\right)= \tau_0^2 + \underbrace{\frac{1}{n} \mathbb{E}_0 \sum_{i=1}^n \left(\sum_{g = 1}^G w^*_g(y) \left(z(s_i) - \widehat{y}_g(s_i)\right)\right)^2}_{(B)},
\end{align}
where $(B) \le \frac{G}{n}\sum_{i = 1}^n E^g_{1,n,i} \to 0$ 
by the condition \eqref{eq:avezero}. Hence, \eqref{eq:asymplimit2} follows from \eqref{eq:CSbound2}.
\end{proof}

Taking a closer look at \eqref{eq:CSbound2}, we can see that $B$ summarises the average squared prediction errors for $z(s_i), i = 1, \ldots, n$, in a LOO cross-validation. Hence, the stacking weights obtained from \eqref{eq: weights_cal1} minimises the average squared prediction errors for the latent process over the observed locations.

\section{Stacked predictive densities in the general setting}
\label{sc33}

We extend the discussion to the general conjugate Bayesian spatial model \eqref{eq:BayesianG} and establish asymptotic results of the posterior distribution obtained through stacking. For easier exposition, we refine the notation for \eqref{eq:posterior} by casting the spatial model in \eqref{eq:BayesianG} into an augmented linear system. Let $L_\beta$ be the Cholesky decomposition of $V_\beta$ such that $V_\beta = L_\beta L_\beta^\T$, and $L_\Phi$ a non-singular square matrix such that $R_{\Phi}^{-1}(\chi) = L_\Phi^\T L_\Phi$. We adapt \eqref{eq:BayesianG} as
\begin{equation}
\label{eq:spatialaug2}
\underbrace{\begin{bmatrix}
\frac{1}{\delta}y \\  L_\beta ^{-1} \mu_\beta \\ 0
\end{bmatrix}}_{y_{\dagger}}
= \underbrace{\begin{bmatrix}
\frac{1}{\delta} X & \frac{1}{\delta} I_n \\ 
L_\beta^{-1} & 0 \\ 
0 & L_\Phi 
\end{bmatrix}}_{X_{\dagger}} 
\underbrace{\begin{bmatrix}
\beta \\ z
\end{bmatrix}}_{\gamma} 
+ \underbrace{\begin{bmatrix}
\eta_{\dagger, 1} \\ \eta_{\dagger, 2} \\ \eta_{\dagger, 3}
\end{bmatrix}}_{\eta_{\dagger}},
\end{equation}
where $\eta_{\dagger} \sim \mathcal{N}(0, \sigma^2 I_{2n + p})$ 
{and $\sigma^2 \sim \mbox{IG}(a_\sigma, b_\sigma)$}. 
{This formulation simplifies the $V_\ast$ matrix to an identity matrix. The conclusions detailed in Section~\ref{sc21} are applicable with the replacement of $X_\ast$ and $y_\ast$ by $X_{\dagger}$ and $y_{\dagger}$.} We use the equivalence of probability measures to explore posterior concentrations within an in-fill paradigm.
\begin{assumption}[Equivalence]
\label{assump:equiv}
Let $\mathbb{P}_0$ be the probability distribution of the process $y(s)$ defined by the model \eqref{eq:spatialG} 
with true parameter values $\{\beta_0, \sigma_0^2, \Phi_0, \tau_0^2\}$. For each $\Phi$, there is $\sigma'^2 > 0$ such that the probability distribution of the process $y(s)$ defined by the model \eqref{eq:spatialG} with parameter values $\{\beta_0, \sigma'^2, \Phi, \tau_0^2\}$ is equivalent to $\mathbb{P}_0$.
\end{assumption}

The above Assumption holds when the latent process $z(s)$ follows a Mat\'ern model in dimension $d \in \{1,2,3\}$.
In this case, the probability distribution of the process $y(s)$ defined by the model \eqref{eq:spatialG} with parameters $\left\{\beta_0, \frac{\sigma_0^2 \phi_0^{2 \nu}}{\phi^{2 \nu}}, \phi, \tau_0^2\right\}$
is equivalent to $\mathbb{P}_0$ (see e.g. \citet[Section 2.1]{TZB2021}).
We denote $\mathbb{E}_0(\cdot)$ the expectation with respect to $\mathbb{P}_0$.

\medskip
\noindent
{\bf Posterior inference}:
The following theorem,
which extends Theorem \ref{thm:main2},
explores the posterior (in)consistency of 
the scale $\sigma^2$.
\begin{theorem}[Posterior inference (in)consistency]
\label{thm:main1}
Let $\mathbb{P}_0$ be the probability measure of the model \eqref{eq:spatialG} 
with parameter values $\{\beta_0, \sigma_0^2, \Phi_0, \tau_0^2\}$, and let Assumption \ref{assump:equiv} hold. Let $H= X_{\dagger} (X_{\dagger}^\T X_{\dagger})^{-1} X_{\dagger}^\T$ be the $(2n+p)\times (2n+p)$ orthogonal projector onto the column space of $X_{\dagger}$
and let 
$H = \begin{bmatrix} H_{11} & H_{12} \\ H_{12}^{\T} & H_{22} \end{bmatrix}$ be a $2\times 2$ partition of $H$ so that $H_{22}$ is the lower right $n \times n$ block formed by rows and columns indexed from $n+p+1$ to $2n+p$. Assume that $\tr(H_{22})/n \to \alpha$ as $n \to \infty$. Then under $\mathbb{P}_0$, 
\begin{equation}
\label{eq:sigmacv}
\lim_{n\to\infty} p(\sigma^2 \given y) = \texttt{Dirac}(\sigma^2_\alpha)\;,
\end{equation}
where $\sigma^2_\alpha:=\frac{\tau_0^2}{\delta^2} \alpha + \sigma'^2 (1 -\alpha)$.
\end{theorem}

\begin{proof}
By Assumption \ref{assump:equiv}, there is a probability distribution $\mathbb{P}' \equiv \mathbb{P}_0$, which corresponds to the model with parameters $(\beta_0, \sigma'^2, \Phi, \tau_0^2)$ for some $\sigma'^2$. Under $\mathbb{P}'$, there exists a $\gamma_0$ such that 
\begin{equation} \label{eq:tprim}
y_{\dagger} - X_{\dagger} \gamma_0 \given \sigma^2 \sim \mathcal{N}\left(0, V'\right),
\end{equation}
where $V' = \begin{pmatrix}
\frac{\tau_0^2}{\delta^2} I_n & 0 & 0 \\ 
0 & \sigma^2 I_p & 0  \\
0 &  0  &  \sigma'^2 I_n
\end{pmatrix}$. 
Consider the sequence ${\zeta}_n \sim p(\sigma^2 \given y)$. Under $\mathbb{P}'$,
${\zeta}_n \sim IG(a^*_{\sigma,n}, b^*_{\sigma,n})$, 
where $a^*_{\sigma,n} = a_{\sigma}+n/2$ and 
\begin{equation}
\begin{aligned}
\label{eq:brep}
b_{\sigma,n}^* & = b_\sigma + \frac{1}{2}(y_{\dagger} - X_{\dagger} \gamma_0)^\T(I_{2n + p} - H) (y_{\dagger} - X_{\dagger} \gamma_0) \\
& = b_\sigma + \frac{1}{2} [Q(y_{\dagger} - X_{\dagger} \gamma_0)]^\T \begin{pmatrix}
0 & 0 \\ 
0 & I_n 
\end{pmatrix}  [Q(y_{\dagger} - X_{\dagger} \gamma_0)], 
\end{aligned}
\end{equation}
The expectation under $\mathbb{P}'$ for $b_{\sigma,n}^*$ is
\begin{equation}
\begin{split}
    \mathbb{E}'(b_{\sigma,n}) &= b_{\sigma} + \frac{1}{2}  \mathbb{E}'(y_{\dagger}^{\T}(I_{2n+p} - H)y_{\dagger}) \\
    &=  b_{\sigma} + \frac{1}{2}\left\{\gamma_0^{\T}X_{\dagger}^{\T}(I_{2n+p}-H)X_{\dagger}\gamma_0 + \tr((I_{2n+p}-H)V')\right\} \\
    &= b_{\sigma} + \frac{1}{2}\tr((I_{2n+p}-H)V') \quad \mbox{(since $(I_{2n+p}-H)X_{\dagger} = O$)} \\
    &=  b_{\sigma} + \frac{1}{2}\tr\left(Q_{21}^{\T}Q_{21}\begin{pmatrix}\frac{\tau_0^2}{\delta^2}I_{n} & O \\ O & \sigma^2I_p\end{pmatrix}\right) + \frac{\sigma'^{2}}{2}\tr\left(Q_{22}^{\T}Q_{22}\right)\;,
\end{split}
\end{equation}
where $Q = \begin{pmatrix} Q_{11} & Q_{12} \\ Q_{21} & Q_{22}\end{pmatrix}$ is an orthogonal matrix such that $H = Q^\T \begin{pmatrix} I_{n+p} & 0 \\  0 & 0  \end{pmatrix} Q$ with $Q_{22}$ being the lower right $n \times n$ block of $Q$ in a $2\times 2$ partition.
Using $Q_{21}Q_{21}^{\T} = I_n - Q_{22}Q_{22}^{\T}$ and some further simplification we obtain
\begin{equation}\label{eq: expectation_b}
    \mathbb{E}'(b_{\sigma^*,n}) = b_{\sigma} + \frac{n}{2}\left(\frac{\tau_0^2}{\delta^2}\left(1 - \tr(Q_{22}^{\T}Q_{22}/n)\right) + \sigma'^2\tr(Q_{22}^{\T}Q_{22}/n)\right) + \underbrace{\frac{p}{2}\left(\sigma^2 - \frac{\tau_0^2}{\delta^2}\right)}_{o(n)}\;. 
\end{equation}
Since $\mathbb{E}'(\zeta_n) = \mathbb{E}'(b_{\sigma,n}^*)/(a_{\sigma,n}^* - 1)$ and $\tr(Q_{22}^{\T}Q_{22}) =  n - \tr(H_{22})$, where $\tr(H_{22})/n\to \alpha$ as $n\to\infty$, we obtain $\lim_{n\to\infty}\mathbb{E}'(\zeta_n) = \sigma_{\alpha}^2$. The variance of $\zeta_n$ under $\mathbb{P}'$ is given by
\begin{equation}\label{eq:varZn}
\begin{aligned}
\mathbb{V}'(\zeta_n) &= \mathbb{E}'[\mathbb{V}(\zeta_n \given y)] + \mathbb{V}'[\mathbb{E}(\zeta_n\given y)] \\
 &= \mathbb{E}'\left(\frac{b_{\sigma,n}^{*2}}{(a_{\sigma,n}^*-1)^2(a_{\sigma,n}^*-2)}\right) + \mathbb{V}'\left(\frac{b_{\sigma,n}^*}{a_{\sigma,n}^* - 1}\right)\;.
\end{aligned}
\end{equation}
Further note that
\begin{equation}
\begin{aligned}
\mathbb{V}'(b_{\sigma, n}^*) 
& = \frac{3}{4} \tr\left[\left(\frac{\tau_0^2}{\delta^2} (I_n - Q_{22}^{\T} Q_{22}) + \sigma'^2 Q_{22}^{\T} Q_{22}\right)^2\right] + o(n) \\
& \le C \left( n + \tr(Q_{22}^{\T} Q_{22}) + \tr\left[(Q_{22}^{\T} Q_{22})^2\right]\right),
\end{aligned}
\end{equation}
for some $C > 0$ independent of $n$.
Since $Q_{21}Q_{21}^{\T} +  Q_{22}Q_{22}^{\T} = I_n$, 
we have $\tr(Q_{22}^{\T} Q_{22})$ and $\tr\left[(Q_{22}^{\T} Q_{22})^2\right]$ are bounded from above by $n$.
Hence, 
\begin{equation}
\label{eq:varbest}
\frac{\mathbb{V}'(b_{\sigma, n}^*)}{n^2} \to 0 \quad \mbox{and} \quad \frac{\mathbb{E}'(b_{\sigma, n}^{*2})}{n^{2 + \kappa}} \to 0 \, \mbox{ for any } \kappa > 0 \;.
\end{equation}
Combining \eqref{eq:varZn} and \eqref{eq:varbest} yields 
$\lim_{n \to \infty} \mathbb{V}'(\zeta_n) = 0$.
By Chebyshev's inequality, $\zeta_n$ converges in probability under $\mathbb{P}'$, and hence under $\mathbb{P}_0$, to $\sigma_\alpha^2$.
\end{proof}

Similar to the Mat\'ern model, 
Theorem~\ref{thm:main1} suggests that the posterior distribution of the scale parameter $\sigma^2$ in the general conjugate model does not necessarily concentrate on the true generating value. The parameter $\alpha$ quantifies how the trend and the chosen parameter $\Phi$ affects the inference on $\sigma^2$. 
The assumption $\tr(H_{22})/n \to \alpha$ 
is generally hard to check analytically. 
Figure~\ref{fig: cond_check} summarizes some numerical experiments to empirically explore these assumptions. The study domain $\calD$ is $[0, 1]^2$, and locations in $\chi$ are chosen uniformly on $\calD$. We generate data using the Mat\'ern covariogram for $z(s)$ (see \eqref{eq: Matern} in Section~\ref{sc21}). We consider two types of predictors $x(s)$.
For the first type, titled ``with intercept'', $x(s)$ consists of a constant $1$ for intercept and a predictor generated by a standard normal.  For the second type, labeled ``without intercept'', $x(s)$ is composed of two predictors sampled from a standard normal.  We consider the trends of the target quantities with different hyper-parameter values in the covariogram of $z(s)$ and different types of $x(s)$ as sample size increases. Figure~\ref{fig: cond_check} shows that $\tr(H_{22})/n$ increases as sample size increases for all examples. Since $\tr(H_{22})/n$ is bounded above by $1$,  the assumption $\tr(H_{22})/n \to \alpha$ for some constant $\alpha$ is likely to hold in general. This is consistent with Theorem \ref{thm:main2},  where $\sigma^2$ converges to the Dirac measure at $\tau_0^2/\delta^2$.

\begin{figure}[t]
\centering
\includegraphics[width=0.8\textwidth, height = 0.3\textwidth]{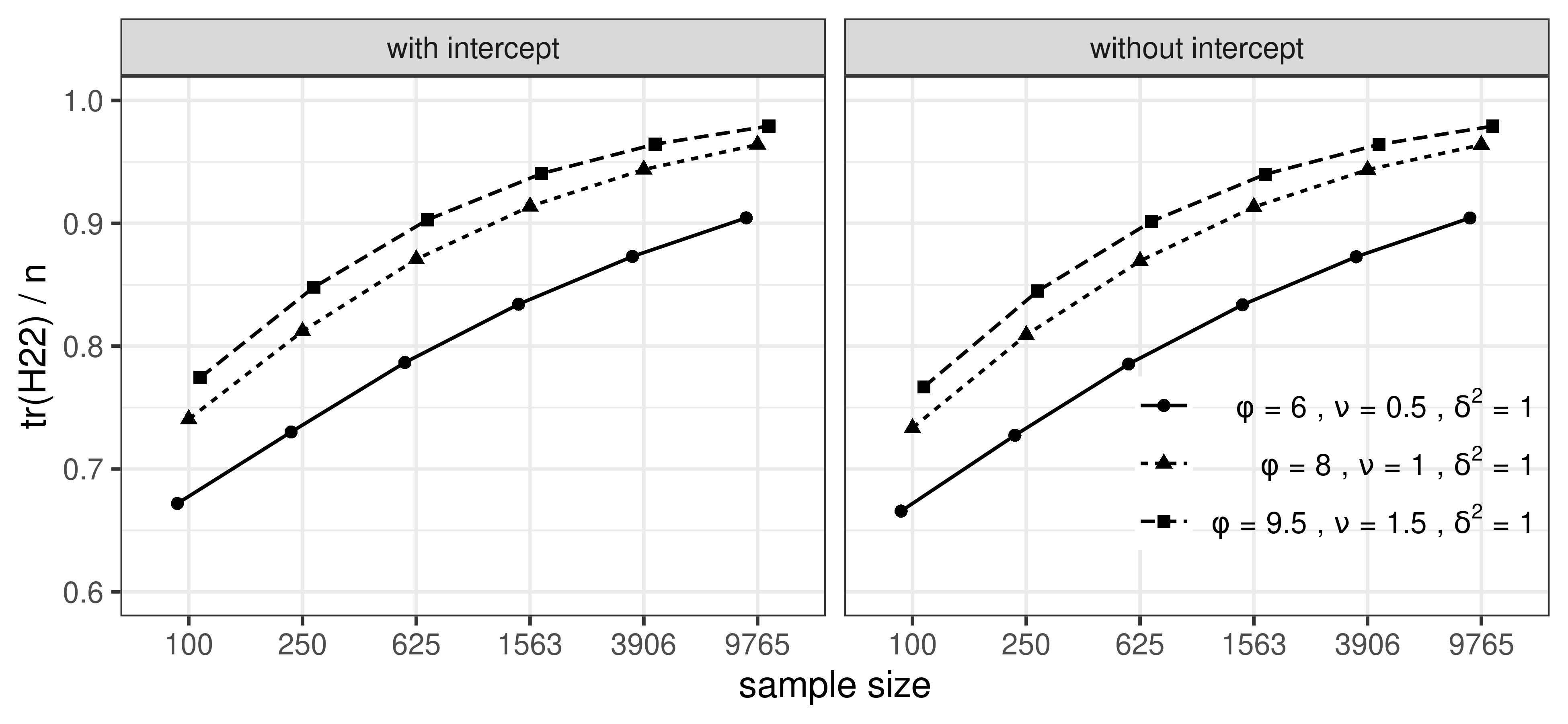}
\caption{Plots of $\tr(H_{22})/n$ when $x(s)$ consists of an intercept only (left), and with additional covariates (right) for different parameter values.}\label{fig: cond_check} 
\end{figure}

\section{Predictive consistency for general conjugate spatial models}\label{sec: thm:main1bis}
We provide a brief discussion on extending Theorem~\ref{prop:stackingpost2} to establish the results on posterior predictions under the general conjugate Bayesian spatial models. At the outset, it is worth remarking that the mean function of a Gaussian process under the Mat\'ern covariance kernel is, in general, not identifiable \citep[this follows from Theorem~6 in Chapter~4 of][]{Steinbook}, which generally suggests the lack of consistent estimators. Nevertheless, some investigations into the behavior of the posterior distributions may be possible under the following setup. We offer some brief guidelines below. 

\paragraph{Posterior inference:}
Let $\mathbb{P}_0$ be the probability distribution corresponding to the model in \eqref{eq:spatialG} with parameters $\{\beta_0, \sigma_0^2, \Phi_0, \tau_0^2\}$. 
Define $U = (X_{\dagger}^\T X_{\dagger})^{-1} = \begin{pmatrix} U_{11} & U_{12} \\ U_{12}^{\T} & U_{22} \end{pmatrix}$, where $U_{11}$ is a $p\times p$ matrix. Let Assumption~\ref{assump:equiv} hold and assume the following additional condition holds:
\begin{equation} \label{eq:betacond}
\tr\left(U_{11} \right) \to 0, \quad \mbox{as } n \to \infty\;.
\end{equation}
We can then claim that $\lim_{n\to \infty} p(\beta \given y) = \texttt{Dirac}(\beta_0)$ under $\mathbb{P}_0$.
To see this, let $\xi_n \sim p(\beta \given y)$ and define $B = U \begin{pmatrix} X^{\T}X & X^{\T} \\ X & I_n\end{pmatrix} U = \begin{pmatrix} B_{11} & B_{12} \\ B_{21} & B_{22} \end{pmatrix}$, where $B_{11}$ is $p\times p$, $C = U_{11} V_\beta^{-1} U_{11}$, and $D = U_{12} R_\Phi(\chi)^{-1} U_{12}^{\T}$. Straightforward algebra yields
\begin{equation}
\label{eq:betaL2}
\mathbb{E}'\|\xi_n- \beta_0\|^2  =  
\frac{\tau_0^2}{\delta^4}\tr \left(B_{11} \right) + \mathbb{E}'(\zeta_n) \tr(C)
+ \sigma'^2 \tr(D) +  \mathbb{E}'(\zeta_n) \tr\left(U_{11}\right).
\end{equation}
Since $\mathbb{E}'(\zeta_n) \to \sigma_\alpha^2< \infty$ from Theorem~\ref{thm:main1}, $\mathbb{E}'\|\xi_n- \beta_0\|^2$ converges to zero when 
$\tr\left(U_{11} \right)$, $\tr\left(B_{11}\right)$, $\tr(C)$,  $\tr(D) \longrightarrow 0$ as $n \to \infty$.
When $\delta^2 > 0$, we have $\tr\left(\delta^{-2}B_{11}+C+D\right) = \tr\left(U_{11}\right)$. Since $B_{11}$ $C$ and $D$ are positive semi-definite, we only require \eqref{eq:betacond} to establish consistency of $p(\beta \given y)$.

While verifying condition \eqref{eq:betacond} can be challenging in general, it is feasible in certain simplified cases where we are able to derive some closed-form results agnostic to the asymtotic behavior of $R_{\Phi}(\chi)$. For example, assume $X \sim \mbox{MN}(0, I_n, \Sigma_x)$ where each $x(s_i)$ follows a zero-centered Gaussian distribution with a covariance matrix $\Sigma_x$. Then $U_{11}^{-1} = X^{\T} (\delta^2 I_n + R_\Phi(\chi))^{-1}X + V_\beta^{-1}$. 
Given the conditions in Remark S.3 of \citet{ZB21}, the smallest eigenvalue of $U_{11}^{-1}$ goes to infinity as the sample size increases. Consequently, $\tr\left(U_{11} \right)$ converges to zero as sample size goes to infinity, satisfying condition~\eqref{eq:betacond}.

\paragraph{Posterior prediction:} Recall that $Z_n(s_0)$ is distributed as $p(z(s_0)\,|\, y)$, and $Y_n(s_0)$ is distributed as $p(y(s_0)\,|\, y)$ under $\mathbb{P}_0$. 
For any given $\Phi$, denote $\cov(z, z(s_0) \given \sigma^2)$ and $R_{\Phi}(\chi)$ by $\sigma^2 J_{\Phi, n}$ and $R_{\Phi, n}$, respectively. Let 
\begin{equation*}
\begin{aligned}
F_n &= J_{\Phi, n}^\T R_{\Phi, n}^{-1} (U X_{\dagger}^\T y_{\dagger})_{[p+1:p+n]} \quad \mbox{and} \quad \\
G_n &= 1 - J_{\Phi, n}^\T R_{\Phi, n}^{-1}(R_{\Phi, n} + U_{[p+1:p+n, \, p+1:p+n]})R_{\Phi, n}^{-1}J_{\Phi, n}.
\end{aligned}
\end{equation*}
Under the assumptions in Theorem \ref{thm:main1},
we have the decomposition \eqref{eq:decomperr22}, where 
$E_{1, n} = \mathbb{E}_0(z(s_0) -F_n)^2$ and $E_{2, n} = \sigma_\alpha^2 G_n$. To prove this, observe that 
\begin{equation}
\label{eq:decomppf}
\begin{aligned}
\mathbb{E}_0(Z_n(s_0) - z(s_0))^2 & = \mathbb{E}_0 \left\{Z_n(s_0) - \mathbb{E}(z(s_0) \given y) +  \mathbb{E}(z(s_0) \given y) - z(s_0) \right\}^2 \\
& = \mathbb{E}_0\left\{z(s_0) - \mathbb{E}(z(s_0) \given y) \right\}^2 + \mathbb{E}_0\{\mathbb{V}(z(s_0) \given y)\},
\end{aligned}
\end{equation}
where the second equality follows from the fact that $z(s_0) - \mathbb{E}(z(s_0) \given y)$ is independent of $y_n$.
Note that 
\begin{equation}
\label{eq:condprod}
p(z(s_0) \given y) = \int p(z(s_0) \given y, \sigma^2, \gamma) \times p(\gamma \given y, \sigma^2) \times p(\sigma^2 \given y) \, d\sigma^2 d\gamma.
\end{equation}
By standard Gaussian conditioning (see e.g. \cite[Section 2.2]{RW06}), 
\begin{equation}
\label{eq:G1}
p(z(s_0) \given y, \sigma^2, \gamma) = \mathcal{N}(J_{\Phi, n}^\T R_{\Phi, n}^{-1}z, \sigma^2(1 - J_{\Phi, n}^\T R_{\Phi, n}^{-1}J_{\Phi, n})).
\end{equation}
By \eqref{eq:condprod}, \eqref{eq:G1} and Lemma \ref{lem:structure}, the posterior predictive mean is 
\begin{equation}
\label{eq:ppmeanpf}
\mathbb{E}(z(s_0) \given y) = F_n.
\end{equation}
Further by the law of total variance and Theorem \ref{thm:main1}, we get
\begin{align}
\label{eq:varerr}
\mathbb{V}&(z(s_0) \given y)  = \mathbb{E}\{\mathbb{V}(z(s_0) \given y, \sigma^2, \gamma)\} + \mathbb{V}\{\mathbb{E}(z(s_0) \given y, \sigma^2, \gamma)\} \notag \\
& \to \sigma_\alpha^2 (1 - J_{\Phi, n}^\T R_{\Phi, n}^{-1}J_{\Phi, n}) + \sigma_\alpha^2 J_{\Phi, n}^\T R_{\Phi, n}^{-1} U_{[p+1:p+n, \, p+1:p+n]}   R_{\Phi, n}^{-1} J_{\Phi, n} = \sigma_\alpha^2 G_n.
\end{align}
Combining \eqref{eq:decomppf}, \eqref{eq:ppmeanpf} and \eqref{eq:varerr}
yields the decomposition \eqref{eq:decomperr22},
and hence the posterior predictive consistency for $z(s_0)$ holds if $E_{1, n}, E_{2, n} \to 0$ as $n \to \infty$.
Further, let $\xi_n$ have the density $p(\beta \given y)$. Under conditions~\eqref{eq:betacond}, we have $\mathbb{E}_0|| \xi_n - \beta_0||^2 \to 0$ as $n \to \infty$.
As a result, posterior predictive inference for $y(s_0)$ satisfies
\begin{equation*}
\mathbb{E}_0(Y_n(s_0) - y(s_0))^2 = \mathbb{E}_0(x(s_0)^\T(\xi_n - \beta_0))^2 + E_{1, n} + E_{2, n} + \tau_0^2 +
\delta^2 \mathbb{E}_0(\sigma^2 \given y),
\end{equation*}
which converges to $\tau_0^2 + \delta^2 \sigma_{\alpha}^2$ as $n \to \infty$.

\section{KL bound for stacking of predictive densities}\label{sec: thm:KLpd}
Let $y = (y(s_1), \cdots, y(s_n))^\T$ be sampled from a model $\mathcal{M}_0$.
Given $y$, define the probability measure $Q_{0,y}(y'):= \prod_{i = 1}^n p(Y(s_i) = y_i' \given y_{-i}, \mathcal{M}_0)$, where $y'=(y_1',\ldots,y_n')$ and $Y = (Y(s_1),\ldots,Y(s_n))^{\T}$ are random variables following $\mathcal{M}_0$.
Let $Q_{0}(\cdot):= \mathbb{E}_0 (Q_{0,y}(\cdot))$, where $\mathbb{E}_0$ denotes expectation over $y \sim \mathcal{M}_0$. The probability distribution $Q_0(\cdot)$ can be viewed as the in-sample predictor of the  distribution of $y$ under $\mathcal{M}_0$. We establish the KL bound for in-sample predictions using stacking of predictive densities.
\begin{proposition}
{\label{thm:KLpd}}
Define $P_{w,y}(y'):= \prod_{i = 1}^n \left(\sum_{g = 1}^G w_g p(Y(s_i) = y_i' \given y_{-i}, \calM_g) \right)$ for any set of stacking weights $w = (w_1, \ldots, w_G)$ and fixed $y$, where $Y(s_i) \overset{ind}{\sim} P_{y(s_i)\given y_{-i}, \mathcal{M}_0}(\cdot)$.
Let $P_{w}(\cdot):= \mathbb{E}_0(P_{w,y}(\cdot))$,  where $\mathbb{E}_0$ denotes expectation with respect to $y\sim \mathcal{M}_0$. For the stacking weights with $w_g = 1$, we abbreviate $P_w(y)$ with $P_g(y)$. Let $w^*:=(w_1^*,\ldots, w^*_G)$ be the stacking weights defined in \eqref{eq:defpd}. Then we have
\begin{equation}
\label{eq:KLest}
\mbox{KL}(Q_0, P_{w^*}) 
\le \sum_{g = 1}^G w_g^* \mbox{KL}(Q_0, P_g).
\end{equation}
where $\mbox{KL}(\cdot, \cdot)$ is the KL divergence between two distributions.
\end{proposition}

\begin{proof}
Note that for any stacking weights $w = (w_1, \ldots, w_G)$ and $y$,
we have
\begin{equation*}
\begin{aligned}
\log P_{w,y}(y') 
&= \sum_{i = 1}^n \log \left(\sum_{g = 1}^G w_g p(Y(s_i) = y_i'\given y_{-i}, M_g) \right) \\
& \ge \sum_{g = 1}^G w_g \log \left( \prod_{i = 1}^n p(Y(s_i)=y_i' \given y_{-i}, M_g) \right) =: \sum_{g = 1}^G w_g \log P_{g,y}(y'),
\end{aligned}
\end{equation*}
using concavity of the log function, where $P_{g,y}(y') = \prod_{i = 1}^n p(Y(s_i)=y_i' \given y_{-i}, M_g)$. Therefore, 
\begin{equation}
\label{eq:pdineq1}
\log Q_{0,y}(y') - \log P_{w,y}(y') 
\le \sum_{g = 1}^G w_g \left(\log Q_{0,y}(y') - \log P_{g,y}(y')\right).
\end{equation}
Taking expectations of both sides in \eqref{eq:pdineq1} with respect to $Q_0(\cdot)$ yields $\mbox{KL}(Q_{0,y}, P_{w,y}) \le \sum_{g = 1}^G w_g \mbox{KL}(Q_{0,y}, P_{g,y})$. The chain rule of the KL divergence with $w = w^*$ yields \eqref{eq:KLest}.
\end{proof}


\section{Pseudo-codes for Stacking algorithms}\label{app: pseudo_code}

Algorithm~\ref{code: stacking_expectaion} and Algorithm~\ref{code: stacking_LP} outline the implementation details for stacking of means and stacking of posterior densities, respectively. Additionally, we provide a Monte Carlo variant of Algorithm~\ref{code: stacking_LP}, detailed in Algorithm~\ref{code: stacking_LP_MC}, for readers interested in this approach. Following \citet{pan2024bayesian}, our Julia implementation computes $R^{-1}_{\phi, \nu}(\chi[-k])$ using a block Givens rotation algorithm \citep[][Section 5.1.8]{GolubLoanMatrix4}, enabling faster model evaluations during cross-validation.

{\footnotesize
\begin{algorithm}[h]
    \caption{Stacking weights calculation using stacking of means}\label{code: stacking_expectaion}
    \begin{algorithmic}[1] 
    \State \textbf{Input:}, $X$, $y$, $\chi$: Design matrix, outcome and location set
    $\mu_\beta$, $V_\beta$, $a_\sigma$, $b_\sigma$: Prior parameters; $G_{\phi}$, $G_{\nu}$, $G_{\delta^2}$: Grids of $\phi, \nu, \delta^2$; $K$: Number of folders
    \State \textbf{Output:} $w = \{w_{\phi, \nu, \delta^2}\}_{(\phi, \nu, \delta^2) \in G_{all}}$: Stacking weights; $G_{all}$: Grid spanned by $G_{\phi}, G_{\nu}, G_{\delta^2}$
    \vspace{2mm}
    \Function{\texttt{SPStacking}}{$X$, $y$, $\chi$,
    $\mu_\beta$, $V_\beta$, $a_\sigma$, $b_\sigma$, $G_{\phi}, G_{\nu}, G_{\delta^2}$, $\phi$, $\nu$, $\delta^2$, $K$}
    \State Compute $X_{\mbox{prod}}^{(k)} = X^\T[-k]X[-k]$, $X_{y}^{(k)} = X^\T[-k]y[-k]$ and record the number of observations $n_k$ in fold $k$ for $k = 1, \ldots, K$, where $X[-k]$ $y[-k]$ denotes the predictors and response for observations not in fold $k$\; 
    \For{$\{\phi, \nu\}$ in grid expanded by $G_\phi$ and $G_\mu$}
        \For{$k=1$ to $K$}
        \State Calculate $R^{-1}_{\phi, \nu}(\chi[-k]) = \{R(s, s';\; \phi, \nu)\}^{-1}_{s, s' \in \chi[-k]}$ \hfill{$\mathcal{O}(n^3)$}
        \State Store $J_{\phi, \nu}(\chi[k], \chi[-k]) = \{R(s, s'; \; \phi, \nu)\}_{s \in \chi[k], s' \in \chi[-k]}$ \hfill{$\mathcal{O}(n \cdot n_k)$}
            \For{$\delta^2$ in $G_\delta^2$}
            \State Compute the Cholesky decomposition $L_\ast$ of 
            \State \quad $M_\ast^{-1} = L_\ast L_\ast^\T = \begin{bmatrix}
\delta^{-2} X_{\mbox{prod}}^{(k)} + V_\beta^{-1} & \delta^{-2} X^\T[-k] \\
\delta^{-2} X[-k] & R^{-1}_{\phi, \nu}(\chi[-k]) + \delta^{-2}I_{n - n_k} \end{bmatrix}$ \hfill{$\mathcal{O}(n^3)$}
            \State Compute $m_\ast = \begin{bmatrix}
V_\beta^{-1}\mu_\beta + \delta^{-2}X_{y}^{(k)} \\ \delta^{-2}y[-k] \end{bmatrix}$; 
            Set $u = L_\ast^{-1}m_{\ast}$
            \State Update $u = L_\ast^{-\T} u$; Extract $\mu_{\beta}^{\ast\T}$ and $\mu_{z}^{\ast\T}$ from $u$ such that $u = (\mu_{\beta}^{\ast\T}, \mu_{z}^{\ast\T})^\T$ \hfill{$\mathcal{O}(n^2)$}
            \State Compute expected outcome on locations in fold $k$
            \State \quad $\mathbb{E}(y[k] \given y[-k], \phi, \nu, \delta^2) = X[k] \mu_{\beta}^\ast + J_{\phi, \nu}(\chi[k], \chi[-k]) \cdot R^{-1}_{\phi, \nu}(\chi[-k]) \cdot \mu^\ast_{z}$ 
            \EndFor
        \EndFor
    \EndFor
    \State {
    Solve convex optimization problem: $\arg\min_{w}(y - \hat{Y}{w})^\T
    (y - \hat{Y}{w})$ under constraints $\sum_{g = 1}^{G} w_g = 1$ and $w_g \geq 0$ for $g = 1, \ldots, G$, where $G = |G_{all}|$. } 
    \State \Return $\left\{w, G_{all}\right\}$, 
    \EndFunction
    \end{algorithmic}
\end{algorithm}
}

\section{Derive the closed form of point-wise predictive density}\label{app: lppd}
We derive the posterior predictive density of the outcome $y(s_0)$ on location $s_0$. We follow the notations in Section~\ref{sec: spatial_models}.
First, we know that $y(s_0) \given \gamma, \sigma^2, y, \Phi$ follows a Gaussian with mean $x(s_0)\beta + J_{\phi, \nu}(s_0, \chi)R^{-1}_{\phi, \nu}(\chi)z$ and variance $\delta^2\sigma^2$. Since the conditional posterior distribution $\gamma \given \sigma^2, y, \Phi$ follows $\mbox{N}(M_\ast m_\ast, \sigma^2 M_\ast)$, the conditional posterior distribution $y(s_0) \given \sigma^2, y, \Phi$ still follows a Gaussian $\mbox{N}(\mu_{s_0}, \sigma^2 V_{s_0})$ where
\begin{align*}
  \mu_{s_0} &= \underbrace{\begin{bmatrix}
  x(s_0) & J_{\phi, \nu}(s_0, \chi)R^{-1}_{\phi}(\chi)
 \end{bmatrix}}_{h_g^\T}M_\ast m_\ast  \;, \; V_{s_0}  = h_{g}^\T M_\ast h_{g} + \delta^2\;.
\end{align*}
Next, through equation \eqref{eq:posterior}
\begin{align*}
    p(y(s_0) \given y, \Phi) &= \int p(y(s_0) \given \sigma^2, y, \Phi) p(\sigma^2 \given \Phi, y) d\sigma^2 =\int \mbox{N}( \mu_{s_0}, \sigma^2 V_{s_0}) \mbox{IG}( a_\ast, b_\ast) d\sigma^2 \\
    & = \int \frac{1}{(2\pi V_{s_0}\sigma^2)^{1/2}} \exp \left\{-\frac{(y(s_0) - \mu_{s_0})^2}{2\sigma^2V_{s_0}}\right\} \frac{b_\ast^{a_\ast}}{\Gamma(a_\ast)} \sigma^{2(-a_\ast - 1)} \exp\left\{-\frac{b_\ast}{\sigma^2}\right\}d\sigma^2\\
    & = \frac{b_\ast^{a_\ast}}{(2\pi V_{s_0})^{1/2} \Gamma(a_\ast)} \int \sigma^{2(-a_\ast - 1/2 -1)} \exp\left\{-\frac{1}{\sigma^2}\left(b_\ast + \frac{(y(s_0) - \mu_{s_0})^2}{2V_{s_0}}\right) \right\}d\sigma^2\\
    &= \frac{\Gamma(a_\ast + 1/2) b_\ast^{ a_\ast}}{(2\pi V_{s_0})^{1/2} \Gamma(a_\ast)} \left(b_\ast + \frac{(y(s_0) - \mu_{s_0})^2}{2V_{s_0}}\right)^{-(a_\ast + 1/2)}
\end{align*}
The log point-wise predictive density is 
\begin{equation}\label{eq: lpd_closed}
\begin{aligned}
    lp(y(s_0) \given y, \Phi) = &-0.5 \log(2\pi V_{s_0}) + 
    a_\ast \log b_\ast - (a_\ast + 1/2) \log \left(b_\ast + \frac{(y(s_0) - \mu_{s_0})^2}{2V_{s_0}}\right) \\
    &+ \log \Gamma(a_\ast + 1/2) - \log \Gamma(a_\ast)
\end{aligned}
\end{equation}

{\footnotesize
\begin{algorithm}[t]
    \caption{Stacking weights calculation using stacking of predictive densities}\label{code: stacking_LP}
    \begin{algorithmic}[1] 
    \State \textbf{Input:}, $X$, $y$, $\chi$: Design matrix, outcome and location set; $\mu_\beta$, $V_\beta$, $a_\sigma$, $b_\sigma$: Prior parameters; $G_{\phi}$, $G_{\nu}$, $G_{\delta^2}$: Grids of $\phi$, $\nu$, $\delta^2$; $K$: Number of folders
    \State \textbf{Output:} $w = \{w_{\phi, \nu, \delta^2}\}_{(\phi, \nu, \delta^2) \in G_{all}}$: Stacking weights; $G_{all}$: Grid spanned by $G_{\phi}, G_{\nu}, G_{\delta^2}$
    \vspace{2mm}
    \Function{\texttt{SPStacking}}{$X$, $y$, $\chi$,
    $\mu_\beta$, $V_\beta$, $a_\sigma$, $b_\sigma$, $G_{\phi}, G_{\nu}, G_{\delta^2}$, $\phi$, $\nu$, $\delta^2$, $K$}
    \State Compute $X_{\mbox{prod}}^{(k)} = X^\T[-k]X[-k]$, $X_{y}^{(k)} = X^\T[-k]y[-k]$, $\|y[-k]\|^2 = y^\T[-k]y[-k]$; $X[-k]$ and $y[-k]$ are predictors and response, respectively, for observations not in fold $k$; $n_k$: number of observations in fold $k$ for $k = 1, \ldots, K$.
    \For{$\{\phi, \nu\}$ in grid expanded by $G_\phi$ and $G_\mu$}
        \For{$k=1$ to $K$}
        \State Calculate $R^{-1}_{\phi, \nu}(\chi[-k]) = \{R^{-1}(s, s';\; \phi, \nu)\}_{s, s' \in \chi[-k]}$ \hfill{$\mathcal{O}(n^3)$}
        \State Store $J_{\phi, \nu}(\chi[k], \chi[-k]) = \{R(s, s'; \; \phi, \nu)\}_{s \in \chi[k], s' \in \chi[-k]}$ \hfill{$\mathcal{O}(n \cdot n_k)$}
            \For{$\delta^2$ in $G_\delta^2$}
            \State Compute the Cholesky decomposition $L_\ast$ of 
            \State \quad $M_\ast^{-1} = L_\ast L_\ast^\T = \begin{bmatrix}
\delta^{-2} X_{\mbox{prod}}^{(k)} + V_\beta^{-1} & \delta^{-2} X^\T[-k] \\
\delta^{-2} X[-k] & R^{-1}_{\phi, \nu}(\chi[-k]) + \delta^{-2}I_{n - n_k} \end{bmatrix}$ \hfill{$\mathcal{O}(n^3)$}
            \State Compute $m_\ast = \begin{bmatrix}
V_\beta^{-1}\mu_\beta + \delta^{-2}X_{y}^{(k)} \\ \delta^{-2}y[-k] \end{bmatrix}$; 
Set $u = L_\ast^{-1} m_{\ast}$ \hfill{$\mathcal{O}(n^2)$}
            \State Compute $b_\ast = b_\sigma + 0.5(\delta^{-2}\|y[-k]\|^2 + \mu_\beta^\T V_\beta^{-1} \mu_\beta - u^\T u)$ and $a_\ast = a_\sigma + 0.5 (N - n_k)$
            \State Update $u = L_\ast^{-\T} u$; Extract $\mu_{\beta}^{\ast\T}$ and $\mu_{z}^{\ast\T}$ from $u$ such that $u = (\mu_{\beta}^{\ast\T}, \mu_{z}^{\ast\T})^\T$ \hfill{$\mathcal{O}(n^2)$}
            \State Generate the posterior expected outcome on locations in fold $k$
            \State \quad $\mathbb{E}(y[k] \given y[-k], \phi, \nu, \delta^2) = X[k] \cdot \mu_{\beta}^\ast + J_{\phi, \nu}(\chi[k], \chi[-k]) \cdot R^{-1}_{\phi, \nu}(\chi[-k]) \cdot \mu_{z}^\ast$
            \State Compute $lpc = -0.5\log(2\pi) + \log \Gamma(a_\ast + 1/2) - \log \Gamma(a_\ast) + a_\ast \log b_\ast$\;

            \For{$s \in S[k]$}
            \State Construct $h_s = \begin{bmatrix}
  x(s) & J_{\phi, \nu}(s, \chi[-k]) \cdot R^{-1}_{\phi, \nu}(\chi[-k])
  \end{bmatrix}$ where $J_{\phi, \nu}(s, \chi[-k])$ is the row 
            \State \quad with elements $\{R(s, s'; \; \phi, \nu)\}_{s' \in \chi[-k]}$  
            \State Compute $V_{s} = \|L_\ast^{-1} h_{s}^\T\|^2 + \delta^2$ and the log point-wise predictive density of $y(s)$, by \eqref{eq: lpd_closed}
            \State $lp_{(\phi, \nu, \delta^2)}(s) = lpc - \frac{1}{2}\log(V_{s}) - (a_\ast + 1/2) \log\{b_\ast + \frac{1}{2V_s} (y(s) - \mathbb{E}(y(s) \given y[-k], \phi, \nu, \delta^2))^2\}$
            \EndFor
            \EndFor
        \EndFor
    \EndFor
    \State Solve convex optimization problem: Maximize $\sum_{s \in \chi}\log(\sum_{(\phi, \nu, \delta^2) \in G_{all}} \exp{\{lp_{(\phi, \nu, \delta^2)}(s)\}} * w_{(\phi, \nu, \delta^2)})$ under constrains $\sum_{(\phi, \nu, \delta^2) \in G_{all}} w_{(\phi, \nu, \delta^2)}  = 1$ and $w_{(\phi, \nu, \delta^2)} > 0$
    \State \Return $\left\{w, G_{all}\right\}$, 
    \EndFunction
    \end{algorithmic}
\end{algorithm}
}

\section{Stacking of predictive densities (Monte Carlo version)}\label{app: MC_stacking_alg}
We present a Monte Carlo algorithm to estimate the log of point-wise predictive density for outcome in fold $k$ given observations not in fold $k$. For each $k$, we generate $J$ posterior samples of $\sigma^2$ and $(\beta^\T, z^\T)^\T = \gamma$, (i.e., $\{\sigma^{2(j)}, \gamma^{(j)}\}$ for $j = 1, \ldots, J$), using data not in fold $k$. Then we calculate the corresponding expected outcome for location $s$ in fold $k$, $\hat{y}_{(\phi, \nu, \delta^2)}^{k, j}(s)$ for $j = 1, \ldots, J$. Next, we compute the predictive density of $y(s)$ conditional on the prediction $\hat{y}_{(\phi, \nu, \delta^2)}^{k, j}(s)$ and the nugget (variance of the noise process, which equals the product of $\delta^2$ and the $j$-th posterior sample $\sigma^{2(j)}$) for each $j$. The conditional predictive distribution of $y(s)$ is
\begin{equation}\label{eq: cond_pred_distr}
    p(y(s) \given \sigma^{2(j)}, \gamma^{(j)}) = \mbox{N}(y(s) \given \hat{y}_{(\phi, \nu, \delta^2)}^{k, j}, \delta^2 \sigma^{2(j)})\;.
\end{equation}
Finally, the log point-wise predictive density (LPD) of $y(s)$ at location $s$ is estimated by
\begin{equation}\label{eq: lpd}
    \begin{aligned}
     lp_{(\phi, \nu, \delta^2)}(s) &= 
\log \int_{\sigma^2, \gamma} p(y(s) \given \sigma^2, \gamma) p(\sigma^2, \gamma \given y[-k]) d\sigma^2d\gamma \\ &\approx
\log \left\{ \frac{1}{J} \sum_{j = 1}^J p_{(\phi, \nu, \delta^2)}(y(s) \given \sigma^{2(j)}, \gamma^{(j)}) \right\}
    \end{aligned}
\end{equation}
and we can compute the stacking weights based on the estimated LPDs. Algorithm~\ref{code: stacking_LP_MC} presents the Monte Carlo version of the stacking of predictive densities.

{\footnotesize
\begin{algorithm}[h]
    \caption{Stacking of predictive densities (Monte Carlo Version)}\label{code: stacking_LP_MC}
    \begin{algorithmic}[1] 
    \State \textbf{Input:}
    $X$, $y$, $\chi$: Design matrix, outcome and location set; $\mu_\beta$, $V_\beta$, $a_\sigma$, $b_\sigma$: Prior parameters; $G_{\phi}$, $G_{\nu}$, $G_{\delta^2}$: Grids for $\{\phi,\nu,\delta^2\}$; $K$ Number of folders; $J$: number of samples for estimating log point-wise predictive density
    \State \textbf{Output:} $w = \{w_{\phi, \nu, \delta^2}\}_{(\phi, \nu, \delta^2) \in G_{all}}$: Stacking weights; $G_{all}$: Grid spanned by $G_{\phi}, G_{\nu}, G_{\delta^2}$
    \vspace{2mm}
    \Function{\texttt{SPStackingMC}}{$X$, $y$, $\chi$,
    $\mu_\beta$, $V_\beta$, $a_\sigma$, $b_\sigma$, $G_{\phi}, G_{\nu}, G_{\delta^2}$, $\phi$, $\nu$, $\delta^2$, $K$, $J$}
    \State Compute $X_{\mbox{prod}}^{(k)} = X^\T[-k]X[-k]$, $X_{y}^{(k)} = X^\T[-k]y[-k]$, $\|y[-k]\|^2 = y^\T[-k]y[-k]$ and record the number of observations $n_k$ in fold $k$ for $k = 1, \ldots, K$, where $X[-k]$ $y[-k]$ denotes the predictors and response for observations not in fold $k$
    \For{$\{\phi, \nu\}$ in grid expanded by $G_\phi$ and $G_\mu$}
        \For{$k=1$ to $K$}
        \State Calculate $R^{-1}_{\phi, \nu}(\chi[-k]) = \{R^{-1}(s, s';\; \phi, \nu)\}_{s, s' \in \chi[-k]}$ \hfill{$\mathcal{O}(n^3)$}
        \State Store $J_{\phi, \nu}(\chi[k], \chi[-k]) = \{R(s, s'; \; \phi, \nu)\}_{s \in \chi[k], s' \in \chi[-k]}$ \hfill{$\mathcal{O}(n \cdot n_k)$}
            \For{$\delta^2$ in $G_\delta^2$}
            \State Compute the Cholesky decomposition $L_\ast$ of 
            \State \quad $M_\ast^{-1} = L_\ast L_\ast^\T = \begin{bmatrix}
\delta^{-2} X_{\mbox{prod}}^{(k)} + V_\beta^{-1} & \delta^{-2} X^\T[-k] \\
\delta^{-2} X[-k] & R^{-1}_{\phi, \nu}(\chi[-k]) + \delta^{-2}I_{n - n_k} \end{bmatrix}$ \hfill{$\mathcal{O}(n^3)$}
            \State Compute $m_\ast = \begin{bmatrix}
V_\beta^{-1}\mu_\beta + \delta^{-2}X_{y}^{(k)} \\ \delta^{-2}y[-k] \end{bmatrix}$; 
            Set $u = L_\ast^{-1} m_{\ast}$ \hfill{$\mathcal{O}(n^2)$}
            \State Compute $b_\ast = b_\sigma + 0.5(\delta^{-2}\|y[-k]\|^2 + \mu_\beta^\T V_\beta^{-1} \mu_\beta - u^\T u)$ and $a_\ast = a_\sigma + 0.5 (N - n_k)$
            \State Generate  $\sigma^{2(1)}, \ldots, \sigma^{2(J)} \sim \mbox{Inverse-Gamma}(a_\ast, b_\ast)$
            \State Generate $v^{(j)} \sim \mbox{N}(0, \sigma^{2(j)} I_{n - n_k + p})$; Set $\gamma^{(j)} = L_\ast^{-\T}(v^{(j)} + u)$ for $j = 1, \ldots, J$\;
            \State Generate the posterior samples of the expected outcome on locations in fold $k$
            \State \quad $\hat{y}_{(\phi, \nu, \delta^2)}^{(k, j)} = X[k] \gamma^{(j)}_\beta + J_{\phi, \nu}(\chi[k], \chi[-k]) \cdot R^{-1}_{\phi, \nu}(\chi[-k]) \gamma^{(j)}_z$, $j = 1, \ldots, J$\;
            \For{$s \in S[k]$}
            \State Compute the posterior samples of the log-density of observation $y(s)$ at location $s$, 
            \State \quad $p_{(\phi, \nu, \delta^2)}(y(s) \given \sigma^{2(j)}, \gamma^{(j)}) := \mbox{N}(y(s)\given \hat{y}_{(\phi, \nu, \delta^2)}^{(k, j)}(s),\delta^{2}\sigma^{2(j)})$ 
            for $j = 1, \ldots, J$ \eqref{eq: cond_pred_distr}\;
            \State Compute expected log point-wise predictive density of $y(s)$ at location $s$, by 
            \State \quad $lp_{(\phi, \nu, \delta^2)}(s) = \log \left[\frac{1}{J}\sum_{j = 1}^J p_{(\phi, \nu, \delta^2)}(y(s) \given \sigma^{2(j)}, \gamma^{(j)})\right]$ \eqref{eq: lpd}
            \EndFor
            \EndFor
        \EndFor
    \EndFor
    \State Solve convex optimization problem: Maximize $\sum_{s \in \chi}\log(\sum_{(\phi, \nu, \delta^2) \in G_{all}} \exp{\{lp_{(\phi, \nu, \delta^2)}(s)\}} * w_{(\phi, \nu, \delta^2)})$ under constrains $\sum_{(\phi, \nu, \delta^2) \in G_{all}} w_{(\phi, \nu, \delta^2)}  = 1$ and $w_{(\phi, \nu, \delta^2)} > 0$
    \State \Return $\left\{w, G_{all}\right\}$, 
    \EndFunction
    \end{algorithmic}
\end{algorithm}
}

\section{Recover expected $z(s)$ and lppd for MCMC sampling}\label{app: recover_z_MCMC}

The package \textit{spBayes} does not record posterior samples of the latent process $z(s)$. Here, we recover the expected $z(s)$ for the observed and unobserved locations and compute the MLPD for the simulation studies based on the outputs returned by \textit{spLM}. To achieve our goal, we need to recover the posterior samples of $z(s)$ at all locations given the recorded MCMC samples of the parameters $\phi, \nu, \sigma^2, \tau^2$ and $\beta$. Let $z_{o}$ and $z_{u}$ be the values of $z(s)$ at observed and unobserved locations, respectively, and let $z^\ast = \begin{bmatrix}z_{o} \\ z_{u} \end{bmatrix}$. Based on \eqref{eq:BayesianG},
{\footnotesize
\begin{align*}
& p(z^{\ast} \given y, \beta, \sigma^2 , \phi, \nu) \propto  
\mathcal{N}\left(y \left | X\beta + \begin{bmatrix}I_n : 0\end{bmatrix} \begin{bmatrix}z_{o} \\ z_{u} \end{bmatrix}\right., \tau^{2}I\right) \times \mathcal{N}\left(\left.\begin{bmatrix}z_{o} \\ z_{u} \end{bmatrix} \right | 0, \sigma^2 R_{\phi, \nu}(\chi^\ast)\right)\\
&\quad \propto \exp\left[-\frac{1}{2} \left\{
\tau^{-2}\left(\begin{bmatrix}I_n : 0\end{bmatrix} z^\ast - (y - X\beta)\right)^\T\left(\begin{bmatrix}I_n : 0\end{bmatrix} z^\ast -  (y - X\beta)\right) + z^{\ast\T} \sigma^{-2}R^{-1}_{\phi, \nu}(\chi^\ast) z^\ast \right\} \right]\\
&\quad \propto \exp\left[ -\frac{1}{2}\left\{
z^{\ast\T}\left(\sigma^{-2}R^{-1}_{\phi, \nu}(\chi^\ast) + \begin{bmatrix}\tau^{-2}I_n & 0 \\
0 & 0 \end{bmatrix} \right) z^\ast \right\} \right] \\
&\quad \quad \quad \times \exp\left[ -\frac{1}{2} \left\{- z^{\ast\T} \begin{bmatrix}(y - X\beta)/\tau^2\\ 0 \end{bmatrix} - \begin{bmatrix}(y - X\beta)^\T/\tau^2 : 0 \end{bmatrix}z^\ast\right\}\right]  \propto \mathcal{N}(z^\ast \given M_{z}^\ast m_{z}^\ast, M_{z}^\ast),
\end{align*}}
where $\chi^\ast$ combines the observed and unobserved location sets and
\begin{align*}
    M_{z}^\ast = \left(\sigma^{-2}R^{-1}_{\phi, \nu}(\chi^\ast) + \begin{bmatrix}\tau^{-2}I_n & 0 \\
0 & 0 \end{bmatrix} \right)^{-1}, \quad
m_{z}^\ast = \begin{bmatrix}(y - X\beta)/\tau^2 \\ 0 \end{bmatrix}\;.
\end{align*}
Let $\{\beta^{(j)}, \sigma^{2(j)}, \tau^{2(j)}, \phi^{(j)}, \nu^{(j)}\}$ for $j = 1, \ldots, J$ denote the recorded MCMC samples. We generate posterior samples for $z^\ast$ using the above full conditional posterior distribution for each iteration $j$ and then compute the average as the expected $z^\ast$. We further compute the LPPD of $y(s)$ any held out location $s$ by 
\begin{align*}
   lp(s) &= \log \int_{\beta, z(s), \tau^2} p(y(s)\given \beta, z(s), \tau^2) p(\beta, z(s), \tau^2 \given y) d\beta dz(s) d\tau^2 \\
   &= \log \left\{ \frac{1}{J} \sum_{j = 1}^J p(y(s) \given \beta^{(j)}, z(s)^{(j)}, \tau^{2(j)}) \right\} \\
   &= \log \left\{ \frac{1}{J} \sum_{j = 1}^J \mathcal{N}(y(s) \given x(s)^\T\beta^{(j)} + z(s)^{(j)}, \tau^{2(j)}) \right\}
\end{align*}

\section{Stacking weights for Stacking of means (in R code)}\label{app: QP_w}
We format the expected outcome $\{\mathbb{E}(y[k] \given y[-k], \phi, \nu, \delta^2)\}^{k = 1, \ldots, K}_{(\phi, \nu, \delta^2) \in G_{all}}$ computed in Algorithm~\ref{code: stacking_expectaion} by an $n \times G$ matrix $\hat{Y}$. Each column of $\hat{Y}$ stores  $\{\mathbb{E}(y[k] \given y[-k], \phi, \nu, \delta^2)\}^{k = 1, \ldots, K}$ for each candidate model, and it shares the same order of observed locations as the outcome $y$. Let $w = (w_1, w_2, \ldots, w_G)^\T$ be the stacking weights, we need to find the weights that satisfy
\[
\underset{w }{\mbox{argmin}}\{(y - \hat{Y}w)^\T(y - \hat{Y}w)\}\;,
\]
under the constrain $\sum_{g}^G w_g = 1$. We cast this as a quadratic programming (QP) problem.
\begin{align*}
    (y - &\hat{Y}w)^\T(y - \hat{Y}w) \\
    &= \{(y - \sum_{g}^{G} w_g\hat{Y}_G) - \sum_{g}^{G-1} w_g(\hat{Y}_g - \hat{Y}_G)\}^\T \{(y - \sum_{g}^{G} w_g\hat{Y}_G) - \sum_{g}^{G-1} w_g(\hat{Y}_g - \hat{Y}_G)\} \\
    &= (\Tilde{y} - \Tilde{Y}\Tilde{w})^\T
    (\Tilde{y} - \Tilde{Y}\Tilde{w})\;
\end{align*}
where $\Tilde{y} = y - \hat{Y}_G$, $\Tilde{Y} = [(\hat{Y}_1 - \hat{Y}_G) : \cdots : (\hat{Y}_{G-1} - \hat{Y}_G)]$, and $\Tilde{w} = (w_1, \ldots, w_{G-1})^\T$. And the QP problem has constrains $-\sum_{g = 1}^{G-1} w_g \geq -1$ and $w_g \geq 0$ for $g = 1, \ldots, G - 1$.


\section{Figures for simulation studies}\label{app: figs}

\subsection{Distributions of the diagnostic metrics for prediction performance for simulation studies}\label{appsub: figs_sum}

See Figure~\ref{fig: sim_compar2}.

\begin{figure}[t]
\centering
\subfloat[\label{subfig: sim1_compar}]
{\includegraphics[width=0.49\textwidth, keepaspectratio]{pics/CVexperiment_sim1.png}}
\subfloat[\label{subfig: sim4_compar}]
{\includegraphics[width=0.49\textwidth, keepaspectratio]{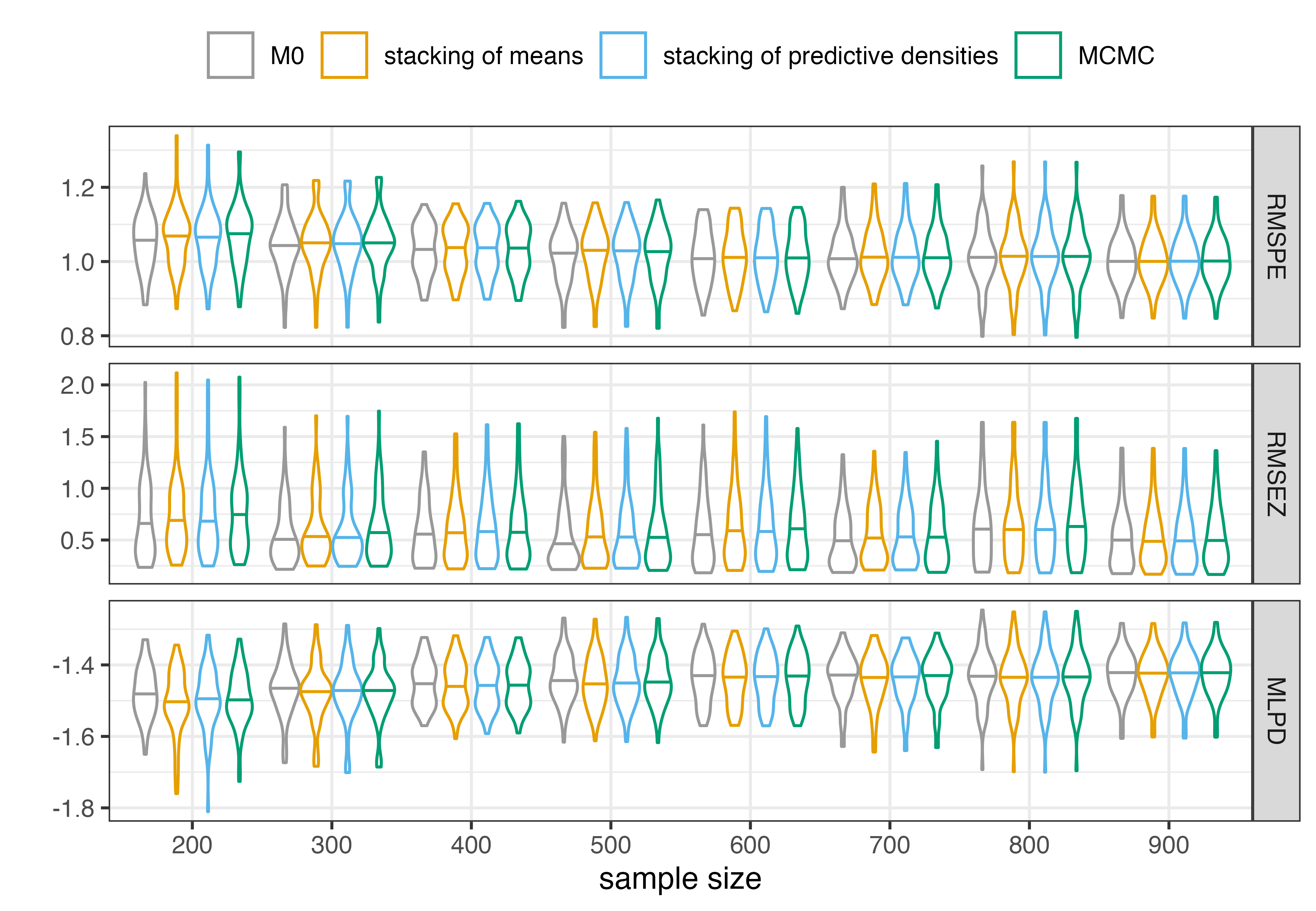}}\\
\subfloat[ \label{subfig: sim2_compar}]
{\includegraphics[width=0.49\textwidth, keepaspectratio]{pics/CVexperiment_sim2.png}}
\subfloat[ \label{subfig: sim3_compar}]
{\includegraphics[width=0.49\textwidth, keepaspectratio]{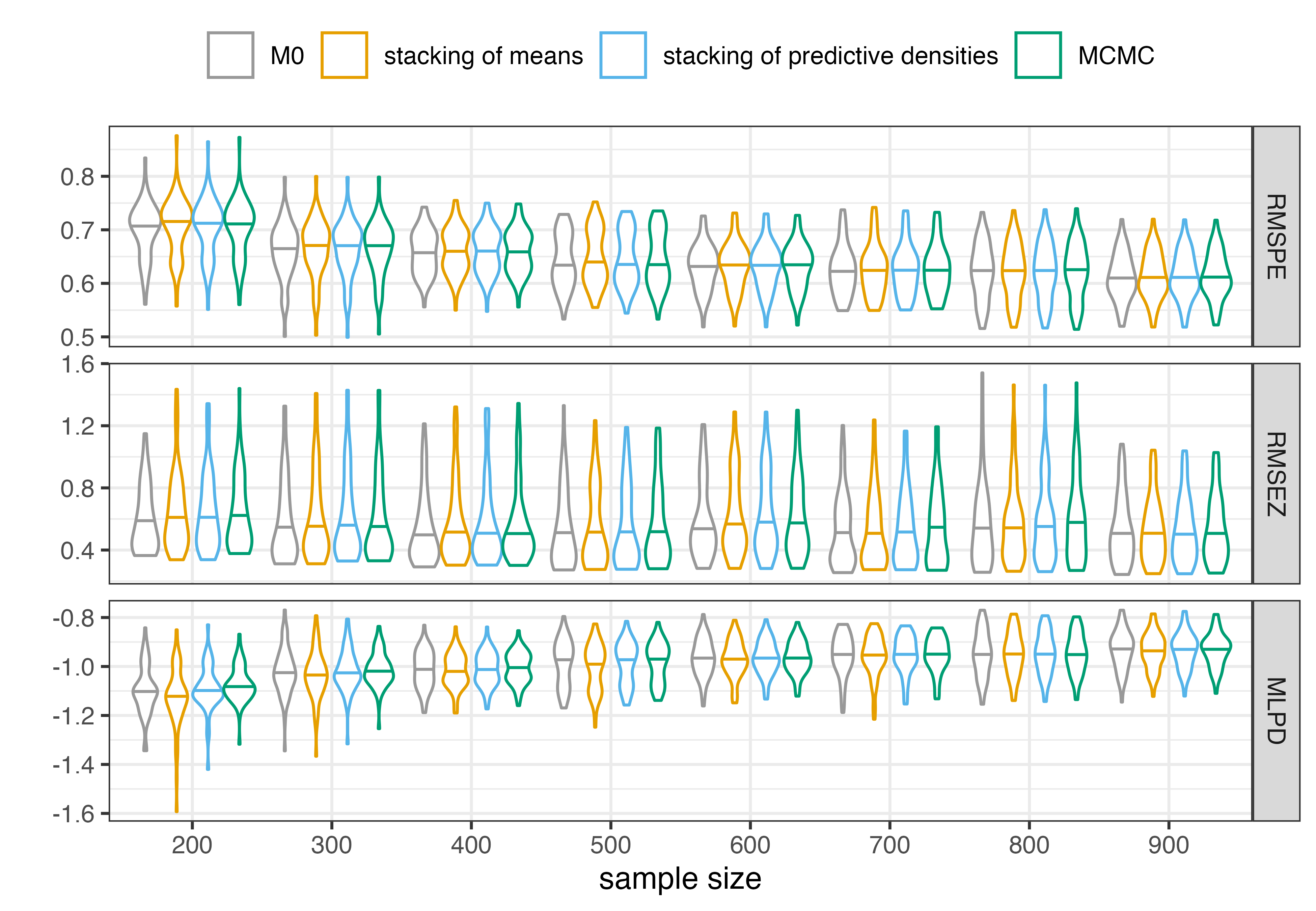}}
\caption{Distributions of the diagnostic metrics for prediction performance for the first (a), second (b), third (c) and fourth (d) simulation. Each distribution is depicted through a violin plot. The horizontal line in each violin plot indicates the median. }
\label{fig: sim_compar2}
\end{figure}

\subsection{Distributions of stacking weights ($>0.001$ only) for simulation studies}\label{appsub: w_distr}

See Figure~\ref{fig: w_distr}.

\begin{figure}[t]
\centering
\subfloat[\label{subfig: sim1_compar_w}]
{\includegraphics[width=0.49\textwidth, keepaspectratio]{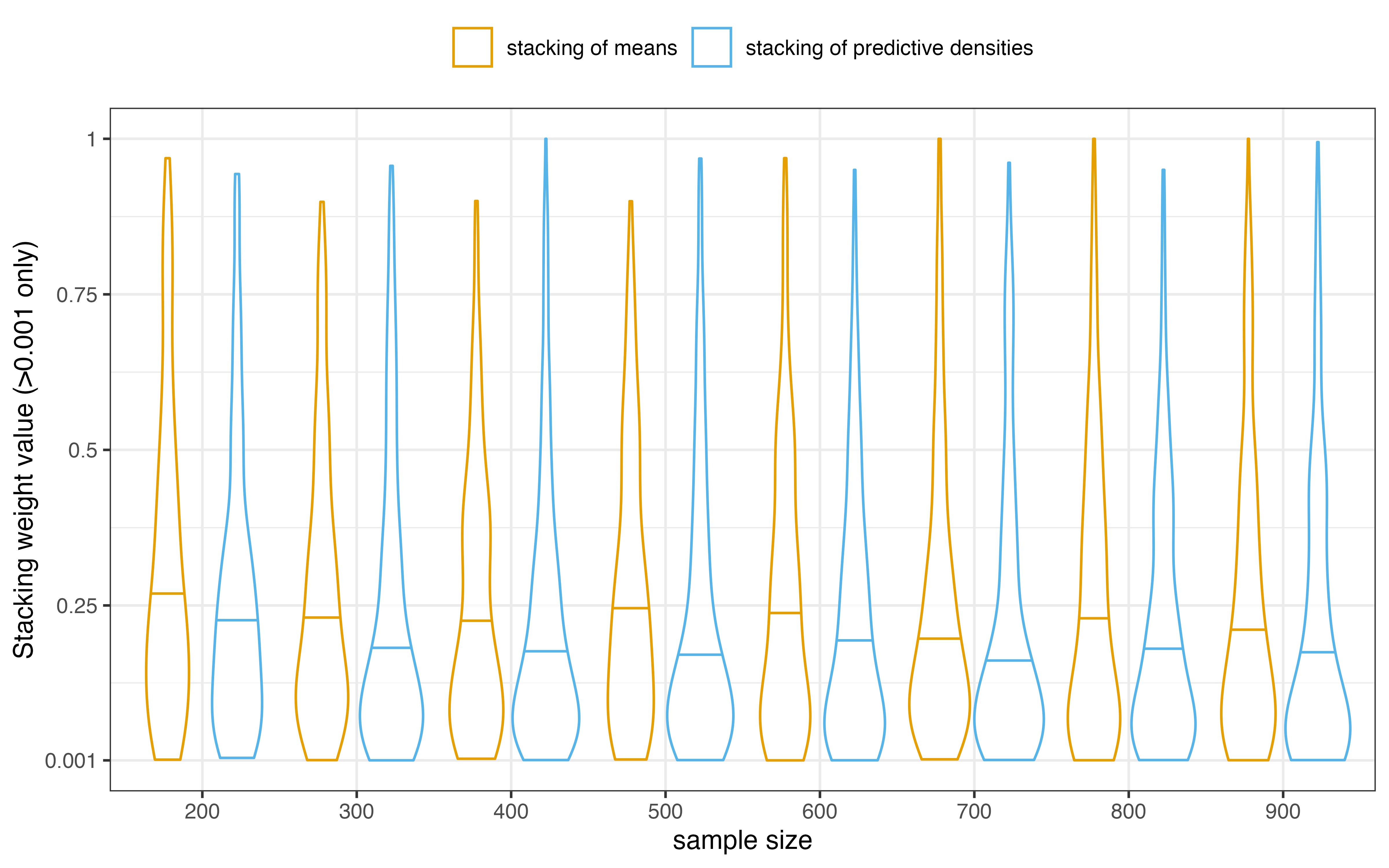}}
\subfloat[\label{subfig: sim4_compar_w}]
{\includegraphics[width=0.49\textwidth, keepaspectratio]{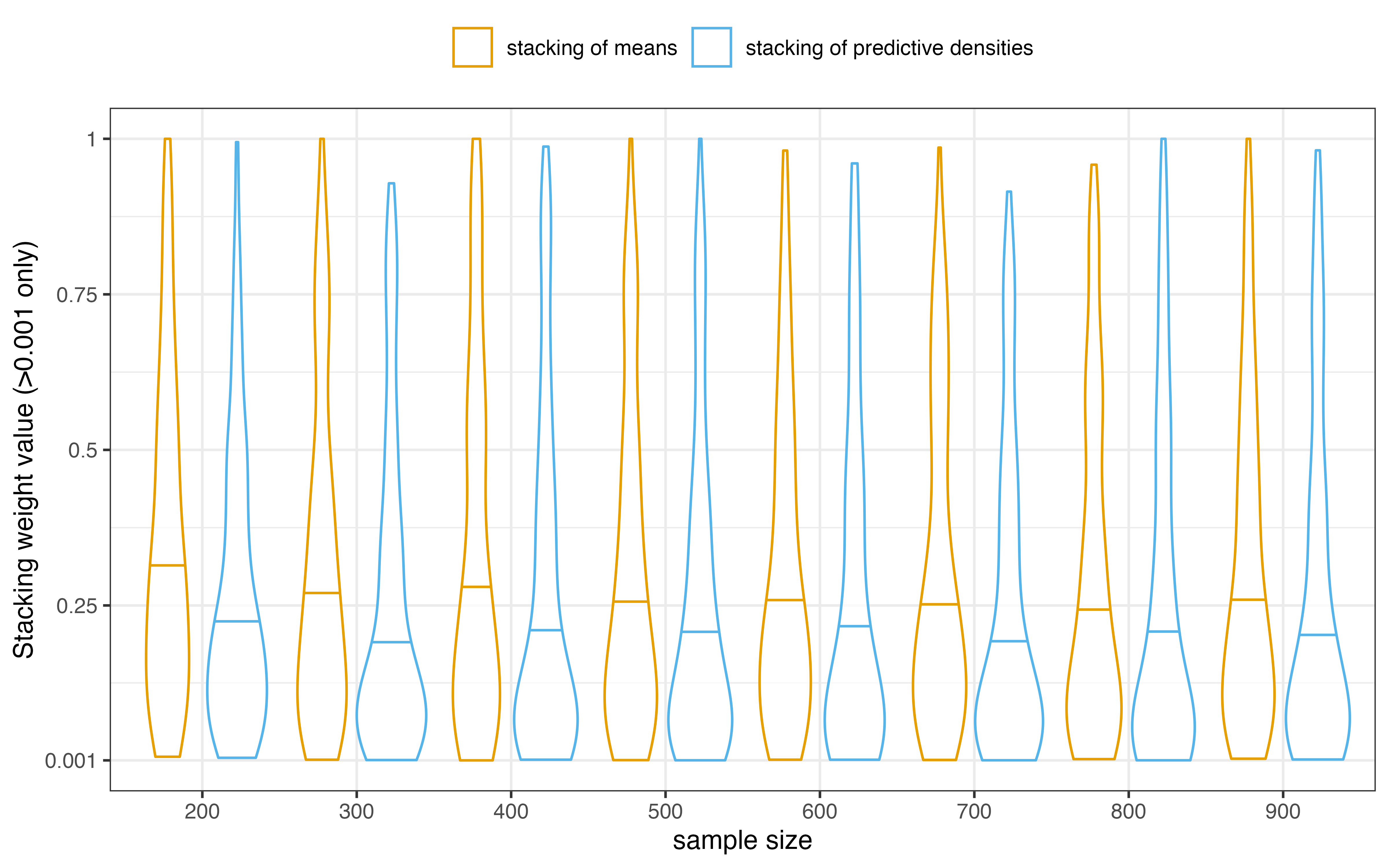}}\\
\subfloat[ \label{subfig: sim2_compar_w}]
{\includegraphics[width=0.49\textwidth, keepaspectratio]{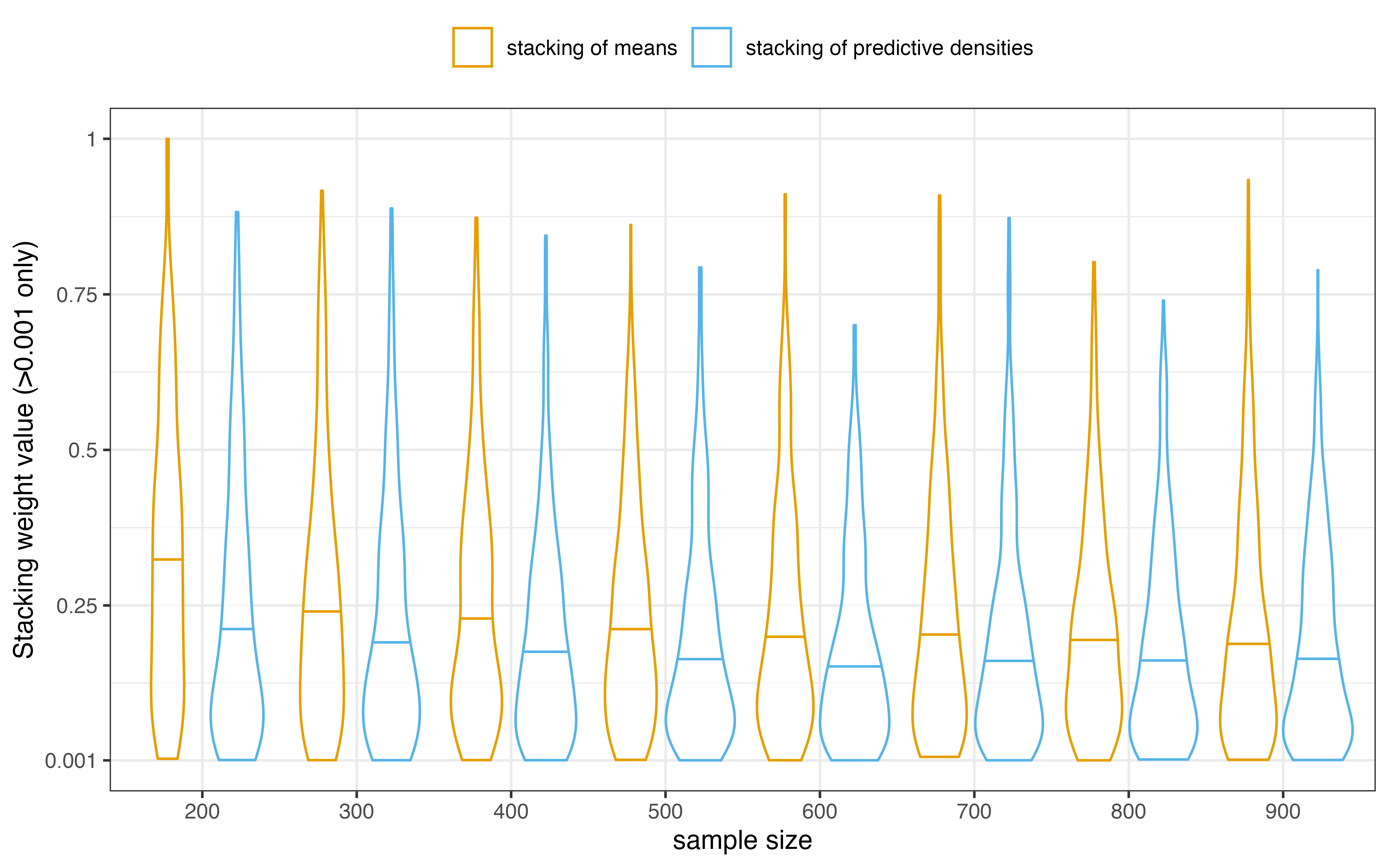}}
\subfloat[ \label{subfig: sim3_compar_w}]
{\includegraphics[width=0.49\textwidth, keepaspectratio]{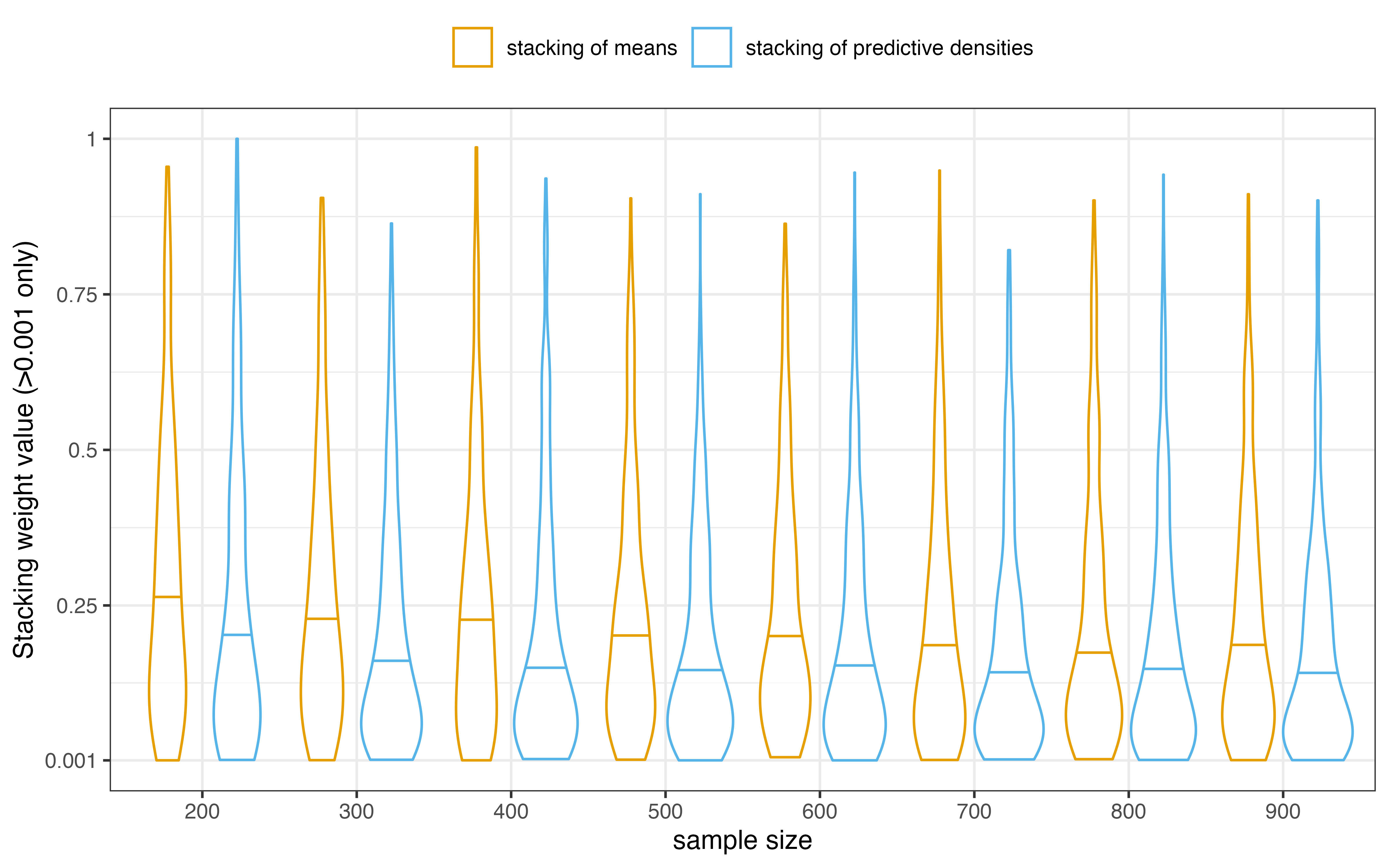}}
\caption{Distributions of stacking weights ($>0.001$ only) for the first (a), {second (b)}, third (c) and {fourth (d)} simulation. Each distribution is depicted through a violin plot. The horizontal line in each violin plot indicates the median. }
\label{fig: w_distr}
\end{figure}

\subsection{Interpolated maps for the simulation studies}\label{app: maps_w}

See Figures~\ref{fig: y_pred_compar_sim1}--\ref{fig: w_compar_sim3}

\begin{figure}[t]
\centering
\subfloat[ raw $y(s)$ \label{subfig: y_held_sim1}]{\includegraphics[width=0.3\textwidth, height=0.2\textwidth]{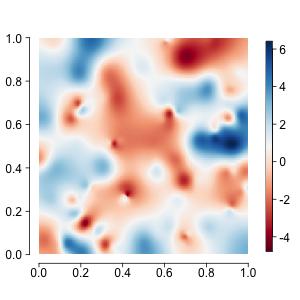}} \hspace{0.6cm}
\subfloat[ raw $x(s)^\T \beta + z(s)$ \label{subfig: y_held_denoise_sim1}]{\includegraphics[width=0.3\textwidth, height=0.2\textwidth]{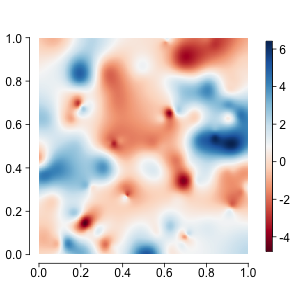}} \hspace{0.6cm}
\subfloat[ stacking of mean \label{subfig: y_pred_LSE_sim1}]{\includegraphics[width=0.3\textwidth, height=0.2\textwidth]{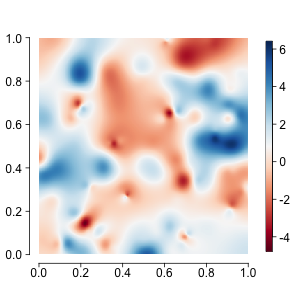}}\\
\centering
\subfloat[ stacking of predictive densities \label{subfig: y_pred_LP_sim1}]{\includegraphics[width=0.3\textwidth, height=0.2\textwidth]{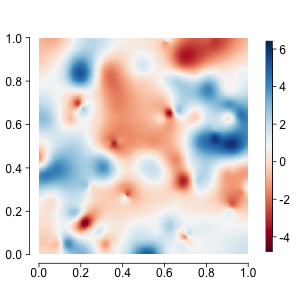}} 
\hspace{0.6cm}
\subfloat[ M0 \label{subfig: y_pred_M0_sim1}]{\includegraphics[width=0.3\textwidth, height=0.2\textwidth]{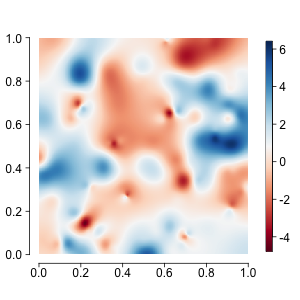}}
\hspace{0.6cm}
\subfloat[ MCMC \label{subfig: y_pred_MCMC_sim1}]{\includegraphics[width=0.3\textwidth, height=0.2\textwidth]{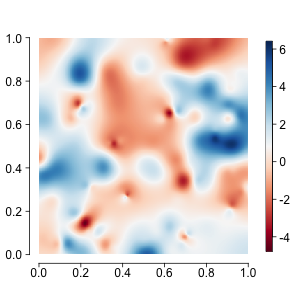}}\\
\caption{Interpolated maps of (a) the response $y(s)$, (b) the denoised response $x(s)\beta + z(s)$ and (c-f) the expected $y(s)$ on the $n_h = 100$ held out locations generated by all competing algorithms for the example with $800$ observations from the first simulation.}\label{fig: y_pred_compar_sim1}
\end{figure}

\begin{figure}[t]
\centering
\subfloat[ raw \label{subfig: w_raw_sim1}]{\includegraphics[width=0.3\textwidth, height=0.2\textwidth]{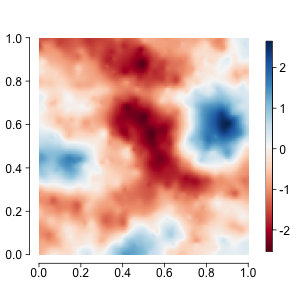}} \hspace{0.6cm}
\subfloat[ stacking of mean \label{subfig: w_LSE_sim1}]{\includegraphics[width=0.3\textwidth, height=0.2\textwidth]{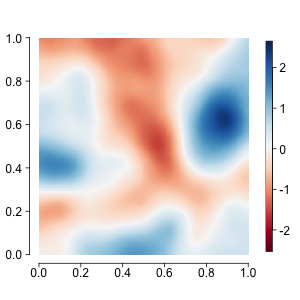}}
\hspace{0.6cm}
\subfloat[ stacking of predictive densities \label{subfig: w_LP_sim1}]{\includegraphics[width=0.3\textwidth, height=0.2\textwidth]{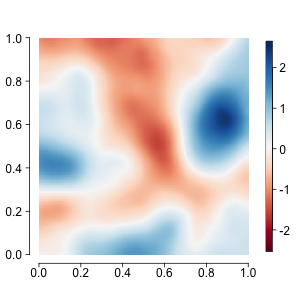}} \\
\centering
\subfloat[ M0 \label{subfig: w_M0_sim1}]{\includegraphics[width=0.3\textwidth, height=0.2\textwidth]{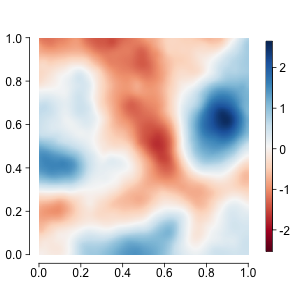}}
\hspace{0.6cm}
\subfloat[ MCMC \label{subfig: w_MCMC_sim1}]{\includegraphics[width=0.3\textwidth, height=0.2\textwidth]{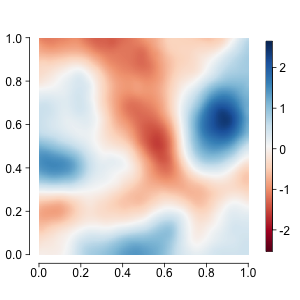}}
\caption{Interpolated maps of (a) the latent process $z(s)$ and (b-g) the expected $z(s)$ on all $n = 900$ sampled locations generated by all competing algorithms for the example from the first simulation. The $n = 900$ locations include both observed and unobserved locations}\label{fig: w_compar_sim1}
\end{figure}

\begin{figure}[t]
\centering
\subfloat[ raw $y(s)$ \label{subfig: y_held_sim4}]{\includegraphics[width=0.3\textwidth, height=0.2\textwidth]{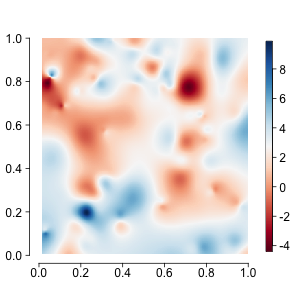}} \hspace{0.6cm}
\subfloat[ raw $x(s)^\T \beta + z(s)$ \label{subfig: y_held_denoise_sim4}]{\includegraphics[width=0.3\textwidth, height=0.2\textwidth]{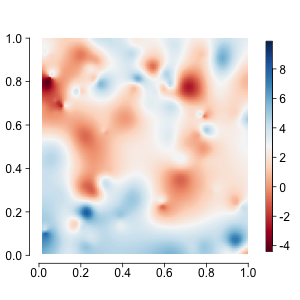}} \hspace{0.6cm}
\subfloat[ stacking of mean \label{subfig: y_pred_LSE_sim4}]{\includegraphics[width=0.3\textwidth, height=0.2\textwidth]{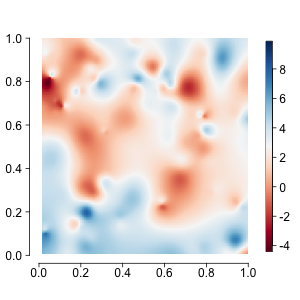}}
\hspace{0.6cm}\\
\centering
\subfloat[ stacking of predictive densities \label{subfig: y_pred_LP_sim4}]{\includegraphics[width=0.3\textwidth, height=0.2\textwidth]{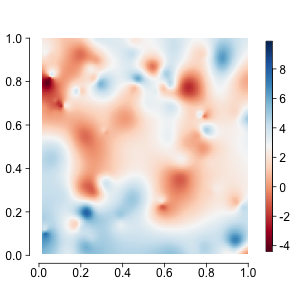}} 
\hspace{0.6cm}
\subfloat[ M0 \label{subfig: y_pred_M0_sim4}]{\includegraphics[width=0.3\textwidth, height=0.2\textwidth]{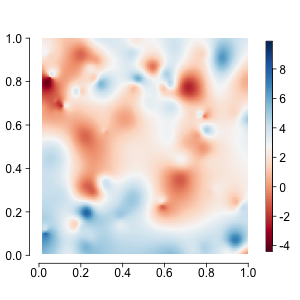}}
\hspace{0.6cm}
\subfloat[ MCMC \label{subfig: y_pred_MCMC_sim4}]{\includegraphics[width=0.3\textwidth, height=0.2\textwidth]{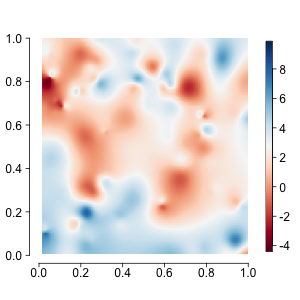}}\\
\caption{Interpolated maps of (a) the response $y(s)$, (b) the denoised response $x(s)\beta + z(s)$ and (c-f) the expected $y(s)$ on the $n_h = 100$ held out locations generated by all competing algorithms for the example with $600$ observations from the second simulation.}\label{fig: y_pred_compar_sim4}
\end{figure}

\begin{figure}[t]
\centering
\subfloat[ raw \label{subfig: w_raw_sim4}]{\includegraphics[width=0.3\textwidth, height=0.2\textwidth]{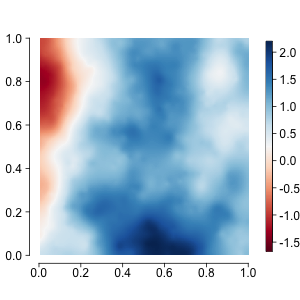}} \hspace{0.6cm}
\subfloat[ stacking of mean \label{subfig: w_LSE_sim4}]{\includegraphics[width=0.3\textwidth, height=0.2\textwidth]{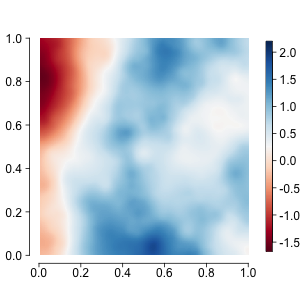}}
\hspace{0.6cm}
\subfloat[ stacking of predictive densities \label{subfig: w_LP_sim4}]{\includegraphics[width=0.3\textwidth, height=0.2\textwidth]{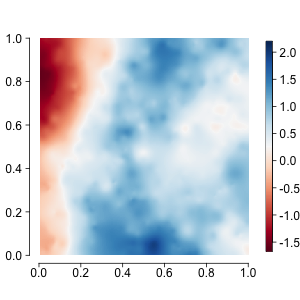}} \\
\centering
\subfloat[ M0 \label{subfig: w_M0_sim4}]{\includegraphics[width=0.3\textwidth, height=0.2\textwidth]{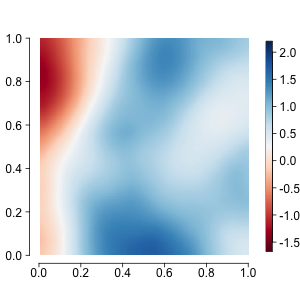}}
\hspace{0.6cm}
\subfloat[ MCMC \label{subfig: w_MCMC_sim4}]{\includegraphics[width=0.3\textwidth, height=0.2\textwidth]{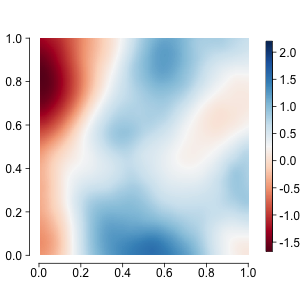}}
\caption{Interpolated maps of (a) the latent process $z(s)$ and (b-g) the expected $z(s)$ on all $n = 700$ sampled locations generated by all competing algorithms for the example from the second simulation. The $n = 700$ locations include both observed and unobserved  locations}\label{fig: w_compar_sim4}
\end{figure}
\captionsetup{labelfont={color=black},textfont={color=black}}

\begin{figure}[t]
\centering
\subfloat[ raw $y(s)$ \label{subfig: y_held_sim2}]{\includegraphics[width=0.3\textwidth, height=0.2\textwidth]{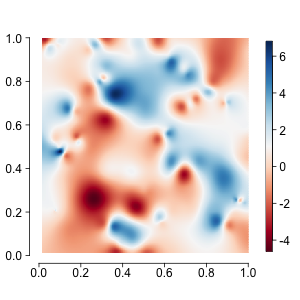}} \hspace{0.6cm}
\subfloat[ raw $x(s)^\T \beta + z(s)$ \label{subfig: y_held_denoise_sim2}]{\includegraphics[width=0.3\textwidth, height=0.2\textwidth]{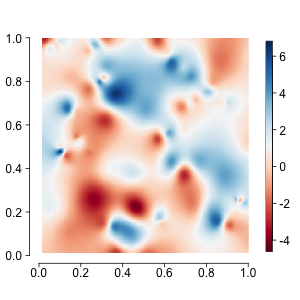}} \hspace{0.6cm}
\subfloat[ stacking of mean \label{subfig: y_pred_LSE_sim2}]{\includegraphics[width=0.3\textwidth, height=0.2\textwidth]{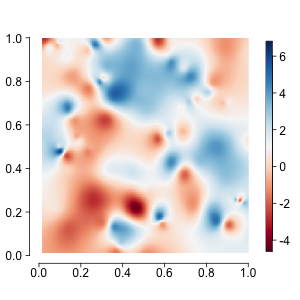}}
\hspace{0.6cm}\\
\centering
\subfloat[ stacking of predictive densities \label{subfig: y_pred_LP_sim2}]{\includegraphics[width=0.3\textwidth, height=0.2\textwidth]{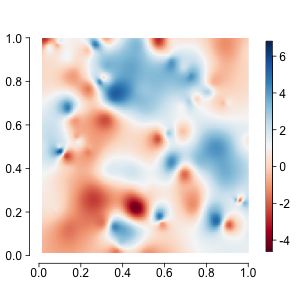}} 
\hspace{0.6cm}
\subfloat[ M0 \label{subfig: y_pred_M0_sim2}]{\includegraphics[width=0.3\textwidth, height=0.2\textwidth]{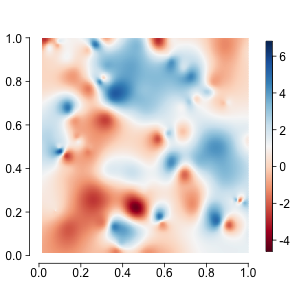}}
\hspace{0.6cm}
\subfloat[ MCMC \label{subfig: y_pred_MCMC_sim2}]{\includegraphics[width=0.3\textwidth, height=0.2\textwidth]{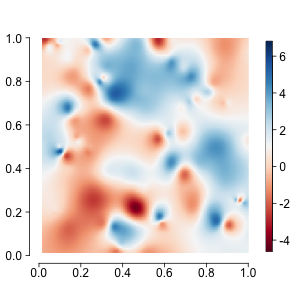}}\\
\caption{Interpolated maps of (a) the response $y(s)$, (b) the denoised response $x(s)\beta + z(s)$ and (c-f) the expected $y(s)$ on the $n_h = 100$ held out locations generated by all competing algorithms for the example with $400$ observations from the third simulation.}\label{fig: y_pred_compar_sim2}
\end{figure}

\begin{figure}[t]
\centering
\subfloat[ raw \label{subfig: w_raw_sim2}]{\includegraphics[width=0.3\textwidth, height=0.2\textwidth]{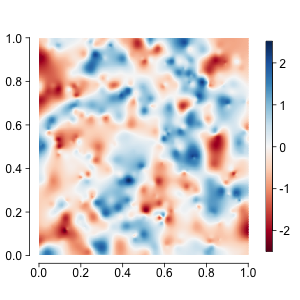}} \hspace{0.6cm}
\subfloat[ stacking of mean \label{subfig: w_LSE_sim2}]{\includegraphics[width=0.3\textwidth, height=0.2\textwidth]{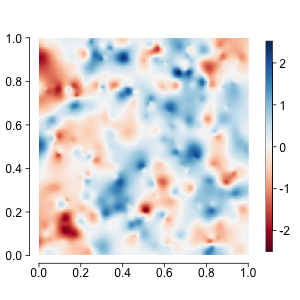}}
\hspace{0.6cm}
\subfloat[ stacking of predictive densities \label{subfig: w_LP_sim2}]{\includegraphics[width=0.3\textwidth, height=0.2\textwidth]{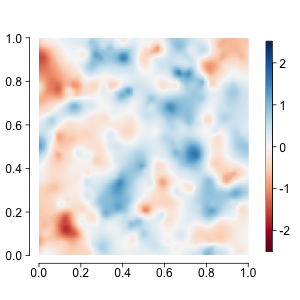}} \\
\centering
\subfloat[ M0 \label{subfig: w_M0_sim2}]{\includegraphics[width=0.3\textwidth, height=0.2\textwidth]{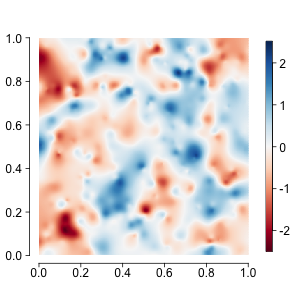}}
\hspace{0.6cm}
\subfloat[ MCMC \label{subfig: w_MCMC_sim2}]{\includegraphics[width=0.3\textwidth, height=0.2\textwidth]{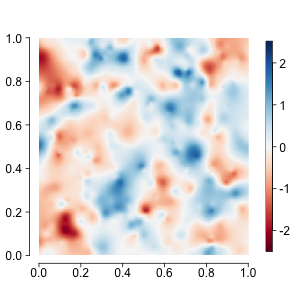}}
\caption{Interpolated maps of (a) the latent process $z(s)$ and (b-g) the expected $z(s)$ on all $n = 500$ sampled locations generated by all competing algorithms for the example from the third simulation. The $n = 500$ locations include both observed and unobserved locations}\label{fig: w_compar_sim2}
\end{figure}

\begin{figure}[t]
\centering
\subfloat[ raw $y(s)$ \label{subfig: y_held_sim3}]{\includegraphics[width=0.3\textwidth, height=0.2\textwidth]{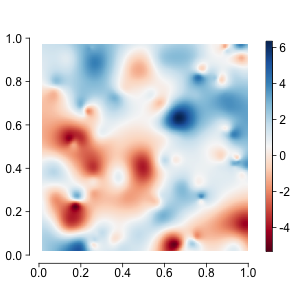}} \hspace{0.6cm}
\subfloat[ raw $x(s)^\T \beta + z(s)$ \label{subfig: y_held_denoise_sim3}]{\includegraphics[width=0.3\textwidth, height=0.2\textwidth]{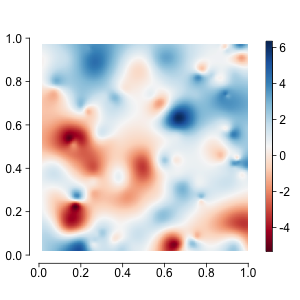}} \hspace{0.6cm}
\subfloat[ stacking of mean \label{subfig: y_pred_LSE_sim3}]{\includegraphics[width=0.3\textwidth, height=0.2\textwidth]{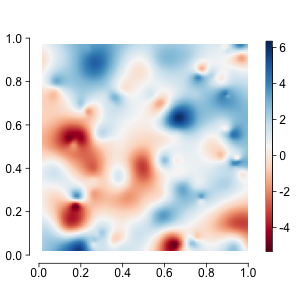}}
\hspace{0.6cm}\\
\centering
\subfloat[ stacking of predictive densities \label{subfig: y_pred_LP_sim3}]{\includegraphics[width=0.3\textwidth, height=0.2\textwidth]{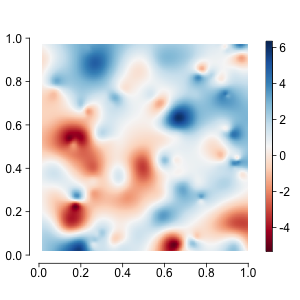}} 
\hspace{0.6cm}
\subfloat[ M0 \label{subfig: y_pred_M0_sim3}]{\includegraphics[width=0.3\textwidth, height=0.2\textwidth]{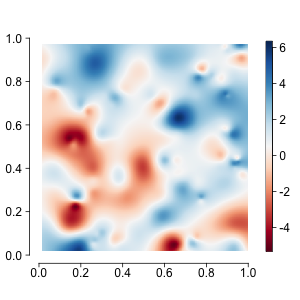}}
\hspace{0.6cm}
\subfloat[ MCMC \label{subfig: y_pred_MCMC_sim3}]{\includegraphics[width=0.3\textwidth, height=0.2\textwidth]{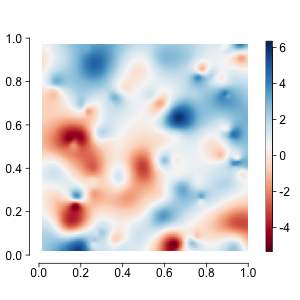}}\\
\caption{Interpolated maps of (a) the response $y(s)$, (b) the denoised response $x(s)\beta + z(s)$ and (c-f) the expected $y(s)$ on the $n_h = 100$ held out locations generated by all competing algorithms for the example with $200$ observations from the fourth simulation.}\label{fig: y_pred_compar_sim3}
\end{figure}

\begin{figure}[t]
\centering
\subfloat[ raw \label{subfig: w_raw_sim3}]{\includegraphics[width=0.3\textwidth, height=0.2\textwidth]{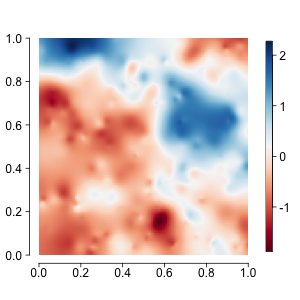}} \hspace{0.6cm}
\subfloat[ stacking of mean \label{subfig: w_LSE_sim3}]{\includegraphics[width=0.3\textwidth, height=0.2\textwidth]{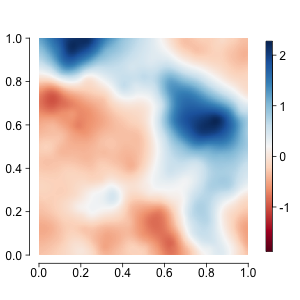}}
\hspace{0.6cm}
\subfloat[ stacking of predictive densities \label{subfig: w_LP_sim3}]{\includegraphics[width=0.3\textwidth, height=0.2\textwidth]{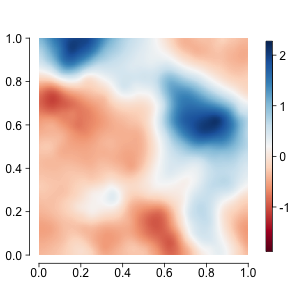}} \\
\centering
\subfloat[ M0 \label{subfig: w_M0_sim3}]{\includegraphics[width=0.3\textwidth, height=0.2\textwidth]{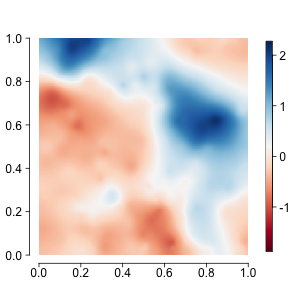}}
\hspace{0.6cm}
\subfloat[ MCMC \label{subfig: w_MCMC_sim3}]{\includegraphics[width=0.3\textwidth, height=0.2\textwidth]{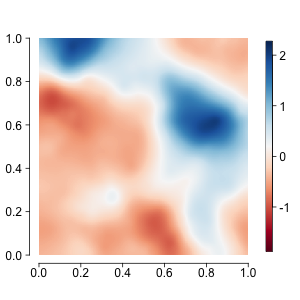}}
\caption{Interpolated maps of (a) the latent process $z(s)$ and (b-g) the expected $z(s)$ on all $n = 300$ sampled locations generated by all competing algorithms for the example from the fourth simulation. The $n = 300$ locations include both observed and unobserved locations}\label{fig: w_compar_sim3}
\end{figure}
\captionsetup{labelfont={color=black},textfont={color=black}}


\subsection{Inference of prefixed hyper-parameters}\label{app: Inf_hyper}
A limitation of stacking, compared to full Bayesian inference (e.g., using MCMC), is that it does not provide inference for the prefixed hyper-parameters. If we treat the grid of the candidate values for the hyper-parameters in our stacking algorithms as a discrete uniform prior, then, intuitively, one might be tempted to treat the stacking weights as probability masses on the support of the hyper-parameters. This intuition, however, is incorrect. Figure~\ref{fig: phi_compar} compares the point estimates of $\phi$ based on stacking for the simulation studies. It is clear that stacking of means yields unstable estimates. Stacking of predictive densities has a smaller variance, but the bias can be large. Also, since $\phi$ is not identifiable, we observe that the posterior interval estimates for $\phi$ inferred from MCMC algorithms are wide, showing that the inference for $\phi$ is relatively unstable for all candidate algorithms in this simulation study. Figures~\ref{fig: nu_compar}~and~\ref{fig: deltasq_compar} presents similar comparisons for the other two hyper-parameters. 

\begin{figure}[t]
\centering
\subfloat[\label{subfig: sim1_phi}]{
\includegraphics[width=0.49\textwidth, keepaspectratio]{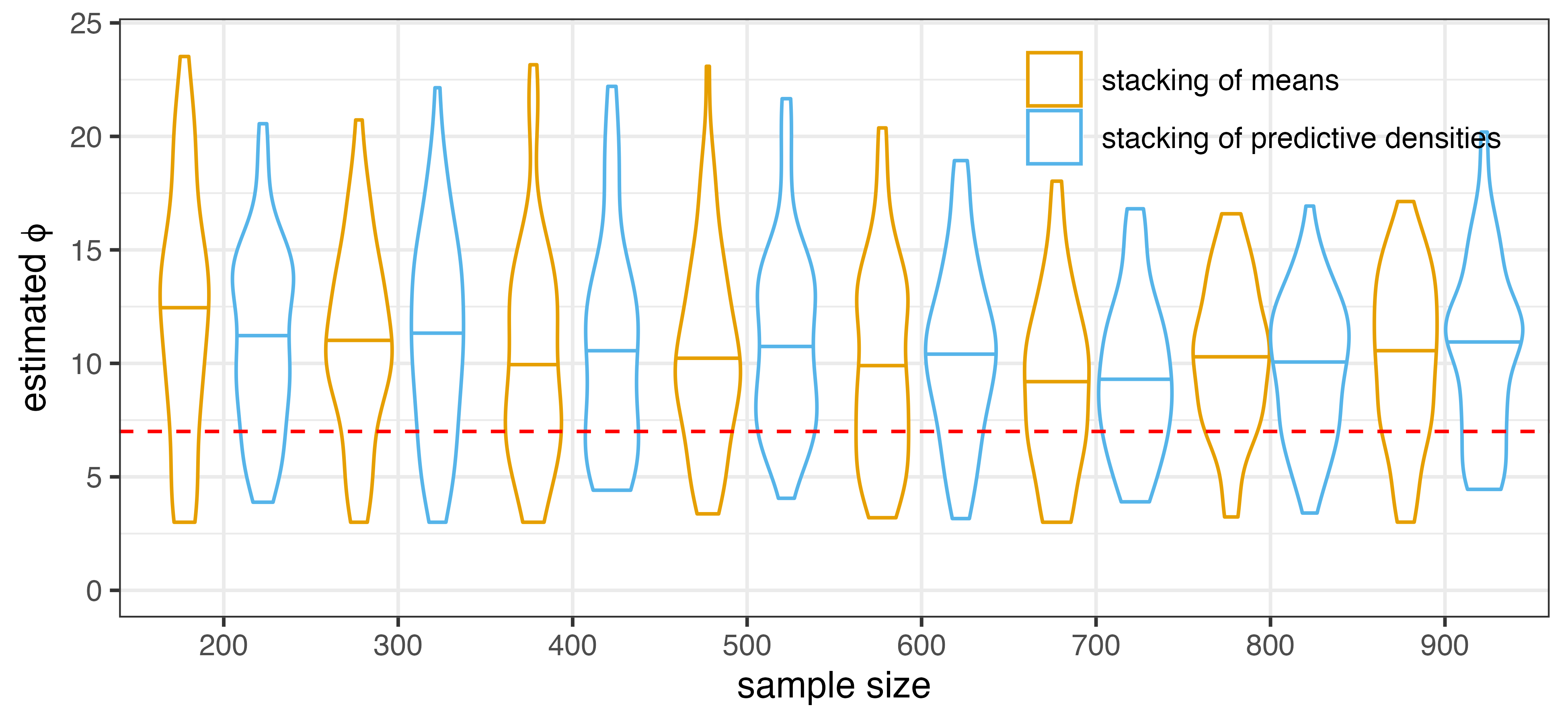}}
\subfloat[\label{subfig: sim4_phi}]{
\includegraphics[width=0.49\textwidth, keepaspectratio]{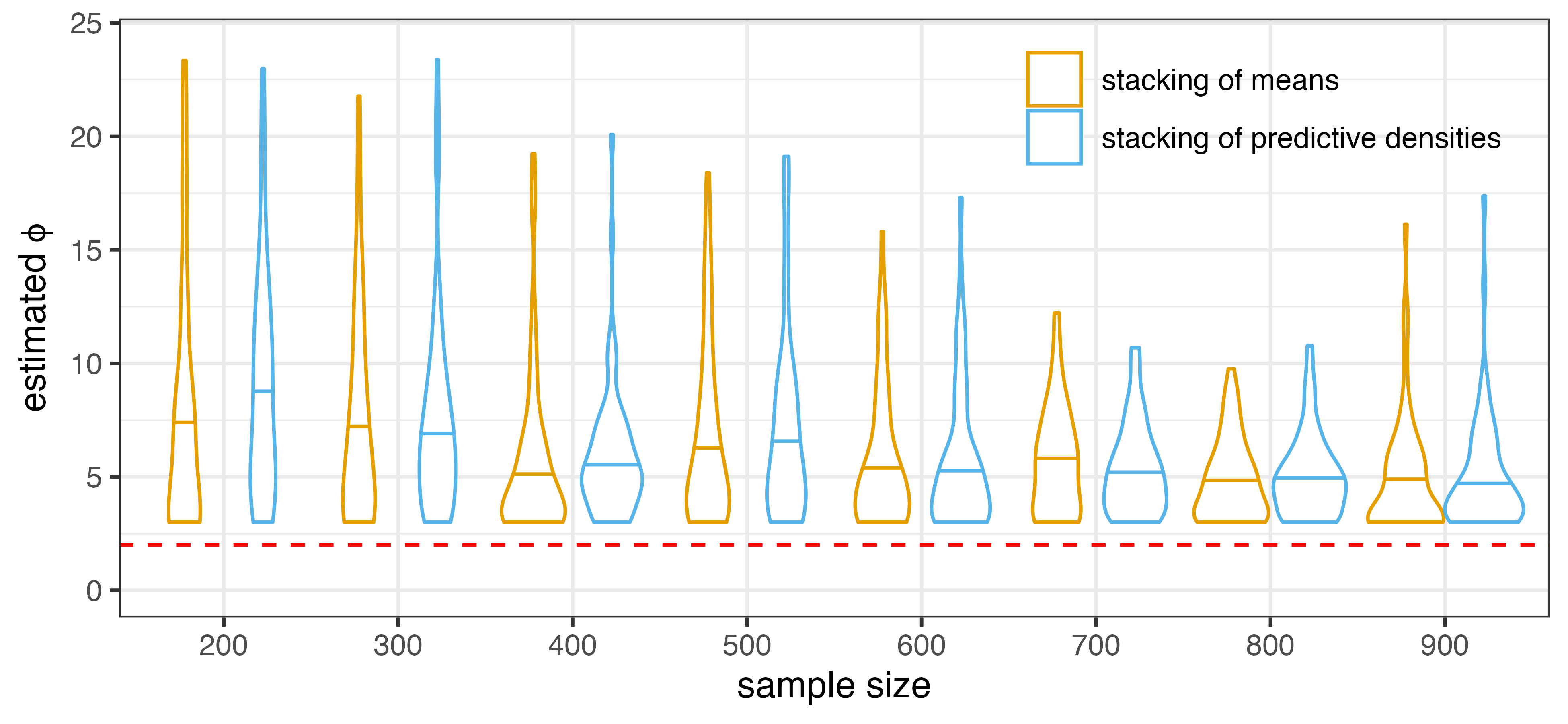}}\\
\subfloat[ \label{subfig: sim2_phi}]{
\includegraphics[width=0.49\textwidth, keepaspectratio]{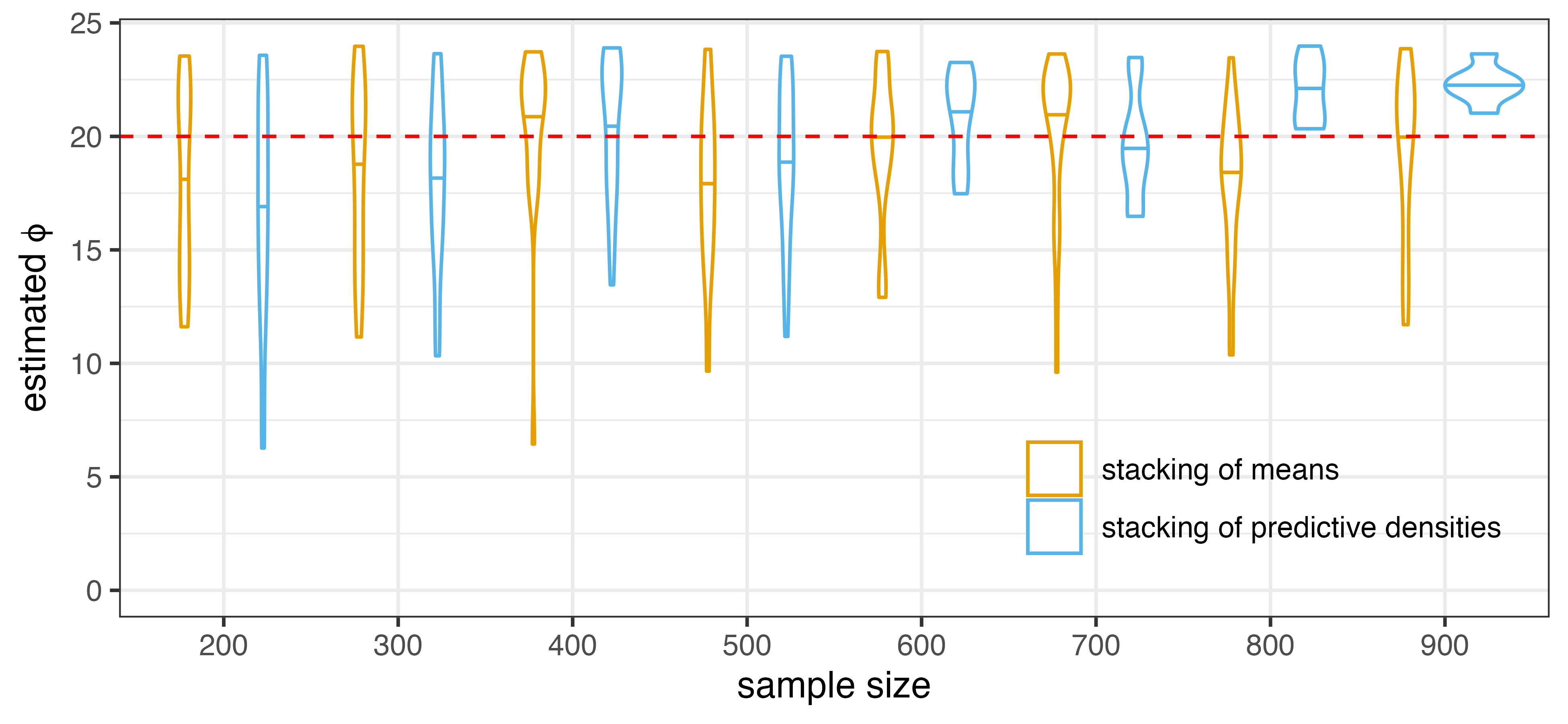}}
\subfloat[ \label{subfig: sim3_phi}]{
\includegraphics[width=0.49\textwidth, keepaspectratio]{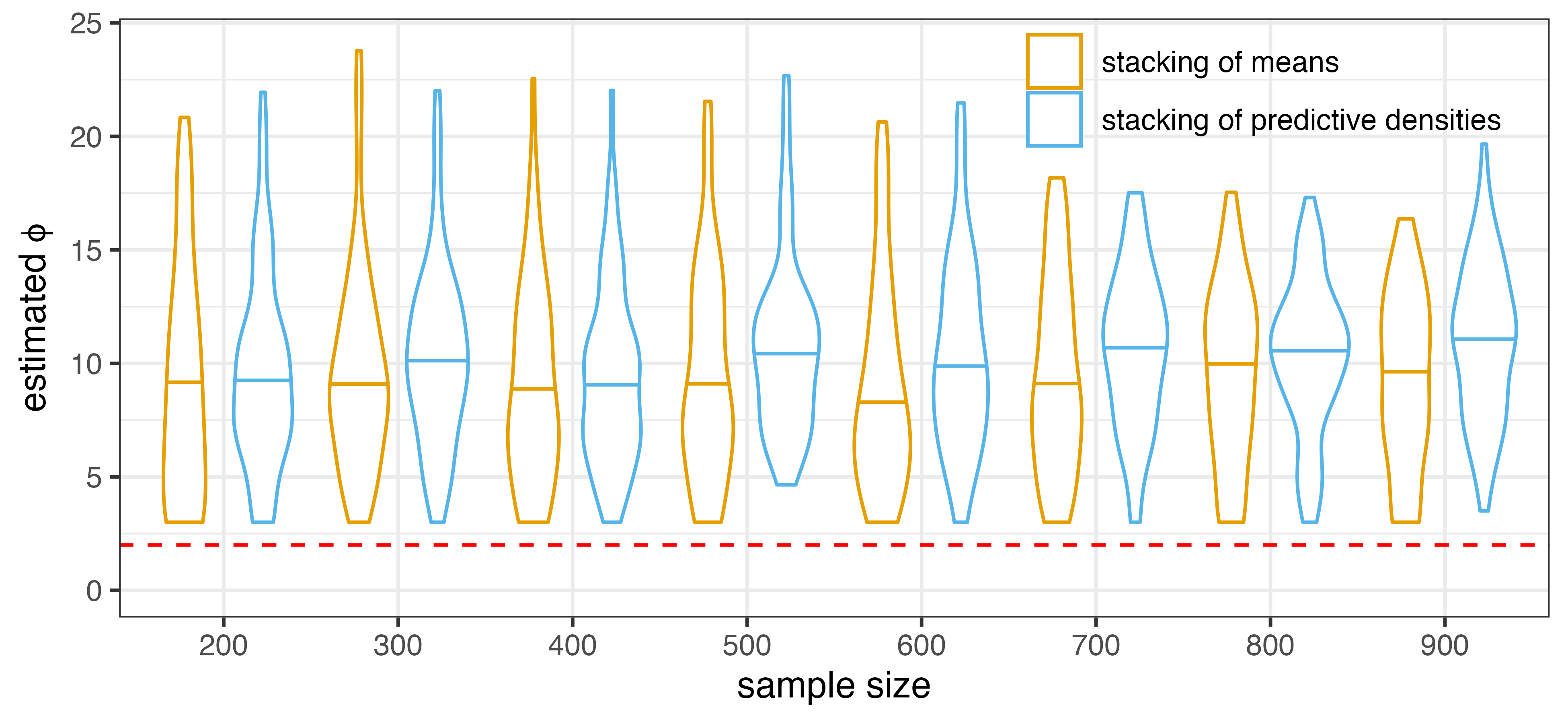}}
\caption{Distributions of the estimated $\phi$ in the first (a), second (b), third (c) and fourth (d) simulations. The distributions are described by violin plots whose horizontal lines indicate the medians. The red dashed horizontal line indicates the actual value of $\phi$}\label{fig: phi_compar}
\end{figure}

\begin{figure}[t]
\vspace{-0.5cm}
\centering
\subfloat[\label{subfig: sim1_nu}]{\includegraphics[width=0.49\textwidth, keepaspectratio]
{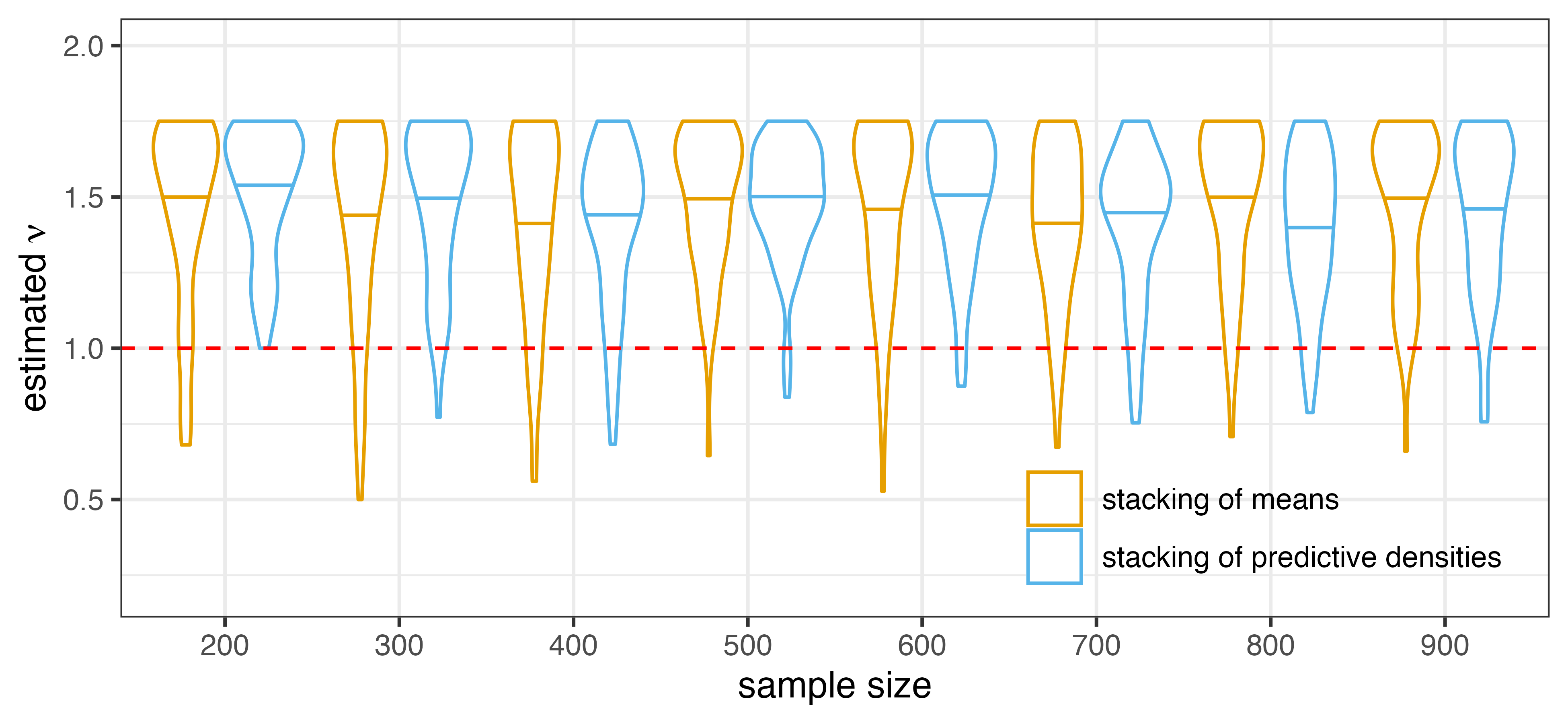}}
\subfloat[ \label{subfig: sim4_nu}]{\includegraphics[width=0.49\textwidth,  keepaspectratio]
{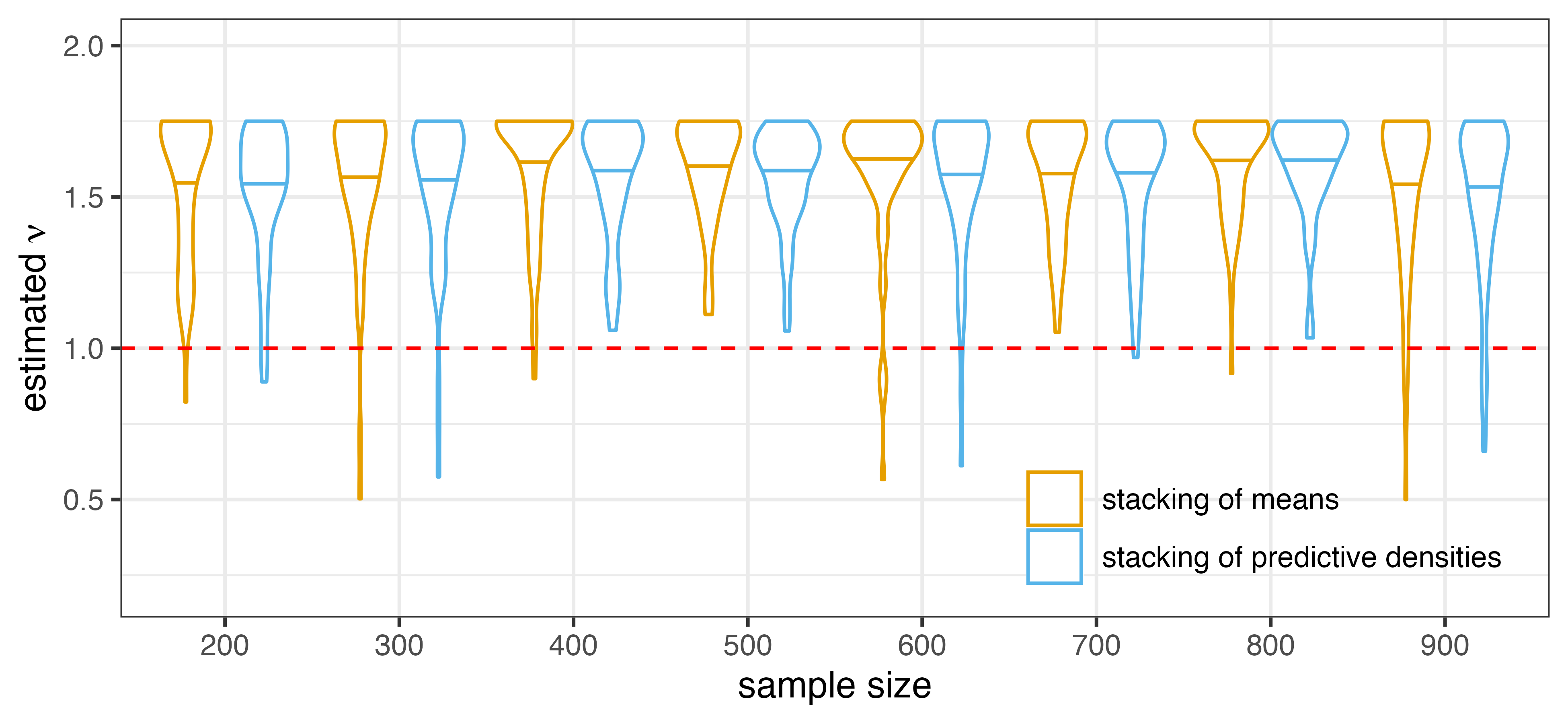}}\\
\subfloat[ \label{subfig: sim2_nu}]{\includegraphics[width=0.49\textwidth,  keepaspectratio]
{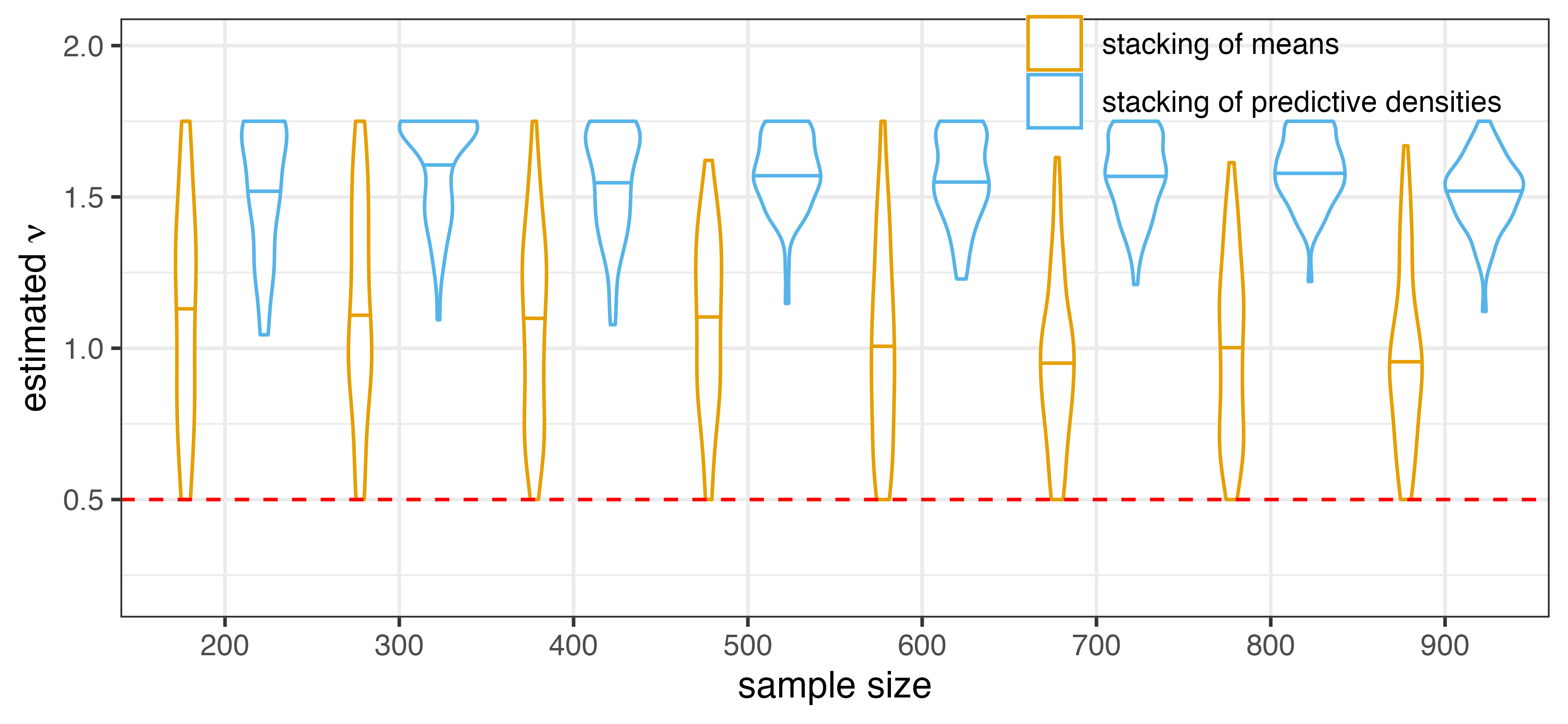}}
\subfloat[ \label{subfig: sim3_nu}]{\includegraphics[width=0.49\textwidth,  keepaspectratio]
{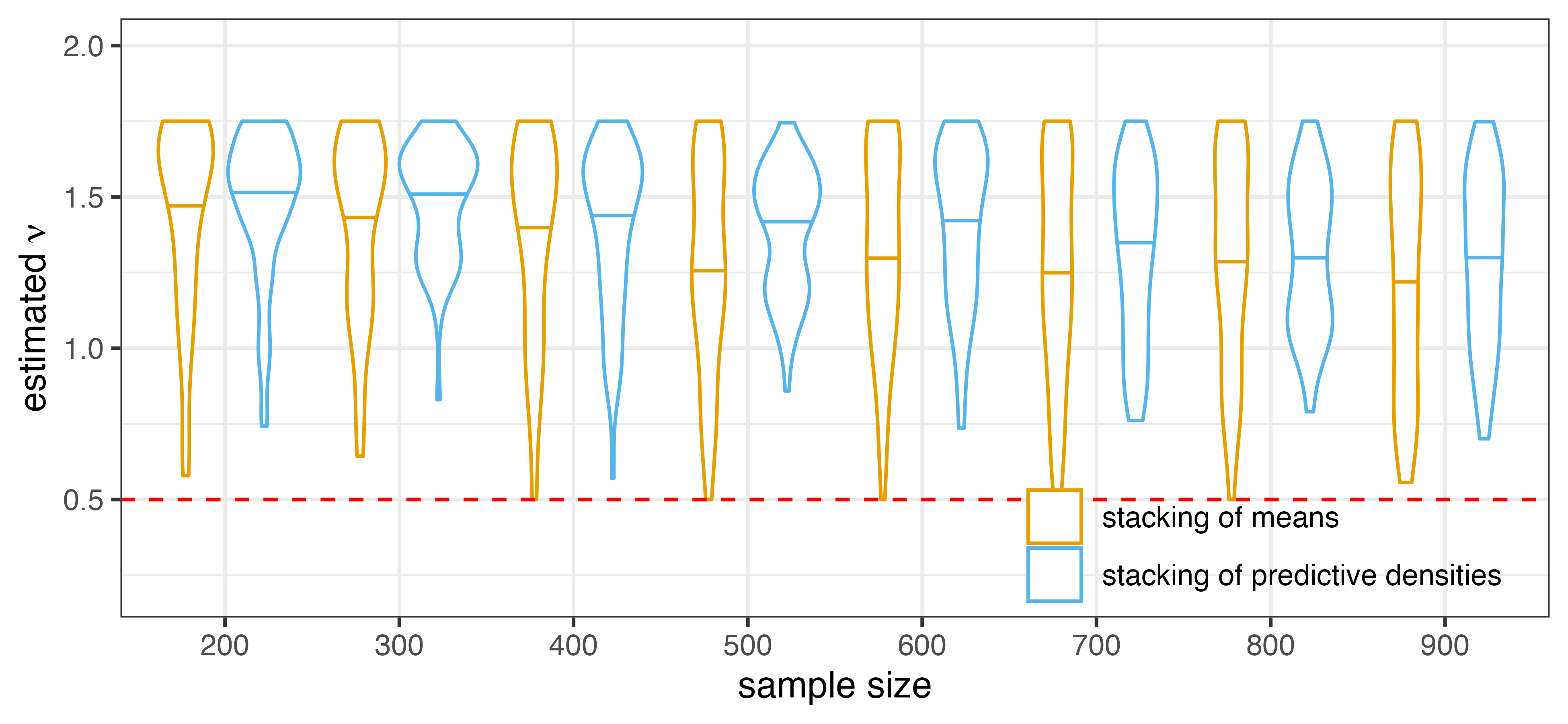}}
\vspace{-0.2cm}
\caption{Distributions of the estimated $\nu$ in the first (a), second (b), third (c) and fourth (d) simulations. The distribution of the counts are described through violin plots whose horizontal lines indicate the medians. The red dashed horizontal line indicates the actual value of $\nu$}\label{fig: nu_compar}
\end{figure}

\begin{figure}[t]
\centering
\subfloat[\label{subfig: sim1_deltasq}]{\includegraphics[width=0.49\textwidth, keepaspectratio]
{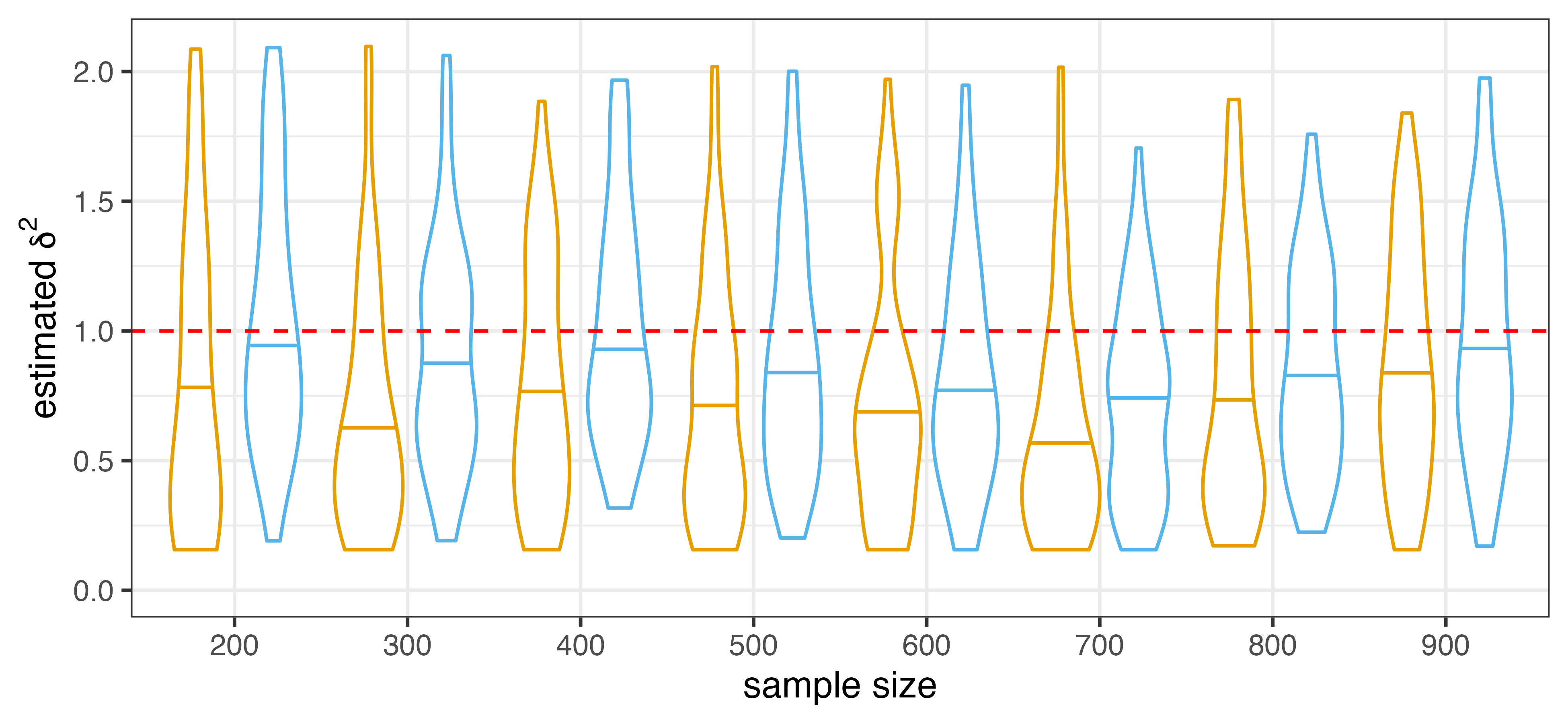}}
\subfloat[ \label{subfig: sim4_deltasq}]{\includegraphics[width=0.49\textwidth, keepaspectratio]
{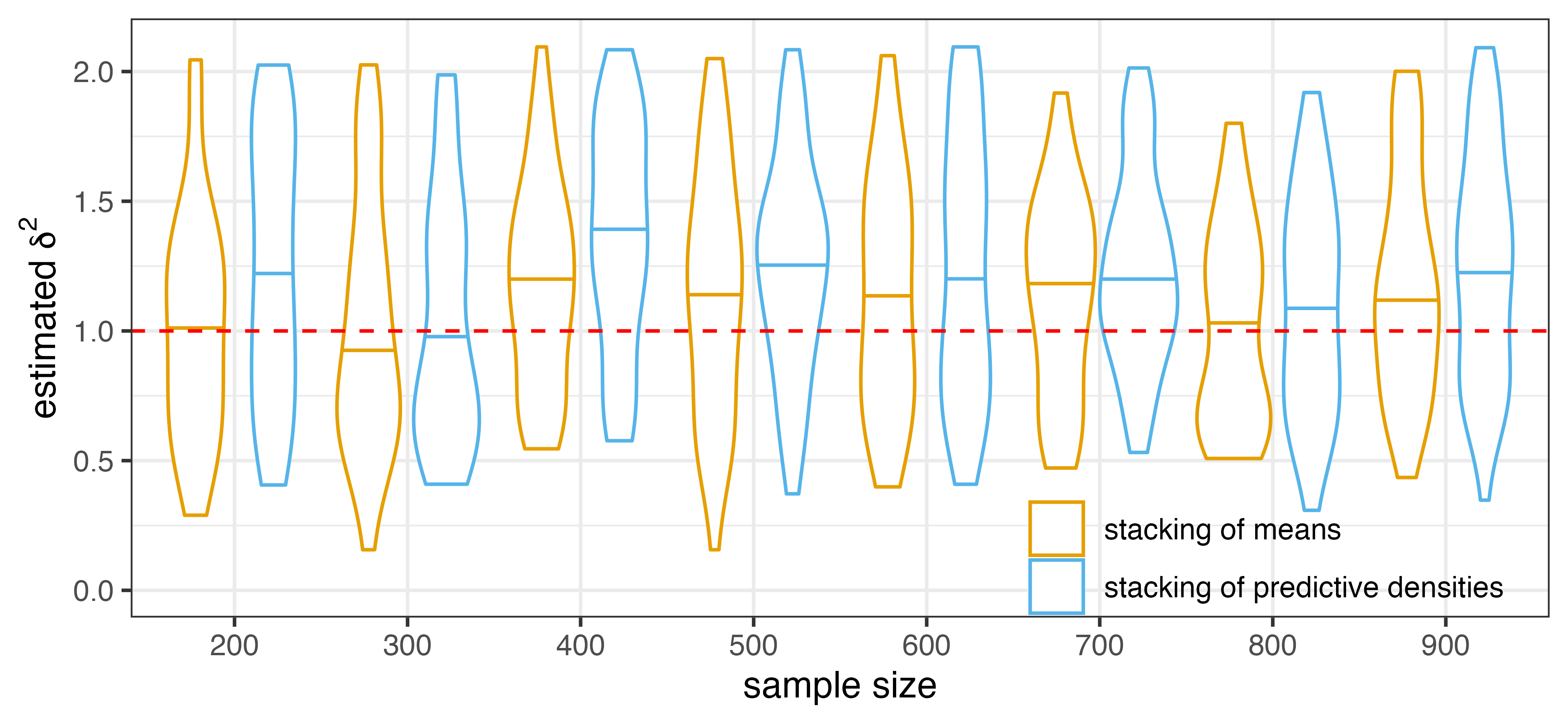}}\\
\subfloat[ \label{subfig: sim2_deltasq}]{\includegraphics[width=0.49\textwidth, keepaspectratio]
{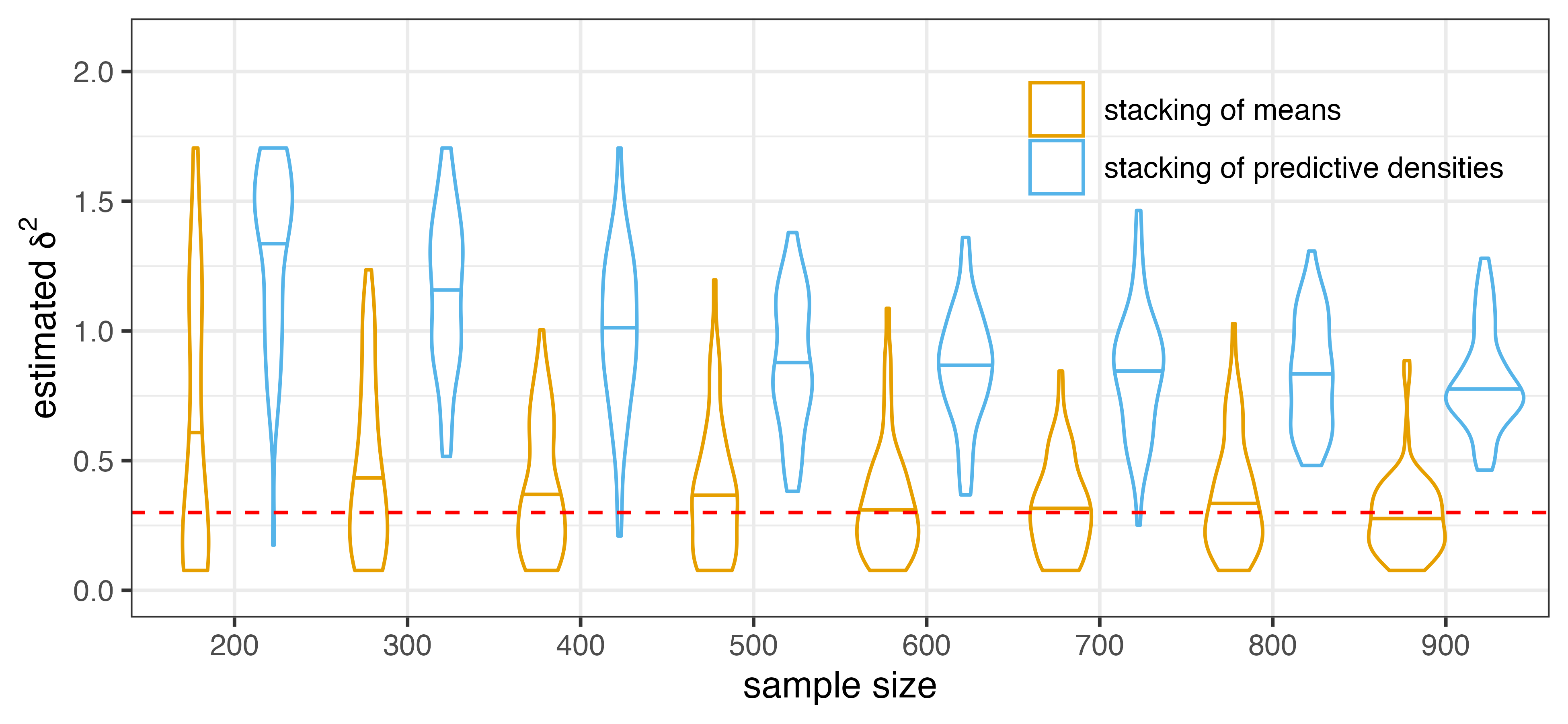}}
\subfloat[ \label{subfig: sim3_deltasq}]{\includegraphics[width=0.49\textwidth, keepaspectratio]
{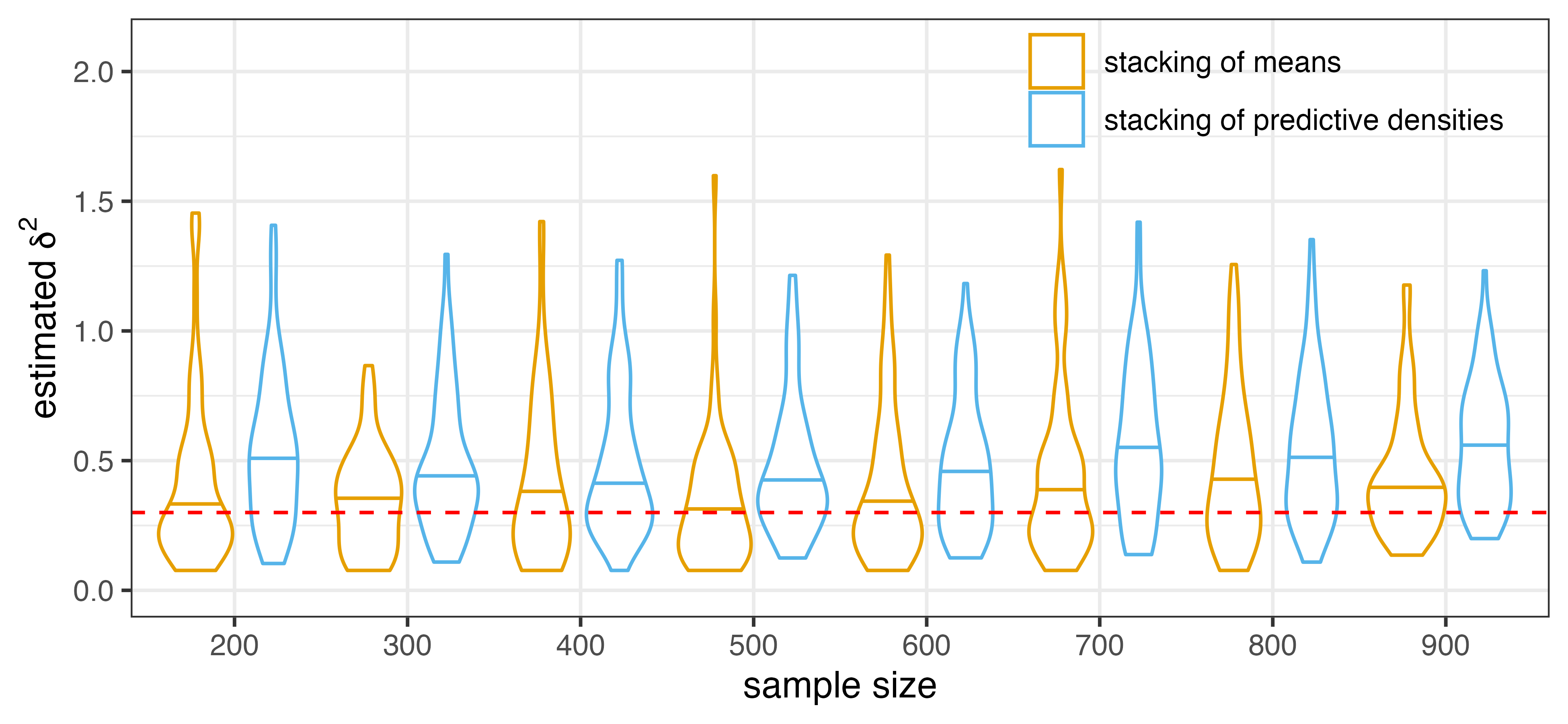}}
\caption{Distributions of the estimated $\delta^2$ in the first (a), second (b), third (c), and fourth (d) simulations. The distribution of the counts are described through violin plots whose horizontal lines indicate the medians. The red dashed horizontal line indicates the actual value of $\delta^2$}\label{fig: deltasq_compar}
\end{figure}

\subsection{Plots for the simulation study in Section~\ref{subsec:Impact_hyper_select}}
See Figures~\ref{fig: sim_prefix_compar}, ~\ref{fig: y_U_CI_compar_all}, ~\ref{fig: sim_prefix_tau_compar}~and~\ref{fig: sim_prefix_beta_compar}.

\begin{figure}[t]
\centering
\subfloat[\label{subfig: sim1_prefix_compar}]{\includegraphics[width=0.5\textwidth, keepaspectratio]{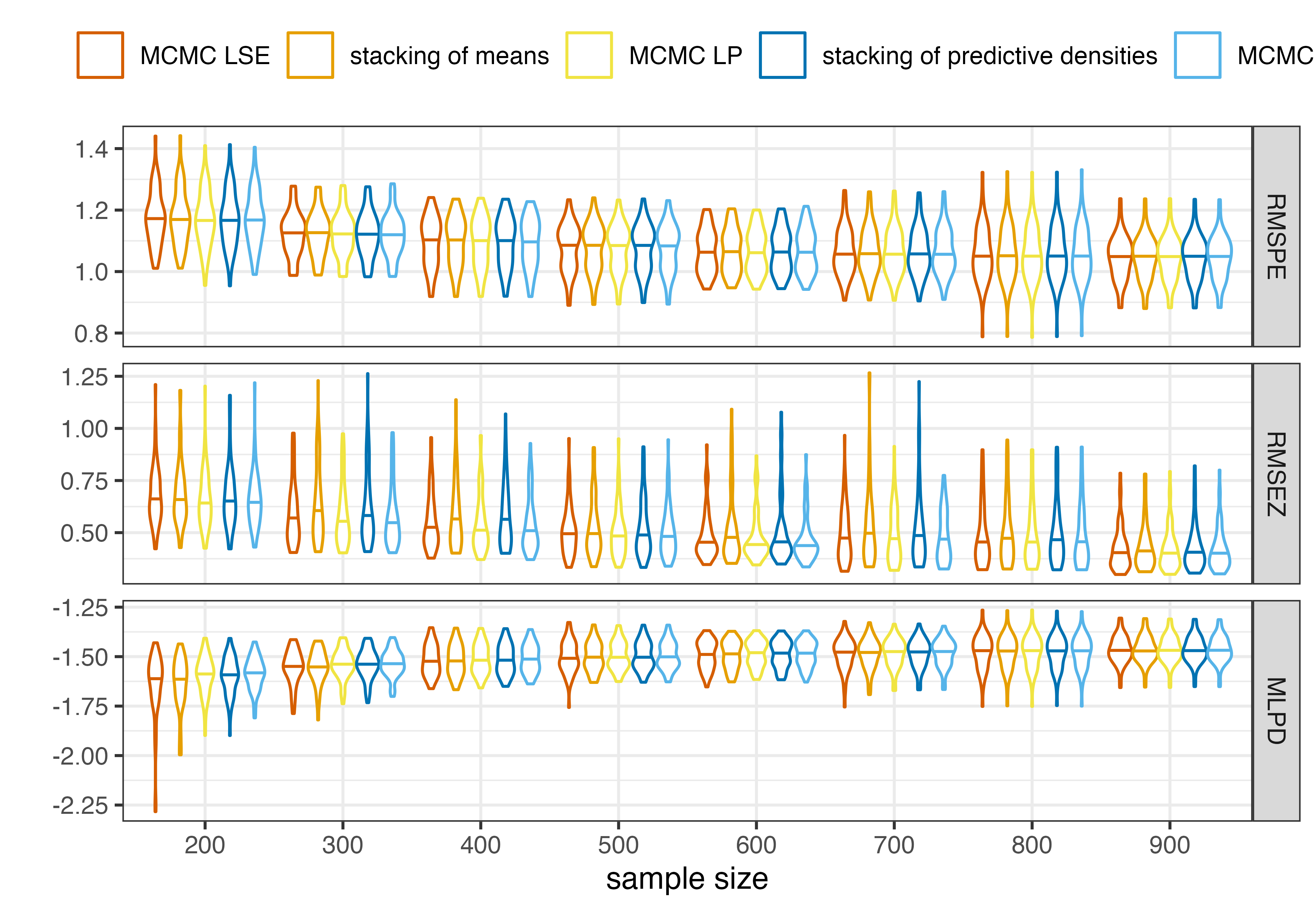}}
\subfloat[ \label{subfig: sim4_prefix_compar}]{\includegraphics[width=0.5\textwidth, keepaspectratio]{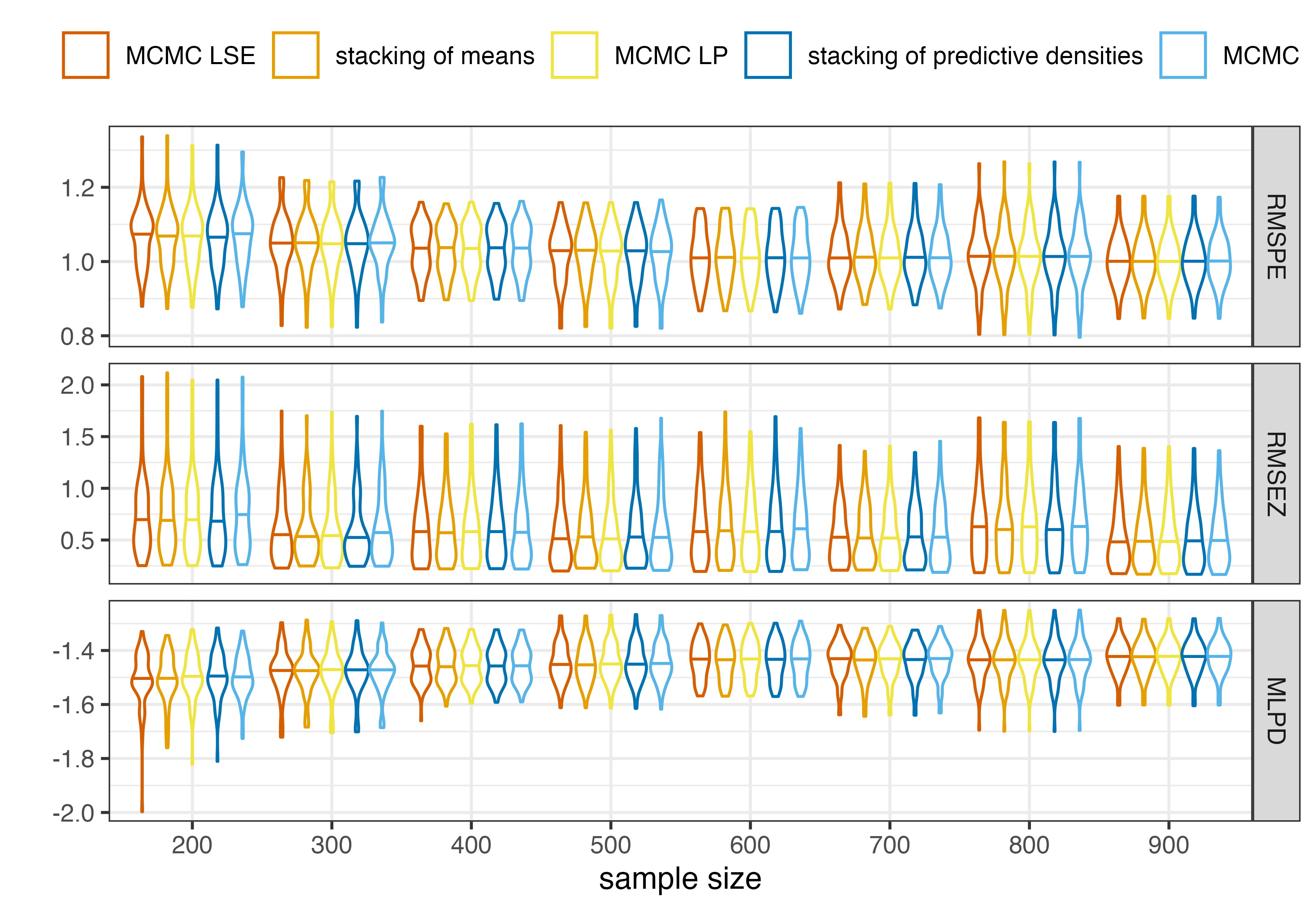}}\\
\subfloat[ \label{subfig: sim2_prefix_compar}]{\includegraphics[width=0.5\textwidth, keepaspectratio]{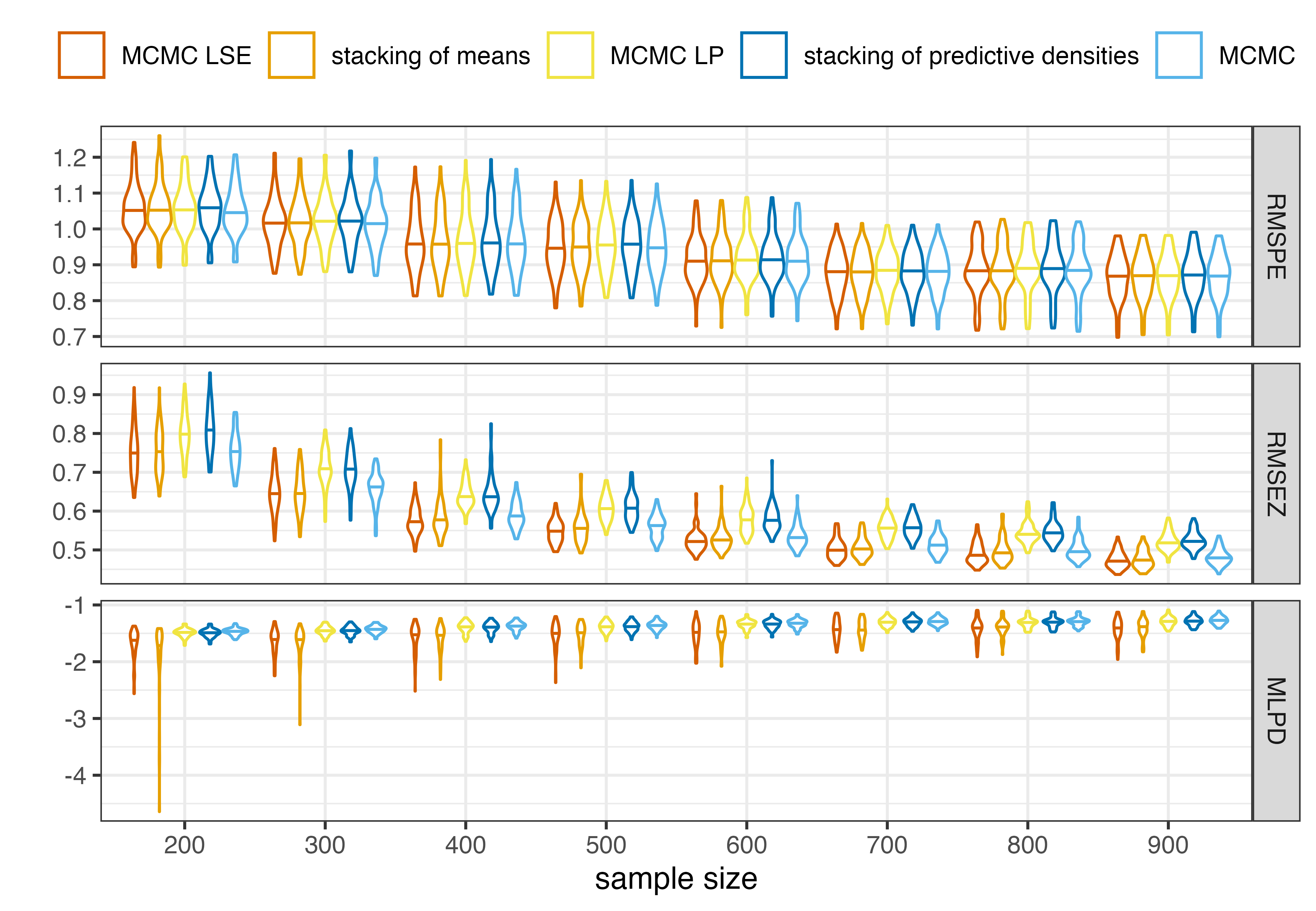}}
\subfloat[ \label{subfig: sim3_prefix_compar}]{\includegraphics[width=0.5\textwidth, keepaspectratio]{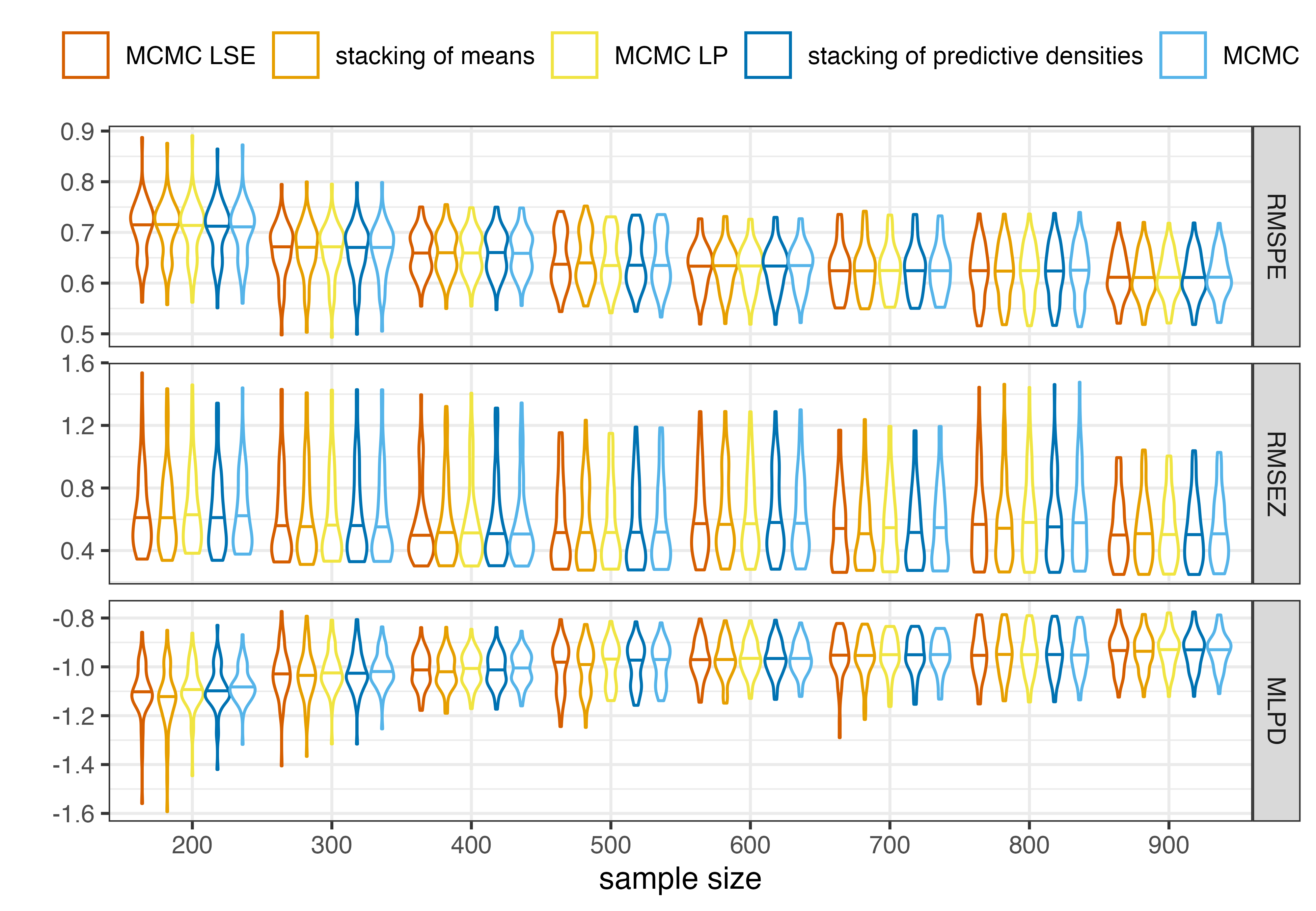}}
\caption{Distributions of the diagnostic metrics for prediction performance for the first simulation (a), second (b), third (c) and fourth (d) simulation. Label `MCMC LSE' and `MCMC LP' denote stacking of mean and stacking of predictive densities using $\phi$, $\nu$, $\delta^2$ sampled through MCMC, respectively. Each distribution is depicted through a violin plot. The horizontal line in each violin plot indicates the median. }
\label{fig: sim_prefix_compar}
\end{figure}

\begin{figure}[t]
\centering
\includegraphics[width=0.8\textwidth, height=0.275\textwidth]{pics/y_U_95CIsim1_r8.png}\\
\includegraphics[width=0.8\textwidth, height=0.275\textwidth]{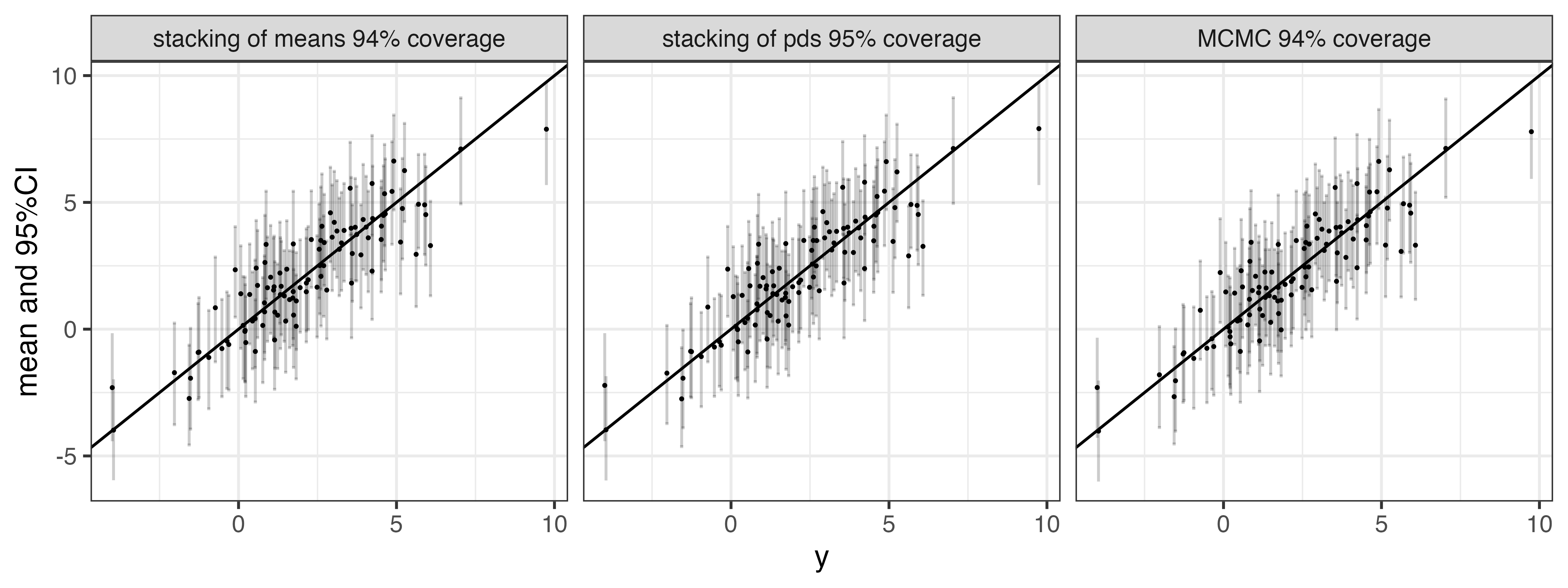}\\
\includegraphics[width=0.8\textwidth, height=0.275\textwidth]{pics/y_U_95CIsim2_r4.png}
\includegraphics[width=0.8\textwidth, height=0.275\textwidth]{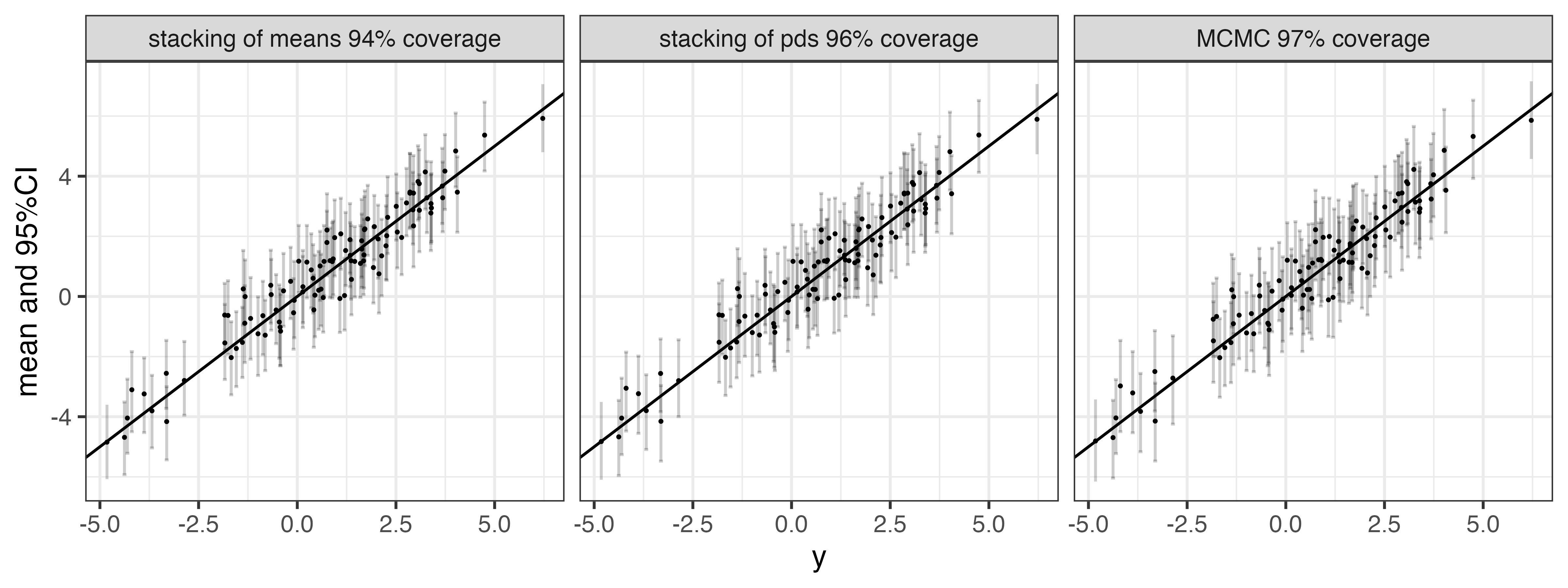}
\caption{95\% credible intervals for predicted and actual outcomes at 100 unobserved locations with $45$-degree (solid black) line indicating equality. Top row corresponds to Simulation 1 (800 observations); second row presents Simulation 2 (600 observations); third row presents Simulation 3 (400 observations) and the bottom row presents Simulation 4 (200 observations). The captions indicate coverage of 95\% credible intervals. `pds' denotes predictive densities.}\label{fig: y_U_CI_compar_all}
\end{figure}

\begin{figure}[t]
\centering
{\includegraphics[width=0.45\textwidth, keepaspectratio]{pics/sigmasq_prefix_compar1_r8.png}}
{\includegraphics[width=0.45\textwidth, keepaspectratio]{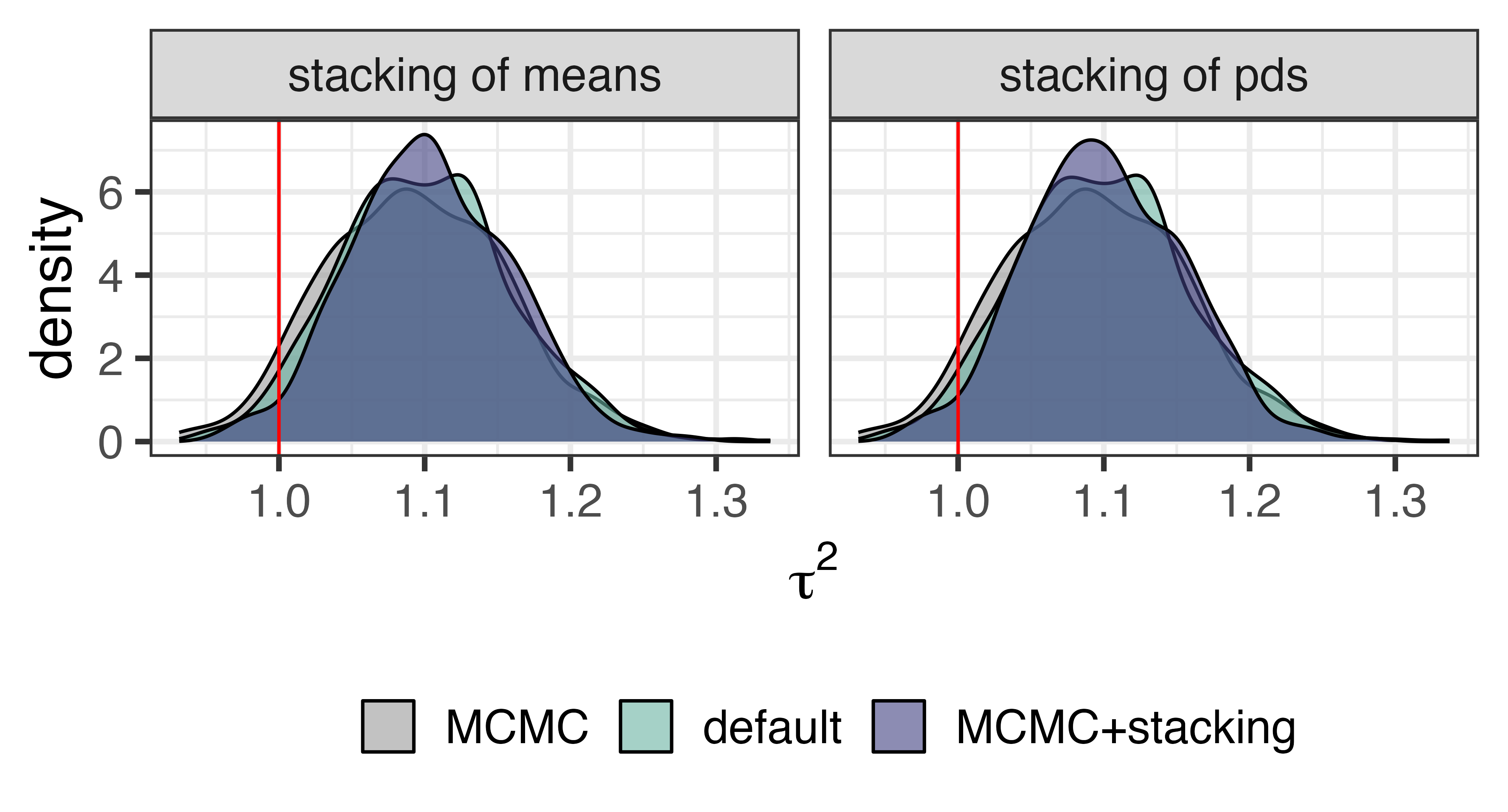}}\\
{\includegraphics[width=0.45\textwidth, keepaspectratio]{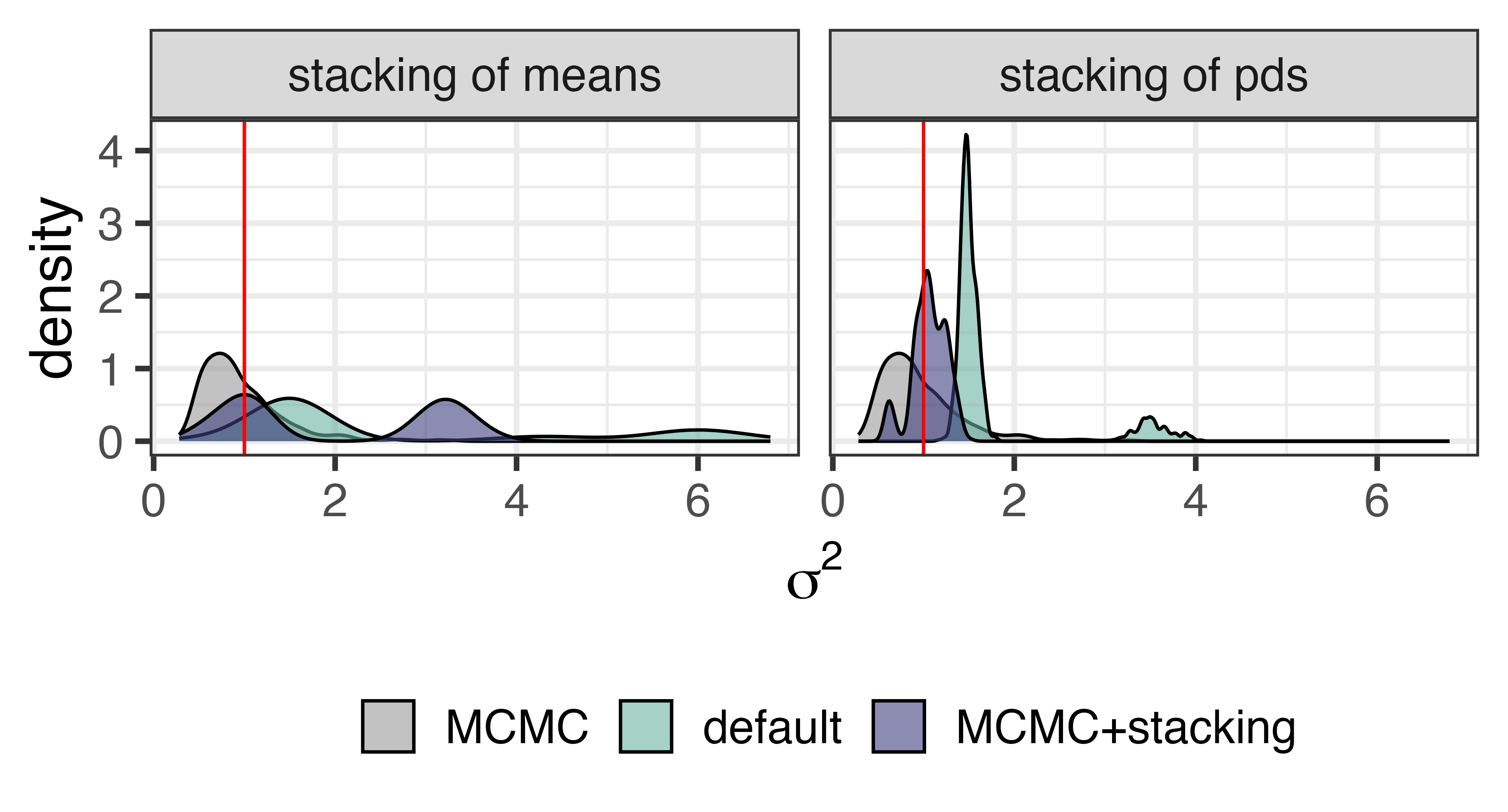}}
{\includegraphics[width=0.45\textwidth, keepaspectratio]{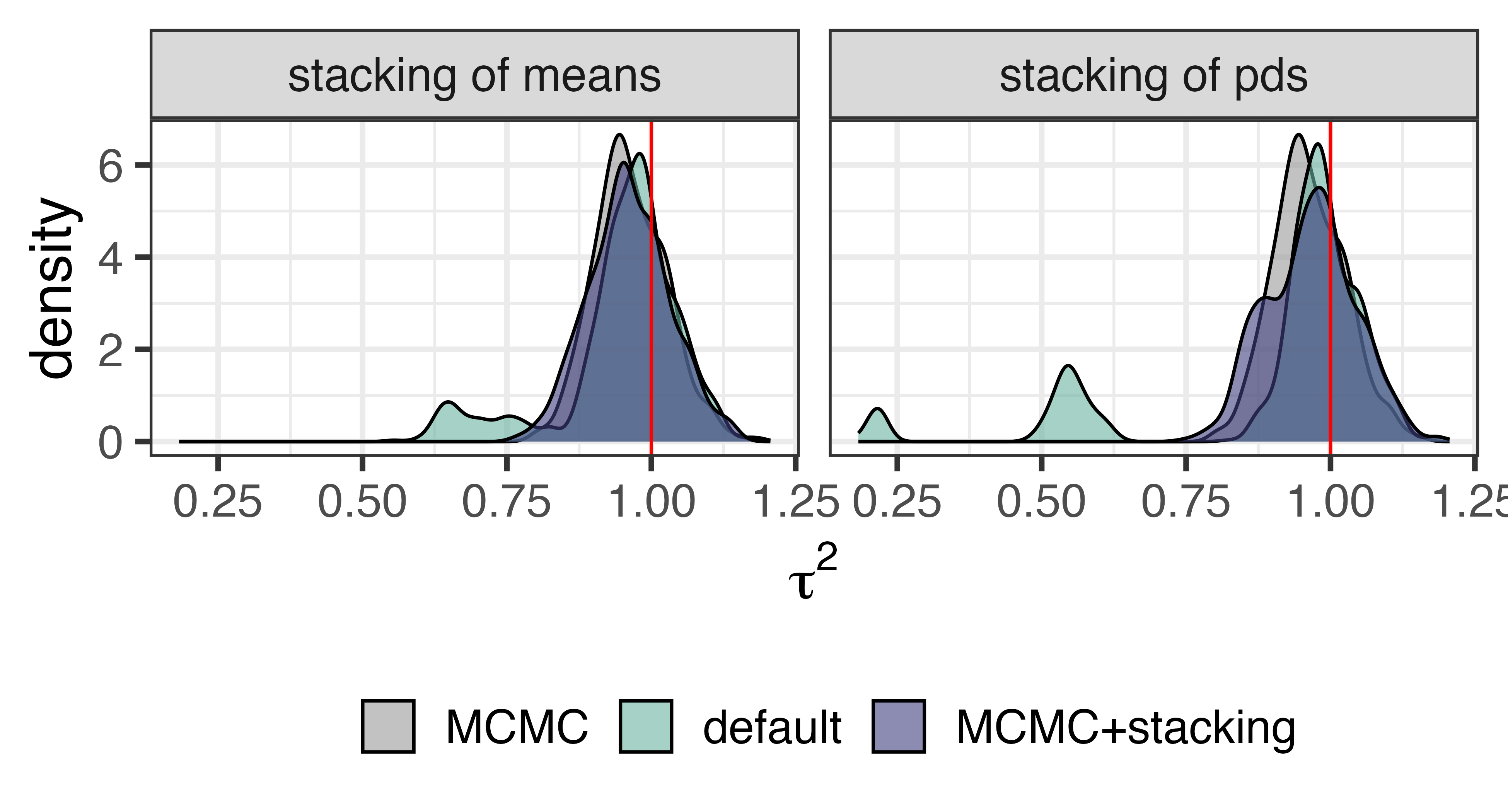}}\\
{\includegraphics[width=0.45\textwidth, keepaspectratio]{pics/sigmasq_prefix_compar2_r4.png}}
{\includegraphics[width=0.45\textwidth, keepaspectratio]{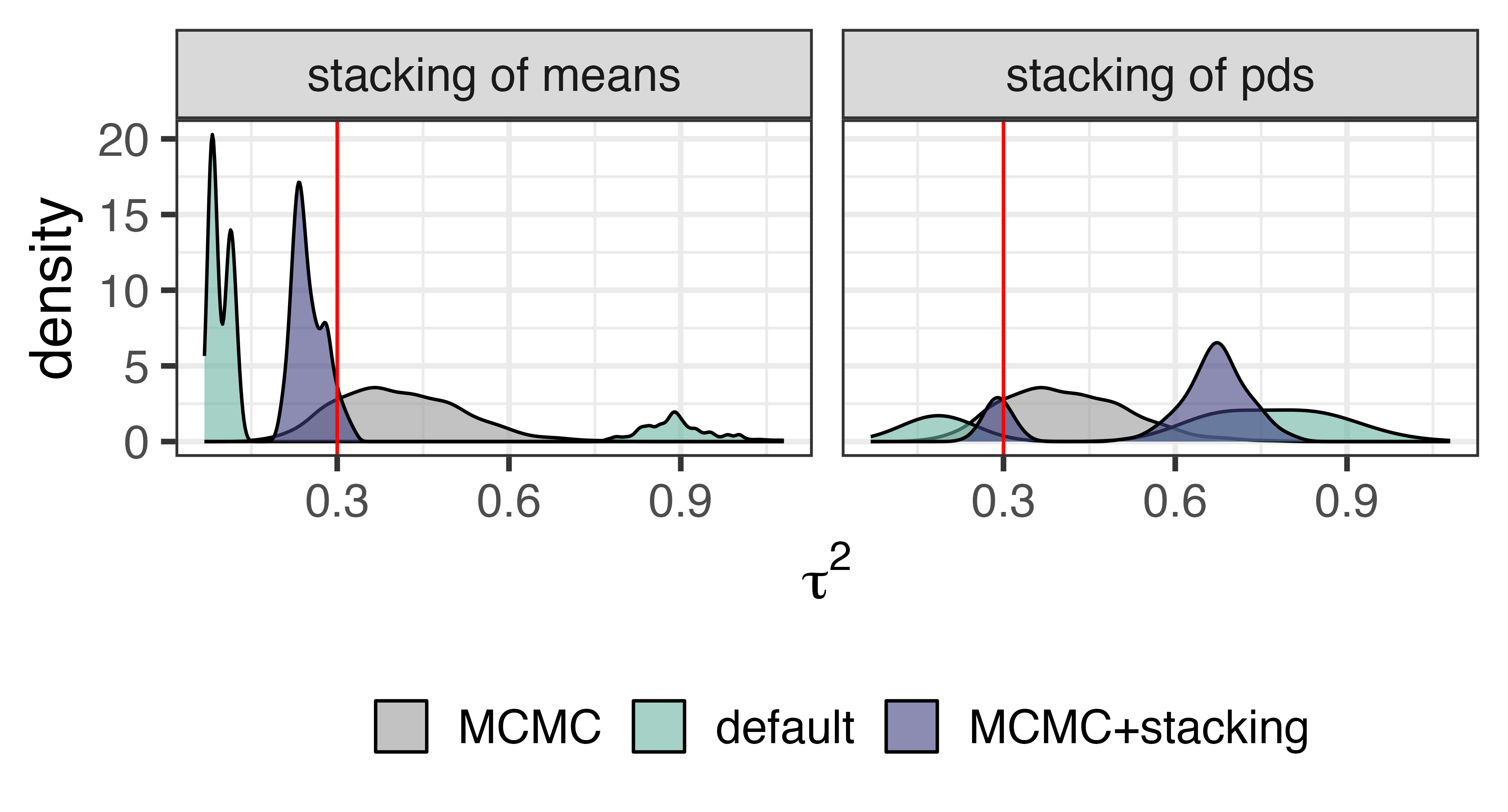}}\\
{\includegraphics[width=0.45\textwidth, keepaspectratio]{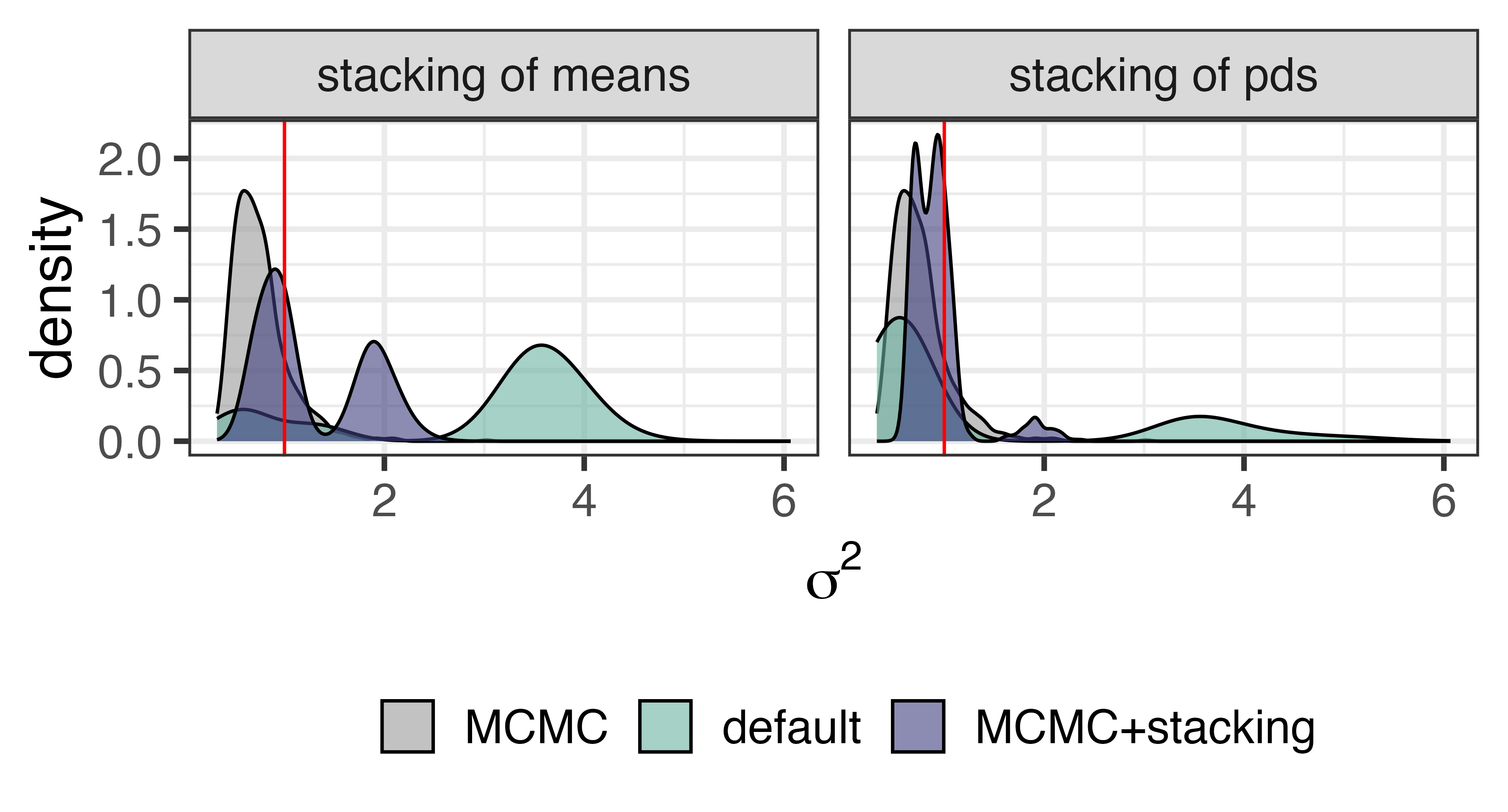}}
{\includegraphics[width=0.45\textwidth, keepaspectratio]{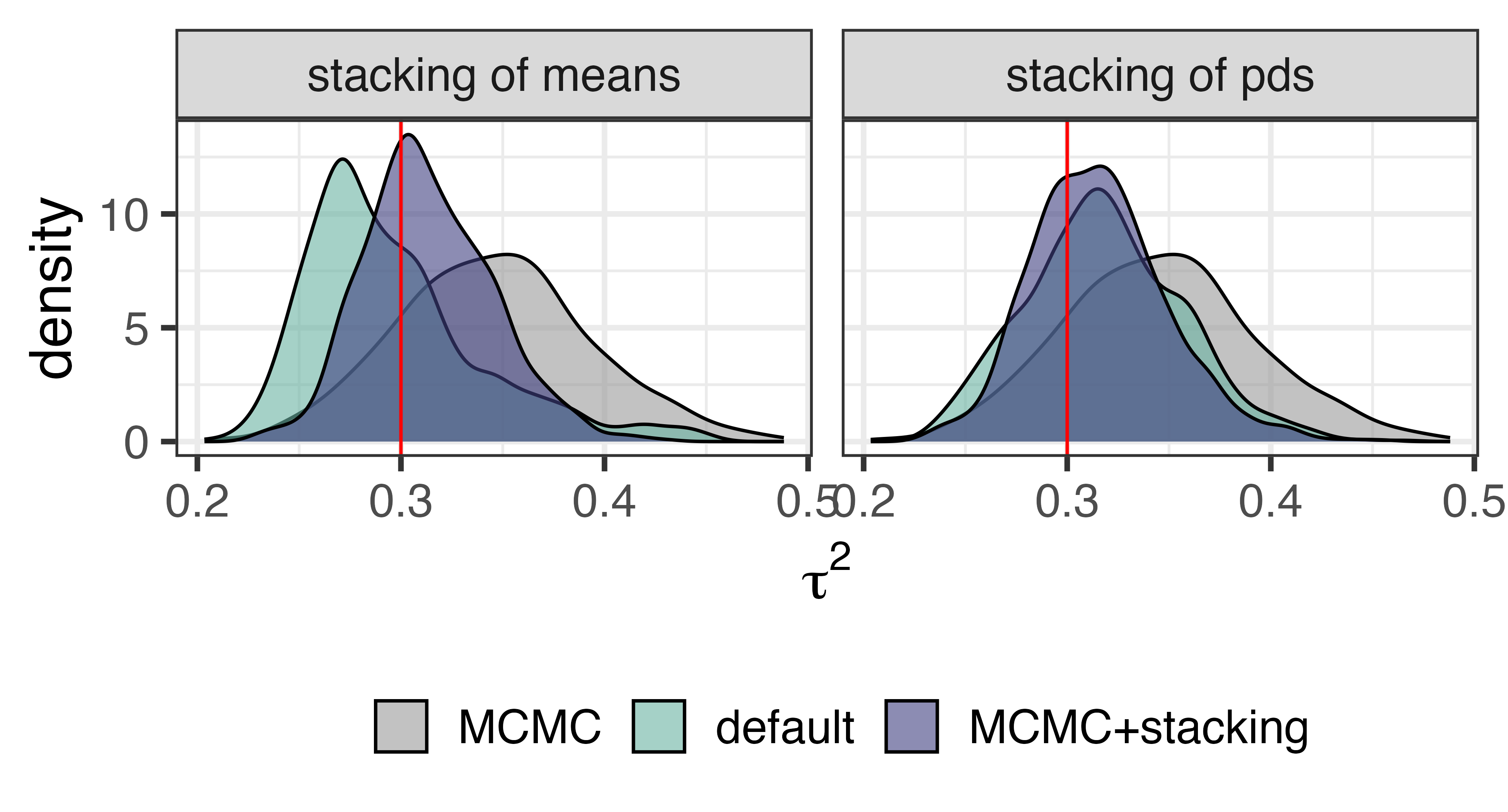}}
\caption{Densities of $\sigma^2$ (left column) and $\tau^2$ (right column) for the example with 800 observations from simulation 1 (top row), the example with 600 observations from simulation 2 (second row)
the example with 400 observations from simulation 3 (third row), and the example with 200 observations from simulation 4. Vertical red lines indicate the actual $\tau^2$ values. Grey densities represent MCMC-recovered posterior distributions of $\tau^2$. 'Default' and 'MCMC+Stacking' show stacking results using two methods for selecting ${\phi, \nu, \delta^2}$ candidates. Left panel: stacking of means. Right panel: stacking of predictive densities}
\label{fig: sim_prefix_tau_compar}
\end{figure}

\begin{figure}[t]
\centering
\includegraphics[width=0.45\textwidth, keepaspectratio]{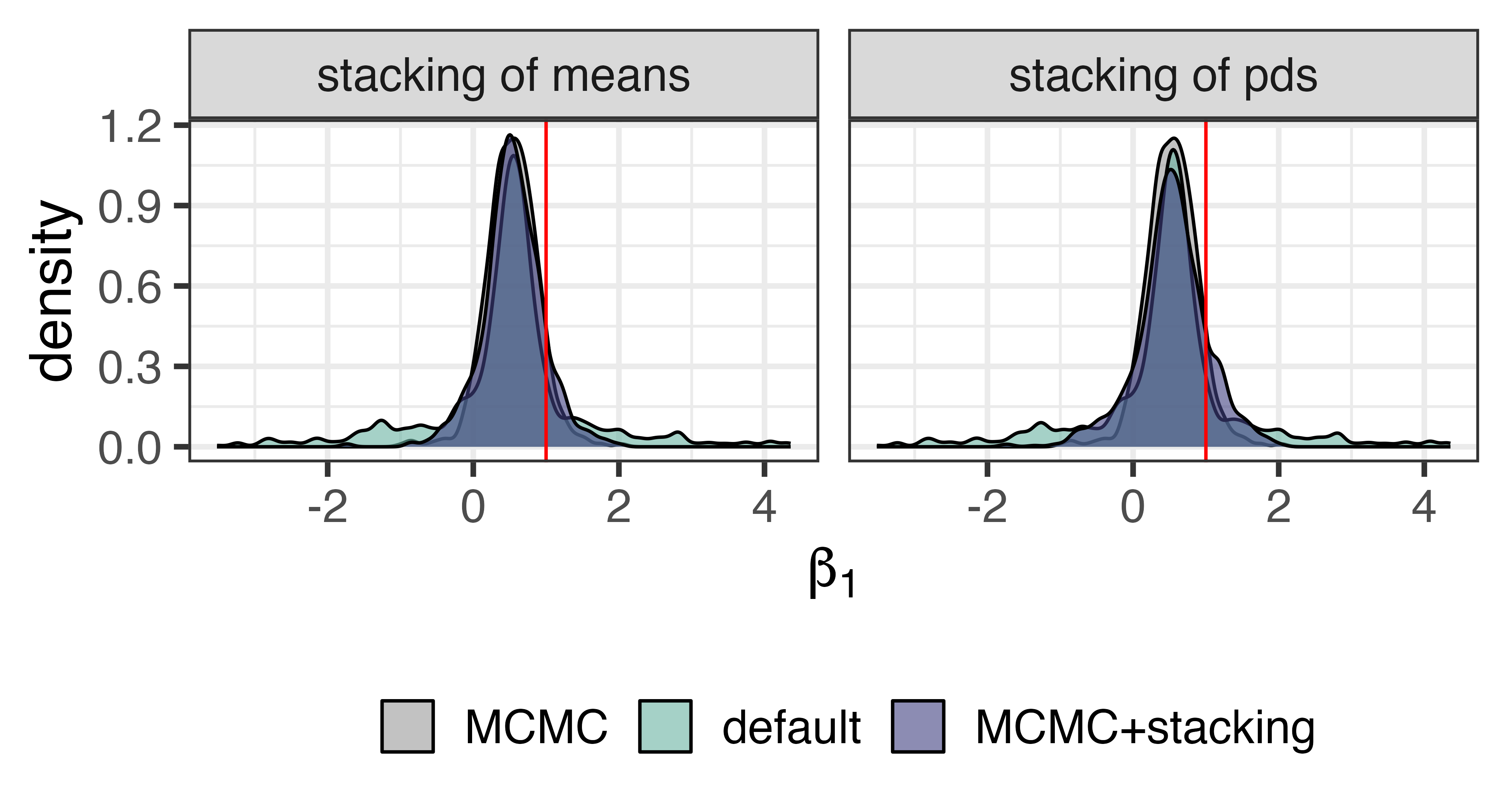} 
\includegraphics[width=0.45\textwidth, keepaspectratio]{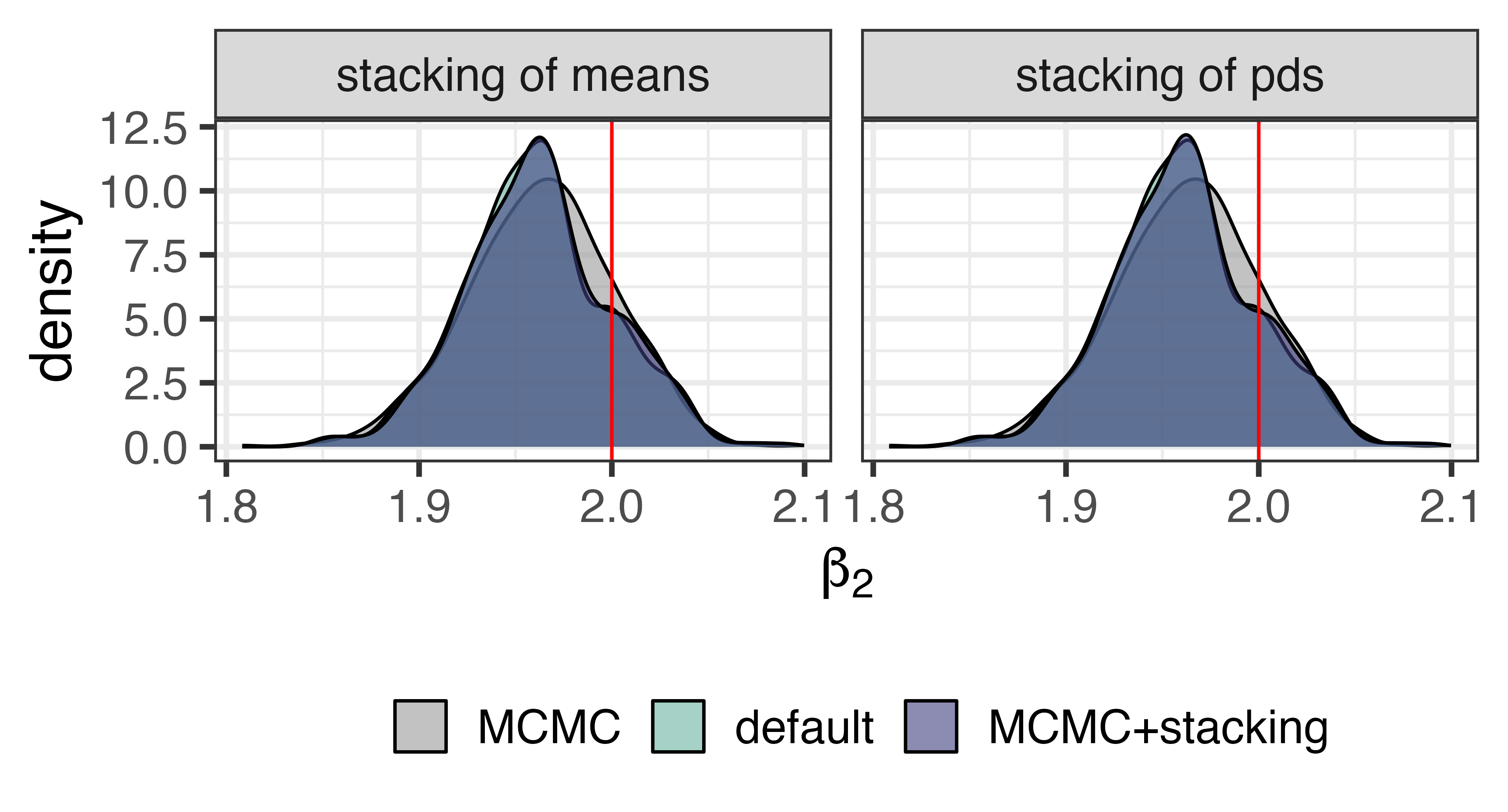}\\
\includegraphics[width=0.45\textwidth, keepaspectratio]{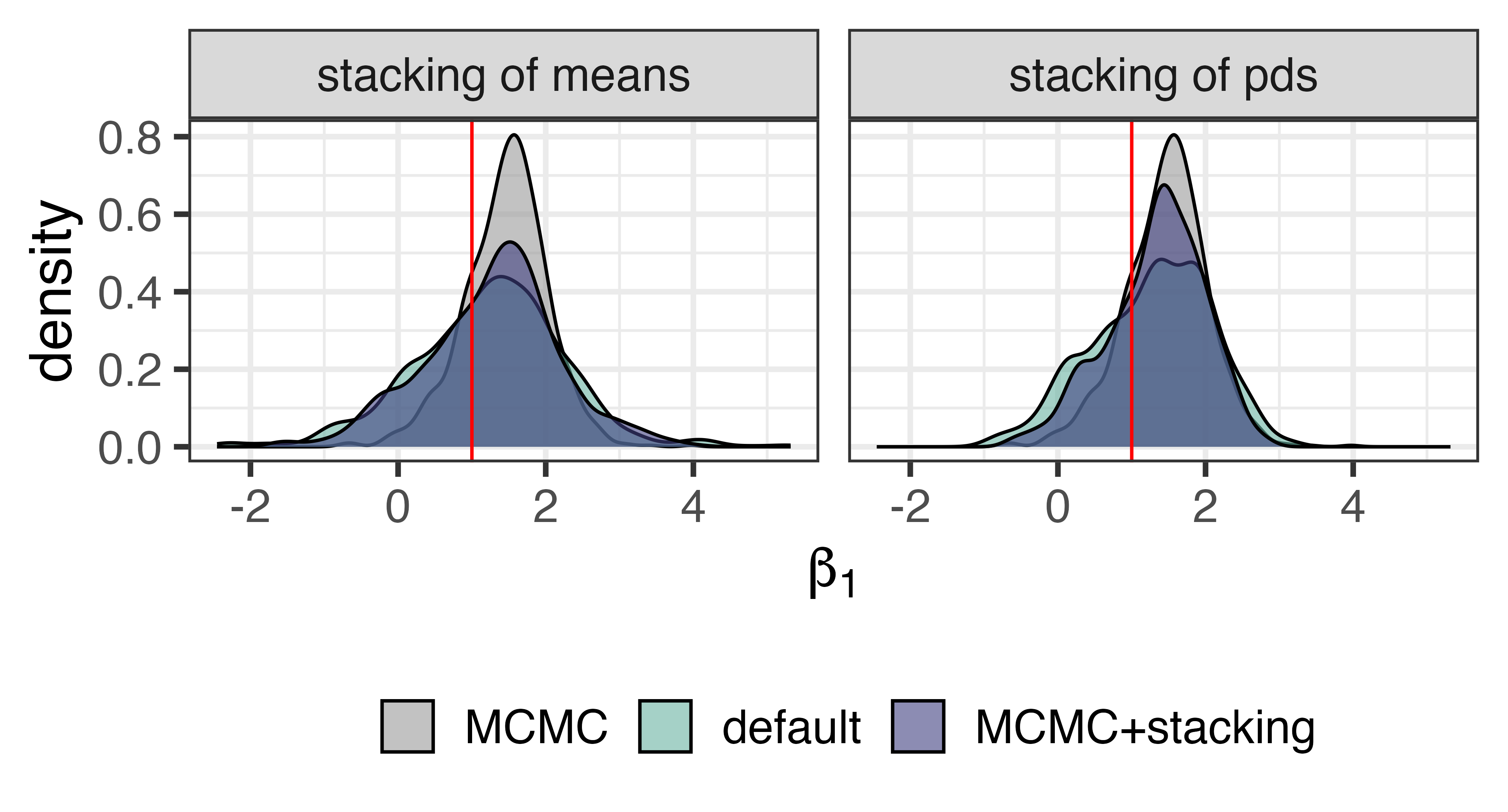}
\includegraphics[width=0.45\textwidth, keepaspectratio]{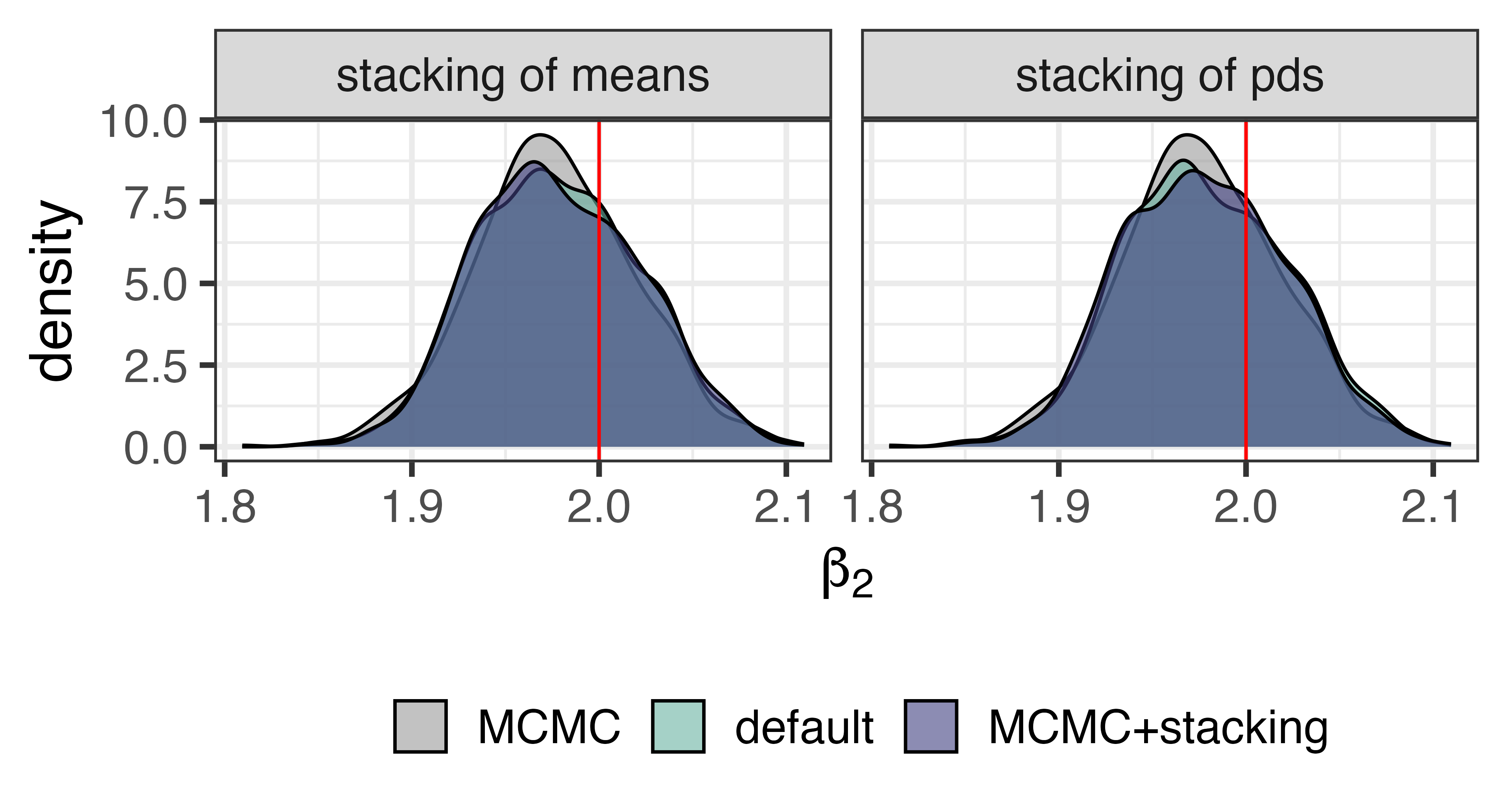}\\
\includegraphics[width=0.45\textwidth, keepaspectratio]{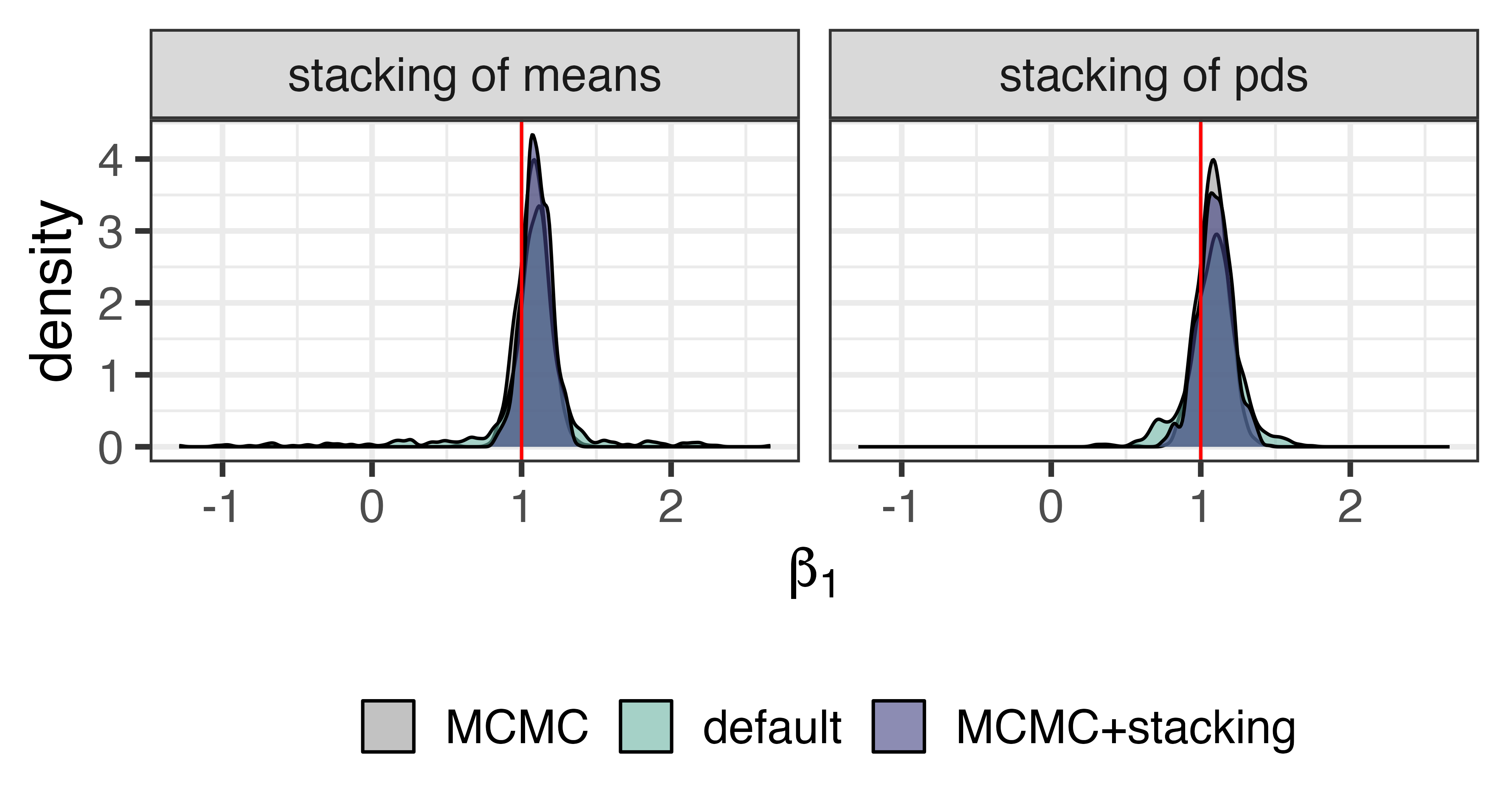}
\includegraphics[width=0.45\textwidth, keepaspectratio]{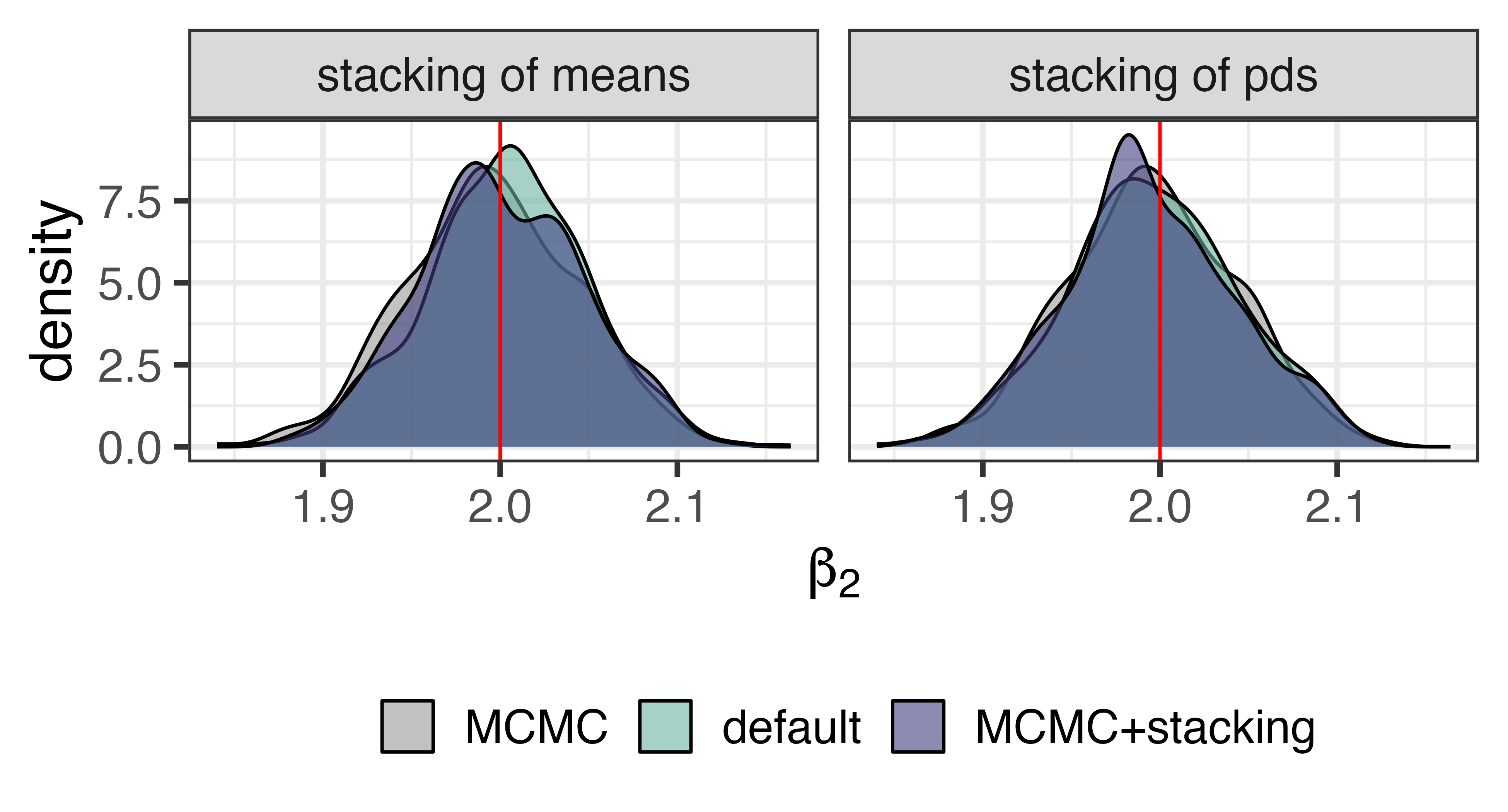}\\
\includegraphics[width=0.45\textwidth, keepaspectratio]{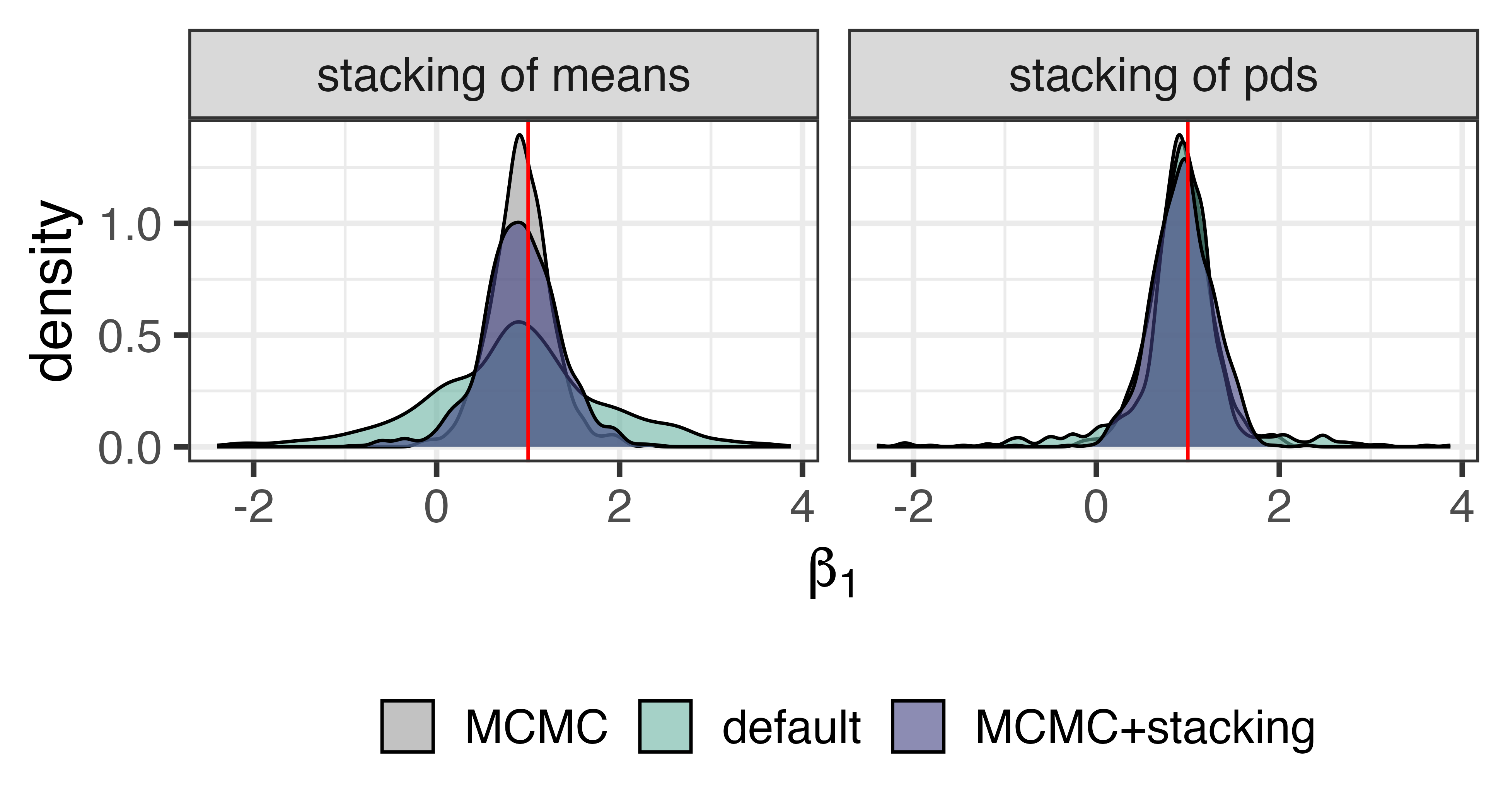}
\includegraphics[width=0.45\textwidth, keepaspectratio]{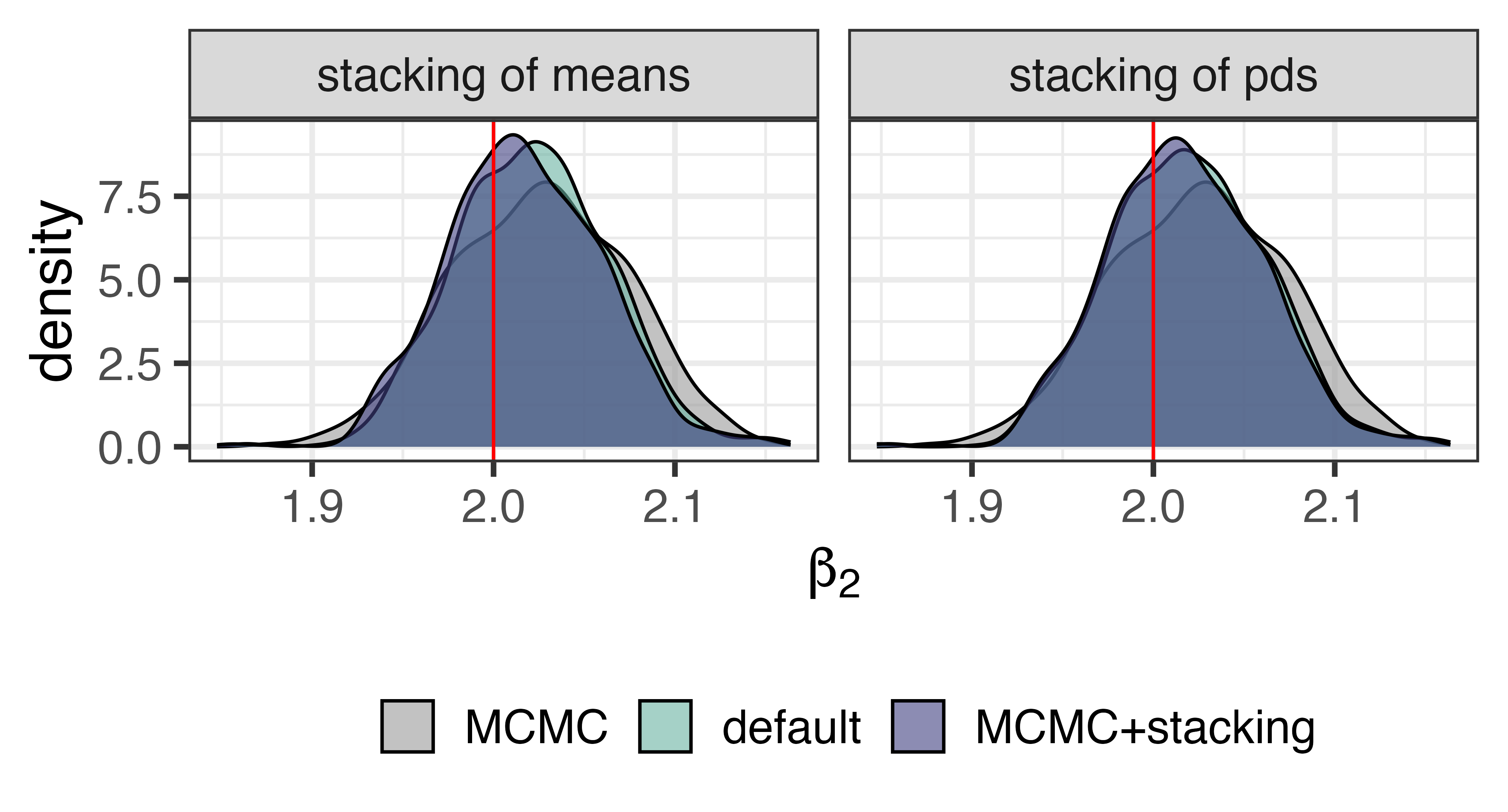}
\caption{Densities of $\beta_1$ (left column) and $\beta_2$ (right column) for the example with 800 observations from simulation 1 (top row), the example with 600 observations from simulation 2 (second row), the example with 400 observations from simulation 3 (third row), and the example with 200 observations from simulation 4 (bottom row). Vertical red lines indicate the actual values. Grey densities represent MCMC-recovered posterior distributions. 'Default' and 'MCMC+Stacking' show stacking results using two methods for selecting ${\phi, \nu, \delta^2}$ candidates. Left panel: stacking of means. Right panel: stacking of predictive densities}
\label{fig: sim_prefix_beta_compar}
\end{figure}
\captionsetup{labelfont={color=black},textfont={color=black}}

\begin{figure}[t]
\centering
{\includegraphics[width=0.45\textwidth, keepaspectratio]{pics/indi50_prefix_compar1ICI_r8.png}}
{\includegraphics[width=0.45\textwidth, keepaspectratio]{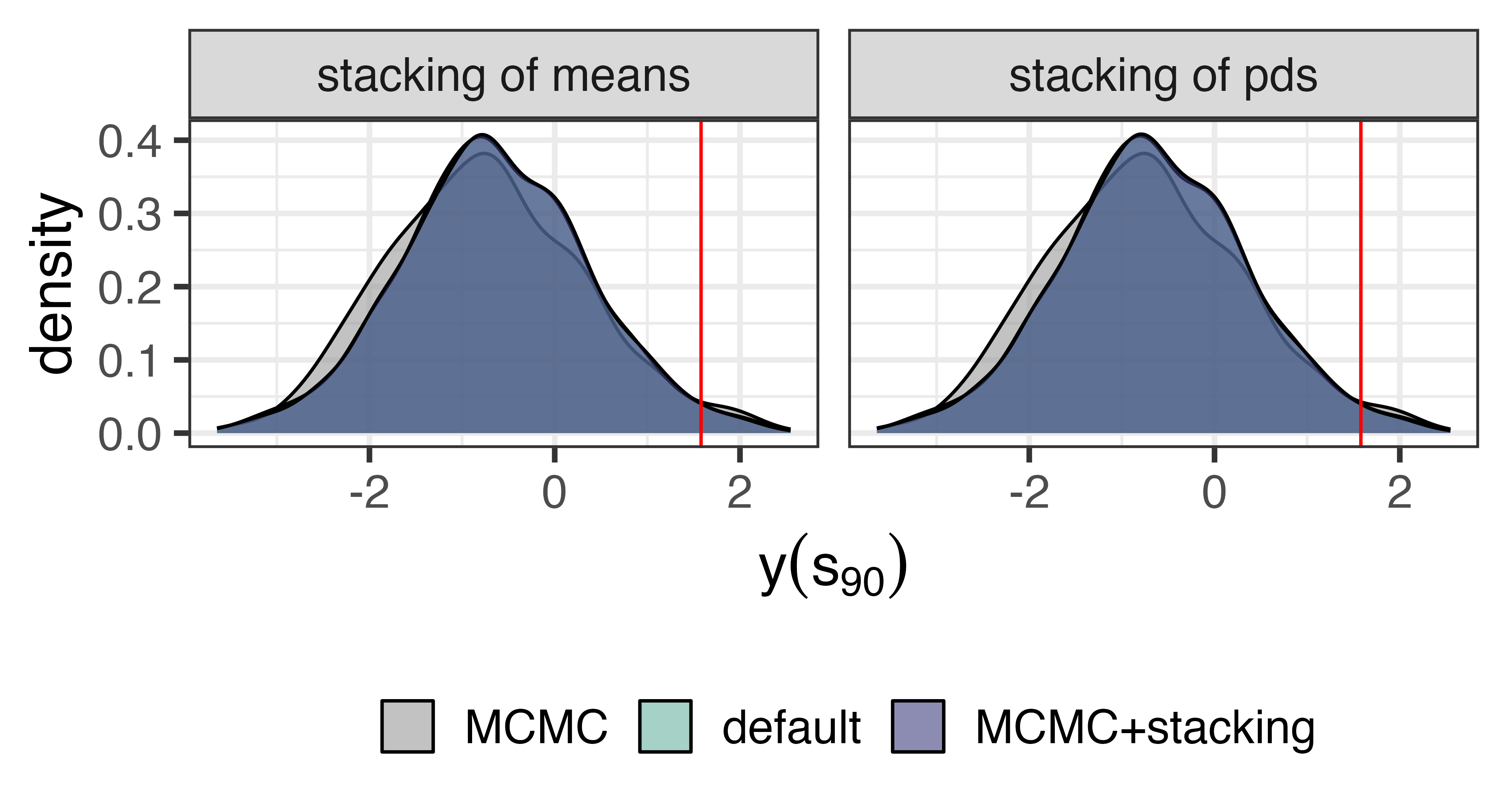}}\\
{\includegraphics[width=0.45\textwidth, keepaspectratio]{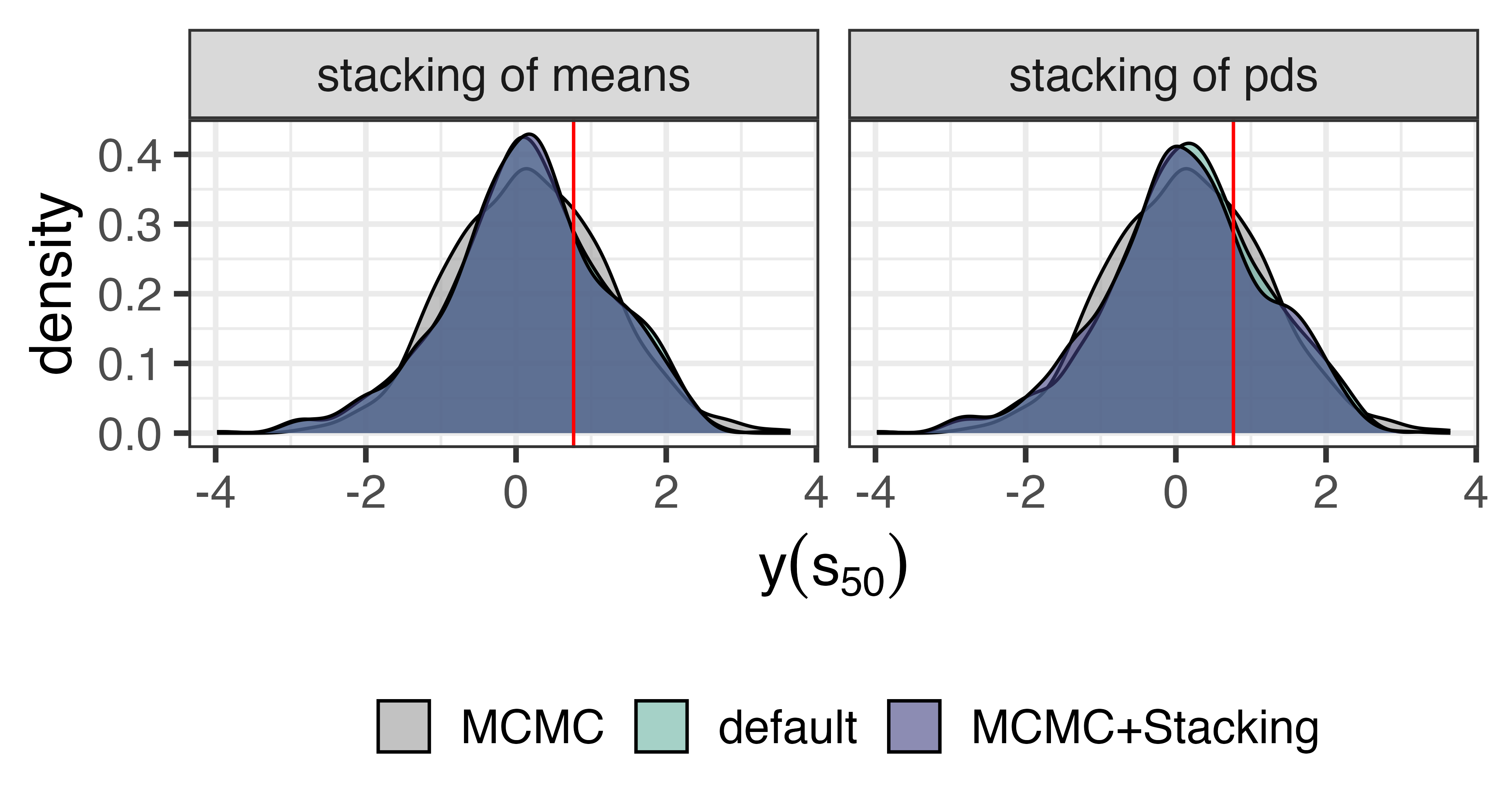}}
{\includegraphics[width=0.45\textwidth, keepaspectratio]{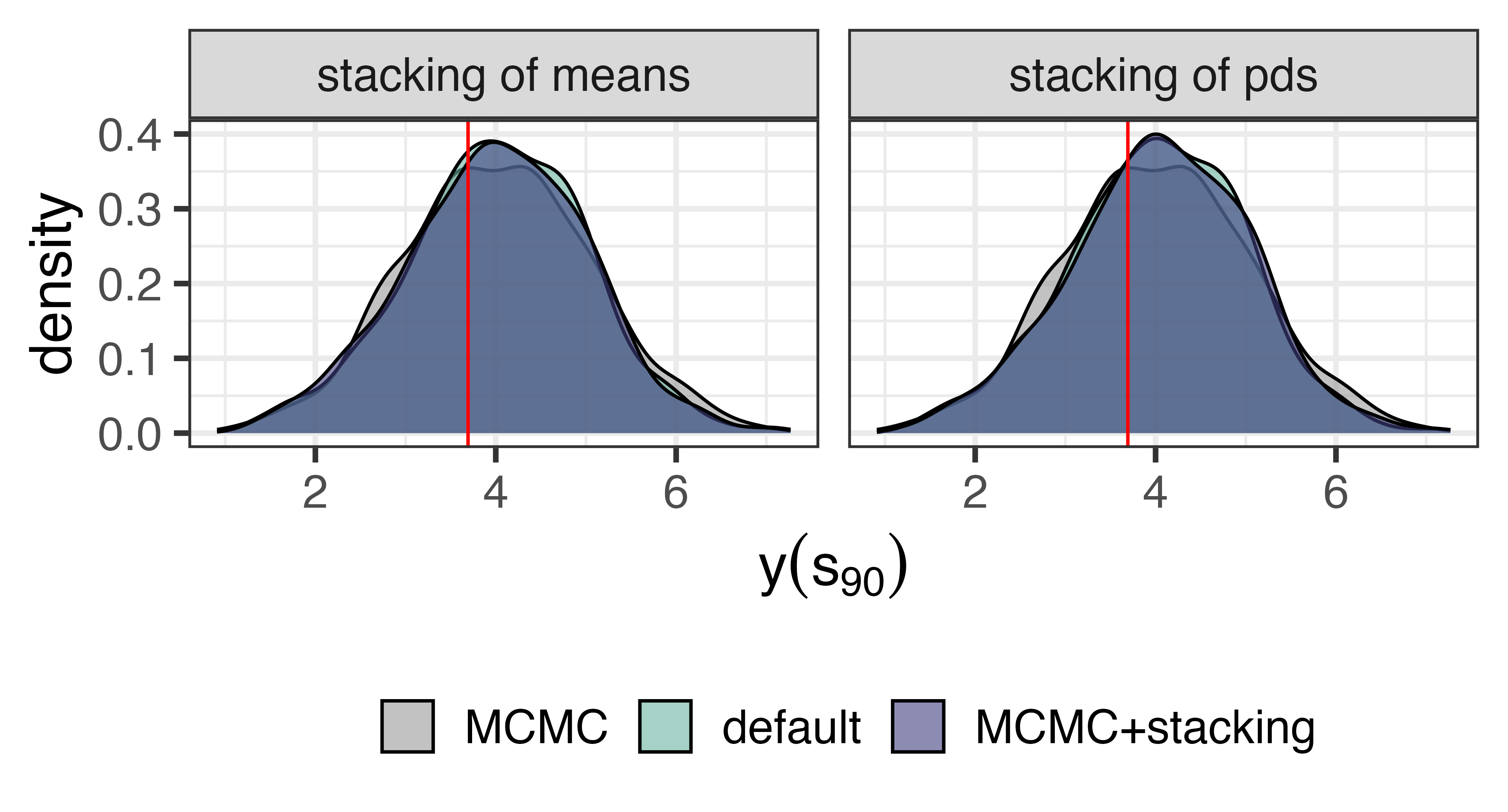}}\\
{\includegraphics[width=0.45\textwidth, keepaspectratio]{pics/indi50_prefix_compar2ICI_r4.png}}
{\includegraphics[width=0.45\textwidth, keepaspectratio]{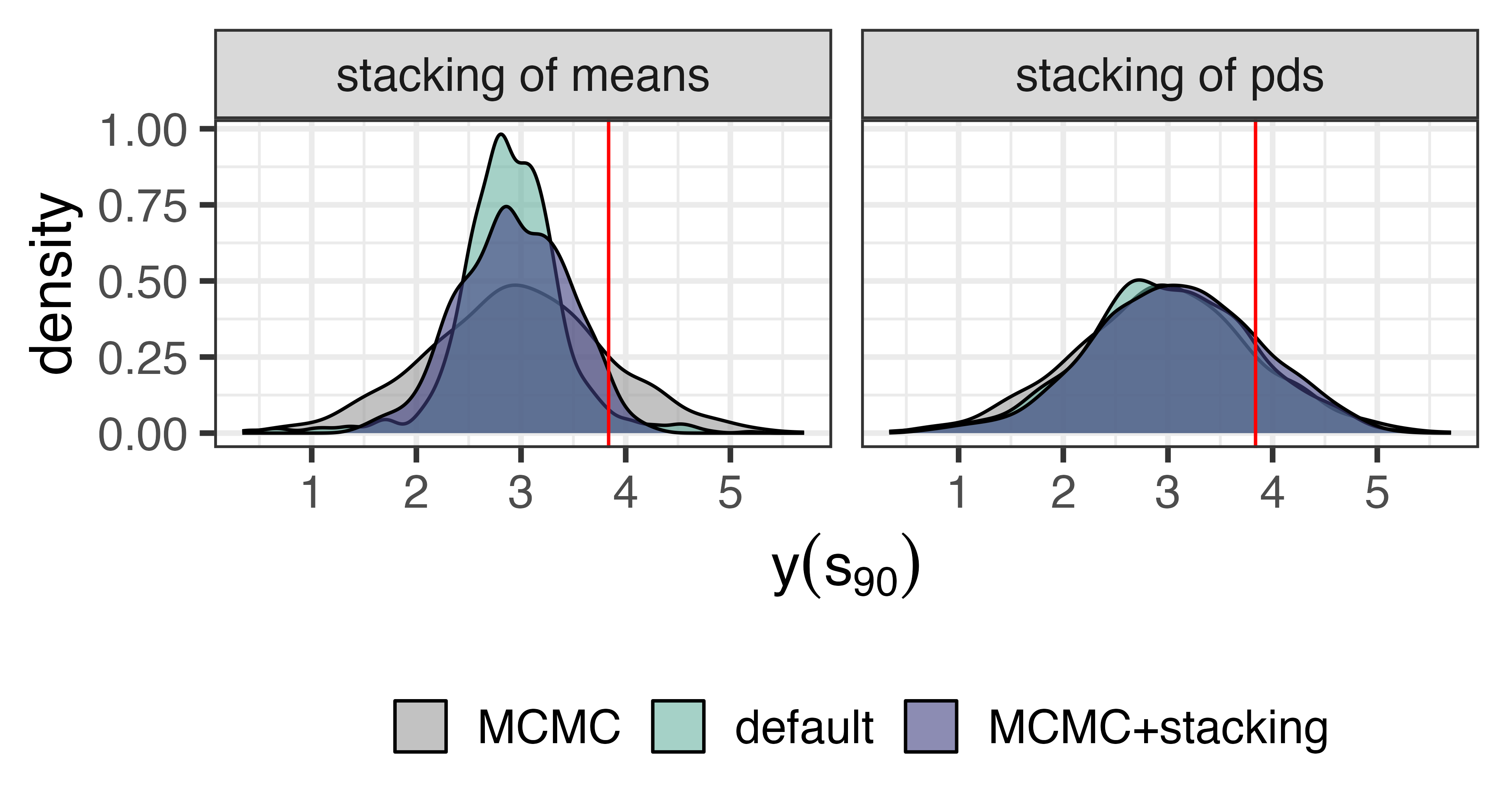}}\\
{\includegraphics[width=0.45\textwidth, keepaspectratio]{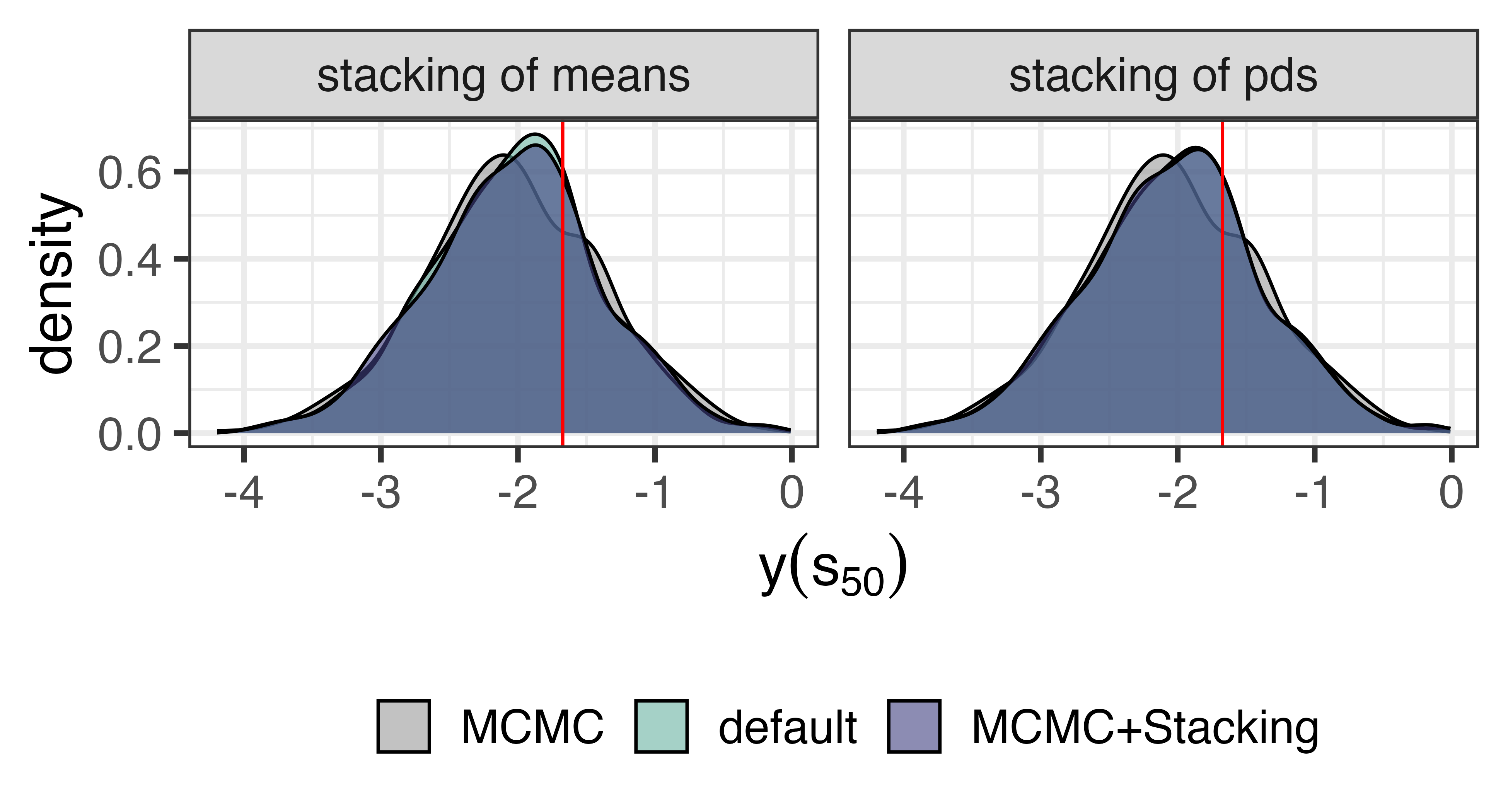}}
{\includegraphics[width=0.45\textwidth, keepaspectratio]{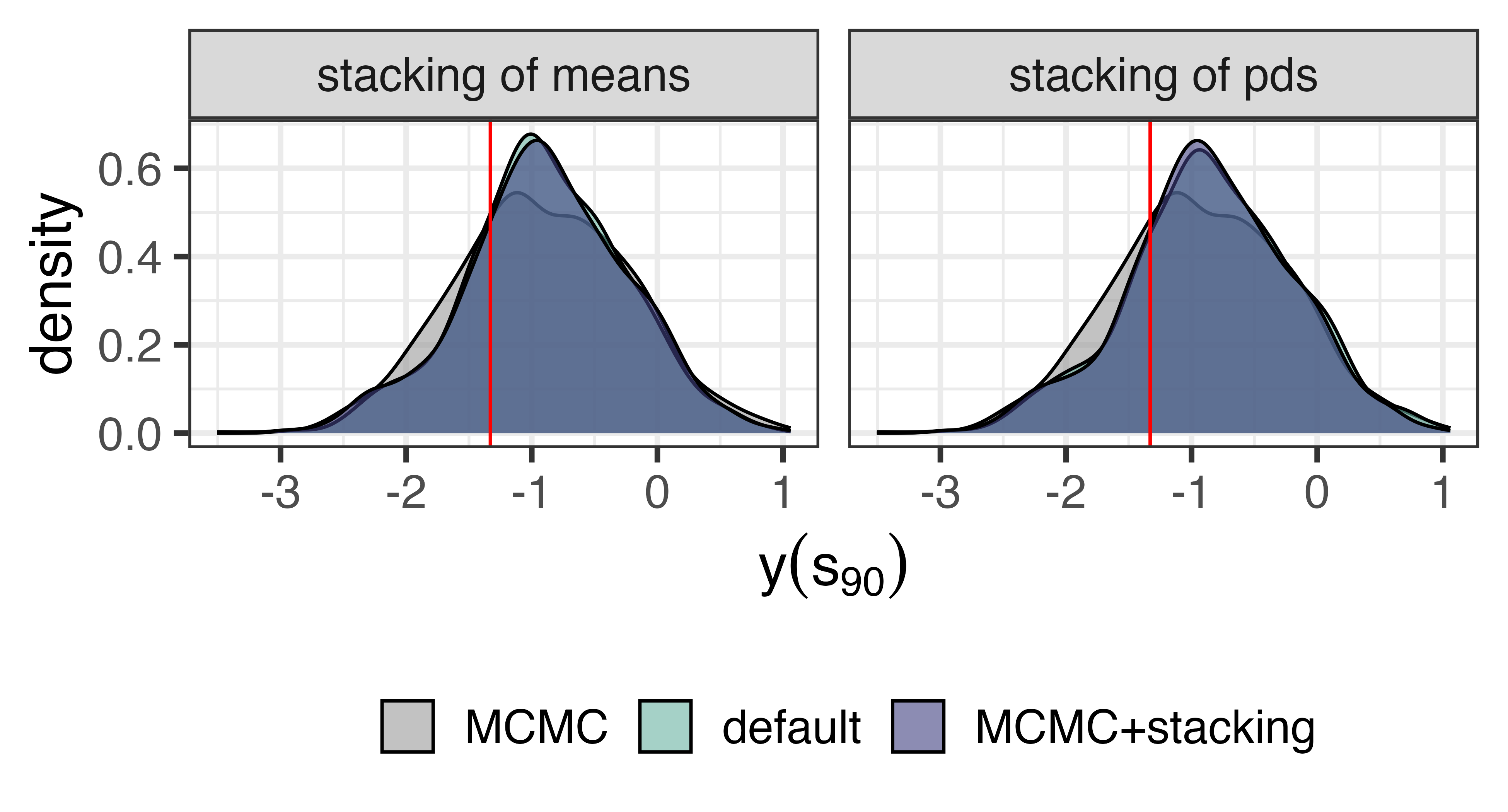}}
\caption{Predictive densities of the outcome at the $50$-th point (left column) and $90$-th point (right column) in the example with 800 observations from simulation 1 (top row), 
the example with 600 observations from simulation 2 (second row)
the example with 400 observations from simulation 3 (third row), and the example with 200 observations from simulation 4 (bottom row). Vertical red lines indicate the actual values. Grey densities represent MCMC-recovered posterior distributions. 'Default' and 'MCMC+Stacking' show stacking results using two methods for selecting ${\phi, \nu, \delta^2}$ candidates. Left panel: stacking of means. Right panel: stacking of predictive densities}
\label{fig: sim_prefix_50_90_compar}
\end{figure}

\section{Plots for AOD prediction analysis}
See Figures~\ref{fig: RDA_variogram}~and~\ref{fig: RDA_CI}.
\begin{figure}[t]
\centering
\includegraphics[width=0.5\textwidth, keepaspectratio]{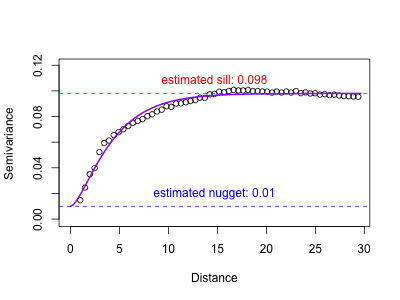} \caption{Semivariogram of the residuals from AOD linear regression model}
\label{fig: RDA_variogram}
\end{figure}

\begin{figure}[t]
\centering
\includegraphics[width=0.8\textwidth, keepaspectratio]{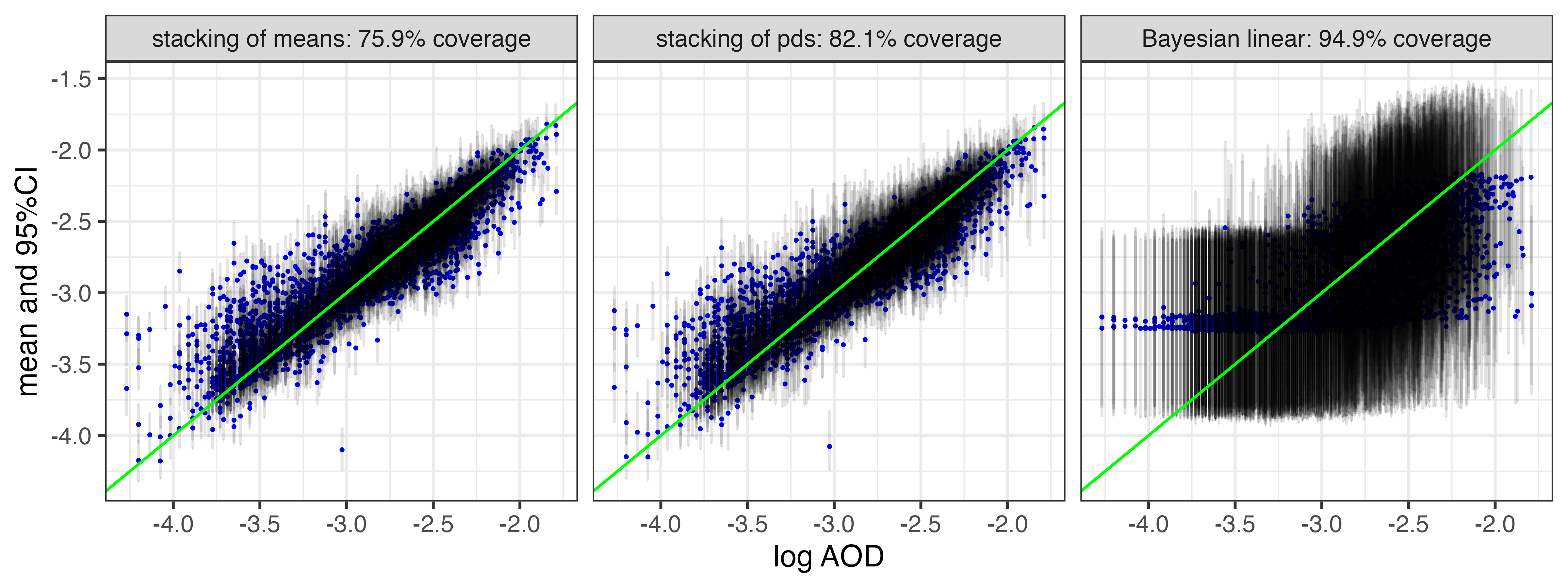} \caption{Scatterplots for log of interpolated and testing data AOD with 95\% credible intervals. Solid green line denotes the 45-degree line. Titles include 95\%CI coverage.}
\label{fig: RDA_CI}
\end{figure}

\clearpage

\end{document}